\documentclass[11pt, a4]{article}
\usepackage{my_style}

\title{Learning state preparation circuits for quantum phases of matter}
\author[1]{Hyun-Soo Kim}
\author[2]{Isaac H. Kim}
\author[3]{Daniel Ranard}
\affil[1]{Department of Physics, University of California, Davis, CA, 95616, USA}
\affil[2]{Department of Computer Science, University of California, Davis, CA, 95616, USA}
\affil[3]{Walter Burke Institute for Theoretical Physics, Caltech, Pasadena, CA, 91125, USA}
\date{}
\begin{document}
\maketitle
\begin{abstract}
Many-body ground state preparation is an important subroutine used in the simulation of physical systems. In this paper, we introduce a flexible and efficient framework for obtaining a state preparation circuit for a large class of many-body ground states. We introduce polynomial-time classical algorithms that take reduced density matrices over $\mathcal{O}(1)$-sized balls as inputs, and output a circuit that prepares the global state. We introduce algorithms applicable to (i) short-range entangled states (e.g., states prepared by shallow quantum circuits in any number of dimensions, and more generally, invertible states) and (ii) long-range entangled ground states (e.g., the toric code on a disk). Both algorithms can provably find a circuit whose depth is asymptotically optimal. Our approach uses a variant of the quantum Markov chain condition that remains robust against constant-depth circuits. The robustness of this condition makes our method applicable to a large class of states, whilst ensuring a classically tractable optimization landscape.
\end{abstract}

\section{Introduction}
\label{sec:introduction}

One of the important applications of quantum computers is expected to be the simulation of physical systems that appear in nature. This idea, which dates back to the seminal work of Feynman~\cite{Feynman1982}, has been pursued vigorously over the past few decades. A large body of work in this direction has been focused on simulating the dynamics of physical systems that appear in nature. The fact the dynamics can be simulated efficiently was first pointed out by Lloyd~\cite{lloyd1996universal}. More recently, modern paradigms such as the linear combination of unitaries~\cite{Childs2012,berry2014exponential} and qubitization~\cite{low2019hamiltonian} led to essentially optimal methods for such simulation.

However, less understood is a method for preparing a state on which the dynamics occurs. Often the initial state of interest is some low energy state of the system, such as the ground state. The predominant approach for the preparation of such states on a quantum computer is based on the quantum phase estimation algorithm~\cite{Kitaev1995} or its modern variants~\cite{Lin2020,Lin2022}. However, these methods succeed with high probability only if the initial state has a high overlap with the target ground state. Therefore, it is desirable to develop a method for finding a low-complexity circuit that can prepare the state. In this work, we introduce methods for finding such a circuit.

Traditional approaches for finding a state preparation circuit often begin with an efficient classical description of the state, which is then converted to a circuit. There are known methods in the literature, applicable to stabilizer states~\cite{montanaro2017learningstabilizerstatesbell}, Gaussian states~\cite{Zhang2018}, and tensor network states~\cite{cramer2010efficient,Schwarz2012,Schwarz2013,Ge2016,Wei2023}. 

More recently, Ref.~\cite{Huang2024} introduced a method capable of finding a state preparation circuit even from a more limited information, provided that the underlying state can be prepared by a constant-depth circuit. Their method only requires knowing the reduced density matrices over $\mathcal{O}(1)$-sized subsystems, which is a weaker requirement than knowing the global state. Because such information can be obtained from an efficient classical description of the state, their method can be also used to convert such a description to a circuit, provided that the state is preparable by a constant-depth circuit. Moreover, their work opens up a new possibility of learning the state preparation circuit directly from a quantum state prepared in experiments. Because local reduced density matrices can be often readily obtained from such states (e.g., using classical shadows~\cite{Huang2020,Hu2023,Tran2023}), their approach can be used to efficiently learn state preparation circuits from many-body ground state wavefunctions prepared on a quantum computer~\cite{Huang2020}, or even a quantum simulator~\cite{Hu2023,Tran2023}.

However, from a physical point of view, the class of states that are amenable to the method of Ref.~\cite{Huang2024} is rather restrictive. This method only works in one and two spatial dimensions, leaving the higher dimensional case open. More importantly, their method only works for states that can be prepared by constant-depth circuits. This assumption can be restrictive at times, because there are physical ground states of interest that cannot be prepared by any constant-depth quantum circuit~\cite{bravyi2006lieb}. These are known as topologically ordered ground states, a well-known example being Kitaev's toric code~\cite{Kitaev2003}. 

In the context of many-body physics, the dichotomy between the state preparable by a constant-depth circuit and the toric code can be understood in terms of \emph{quantum phases} of matter. These are equivalence classes of many-body ground states (with a spectral gap), which are classified broadly into short-range entangled states and long-range entangled states~\cite{Chen2010}. Roughly speaking, short-range entangled states can be understood as states that can be prepared by constant-depth circuits, whereas long-range entangled states are the ground states that cannot be prepared by such circuits. From this point of view, the results of Ref.~\cite{cramer2010efficient,Huang2024} can be understood as a statement that state preparation circuit for short-range entangled states in one and two dimensions are efficiently learnable. Thus a natural question is whether the same statement holds true for all short-range entangled states, and more generally, even long-range entangled states. Our paper aims to address this question.

To that end, we introduce two new efficient algorithms for learning state preparation circuits. Our first algorithm is applicable to short-range entangled states. These are states prepared by a geometrically local constant-depth circuit on a finite-dimensional lattice. Compared to the prior work~\cite{Huang2024}, our work generalizes their result in two ways and uses different techniques. First, while the method of Ref.~\cite{Huang2024} only works in one and two dimensions, our work applies to short-range entangled states on any finite-dimensional lattice. Second, we remark that the class of states considered in the prior work~\cite{Huang2024} are of the following form:
\begin{equation}
    |\psi\rangle =U|0\ldots 0\rangle, \label{eq:sre_def}
\end{equation}
where $U$ is a constant-depth circuit. On the other hand, our work is applicable to the following (more general) class of states that involves ancillary qubits:
\begin{equation}
    |\psi\rangle \otimes |\phi\rangle= U|0\ldots 0\rangle,\label{eq:invertible_def}
\end{equation}
where $U$ is again a constant-depth circuit. Here, only the state $|\psi\rangle$ is given to us and the state of the ancilla $|\varphi\rangle$ is discarded. States $|\psi\rangle$ of the form in Eq.~\eqref{eq:sre_def} are called ``invertible'' states \cite{Kitaev2013}.  (Often invertible states are included in the definition of short-range entangled states.)
While Eq.~\eqref{eq:sre_def} and~\eqref{eq:invertible_def} may seem similar, there is a significant difference between the two. If we consider a more general \emph{quasi-local} unitary $U$, it is believed that there is a state that cannot be prepared using Eq.~\eqref{eq:sre_def} but can be using Eq.~\eqref{eq:invertible_def}. (This is known as Kitaev's $E_8$ state~\cite{kitaev2006anyons}.) For simplicity, we take $U$ in Eq.~\eqref{eq:invertible_def} to be a constant-depth circuit in this paper. Extending this analysis to quasi-local unitaries is expected to be possible using techniques such as the Lieb-Robinson bound~\cite{hastings2010localityquantumsystems}, which we leave for future work.

Our second algorithm is applicable to long-range entangled states, modulo some caveats discussed below. Broadly speaking, long-range entangled states are ground states of gapped systems which cannot be prepared by any constant-depth circuit, even with ancillas.  Well-known examples in two dimensions (2D) include the toric code~\cite{Kitaev2003}, and more generally, the string-net models~\cite{Levin2005}. In higher dimensions, more exotic models such as fractons exist~\cite{Haah2011}. Our algorithm is applicable to a subclass of long-range entangled states which possess  so-called ``liquid topological order.'' These are states that satisfy a certain homogeneity condition~\cite{DraftOther}. Examples of such states include topologically ordered ground states in 2D. This class also includes models such as the higher-dimensional generalization of the toric code, but does not include fractons; see Ref.~\cite{DraftOther} for more discussion. An additional caveat is that our result is only applicable to states that are a unique ground state, e.g., the toric code on a sphere, or subsystems of such states.

Similar to Ref.~\cite{Huang2024}, both of our algorithms take reduced density matrices over $\mathcal{O}(1)$-size ball as inputs, and reconstruct a unitary that prepares the global state. This is possible because from the reduced density matrices one can learn a set of local physical processes that glue the local reduced density matrices together~\cite{DraftOther}. By stitching up those local processes together, one can obtain a circuit that prepares the global state. 

Thus, our results lead to a number of new applications. As said before, our method can be applied to learning the state preparation of circuits for the quantum states that appear in experiments. Alternatively, one may use our method to find a state preparation circuit with a rigorous guarantee from a classical description of the quantum state. This is because our method only requires knowledge of reduced density matrices over $\mathcal{O}(1)$-size balls. Such information can be obtained from any efficient classical description of the state, because such a description by definition implies an ability to efficiently compute the reduced density matrices.

In many cases, given only a classical description of the Hamiltonian (and not yet the state), our construction of the circuit preparing the ground state remains efficient.
In particular, for a class of Hamiltonians that satisfy the local topological quantum order condition~\cite{michalakis2013stability}, the reduced density matrix of a $\mathcal{O}(1)$-size region with respect to the global state is indistinguishable from the reduced density matrix of a local ground state, defined with respect to the Hamiltonian truncated to a slightly larger $\mathcal{O}(1)$-size subsystem. Therefore, for those ground states such systems, each reduced density matrix can be obtained in $\mathcal{O}(1)$ time, which can be then fed into our algorithm.

The rest of this paper is structured as follows. In Section~\ref{sec:summary}, we provide an executive summary of our main results and approach. In Section~\ref{sec:locally_extendible_states}, we introduce and discuss a notion of ``locally extendible'' states, the central concept we use in this paper. In Section~\ref{sec:extendibility_quantum_phase}, we explain the relation between local extendibility and the state preparation circuits of known subclasses of quantum phases. In Section~\ref{sec:learning}, we introduce our polynomial-time algorithms for learning the state preparation circuits. We end with a discussion in Section~\ref{sec:discussion}.

\section{Summary of results}
\label{sec:summary}

The main results of this paper are algorithms for learning the state preparation circuit of a given many-body state. We remark that there cannot be an efficient general-purpose method that works for any quantum state. For exponentially complex states, even the process of writing down the circuit will take an exponentially long time. Thus we need to identify a subclass of states which are amenable to an efficient method. 

The class of states we consider, as we alluded to in Section~\ref{sec:introduction}, are short-range and long-range entangled states in various dimensions. Because the short-range entangled states [Eq.~\eqref{eq:sre_def}], and more generally, invertible states [Eq.~\eqref{eq:invertible_def}] can be prepared by a constant-depth circuit, the goal of the algorithm would be to find a constant-depth circuit that prepares the same state. On the other hand, long-range entangled states such as the toric code~\cite{Kitaev2003} cannot be prepared by a constant-depth circuit~\cite{bravyi2006lieb,haah2016invariant}. Assuming the circuit is geometrically local, the requisite circuit depth in two spatial dimensions must scale as $\Omega(\sqrt{n})$, where $n$ is the number of qubits. Therefore, the best one can hope for is to find a circuit that achieves the same depth asymptotically. For this reason, our algorithms for the short-range entangled and long-range entangled states will be different. 

Nonetheless, there is a unifying theoretical framework that underpins both of these algorithms. (We defer this discussion to Section~\ref{subsec:techniques}.) As such, both algorithms share a similar overall structure; both algorithms take reduced density matrices over $\mathcal{O}(1)$-size balls as inputs, and outputs a state preparation circuit. The input to the algorithms are in fact the same; it is merely the classical post-processing that differs. 

Our first algorithm is an efficient algorithm for learning the state preparation circuits of invertible states [Eq.~\eqref{eq:invertible_def}]. Since invertible states include states that are preparable by constant-depth circuits without ancilla [Eq.~\eqref{eq:sre_def}], this algorithm is also applicable to such states.
\begin{restatable}[]{theorem}{mainone}
    \label{thm:main_invertible}
    Consider an unknown $n$-qubit invertible state prepared by a depth-$d$ geometrically local circuit in $k$ spatial dimensions. There is an algorithm that learns a circuit that prepares the same state up to a trace distance of $\epsilon$ with high probability from $\widetilde{O}(n^2 \log (n)/\epsilon^2 \cdot ke^{\mathcal{O}(d^k)})$ samples of the state, with the following complexity:
    \begin{itemize}
        \item One can obtain a circuit of depth $\mathcal{O}(ke^{\mathcal{O}(d^k)})$ using $\widetilde{\mathcal{O}}(e^{\mathcal{O}(d^k)}n^3/\epsilon^2)$ classical running time, or
        \item One can obtain a circuit of depth $\mathcal{O}(kd)$ using $\mathcal{O}((n/\epsilon)^{\mathcal{O}(d^k)})$ classical running time.
    \end{itemize}
\end{restatable}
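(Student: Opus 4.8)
The plan is to learn, from $\mathcal{O}(1)$-local reduced density matrices, a set of ``local gluing'' isometries that build up the invertible state region by region, and then to stitch these into a global circuit. The key structural input is the notion of a \emph{locally extendible} state introduced in Section~\ref{sec:locally_extendible_states}: for an invertible state $|\psi\rangle$ prepared by a depth-$d$ geometrically local circuit, any ball $B$ of radius $r = \Theta(d^{1/k}\cdot\text{poly})$ has the property that the reduced state on a slightly enlarged ball determines a channel (equivalently, by Stinespring, an isometry) extending the reduced state on $B$ to the reduced state on $B$ together with an adjacent annulus. This follows from the fact that the circuit has a lightcone of radius $\mathcal{O}(d)$ (in graph distance, so $\mathcal{O}(d^k)$ qubits per lightcone ``cell''), which makes the relevant conditional mutual information exactly zero on a suitably coarse-grained scale — the robust quantum Markov chain condition referenced in the abstract. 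I would first establish, or cite from Section~\ref{sec:extendibility_quantum_phase}, that invertible states in $k$ dimensions are locally extendible at scale $\mathcal{O}(d^k)$, with the extending isometries reconstructible from reduced density matrices on balls of radius $\mathcal{O}(d)$.

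Second, I would turn local extendibility into an explicit circuit. Order the qubits (or coarse-grained cells) by a sweep — e.g., a lexicographic sweep of the lattice, grouping qubits into $\mathcal{O}(d^k)$-sized blocks — and at each step append the local isometry that extends the current ``grown'' region by one block, padding with fresh $|0\rangle$ ancillas and applying a unitary dilation of that isometry. Because each local isometry acts only on the new block and an $\mathcal{O}(d^k)$-sized boundary collar of the already-prepared region, gates acting on spatially disjoint collars commute and can be parallelized; a standard coloring argument on the lattice (constant chromatic number depending only on $k$ and the collar radius) then gives total depth $\mathcal{O}(k\cdot e^{\mathcal{O}(d^k)})$ — the exponential coming from the cost of synthesizing an arbitrary unitary on $\mathcal{O}(d^k)$ qubits — or, if instead one learns the original circuit's gates more directly on each cell rather than a generic dilation, depth $\mathcal{O}(kd)$ at the price of exponential-in-$d^k$ classical search over candidate local unitaries. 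These are exactly the two bullet points.

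Third, the sample complexity and error analysis: the reduced density matrix on each $\mathcal{O}(d^k)$-qubit ball must be learned to trace distance $\epsilon/\mathrm{poly}(n)$ so that the $n$ accumulated local errors sum (subadditivity of trace distance under the channels being composed, plus a telescoping argument for how an $\epsilon'$ error in one local isometry propagates) to at most $\epsilon$ globally. There are $\mathcal{O}(n)$ balls, each of Hilbert-space dimension $e^{\mathcal{O}(d^k)}$, so tomography costs $\widetilde{\mathcal{O}}(e^{\mathcal{O}(d^k)} \cdot n^2 \log n / \epsilon^2)$ samples per ball times $\mathcal{O}(n)$ — or via classical shadows / a union bound, the stated $\widetilde{O}(n^2\log(n)/\epsilon^2\cdot k e^{\mathcal{O}(d^k)})$; the classical running time is dominated either by convex reconstruction of the isometries ($\widetilde{\mathcal{O}}(e^{\mathcal{O}(d^k)} n^3/\epsilon^2)$) or by brute-force search over local unitaries with a discretized net of size $(n/\epsilon)^{\mathcal{O}(d^k)}$.

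The main obstacle I anticipate is the error-propagation step: showing that composing $n$ slightly-wrong local isometries (learned from slightly-wrong RDMs) yields a global state close to $|\psi\rangle$. Naively the errors could compound multiplicatively, but the point is that each local isometry is applied to a region that only ever interacts with an $\mathcal{O}(d^k)$-sized neighborhood, so one wants a ``local-to-global'' stability lemma — likely proven by induction on the sweep, bounding $\|\rho_{\text{grown},j} - \rho_{\psi|B_j}\|_1$ at step $j$ using the Markov/recovery structure so that the error stays additive rather than multiplicative. Making the constants in the required RDM accuracy explicit (and confirming they only cost $\mathrm{poly}(n)$, not worse) is where the real work lies; everything else is assembling lightcone bounds, Stinespring dilation, lattice coloring, and standard tomography bounds.
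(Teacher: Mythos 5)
Your proposal follows essentially the same route as the paper: establish that invertible states remain locally extendible under constant-depth circuits (the paper's Theorems~\ref{theorem:circuit_stability_1}--\ref{theorem:circuit_stability_3}), assemble the learned extending maps into $\mathcal{O}(k)$ parallel rounds of disjoint local channels via a cellular decomposition (your ``sweep plus coloring'' is the same construction once the coloring is chosen to respect the gluing order), and control error accumulation additively — which is exactly Proposition~\ref{prop:error_bound}, proved by the telescoping/monotonicity argument in Bures distance that you sketch. The only caveat is that the paper does not establish extendibility by showing a conditional mutual information vanishes (that condition is precisely what fails to be robust under constant-depth circuits); it instead constructs the extending channel explicitly from past light-cones, but since you defer to Sections~\ref{sec:locally_extendible_states}--\ref{sec:extendibility_quantum_phase} for that ingredient, this does not affect the correctness of your argument.
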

\noindent
(Here $\widetilde{O}$ hides the polylogarithmic factor in $n$ and $1/\epsilon$.) Compared to the existing method~\cite{Huang2024}, there are three main differences. First, our algorithm works in any number of dimensions. Second, our algorithm is applicable to invertible states, i.e., states preparable in constant depth with ancillas, which are more general than those where ancillas are not allowed. Thirdly, our algorithm may require a lower overall classical running time, at the expense of an increased circuit depth. With the increased run-time, we can match the asymptotically optimal circuit depth in $d$ for $k=\mathcal{O}(1)$.

Our second algorithm is an efficient algorithm for learning the state preparation circuits of long-range entangled states ~\cite{DraftOther}. While generalizations are possible, we specialize to two spatial dimensions now. Explaining our definition of long-range entangled states requires a longer exposition, which we defer to Section~\ref{subsec:lre}. For now, we simply note that the states we consider include string-net models~\cite{Levin2005} on a disk with uniform boundary conditions, or the states obtained by applying a constant-depth circuit to such states. As a special case, this includes the toric code ground state on a disk with a smooth boundary condition, or a state obtained by applying a constant-depth circuit to it. 
\begin{theorem}[Informal version of Theorem~\ref{thm:main_lre}]
    Given an unknown $n$-qubit state of the form $U|\psi\rangle$, where $|\psi\rangle$ is a long-range entangled state on a disk with a uniform boundary condition and $U$ is a depth-$d$ quantum geometrically local circuit in two spatial dimensions, one can learn a geometrically local circuit that prepares the same state up to a trace distance of $\epsilon$ with high probability from $\widetilde{O}(n^2 \log (n)/\epsilon^2 \cdot e^{\mathcal{O}(d^2)})$  samples of the state, with the following complexity: 
    \begin{itemize}
        \item One can obtain a circuit of depth $\mathcal{O}(\sqrt{n} e^{\mathcal{O}(d^2)})$ using $\mathcal{O}(e^{\mathcal{O}(d^2)}n^3 \log (n)/\epsilon^2)$ running time, or
        \item One can obtain a circuit of depth $\mathcal{O}(\sqrt{n} d)$ using $\mathcal{O}((n/\epsilon)^{\mathcal{O}(d^2)})$ classical running time.
    \end{itemize}
    \label{theorem:main2}
\end{theorem}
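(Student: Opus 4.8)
The plan is to obtain Theorem~\ref{thm:main_lre} as an instance of the same ``learn the local data, then stitch'' strategy that underlies Theorem~\ref{thm:main_invertible}: one shows that $U|\psi\rangle$ is \emph{locally extendible} in the sense of Section~\ref{sec:locally_extendible_states}, feeds its reduced density matrices on $\mathcal{O}(1)$-size balls into the learning algorithm of Section~\ref{sec:learning}, and then bounds the depth of the circuit that comes out. The one structural difference from the invertible case is the geometry of the stitching. For a long-range entangled state on a disk there is no way to finish the extension after $\mathcal{O}(1)$ layers; the prepared region must instead be grown all the way across the $\sqrt{n}\times\sqrt{n}$ disk, and it is exactly this $\mathcal{O}(\sqrt{n})$ number of extension layers that turns the constant depth of Theorem~\ref{thm:main_invertible} into the depth $\mathcal{O}(\sqrt{n})$ claimed here, matching the $\Omega(\sqrt{n})$ lower bound for geometrically local preparation of topological order in two dimensions.

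\emph{Learning the local data and local extendibility of $U|\psi\rangle$.} Since $U$ is geometrically local of depth $d$, its backward light cone on a radius-$r$ ball lies inside a radius-$(r+\mathcal{O}(d))$ ball, i.e.\ on $\mathcal{O}(d^2)$ qubits, so every marginal the algorithm needs is supported on an $\mathcal{O}(d^2)$-qubit region; I would estimate all $\mathcal{O}(n)$ such marginals to trace distance $\delta = \epsilon/\mathrm{poly}(n)$ via classical shadows, and the $e^{\mathcal{O}(d^2)}$ Hilbert-space dimension of each ball together with a union bound over $\mathcal{O}(n)$ regions yields the sample complexity $\widetilde{O}(n^2\log n/\epsilon^2\cdot e^{\mathcal{O}(d^2)})$. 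To apply the learning algorithm I would then verify that $U|\psi\rangle$ satisfies the robust, constant-depth-stable variant of the quantum Markov condition: first, a long-range entangled state $|\psi\rangle$ on a disk with uniform boundary conditions has liquid topological order, so its local marginals look everywhere the same and obey the approximate recovery condition on contractible regions that defines local extendibility (Section~\ref{sec:extendibility_quantum_phase}); second, this condition is preserved when the depth-$d$ circuit $U$ is applied, at the cost of enlarging the relevant regions by $\mathcal{O}(d)$ and degrading the recovery error by a controlled factor. Conjugating the local recovery (Petz-type) channels of $|\psi\rangle$ by the pieces of $U$ meeting each enlarged ball then gives local channels extending $U|\psi\rangle$ from a ball to a slightly larger one with error $\mathrm{poly}(n)^{-1}$, which is local extendibility of $U|\psi\rangle$ at the scale we need.

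\emph{Stitching and the depth/running-time tradeoff.} From the estimated marginals one constructs, at each site, a local extension channel, whose Stinespring dilation is a unitary on $\mathcal{O}(d^2)$ qubits together with $\mathcal{O}(1)$ fresh ancillas. I would order these into a sweep of the disk that grows the prepared region by one $\mathcal{O}(1)$-thick shell (or strip) at a time; along a single shell the extension unitaries act on disjoint $\mathcal{O}(d^2)$-ball neighborhoods, so after a coloring argument they parallelize into $\mathcal{O}(1)$ sublayers, one shell costs depth $c$, and the $\mathcal{O}(\sqrt{n})$ shells give total depth $\mathcal{O}(\sqrt{n}\,c)$. Using the raw Stinespring dilations gives $c = e^{\mathcal{O}(d^2)}$ and classical post-processing $\mathcal{O}(e^{\mathcal{O}(d^2)}n^3\log n/\epsilon^2)$ (compiling $e^{\mathcal{O}(d^2)}$-dimensional operators at each of $\mathcal{O}(n)$ sites), which is the first bullet; alternatively, brute-force searching over geometrically local depth-$d$ circuits on each $\mathcal{O}(d^2)$-ball for one that reproduces the target marginals costs $(n/\epsilon)^{\mathcal{O}(d^2)}$ classical time but gives $c = \mathcal{O}(d)$ and hence depth $\mathcal{O}(\sqrt{n}\,d)$, the second bullet.

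\emph{Error analysis, the uniqueness caveat, and the main obstacle.} The sweep composes $\mathcal{O}(n)$ local extension channels, each approximate to $\mathrm{poly}(n)^{-1}$, so by monotonicity of trace distance under channels and a telescoping argument the total error is $\mathcal{O}(n)\cdot\mathrm{poly}(n)^{-1}\le\epsilon$ once $\delta$ above is chosen as $\epsilon/\mathrm{poly}(n)$, which closes the loop with the stated sample complexity. The restriction to a disk with uniform boundary conditions is used precisely here: uniqueness of the ground state makes the $\mathcal{O}(1)$-ball marginals determine the global state, so matching all of them forces the sweep's output to equal $U|\psi\rangle$ up to the accumulated error, whereas on a manifold with a degenerate ground space this step fails. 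The hard part is the combination of the second and fourth steps: showing that the robust Markov/recovery condition of the \emph{liquid} long-range entangled state genuinely survives conjugation by an arbitrary constant-depth circuit with only $\mathcal{O}(d)$-enlargement of regions, and that the \emph{global} composition of the local extension maps reconstructs the state with only linearly accumulating error despite the long-range entanglement and the fact that a single shell eventually has circumference $\Theta(\sqrt{n})$. Ruling out that many tiny local recovery errors conspire into an $\Omega(1)$ global error — which is where the homogeneity (``liquid'') hypothesis and the robustness of the condition do the real work — is the delicate point.
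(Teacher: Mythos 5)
Your overall route is the same as the paper's: prove that local extendibility of the long-range entangled state survives the depth-$d$ circuit on balls enlarged by $\mathcal{O}(d)$ (the paper's Theorems~\ref{theorem:circuit_stability_1} and~\ref{theorem:circuit_stability_3}), learn each extending map from the tomographed $\mathcal{O}(d^2)$-qubit marginals with per-map error $\mathcal{O}(\epsilon/n)$, stitch them in the sequential sweep of Fig.~\ref{fig:lre_buildup} with $\mathcal{O}(\sqrt{n})$ layers, and control the accumulated error by a telescoping/data-processing argument (Proposition~\ref{prop:error_bound}); the two bullets then come from the SDP route (Lemma~\ref{lemma:extend_from_samples_fast}) versus the brute-force-over-architectures route (Lemma~\ref{lemma:extend_from_samples_slow}), exactly as in the paper. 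The worry you raise about many local errors ``conspiring'' is resolved cleanly in the paper because each step's error is measured against the true purified marginal and Bures distance is monotone under channels, so the errors add linearly with no dependence on the shell circumference.

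The one concrete gap is your tomography step. You propose estimating each marginal to \emph{trace distance} $\delta=\epsilon/\mathrm{poly}(n)$ with classical shadows, but the continuity statements that transfer an extending map from the learned marginal to the true one (Lemma~\ref{lemma:extender}) and the error accumulation in Proposition~\ref{prop:error_bound} are both phrased in \emph{Bures} distance. By Eq.~\eqref{eq:fuchs}, trace distance $t$ only guarantees Bures distance $\mathcal{O}(\sqrt{t})$, so to get the required Bures error $\mathcal{O}(\epsilon/n)$ per ball you would need trace distance $\mathcal{O}(\epsilon^2/n^2)$, which costs $\Omega(n^4/\epsilon^4)$ samples per ball rather than the claimed $\widetilde{O}(n^2\log(n)/\epsilon^2\cdot e^{\mathcal{O}(d^2)})$. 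This is precisely why the paper uses the infidelity-estimation protocol of Ref.~\cite{Flammia2024quantumchisquared}, which controls the Bures distance directly at cost $\widetilde{O}(d_{\mathcal{H}}^3/\epsilon')$ for infidelity $\epsilon'=(\epsilon/n)^2$ and hence matches the stated sample complexity. Your algorithm as written is correct but provably only with a quartically worse sample complexity than the theorem asserts; replacing the trace-distance tomography with Bures-distance tomography closes the gap.
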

\noindent
We remark that the $\sqrt{n}$-dependence in the circuit depth is optimal. It is known that using geometrically local circuits, the circuit depth to prepare ground states of topologically ordered state must be $\Omega(\sqrt{n})$~\cite{bravyi2006lieb,haah2016invariant}. Our algorithm therefore recovers a circuit with asymptotically optimal depth. Meanwhile, the necessity of some exponential dependence on depth $d$ can be inferred from the result of Ref.~\cite{schuster2024random}, which constructs low-depth pseudorandom unitaries.

Our algorithms can also be applied to arbitrary states, in which case the algorithm may not find a preparation circuit in polynomial time, but when it does, the result can be efficiently certified. That is, even without any guarantees about the input state, if the algorithm succeeds in finding a circuit, the circuit provably constructs the state with a rigorous approximation guarantee. This is due to the fact that our algorithm outputs a correct state preparation circuit if a certain set of efficiently checkable local conditions are satisfied. These are precisely the conditions that are used in finding the state preparation circuit, as we explain in Section~\ref{sec:locally_extendible_states} and~\ref{sec:extendibility_quantum_phase}. Therefore, applied to the states that lie outside of the class we consider (e.g., Haar-random states), our algorithms will simply have a runtime exponential in $n$. On the other hand, if the local conditions are satisfied, our algorithm will output a circuit, whose output is guaranteed to be close to the target state.

Due to the flexibility of our approach, our algorithms can be used for a number of different purposes. One attractive possibility is to use the quantum simulators in the near term~\cite{Zhang2017,Ebadi2021,Scholl2021}. It was recently shown that the shadow tomography approach~\cite{Huang2020} can be applied to quantum simulators using a scrambling dynamics~\cite{Hu2023,Tran2023}. This leads to a possibility of directly learning the reduced density matrices of many-body ground states that are prepared using quantum simulators, which in turn leads to an efficient state preparation circuit using our method. The circuits learned this way can be later used as a subroutine for various quantum algorithms, such as the quantum phase estimation algorithm or simulation of dynamics. 

Another possibility is to use our method to efficiently find a state-preparation circuit for states that already have an efficient classical description. These are states for which the expectation values of local observables with respect to the ground state wavefunction can be estimated efficiently, such as the neural network state~\cite{carleo2017solving}, isometric tensor network~\cite{Zaletel2020}, multi-scale entanglement renormalization ansatz~\cite{Vidal2008}, or approaches based on quantum Monte Carlo~\cite{becca2017quantum}. In all these cases, provided that the underlying state belong to a class of states we consider, our method should be applicable. Thus, our approach can be used to efficiently construct the state preparation circuit even when the circuit representation is not evident in the first place (e.g.,~\cite{carleo2017solving,becca2017quantum}). Alternatively, it may potentially lead to a more compressed circuit even when a different circuit description is known~\cite{Zaletel2020,Vidal2008}. 

In fact, our method can be used even when an efficient classical description is not known. The most common setup we may consider is when we are given a description of the parent Hamiltonian, but not the state. If the Hamiltonian satisfies the local topological quantum order (LTQO) condition~\cite{michalakis2013stability}, the required density matrix can be obtained approximately by numerically calculating the ground state of a Hamiltonian localized to a slightly larger $\mathcal{O}(1)$-size subsystem and then erasing the boundary. This is because the LTQO condition guarantees the consistency between the reduced density matrix obtained from the localized Hamiltonian and the one obtained from the global Hamiltonian. Although the LTQO condition may be difficult to verify, it is a condition that remains robust against perturbation~\cite{Cirac2013}.\footnote{The stability bound in ~\cite[Theorem 9]{Cirac2013} requires stability of a local gap, which was proved in ~\cite[Theorem 6.8]{nachtergaele2022quasi}.} As such, for a large class of physical Hamiltonians of interest, we anticipate our method to be applicable.

\subsection{Approach}
\label{subsec:techniques}
The main approach we use is based on a notion of recoverability for density matrices, recently introduced in Ref.~\cite{DraftOther}. This is related to the notion of quantum Markov chains~\cite{Petz1988,Petz2003}, though there are also important differences. In order to provide some contexts, let us first briefly review quantum Markov chains and its relevance to the circuit learning problem. A Quantum Markov chain refers to a tripartite quantum state $\rho_{ABC}$ that satisfies the following property. Let $S(\rho)= -\text{Tr}(\rho \log \rho)$ be the von Neumann entropy of $\rho$. We say $\rho_{ABC}$ is a quantum Markov chain if $I(A:C|B):=S(\rho_{AB}) + S(\rho_{BC}) - S(\rho_B) - S(\rho_{ABC})$ is zero. An important property of quantum Markov chains is that $\rho_{ABC}$ admits the following recovery map $\Phi_{B\to BC}$~\cite{Petz1988,Petz2003}:
\begin{equation}
    \rho_{ABC} = \mathcal{I}_A\otimes \Phi_{B\to BC} (\rho_{AB}).
    \label{eq:recovery_qmc}
\end{equation}
\sloppy
(This map, which is known as the Petz map~\cite{Petz1988}, has a closed-form expression $\Phi_{B\to BC}(\cdot) = \rho_{BC}^{\frac{1}{2}}\rho_B^{-\frac{1}{2}}(\cdot)\rho_B^{-\frac{1}{2}} \rho_{BC}^{\frac{1}{2}}$.) Conversely, if Eq.~\eqref{eq:recovery_qmc} is satisfied, $\rho_{ABC}$ is a quantum Markov chain. 

\begin{figure}[h]
    \centering
    \begin{tikzpicture}[line width=1pt, scale=0.8]
        \foreach \x in {1, ..., 5}
        {
        \draw[] (\x, 1) -- (\x, 5);
        }
        \foreach \y in {1, ..., 5}
        {
        \draw[] (1, \y) -- (5, \y);
        }
        \draw[line width=2pt, red] (1,1) -- (2,1);
        \draw[line width=2pt, red] (1,1) -- (1,2);
        \draw[line width=2pt, green] (2,1) -- (2,2);
        \draw[line width=2pt, green] (1,2) -- (2,2);
        \draw[line width=2pt, green] (2,1) -- (3,1);
        \draw[line width=2pt, blue] (3,1) -- (3,2);
        \draw[line width=2pt, blue] (2,2) -- (3,2);

        \node[] () at (3, 0.5) {(a)};

        \begin{scope}[xshift=7cm]
        \foreach \x in {1, ..., 5}
        {
        \draw[] (\x, 1) -- (\x, 5);
        }
        \foreach \y in {1, ..., 5}
        {
        \draw[] (1, \y) -- (5, \y);
        }
        \draw[line width=2pt, red] (1,1) -- (2,1);
        \draw[line width=2pt, red] (1,1) -- (1,2);
        \draw[line width=2pt, red] (2,1) -- (2,2);
        \draw[line width=2pt, red] (1,2) -- (2,2);
        \draw[line width=2pt, green] (2,1) -- (3,1);
        \draw[line width=2pt, green] (3,1) -- (3,2);
        \draw[line width=2pt, green] (2,2) -- (3,2);
        \draw[line width=2pt, blue] (3,1) -- (4,1);

        \node[] () at (3, 0.5) {(b)};
            
        \end{scope}
    \end{tikzpicture}
    \caption{Examples of quantum Markov chains for the toric code with a smooth boundary condition~\cite{Kitaev2003,bravyi1998quantumcodeslatticeboundary}. The qubits are placed on edges. The subsystems $A$, $B$, and $C$ are colored in red, green, and blue, respectively.}
    \label{fig:qmc_toric_code}
\end{figure}
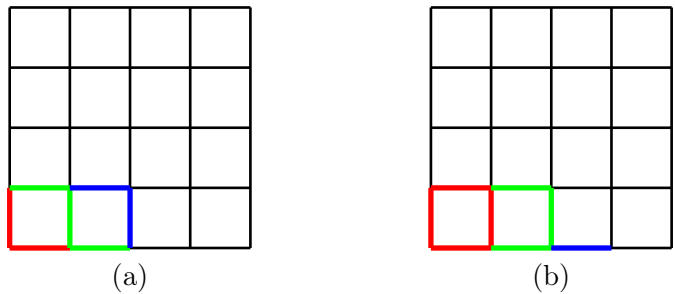

In the context of the circuit learning problem, the quantum Markov chain is relevant -- though with a caveat, as we discuss further below -- because many physical states of interest have the quantum Markov chain structure~\cite{Kim2013,shi2020fusion}. A nontrivial example is the toric code on a disk, with a uniform boundary condition (e.g., smooth boundary condition everywhere). For an appropriately chosen set of subsystems $A, B,$ and $C$, one can show that the ground state reduced density matrix over such $ABC$ forms a quantum Markov chain [Fig.~\ref{fig:qmc_toric_code}]~\cite{Kitaev2006,Levin2006,Kim2013}. Because the recovery map in Eq.~\eqref{eq:recovery_qmc} has a form that depends only on the reduced density matrix over $\rho_{BC}$, if we know this reduced density matrix, one can infer the recovery map in Eq.~\eqref{eq:recovery_qmc}. By sequentially applying such recovery maps associated with the subsystems in Fig.~\ref{fig:qmc_toric_code}, we can find a sequence of local channels that prepares the given state. This can be easily turned into a circuit by recalling that each local channel can be purified to a unitary, using an ancillary system of bounded dimension.

While the above delineated approach works for the toric code, it does not in more general contexts. If we were to apply a geometrically local constant-depth circuit to the state, the quantum Markov chain structure may no longer hold~\cite{Zou2016,Williamson2019,Kato2020}. This is undesirable because the two states are in the same phase. Ideally one would want a method that works for any state within the same phase, but the approach based on quantum Markov chain fails. This fragility of the quantum Markov chain structure under constant-depth circuits is what makes it inadequate for the circuit learning problem. 

However, there is a more relaxed variant of the quantum Markov chain that can avoid this difficulty~\cite{DraftOther}. In this variant, instead of demanding the recovery of the global state, one simply demands a recovery of the same state over a smaller subsystem. More precisely, note that in Eq.~\eqref{eq:recovery_qmc} $\Phi_{B\to BC}$ recovers the global state $\rho_{ABC}$ from $\rho_{AB}$. Instead, in the more relaxed notion one would aim to apply a channel $B\to B'C$, where $B'\subsetneq B$, with the goal of producing a state $\rho_{AB'C}$. Since $B'\subsetneq B$, we are demanding a condition strictly weaker than Eq.~\eqref{eq:recovery_qmc}.

The key advantage of this relaxed notion is twofold. First, though it is weaker than Eq.~\eqref{eq:recovery_qmc}, it is still strong enough to build up a circuit in the spirit of Fig.~\ref{fig:qmc_toric_code}, as we explain in Section~\ref{sec:extendibility_quantum_phase}. Second, unlike the quantum Markov chain condition, the weaker notion in Ref.~\cite{DraftOther} remains true even after applying a constant-depth circuit. The invariance of this condition under constant-depth circuit is what makes it suitable for the circuit learning problem.

Thus, the bulk of our analysis concerns this more relaxed notion of the quantum Markov chain. We refer to them as \emph{locally extendible states}, and discuss many of their properties in Section~\ref{sec:locally_extendible_states}. As a particularly pertinent property, we show the stability of extendibility under constant-depth circuits in Section~\ref{subsec:stability_circuit}. From the properties of these states, our learning algorithm construction follows, and we explain them in Sections~\ref{sec:extendibility_quantum_phase} and~\ref{sec:learning}.

\subsection{Notation and convention}
\label{subsec:notation}

Throughout this paper, we shall use the following convention. We denote a Hilbert space over a subsystem $S$ as $\mathcal{H}_S$. We denote the set of density matrices acting on a Hilbert space $\mathcal{H}_S$ as $\mathcal{D}_S$. We denote the identity channel acting on subsystem $S$ as $\mathcal{I}_S$. A \emph{channel} is a completely positive trace-preserving map. For any subsystem $S$, we denote the set of sites that are distance $x>0$ or less away from $S$ as $S(x)$. If $x<0$, $S(x)$ is a subsystem $S' \subset S$ such that $S'(-x)=S$.

\section{Locally extendible states}\label{sec:locally_extendible_states}

In this Section, we introduce the notion of \emph{locally extendible states}, or \emph{extendible states} for short.\footnote{This notion was recently introduced as \emph{partial recoverability} in Ref.~\cite{DraftOther}.} Though the precise definition of locally extendible state may seem abstract, we can obtain a more intuitive understanding by applying the definition to a reduced density matrix over an $\mathcal{O}(1)$-size ball for the states discussed in Section~\ref{sec:summary}. Below we will discuss these examples and then discuss properties of locally extendible states pertinent to this paper.

Let us first define locally extendible states as follows.
\begin{definition}\label{definition:locally_extendible_exact}
    A density matrix $\rho_{BCDE}$ is \emph{locally extendible} (or \emph{extendible}) from $BE$ to $BC$ if there exists a channel $\Phi:\mathcal{D}_{BE} \to \mathcal{D}_{BC}$ such that
    \begin{equation}
        \rho_{ABC} = \mathcal{I}_A\otimes \Phi( \rho_{ABE}),\label{eq:locally_extendible}
    \end{equation}
    where $\rho_{ABCDE}$ is a purification of $\rho_{BCDE}$. 
\end{definition}

\begin{definition}\label{definition:realizing_extension_exact}
    A channel $\Phi: \mathcal{D}_{BE} \to \mathcal{D}_{BC}$ \emph{extends} $\rho_{BCDE}$, if $\rho_{ABC} = \mathcal{I}_A \otimes \Phi (\rho_{ABE})$ for a purification $\rho_{ABCDE}$. Such a channel $\Phi$ is called an \emph{extending channel} or \emph{extending map} of $\rho_{BCDE}$ from $BE$ to $BC$.
\end{definition}

A guiding example of a locally extendible state is a reduced density matrix of the states discussed in Section~\ref{sec:summary}, for a judiciously chosen set of subsystems. For concreteness, we will consider two specific examples: a product state and Kitaev's toric code~\cite{Kitaev2003}. Consider the subsystems $BCDE$ shown in Fig.~\ref{fig:extendible_state_examples}(a) and (c), which can be viewed as a disk-like subsystem of a 2D lattice. For concreteness, let us set the purifying space $A$ to be the rest of the system, assuming the global state is pure [Figs.~\ref{fig:extendible_state_examples}(b) and~\ref{fig:extendible_state_examples}(d)].

For the subsystem $BCDE$ in Fig.~\ref{fig:extendible_state_examples}(a), it is easy to see that a product state is locally extendible from $BE$ to $BC$. The purifiying space $A$ is also a product state, decoupled from the rest. Therefore, we can construct an extending channel by appending the reduced density matrix of $C$ and taking a partial trace over $E$. The resulting state is a reduced density matrix over $ABC$, and as such, the reduced density matrix over $BCDE$ is extendible from $BE$ to $BC$. 

A more interesting example is the toric code. In this case, the purifying space $A$ is no longer decoupled from the rest. However, a simple calculation of entanglement entropy shows that 
\begin{equation}
\label{eq:weak_monotonicity_saturation}
    S(\rho_{BCE}) + S(\rho_{CD}) - S(\rho_{BE}) - S(\rho_{D})=0.
\end{equation}
This is because the entanglement entropy of the toric code has a form of $S(\rho_X) = \alpha |\partial X| - \log 2$ for any disk-like subsystem $A$, where $|\partial X|$ is the boundary area of $X$~\cite{Hamma2005,Kitaev2006,Levin2006}. It is an easy exercise to verify that this formula, applied to the linear combination of entanglement entropies in Eq.~\eqref{eq:weak_monotonicity_saturation}, yields zero~\cite{shi2020fusion}. From the strong subadditivity of entropy (SSA)~\cite{Lieb1973}, it then follows that
\begin{equation}
    I(A:C|BE)_{\rho} \leq S(\rho_{BCE}) + S(\rho_{CD}) - S(\rho_{BE}) - S(\rho_{D}),
\end{equation}
where $A$ is the purifying space. Again by the SSA, $I(A:C|BE)_{\rho}\geq 0$, and as such, $I(A:C|BE)_{\rho}$ must be zero. This equality is satisfied if and only if there is a channel $\Phi_{BE\to BEC}$ from $BE$ to $BEC$ such that $\rho_{ABCE} = \mathcal{I}_A\otimes \Phi_{BE\to BEC}(\rho_{ABE})$~\cite{Petz1988,Petz2003}. Therefore, Eq.~\eqref{eq:weak_monotonicity_saturation}, which the toric code satisfies, implies that there is an extending map from $BE$ to $BCE$. We can then simply take a partial trace over $E$, obtaining an extending map from $BE$ to $BC$. Thus, the toric code reduced density matrix over the subsystem $BCDE$ in Fig.~\ref{fig:extendible_state_examples} is also extendible from $BE$ to $BC$.

On the other hand, for the subsystem $BCDE$ in Fig.~\ref{fig:extendible_state_examples}(c), we arrive at different conclusions for the two examples. For a product state, we can again conclude that the reduced density matrix over $BCDE$ is extendible from $BE$ to $BC$. On the other hand, for the toric code, such a recovery map cannot exist due to the presence of long-range entanglement~\cite{Kim2013}.\footnote{To see why, consider two ground states of the toric code. If $BCDE$ is extendible from $BE$ to $BC$, that leads to the conclusion that the two states are identical, which is a contradiction.} Thus, the local extendibility condition for the subsystems in Fig.~\ref{fig:extendible_state_examples}(c) are satisfied by product states, but not necessarily by long-range entangled states such as the toric code.

Let us make a remark on the relevance of these examples to the circuit learning problem. Our approach to solving the circuit learning problem is to use the extending maps for the locally extendible states discussed above. A sequential application of those maps can  build up the global state, which in its entirety defines a state preparation circuit. (We defer the discussion on how the global state is built up to Section~\ref{sec:learning}.) 

In order for this approach to work, we must address the following problems. First, given a density matrix, we should be able to verify (or disprove) that the density matrix is locally extendible. Moreover, if the density matrix is extendible, we must have an efficient procedure for finding the extending map. Second, note that the class of states we consider includes not just the product state and the toric code, but also the states obtained by applying a constant-depth circuit to them. As such, it is important to verify that a locally extendible state remains locally extendible under such a circuit. Once these problems are solved, we are done; for the class of states we consider, we can simply obtain the extending map from the local reduced density matrices, from which the state preparation circuit can be obtained. We focus on these remaining tasks in the rest of this section.

\begin{figure}[t]
    \centering
    \begin{tikzpicture}
    \draw[line width=1pt] (0,0) circle (0.5cm);
    \draw[line width=1pt] (0,0) circle (1cm);
    \draw[line width=1pt] (75:0.5cm) -- (75:1cm);
    \draw[line width=1pt] (105:0.5cm) -- (105:1cm);
    \draw[line width=1pt] (-75:0.5cm) -- (-75:1cm);
    \draw[line width=1pt] (-105:0.5cm) -- (-105:1cm);

    \node[] () at (90:0.75cm) {$E$};
    \node[] () at (-90:0.75cm) {$E$};
    \node[] () at (0,0) {$C$};
    \node[] () at (0:0.75cm) {$D$};
    \node[] () at (180:0.75cm) {$B$};
    \node[below] () at (0, -1.5) {(a)};
    \begin{scope}[xshift=3.5cm,scale=0.375]
    \draw[line width=1pt] (-5,-2.5) -- ++ (10, 0) -- ++ (0, 5) -- ++ (-10, 0)--cycle;
    \node[] () at (-3, 0) {$A$};
    \node[below] () at (0, -4) {(b)};
    \begin{scope}[xshift=3cm, yshift=0.25cm]
    \draw[line width=1pt] (0,0) circle (0.5cm);
    \draw[line width=1pt] (0,0) circle (1cm);
    \draw[line width=1pt] (75:0.5cm) -- (75:1cm);
    \draw[line width=1pt] (105:0.5cm) -- (105:1cm);
    \draw[line width=1pt] (-75:0.5cm) -- (-75:1cm);
    \draw[line width=1pt] (-105:0.5cm) -- (-105:1cm);
    \end{scope}
        
    \end{scope}
        \begin{scope}[xshift=7cm]
            \draw[line width=1pt] (0,0) circle (0.5cm);
            \draw[line width=1pt] (0,0) circle (1cm);
            \draw[line width=1pt] (30:0.5cm) -- (30:1cm);
            \draw[line width=1pt] (150:0.5cm) -- (150:1cm);
            \draw[line width=1pt] (-30:0.5cm) -- (-30:1cm);
            \draw[line width=1pt] (-150:0.5cm) -- (-150:1cm);
            \draw[line width=1pt] (60:0.5cm) -- (60:1cm);
            \draw[line width=1pt] (120:0.5cm) -- (120:1cm);
            \draw[line width=1pt] (-60:0.5cm) -- (-60:1cm);
            \draw[line width=1pt] (-120:0.5cm) -- (-120:1cm);
    \node[below] () at (0, -1.5) {(c)};
    \node[] () at (90:0.75cm) {$D$};
    \node[] () at (0:0.75cm) {$B$};
    \node[] () at (0,0) {$C$};
    \node[] () at (180:0.75cm) {$B$};
    \node[] () at (270:0.75cm) {$D$};
    \node[] () at (45:0.75cm) {$E$};
    \node[] () at (-45:0.75cm) {$E$};
    \node[] () at (135:0.75cm) {$E$};
    \node[] () at (-135:0.75cm) {$E$};
        \end{scope}

    \begin{scope}[xshift=10.5cm,scale=0.375]
    \draw[line width=1pt] (-5,-2.5) -- ++ (10, 0) -- ++ (0, 5) -- ++ (-10, 0)--cycle;
    \begin{scope}[xshift=1cm, yshift=0.25cm]
    \draw[line width=1pt] (0,0) circle (0.5cm);
            \draw[line width=1pt] (0,0) circle (1cm);
            \draw[line width=1pt] (30:0.5cm) -- (30:1cm);
            \draw[line width=1pt] (150:0.5cm) -- (150:1cm);
            \draw[line width=1pt] (-30:0.5cm) -- (-30:1cm);
            \draw[line width=1pt] (-150:0.5cm) -- (-150:1cm);
            \draw[line width=1pt] (60:0.5cm) -- (60:1cm);
            \draw[line width=1pt] (120:0.5cm) -- (120:1cm);
            \draw[line width=1pt] (-60:0.5cm) -- (-60:1cm);
            \draw[line width=1pt] (-120:0.5cm) -- (-120:1cm);
    \end{scope}
    \node[below] () at (0, -4) {(d)};
    \node[] () at (-3, 0) {$A$};
    \end{scope}
    \end{tikzpicture}
    \caption{(a) The reduced density matrices of a product state/toric code over $BCDE$ is locally extendible from $BE$ to $BC$. (b) The subsystem $BCDE$ in (a) can be viewed as a disk in a 2D lattice. (c) The reduced density matrix of a product state over $BCDE$ is locally extendible from $BE$ to $BC$; this is not the case for the toric code. (d) The subsystem $BCDE$ in (c) can be viewed as a disk in a 2D lattice.}
    \label{fig:extendible_state_examples}
\end{figure}
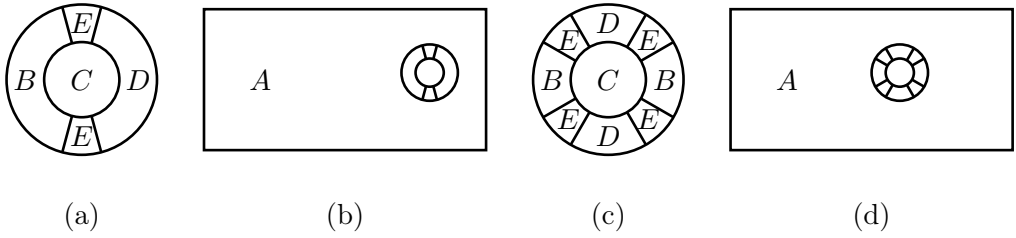

In Section~\ref{subsec:checking_extendibility}, we introduce an efficient method to verify that a given state is locally extendible. In particular, we introduce an efficient method for finding the extending map. In Section~\ref{subsec:stability_approximation}, we prove several useful continuity property of extendible states. In Section~\ref{subsec:stability_circuit}, we prove that the notion of local extendibility for the subsystems in Fig.~\ref{fig:extendible_state_examples} are stable under geometrically local constant-depth circuits, provided that the diameters of the subsystems are large compared to the circuit depth.

\subsection{Verifying extendibility}
\label{subsec:checking_extendibility}

If we were given an extending map of a locally extendible state, it would be straightforward to verify that the state is locally extendible. However, how to carry out this task is less clear if the input is simply a state, without the extending map. In this Section, we introduce an efficient method to check the extendibility of a state. As a byproduct of our analysis, we will also obtain an efficient method for finding a (near-optimal) extending map. 

To that end, we introduce a notion of \emph{approximately extendible states}. Typically, no state will exactly satisfy the definition of extendibility [Definition~\ref{definition:locally_extendible_exact}]. Therefore, we now extend the notion of local extendibility to one that allows for an approximation error. We will use the Bures metric to quantify this error. Let $F(\rho, \sigma) = \text{Tr}((\sigma^{\frac{1}{2}} \rho \sigma^{\frac{1}{2}})^{\frac{1}{2}})$ be the fidelity. Define $\bures{\rho}{\sigma} := \sqrt{1-F(\rho,\sigma)}$, which is known as the \emph{Bures distance}. Bures distance is a distance measure, and it is related to the trace distance as follows~\cite{Fuchs1999}:
\begin{equation} \label{eq:fuchs}
    2\bures{\rho}{\sigma}^2 \leq \|\rho - \sigma \|_1 \leq 2\sqrt{2} \bures{\rho}{\sigma}.
\end{equation}
We also remark that the Bures distance is non-increasing under channels.

The following is our definition of \emph{approximately (locally) extendible state}.
\begin{definition}\label{definition:locally_extendible_approximate}
    A density matrix $\rho_{BCDE}$ is $\epsilon$-\emph{locally-extendible}, or $\epsilon$-\emph{extendible} for short, from $BE$ to $BC$ if
    \begin{equation}
    \min_{\Phi: \mathcal{D}_{BE}\to \mathcal{D}_{BC}} \bures{\rho_{ABC}}{\mathcal{I}_A \otimes \Phi(\rho_{ABE})} \leq \epsilon,
    \label{eq:locally_extendible_approximate}
    \end{equation}
    where $\rho_{ABCDE}$ is a purification of $\rho_{BCDE}$, and $\epsilon \geq 0$.
\end{definition}

\begin{definition}\label{definition:realizing_extension_approximate}
    A channel $\Phi:\mathcal{D}_{BE}\to \mathcal{D}_{BC}$ \emph{realizes an $\epsilon$-extension} of $\rho_{BCDE}$, or \emph{$\epsilon$-extends} $\rho_{BCDE}$, if, for a purification $\rho_{ABCDE}$, 
    \begin{equation}
        \bures{\rho_{ABC}}{\mathcal{I}_A\otimes \Phi(\rho_{ABE})}\leq \epsilon.
    \end{equation}
\end{definition}

Both the extendibility [Definition~\ref{definition:locally_extendible_exact}] and approximate extendibility [Definition~\ref{definition:locally_extendible_approximate}] requires a purification. Fortunately, these definitions do not depend on the choice of purification.
\begin{lemma} 
If Eq.~\eqref{eq:locally_extendible_approximate} holds for one purification $\rho_{ABCDE}$, it also holds true for another purification $\rho_{A'BCDE}$.
\end{lemma}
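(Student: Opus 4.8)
The plan is to exploit the standard fact that any two purifications of a given density matrix are related by an isometry on the purifying system, together with the invariance of fidelity (hence of the Bures distance $\beta$) under conjugation by such an isometry. First I would reduce to the case $\dim\mathcal{H}_A=\dim\mathcal{H}_{A'}$: if, say, $\dim\mathcal{H}_A<\dim\mathcal{H}_{A'}$, replace $A$ by $A\otimes A_{\mathrm{aux}}$ with $A_{\mathrm{aux}}$ in a fixed pure state. This does not change $\rho_{BCDE}$, it sends $\rho_{ABC}\mapsto\rho_{ABC}\otimes\ket{0}\!\bra{0}$ and $\rho_{ABE}\mapsto\rho_{ABE}\otimes\ket{0}\!\bra{0}$, and, since $F$ is multiplicative over tensor products, it leaves the value in Eq.~\eqref{eq:locally_extendible_approximate} unchanged. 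Under the equal-dimension assumption there is a unitary $U:\mathcal{H}_A\to\mathcal{H}_{A'}$ with $\ket{\rho_{A'BCDE}}=(U\otimes I_{BCDE})\ket{\rho_{ABCDE}}$, because $\rho_{ABCDE}$ and $\rho_{A'BCDE}$ purify the same state $\rho_{BCDE}$.

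Next I would track the transformation of the relevant marginals. Since $U$ acts only on $A$, it commutes with $\mathrm{Tr}_{DE}$, so $\rho_{A'BC}=(U\otimes I_{BC})\,\rho_{ABC}\,(U\otimes I_{BC})^\dagger$ and likewise $\rho_{A'BE}=(U\otimes I_{BE})\,\rho_{ABE}\,(U\otimes I_{BE})^\dagger$. The key observation is that the optimization domain $\{\Phi:\mathcal{D}_{BE}\to\mathcal{D}_{BC}\}$ does not reference $A$ at all, so it is literally the same set for both purifications; and for any fixed such $\Phi$, the maps $\mathcal{I}_A\otimes\Phi$ and conjugation by $U\otimes I$ act on disjoint subsystems, hence commute. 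Therefore $\mathcal{I}_{A'}\otimes\Phi(\rho_{A'BE})=(U\otimes I_{BC})\,[\mathcal{I}_A\otimes\Phi(\rho_{ABE})]\,(U\otimes I_{BC})^\dagger$.

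Finally I would combine these with the isometry-invariance of fidelity, $F(W\rho W^\dagger,W\sigma W^\dagger)=F(\rho,\sigma)$ whenever $W^\dagger W=I$ (which follows from $\sqrt{W\sigma W^\dagger}=W\sqrt{\sigma}\,W^\dagger$ on the range of $W$). Applying it with $W=U\otimes I_{BC}$ gives, for every $\Phi$,
\begin{equation}
\bures{\rho_{A'BC}}{\mathcal{I}_{A'}\otimes\Phi(\rho_{A'BE})}=\bures{\rho_{ABC}}{\mathcal{I}_A\otimes\Phi(\rho_{ABE})}.
\end{equation}
Taking the infimum over $\Phi$ on both sides shows that the left-hand side of Eq.~\eqref{eq:locally_extendible_approximate} is the same number for the two purifications; in particular it is $\le\epsilon$ for one iff it is for the other, and the exact case (Definition~\ref{definition:locally_extendible_exact}) is the specialization $\epsilon=0$. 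I do not anticipate a real obstacle here; the only points needing care are the dimension bookkeeping in the isometry relating the purifications and making explicit that the set of candidate channels is purification-independent, so that the two minimizations can be compared term by term rather than just as infima.
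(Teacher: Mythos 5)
Your proposal is correct and follows essentially the same route as the paper: relate the two purifications by an isometry (Uhlmann) acting only on the purifying system, commute it past the partial traces and the candidate channel, and invoke the isometry-invariance of the Bures distance to conclude that the minimization value is identical for both purifications. The only cosmetic difference is that you pad dimensions to work with a unitary, whereas the paper applies the isometry directly.
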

\begin{proof}
    By Uhlmann's theorem~\cite{Uhlmann1976}, there is an isometry $V: \mathcal{H}_A \to \mathcal{H}_{A'}$ such that
    \begin{equation}
    \rho_{A'BCDE} = V \rho_{ABCDE} V^{\dagger}.
    \end{equation}
    Because $V$ acts trivially on $BCE$, and the Bures distance is invariant under an isometry, from Eq.~\eqref{eq:locally_extendible_approximate} we obtain:
    \begin{equation}
    \min_{\Phi: \mathcal{D}_{BE}\to \mathcal{D}_{BC}} \bures{\rho_{A'BC}}{\mathcal{I}_{A'} \otimes \Phi(\rho_{A'BE})}  = \min_{\Phi: \mathcal{D}_{BE}\to \mathcal{D}_{BC}} \bures{\rho_{ABC}}{\mathcal{I}_{A} \otimes \Phi(\rho_{ABE})} \leq \epsilon.
    \end{equation}
\end{proof}
\noindent
With a similar reasoning, if a  channel $\Phi: \mathcal{D}_{BE}\to \mathcal{D}_{BC}$ $\epsilon$-extends $\rho_{BCDE}$ for a purification of $\rho_{ABCDE}$, it also $\epsilon$-extends the same state for a different purification $\rho_{A'BCDE}$. Therefore, even without specifying the purifying system, we can unambiguously say that a channel realizes an $\epsilon$-extension [Definition~\ref{definition:realizing_extension_approximate}].

Because the extendibility does not depend on the choice of the purifying space, we can simply choose it to be the one with the minimal dimension, which is the rank of $\rho_{BCDE}$. By choosing such a purification, the Bures distance between $\rho_{ABC}$ and $\mathcal{I}_A\otimes \Phi(\rho_{ABE})$ for a given $\Phi$ can be computed in time $\text{poly}(\text{rk}(\rho_{BCDE}))$, where $\text{rk}(M)$ is the rank of a matrix $M$.

Now it remains to explain how one can solve the optimization problem in Eq.~\eqref{eq:locally_extendible_approximate}, and in particular find the optimal solution $\Phi: \mathcal{D}_{BE} \to \mathcal{D}_{BC}$. This problem can be solved by using known facts about the \emph{fidelity of recovery}~\cite{Fawzi2015}, defined as:
\begin{equation}
F_{C\to B}(\rho_{AB}\| \sigma_{AC}) := \max_{\Phi:\mathcal{D}_{C} \to \mathcal{D}_B}F(\rho_{AB}, \mathcal{I}_A\otimes \Phi(\sigma_{AC})).
\end{equation}
Brought to our setup, we see that the local extendibility condition can be phrased in terms of the fidelity of recovery:
\begin{equation}
    \min_{\Phi: \mathcal{D}_{BE}\to \mathcal{D}_{BC}} \bures{\rho_{ABC}}{\mathcal{I}_A \otimes \Phi(\rho_{ABE})} = \sqrt{1-F_{BE\to BC}(\rho_{ABC} \| \sigma_{ABE})}.
\end{equation}
This formulation is useful because the fidelity of recovery can be recast as a semi-definite programming (SDP)~\cite{Berta2016}, for which there is an efficient algorithm for solving the optimization problem~\cite{Boyd2004}.

More specifically, the square root of the fidelity of recovery is a solution to the following optimization problem~\cite{Berta2016}:
\begin{equation}
    \begin{aligned}
        \text{maximize}&: \frac{1}{2}\text{Tr}(Z_{AB} + Z_{AB}^{\dagger}) \\
        \text{subject to}&: \tau_{ABD}\geq 0, Z_{AB} \in \mathcal{L}_{AB} \\
        &\quad \text{Tr}_B(\tau_{ABD})= I_{AD} \\
        &\quad \begin{pmatrix}
            \rho_{AB} & Z_{AB}\\
            Z_{AB}^{\dagger} & \text{Tr}_D(\sqrt{\sigma_{AD}} \tau_{ABD}\sqrt{\sigma_{AD}})
        \end{pmatrix}
        \geq 0.
    \end{aligned}
    \label{eq:sdp}
\end{equation}
Here $\mathcal{L}_{AB}$ is a set of linear operators acting on $\mathcal{H}_A \otimes \mathcal{H}_B$, $\sigma_{ACD}$ is a purification of $\sigma_{AC}$, and $\tau_{ABD}$ is a Choi-Jamio\l{}kowski state~\cite{Choi1975,Jamiolkowski}. That is, $\tau_{ABD}:=  \Phi(\Psi_{AD:C})$, where $\Psi_{AD:C}$ is an unnormalized maximally entangled state between $AD$ and $C$. Eq.~\eqref{eq:sdp} can be brought into the standard SDP form~\cite{Berta2016}, after which it can be solved in time polynomial in the dimensions and $\log (1/\delta)$, where $\delta$ is the target precision. In particular, the minimizer $\tau_{ABD}$ can be obtained, from which the channel $\Phi$ can be obtained via the Choi-Jamio\l{}kowski isomorphism~\cite{Choi1975,Jamiolkowski}. 

\begin{proposition}
\label{eq:for_sdp}
    The fidelity of recovery $F_{C\to B}(\rho_{AB}\| \sigma_{AC})$ (and the channel that achieves the minimum) can be computed in time polynomial in the dimension of $A, B, C$, and $\log 1/\delta$, where $\delta$ is the target additive precision. 
\end{proposition}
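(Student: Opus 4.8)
The plan is to reduce the statement to a known result about the fidelity of recovery, namely that it admits an SDP formulation of bounded size, and then invoke standard complexity bounds for interior-point methods applied to SDPs. First I would recall the SDP in Eq.~\eqref{eq:sdp} from Ref.~\cite{Berta2016}, whose optimal value equals $\sqrt{F_{C\to B}(\rho_{AB}\|\sigma_{AC})}$. The key observation is that all the objects appearing in that program --- the matrix variables $\tau_{ABD}$ and $Z_{AB}$, the constraint blocks, and the linear map $\tau_{ABD}\mapsto \mathrm{Tr}_D(\sqrt{\sigma_{AD}}\,\tau_{ABD}\,\sqrt{\sigma_{AD}})$ --- act on Hilbert spaces whose dimensions are products of $\dim\mathcal{H}_A$, $\dim\mathcal{H}_B$, and $\dim\mathcal{H}_C$ (the purifying space $D$ can be taken of dimension at most $\dim\mathcal{H}_A\cdot\dim\mathcal{H}_C$, the rank of $\sigma_{AC}$). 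Hence the total size of the semidefinite program --- number of variables and number/size of constraints --- is polynomial in $d_A := \dim\mathcal{H}_A$, $d_B := \dim\mathcal{H}_B$, and $d_C := \dim\mathcal{H}_C$.

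Next I would check that the program satisfies the regularity conditions needed for efficient SDP solvers: it is feasible (one can exhibit a strictly feasible point, e.g.\ by taking $\tau_{ABD}$ proportional to the Choi state of the map that discards $C$ and appends a maximally mixed $B$, which makes the Schur-complement block strictly positive after a small perturbation), and it is bounded (the objective is bounded by $1$ since it computes a square-root fidelity). Strict feasibility of both primal and dual --- i.e.\ Slater's condition --- is what guarantees that the ellipsoid or interior-point method converges to within additive precision $\delta$ in time polynomial in the instance size and $\log(1/\delta)$; I would cite Ref.~\cite{Boyd2004} (and, if a fully rigorous bit-complexity statement is wanted, the standard results on SDP in the Turing model, controlling the bit-size of an approximate optimal solution). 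Combining the polynomial bound on the program size with the polynomial-in-$\log(1/\delta)$ running time of the solver yields the claimed complexity for computing $F_{C\to B}(\rho_{AB}\|\sigma_{AC})$ to additive precision $\delta$.

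Finally, for the parenthetical claim about recovering the optimal channel, I would note that the interior-point method returns not just the optimal value but an (approximately) optimal primal point, in particular the minimizer $\tau_{ABD}$. Since $\tau_{ABD}$ is the Choi--Jamio\l{}kowski state of $\Phi$, applying the (efficiently computable) inverse Choi--Jamio\l{}kowski isomorphism recovers a description of $\Phi$ in time polynomial in $d_A, d_B, d_C$; and by continuity of the map from Choi states to channels (and of the fidelity of recovery in the input states, which can be invoked from the continuity properties established earlier), the channel so obtained achieves the fidelity of recovery up to a precision controlled by $\delta$. I expect the main obstacle to be the bookkeeping required to verify strict feasibility and to translate the informal "solvable in poly time" guarantee for SDPs into a clean additive-precision, $\log(1/\delta)$-dependent bound in the appropriate computational model; the rest is a direct application of Eq.~\eqref{eq:sdp} together with the dimension count.
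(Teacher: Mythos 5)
Your proposal matches the paper's argument: the paper likewise establishes this proposition by invoking the SDP formulation of the (square-root) fidelity of recovery from Ref.~\cite{Berta2016} [Eq.~\eqref{eq:sdp}], noting it can be brought into standard SDP form and solved in time polynomial in the dimensions and $\log(1/\delta)$, with the channel recovered from the minimizer $\tau_{ABD}$ via the Choi--Jamio\l{}kowski isomorphism. The additional care you take with Slater's condition and the bit-complexity of the solver only fills in details the paper leaves implicit.
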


Therefore, by solving an SDP of the form of Eq.~\eqref{eq:sdp}, we can compute the degree to which the given state is locally extendible. Moreover, we can obtain a channel that minimizes the error in Eq~\eqref{eq:locally_extendible_approximate}. Both take time polynomial in the dimension of $BCDE$ and $\log (1/\delta)$, where $\delta$ is the target precision. Later we shall choose the subsystems $BCDE$ so that their dimension is $\mathcal{O}(1)$. In such cases, both tasks take time at most polynomial in $\log (1/\delta)$. In particular, we arrive at the following corollary.

\begin{corollary}\label{corollary:finding_extender}
    Let $\rho_{BCDE}$ be a $\epsilon$-extendible state from $BE$ to $BC$. There is an algorithm that, given a description of $\rho_{BCDE}$, outputs a channel $\Phi: \mathcal{D}_{BE} \to \mathcal{D}_{BC}$ such that
    \begin{equation}
        \bures{\rho_{ABC}}{\mathcal{I}_A\otimes \Phi(\rho_{ABE})}\leq 2\epsilon,
    \end{equation}
    where $\rho_{ABCDE}$ is a purification of $\rho_{BCDE}$, in time polynomial in the dimension of $BCDE$ and $\log 1/\epsilon$.
\end{corollary}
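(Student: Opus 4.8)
The plan is to reduce the corollary directly to Proposition~\ref{eq:for_sdp}, which already provides an efficient SDP-based algorithm for computing the fidelity of recovery together with a near-optimal recovery channel. First I would recall the identification made just above the statement: with the substitutions $\rho_{AB} \mapsto \rho_{ABC}$, $\sigma_{AC} \mapsto \rho_{ABE}$, and the roles of the subsystems being relabelled so that the "recovered" system is $C$ and the system acted on is $BE$, one has
\begin{equation}
    \min_{\Phi: \mathcal{D}_{BE}\to \mathcal{D}_{BC}} \bures{\rho_{ABC}}{\mathcal{I}_A \otimes \Phi(\rho_{ABE})} = \sqrt{1-F_{BE\to BC}(\rho_{ABC} \| \rho_{ABE})}.
\end{equation}
Since $\rho_{BCDE}$ is assumed $\epsilon$-extendible, the left-hand side is at most $\epsilon$, so $F_{BE\to BC}(\rho_{ABC}\|\rho_{ABE}) \geq 1-\epsilon^2$. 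I would then invoke Proposition~\ref{eq:for_sdp} with target additive precision $\delta$ to produce, in time polynomial in the dimension of $BCDE$ and $\log(1/\delta)$, a channel $\Phi$ whose associated fidelity $F(\rho_{ABC}, \mathcal{I}_A\otimes\Phi(\rho_{ABE}))$ is within $\delta$ of the optimum, hence at least $1-\epsilon^2-\delta$. Translating back to the Bures distance gives $\bures{\rho_{ABC}}{\mathcal{I}_A\otimes\Phi(\rho_{ABE})} \leq \sqrt{\epsilon^2+\delta}$.

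It then remains to choose $\delta$ so that $\sqrt{\epsilon^2+\delta}\leq 2\epsilon$, i.e. $\delta \leq 3\epsilon^2$; taking $\delta = 3\epsilon^2$ (or any fixed constant multiple of $\epsilon^2$) suffices. The running time is then polynomial in the dimension of $BCDE$ and $\log(1/\delta) = \mathcal{O}(\log(1/\epsilon))$, which is exactly the claimed bound. One small bookkeeping point I would address is that the algorithm need not "know" $\epsilon$ in advance: one can either assume $\epsilon$ is given as part of the input (consistent with the phrasing "Let $\rho_{BCDE}$ be an $\epsilon$-extendible state"), or first run the SDP once to a crude fixed precision to estimate the optimal value, and then rerun with $\delta$ set to a constant fraction of that estimate squared; either way the complexity is unchanged up to constants.

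I do not anticipate a genuine obstacle here: the corollary is essentially a wrapper around Proposition~\ref{eq:for_sdp} plus the fidelity-of-recovery reformulation, with the only content being the propagation of the additive SDP error through the square root relating fidelity to Bures distance and the resulting $\epsilon \to 2\epsilon$ loss. If anything, the one place to be careful is ensuring the SDP in Eq.~\eqref{eq:sdp} is set up with the purifying system $D$ (here playing the role of the original $D$, so that $\rho_{ABE}$ has purification on $D$) of the correct, minimal dimension, so that "polynomial in the dimension" really is $\mathrm{poly}(\mathrm{rk}(\rho_{BCDE}))$; this was already handled in the discussion preceding the proposition, so it can simply be cited.
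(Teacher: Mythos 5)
Your proposal is correct and follows essentially the same route the paper intends: the corollary is stated as a direct consequence of the SDP formulation of the fidelity of recovery (Eq.~\eqref{eq:sdp} and Proposition~\ref{eq:for_sdp}), and your error propagation $F_{\mathrm{opt}}\geq 1-\epsilon^2$, $\delta=\Theta(\epsilon^2)$, $\sqrt{\epsilon^2+\delta}\leq 2\epsilon$ with runtime $\log(1/\delta)=\mathcal{O}(\log(1/\epsilon))$ is exactly the intended bookkeeping. No gaps.
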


We remark that, though the SDP formulation of extendibility has the advantage of being efficiently solvable, there are setups in which a more brute-force approach is preferred. Since our ultimate goal is circuit learning, the channel obtained from the solution of the SDP must be converted into a unitary using Stinespring dilation~\cite{Stinespring1955}. Unfortunately, we can only guarantee the depth of this unitary (once decomposed into a circuit) to be at most exponential in the number of qubits in the subsystems involved. However, in many of the setups we consider [Section~\ref{sec:summary}], we will be guaranteed that there exists a unitary with a much lower depth. With such a guarantee, one can simply brute-force search over all possible circuits with the prescribed depth. This method will have a higher complexity compared to the SDP approach, but will succeed in finding a circuit of lower depth. We shall revisit this discussion in Section~\ref{subsec:learning_extending_map}.

\subsection{Continuity}
\label{subsec:stability_approximation}

In Section~\ref{subsec:checking_extendibility}, we introduced a method to check the extendibility of a given state. Moreover, we also introduced a method to find a channel that minimizes the optimization problem in Eq.~\eqref{eq:locally_extendible_approximate} (up to an error which we can choose). While this is useful, to use these results for the circuit learning problem, we need to prove some continuity bounds. This is because, even if a state of interest is extendible, we will at best have a finite-precision approximation of the state. For instance, if we were to obtain the state using state tomography, there will be a statistical error associated with the finite number of samples we take. If were to use an exact diagonalization to compute the state, we will still have only a finite number bits to specify the state. In both cases, the state we obtain will merely be an approximation of the true state. 

To that end, we prove two results in this Section. First, we prove that if a state is close to an extendible state, it is also approximately extendible [Lemma~\ref{lemma:stability}]. Second, we prove that a good extending map for one state is also a good extending map for another state if the two states are close to each other [Lemma~\ref{lemma:extender}]. Therefore, if we have a good enough approximation of a given state, an extending map for the approximate state is an approximate extending map of the given state. 

\begin{lemma}
\label{lemma:stability}
Suppose $\rho_{BCDE}$ is $\epsilon$-extendible from $BE$ to $BC$ and $\bures{\rho_{BCDE}}{\rho_{BCDE}'}\leq \delta$ for some $\delta \geq 0$. Then, $\rho_{BCDE}'$ is $(\epsilon + 2\delta)$-extendible.
\end{lemma}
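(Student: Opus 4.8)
The plan is to convert everything to statements about the fidelity of recovery and use the triangle inequality for the Bures distance, together with the fact that the Bures distance is non-increasing under channels [cited just after Eq.~\eqref{eq:fuchs}]. First I would fix a purification. Let $\rho_{ABCDE}$ be a purification of $\rho_{BCDE}$; since $\bures{\rho_{BCDE}}{\rho'_{BCDE}} \leq \delta$, by Uhlmann's theorem there is a purification $\rho'_{ABCDE}$ of $\rho'_{BCDE}$ on the \emph{same} system $A$ with $\bures{\rho_{ABCDE}}{\rho'_{ABCDE}} \leq \delta$. (This is the step where Uhlmann's theorem is doing the real work: it lets us compare the two global purifications, not just their restrictions.) Tracing out subsystems is a channel, so this immediately gives $\bures{\rho_{ABC}}{\rho'_{ABC}} \leq \delta$ and $\bures{\rho_{ABE}}{\rho'_{ABE}} \leq \delta$ as well.

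Next, since $\rho_{BCDE}$ is $\epsilon$-extendible from $BE$ to $BC$, fix a channel $\Phi: \mathcal{D}_{BE} \to \mathcal{D}_{BC}$ with $\bures{\rho_{ABC}}{\mathcal{I}_A \otimes \Phi(\rho_{ABE})} \leq \epsilon$ (such a $\Phi$ exists, possibly after replacing $\epsilon$ by $\epsilon + \eta$ for arbitrarily small $\eta$, since Definition~\ref{definition:locally_extendible_approximate} is phrased with an infimum; alternatively invoke compactness of the set of channels to get the minimizer exactly). I would then estimate, using the triangle inequality for the Bures distance three times,
\begin{align}
\bures{\rho'_{ABC}}{\mathcal{I}_A \otimes \Phi(\rho'_{ABE})}
&\leq \bures{\rho'_{ABC}}{\rho_{ABC}} + \bures{\rho_{ABC}}{\mathcal{I}_A \otimes \Phi(\rho_{ABE})} \notag \\
&\quad + \bures{\mathcal{I}_A \otimes \Phi(\rho_{ABE})}{\mathcal{I}_A \otimes \Phi(\rho'_{ABE})} \notag \\
&\leq \delta + \epsilon + \delta = \epsilon + 2\delta,
\end{align}
where the last term is bounded by $\bures{\rho_{ABE}}{\rho'_{ABE}} \leq \delta$ because $\mathcal{I}_A \otimes \Phi$ is a channel and the Bures distance is contractive under channels. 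Since $\rho'_{ABCDE}$ is a valid purification of $\rho'_{BCDE}$, and by the lemma preceding Corollary~\ref{corollary:finding_extender} the extendibility error does not depend on the choice of purification, this shows $\rho'_{BCDE}$ is $(\epsilon + 2\delta)$-extendible from $BE$ to $BC$, as claimed.

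I do not anticipate a serious obstacle here; the only subtlety worth flagging is the use of Uhlmann's theorem to obtain aligned purifications of $\rho_{BCDE}$ and $\rho'_{BCDE}$, so that the two approximation errors $\bures{\rho_{ABC}}{\rho'_{ABC}}$ and $\bures{\rho_{ABE}}{\rho'_{ABE}}$ are both controlled by the same $\delta$ — without this one would be comparing density matrices on incompatible purifying spaces. A secondary, purely cosmetic point is whether the infimum in Definition~\ref{definition:locally_extendible_approximate} is attained; since the set of channels $\mathcal{D}_{BE} \to \mathcal{D}_{BC}$ is compact and the objective is continuous in $\Phi$, the minimum is attained, so one may work with an exact minimizer and the bound $\epsilon + 2\delta$ is clean.
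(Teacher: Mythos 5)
Your proposal is correct and follows essentially the same route as the paper's proof: Uhlmann's theorem to align the purifications, contractivity of the Bures distance under partial trace, and then the three-term triangle inequality with contractivity under $\mathcal{I}_A\otimes\Phi$. The extra remarks about attainment of the minimum are fine but not needed, since the definition is stated with a $\min$ and the bound goes through with an infimum in any case.
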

\begin{proof}
\sloppy
    By Uhlmann's theorem~\cite{Uhlmann1976}, there exist purifications $\rho_{ABCDE}$ and $\rho_{ABCDE}'$ that satisfies $\bures{\rho_{ABCDE}}{\rho_{ABCDE}'}\leq \delta$. Because the Bures distance is non-increasing under a partial trace (over $D$), we get:
    \begin{equation}
        \begin{aligned}
            \bures{\rho_{ABC}}{\rho_{ABC}'} \leq \delta \quad \text{ and } \quad \bures{\rho_{ABE}}{\rho_{ABE}'} \leq \delta. \label{eq:uhlmann_and_then_partial_trace}
        \end{aligned}
    \end{equation}

    \noindent By the extendibility assumption, there exists a channel $\Phi_0: \mathcal{D}_{BE} \to \mathcal{D}_{BC}$ such that
    \begin{equation}
        \bures{\rho_{ABC}}{\mathcal{I}_A\otimes \Phi_0(\rho_{ABE})}\leq \epsilon.
    \end{equation}
    Using the triangle inequality for the Bures distance,
    \begin{equation}
    \begin{aligned}
        \bures{\rho_{ABC}'}{\mathcal{I}_A\otimes \Phi_0(\rho_{ABE}')} &\leq \bures{\rho_{ABC}'} {\rho_{ABC}} + \bures{\rho_{ABC}}{\mathcal{I}_A\otimes \Phi_0(\rho_{ABE})} \\&+ \bures{\mathcal{I}_A\otimes \Phi_0(\rho_{ABE})}{\mathcal{I}_A\otimes \Phi_0(\rho_{ABE}')} \\
        &\leq \epsilon + 2\delta,
    \end{aligned}
    \end{equation}
\end{proof}

Using the exact same argument, we also obtain the following result.
\begin{lemma}
\label{lemma:extender}
    Suppose $\Phi:\mathcal{D}_{BE}\to \mathcal{D}_{BC}$ realizes an $\epsilon$-extension of $\rho_{BCDE}$ and $\mathcal{B}(\rho_{BCDE}, \rho_{BCDE}')\leq \delta$ for some $\delta \geq 0$. Then, $\Phi$ realizes an $(\epsilon+2\delta)$-extension of $\rho_{BCDE}'$.
\end{lemma}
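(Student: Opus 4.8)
The plan is to copy the argument of Lemma~\ref{lemma:stability} almost verbatim; the only structural change is that the extending channel $\Phi$ is now fixed in advance by the hypothesis rather than being produced by an extendibility assumption, so there is nothing to optimize over.

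First I would invoke Uhlmann's theorem~\cite{Uhlmann1976} to choose purifications $\rho_{ABCDE}$ of $\rho_{BCDE}$ and $\rho_{ABCDE}'$ of $\rho_{BCDE}'$ (over the same purifying system $A$) with $\bures{\rho_{ABCDE}}{\rho_{ABCDE}'}\leq\delta$. This is legitimate because, by the purification-independence remark following Definition~\ref{definition:realizing_extension_approximate}, the statement ``$\Phi$ realizes an $\epsilon$-extension of $\rho_{BCDE}$'' holds for \emph{every} purification, in particular for this Uhlmann-optimal one; similarly the conclusion we want to prove does not depend on which purification of $\rho_{BCDE}'$ we use. Then, since the Bures distance is non-increasing under the partial trace over $D$, I obtain $\bures{\rho_{ABC}}{\rho_{ABC}'}\leq\delta$ and $\bures{\rho_{ABE}}{\rho_{ABE}'}\leq\delta$, exactly as in Eq.~\eqref{eq:uhlmann_and_then_partial_trace}.

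Next I would apply the triangle inequality for the Bures distance along the chain
\[
\rho_{ABC}' \;\longrightarrow\; \rho_{ABC} \;\longrightarrow\; \mathcal{I}_A\otimes\Phi(\rho_{ABE}) \;\longrightarrow\; \mathcal{I}_A\otimes\Phi(\rho_{ABE}'),
\]
and bound the three legs separately: the first is $\leq\delta$ by the partial-trace estimate above; the middle is $\leq\epsilon$ by the hypothesis that $\Phi$ $\epsilon$-extends $\rho_{BCDE}$ with respect to the chosen purification; and the last is $\leq\delta$ because $\mathcal{I}_A\otimes\Phi$ is a channel and the Bures distance is non-increasing under channels, applied to $\bures{\rho_{ABE}}{\rho_{ABE}'}\leq\delta$. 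Summing yields $\bures{\rho_{ABC}'}{\mathcal{I}_A\otimes\Phi(\rho_{ABE}')}\leq\epsilon+2\delta$, which is the desired $(\epsilon+2\delta)$-extension bound.

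There is no real obstacle here: the only point requiring a moment's care is the bookkeeping in the first paragraph, namely that the purification appearing implicitly in ``$\Phi$ $\epsilon$-extends $\rho_{BCDE}$'' may be taken to be the same $\rho_{ABCDE}$ delivered by Uhlmann's theorem, and this is exactly what the purification-independence discussion after Definition~\ref{definition:realizing_extension_approximate} guarantees. No new ideas beyond the ingredients already used in Lemma~\ref{lemma:stability} (Uhlmann's theorem, monotonicity of Bures distance under CPTP maps, and the triangle inequality) are needed.
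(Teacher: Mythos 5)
Your proposal is correct and matches the paper's intended argument: the paper proves Lemma~\ref{lemma:extender} by remarking that it follows from ``the exact same argument'' as Lemma~\ref{lemma:stability}, which is precisely the Uhlmann--monotonicity--triangle-inequality chain you spell out. Your extra care about purification-independence of the hypothesis is a correct (and welcome) piece of bookkeeping that the paper leaves implicit.
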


\subsection{Learning an extending map}
\label{subsec:learning_extending_map}

Thanks to the SDP formulation of the fidelity of recovery [Eq.~\eqref{eq:sdp}] and its continuity [Section~\ref{subsec:stability_approximation}], we can now describe algorithms that learn an extending map associated with a density matrix $\rho_{BCDE}$, given its approximate description.

We consider two situations. The first situation is when no information about the extending map is known to the user. In this case, one can simply solve the SDP [Eq.~\eqref{eq:sdp}] from which the extending map is obtained. 

\begin{lemma}
\label{lemma:extender_sdp}
Let $\rho_{BCDE}$ be an $\epsilon$-extendible state from $BE$ to $BC$. Given a description of $\rho_{BCDE}'$ such that $\mathcal{B}(\rho_{BCDE}, \rho_{BCDE}')\leq \delta$, there is an algorithm running in time polynomial in the dimension of $BCDE$, $\log 1/\epsilon$, and $\log 1/\delta$ that outputs a $(2\epsilon+6\delta)$-extending map of $\rho_{BCDE}$.
\end{lemma}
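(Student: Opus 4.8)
The plan is to combine three ingredients already in place: the SDP solver for the fidelity of recovery (Corollary~\ref{corollary:finding_extender}), the continuity of extendibility under Bures-close perturbations (Lemma~\ref{lemma:stability}), and the transfer of extending maps between close states (Lemma~\ref{lemma:extender}). We are handed only $\rho_{BCDE}'$, which is within Bures distance $\delta$ of the true $\epsilon$-extendible state $\rho_{BCDE}$. First I would invoke Lemma~\ref{lemma:stability}: since $\rho_{BCDE}$ is $\epsilon$-extendible and $\bures{\rho_{BCDE}}{\rho_{BCDE}'}\le\delta$, the approximate state $\rho_{BCDE}'$ is $(\epsilon+2\delta)$-extendible.

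Next I would run the SDP-based algorithm of Corollary~\ref{corollary:finding_extender} on $\rho_{BCDE}'$, with target SDP precision set to (a constant times) $\delta$, so that the extra additive error from solving the SDP only to finite precision $\log(1/\delta)$ is absorbed into the $\delta$-budget. By that corollary, applied to the $(\epsilon+2\delta)$-extendible state $\rho_{BCDE}'$, the algorithm outputs in time polynomial in $\dim(\mathcal{H}_{BCDE})$ and $\log(1/\delta)$ (the $\log 1/\epsilon$ dependence comes in through choosing the SDP precision finely enough relative to $\epsilon$ as well) a channel $\Phi:\mathcal{D}_{BE}\to\mathcal{D}_{BC}$ with
\begin{equation}
\bures{\rho_{ABC}'}{\mathcal{I}_A\otimes\Phi(\rho_{ABE}')}\le 2(\epsilon+2\delta)=2\epsilon+4\delta,
\end{equation}
where $\rho_{ABCDE}'$ is a purification of $\rho_{BCDE}'$; i.e.\ $\Phi$ realizes a $(2\epsilon+4\delta)$-extension of $\rho_{BCDE}'$.

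Finally I would transfer this back to the true state via Lemma~\ref{lemma:extender}: since $\Phi$ realizes a $(2\epsilon+4\delta)$-extension of $\rho_{BCDE}'$ and $\bures{\rho_{BCDE}'}{\rho_{BCDE}}\le\delta$ (Bures distance is symmetric), $\Phi$ realizes a $(2\epsilon+4\delta+2\delta)=(2\epsilon+6\delta)$-extension of $\rho_{BCDE}$, which is exactly the claimed bound. The running time is polynomial in $\dim(\mathcal{H}_{BCDE})$, $\log 1/\epsilon$, and $\log 1/\delta$ because that is the cost of the single SDP solve, and everything else is bookkeeping.

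The only real subtlety — and the step I would be most careful about — is the accounting of the SDP precision parameter: Corollary~\ref{corollary:finding_extender} as stated folds the solver precision into the factor of $2$ in front of $\epsilon$, so one must choose the SDP additive precision $\delta'$ small enough (e.g.\ $\delta'\lesssim\delta$, hence $\log 1/\delta'=O(\log 1/\delta)$) that it does not worsen the constants beyond $2\epsilon+6\delta$, and check that this does not blow up the runtime beyond $\mathrm{poly}(\dim, \log 1/\epsilon, \log 1/\delta)$. This is routine but is where a careless constant could creep in. A secondary point worth a sentence is that the purification used for $\rho'$ and the one used for $\rho$ can be chosen compatibly (via Uhlmann, as in Lemma~\ref{lemma:stability}), so that the Bures distances between the corresponding $ABC$- and $ABE$-marginals are controlled by $\delta$; but since Lemmas~\ref{lemma:stability} and~\ref{lemma:extender} already package this, no further work is needed.
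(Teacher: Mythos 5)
Your proof is correct and follows exactly the paper's argument: Lemma~\ref{lemma:stability} to get $(\epsilon+2\delta)$-extendibility of $\rho_{BCDE}'$, Corollary~\ref{corollary:finding_extender} to extract a $(2\epsilon+4\delta)$-extending map of $\rho_{BCDE}'$, and Lemma~\ref{lemma:extender} to transfer it back to $\rho_{BCDE}$ with the final $2\epsilon+6\delta$ bound. Your added remarks on the SDP precision budget and the choice of compatible purifications are sensible elaborations of points the paper leaves implicit, but the route is the same.
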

\begin{proof}
    From Lemma~\ref{lemma:stability}, it follows that $\rho_{BCDE}'$ is $\epsilon + 2\delta$-extendible. From Corollary~\ref{corollary:finding_extender}, it follows that there is an algorithm with a runtime polynomial in  $BCDE$, $\log 1/\epsilon$, and $\log 1/\delta$ that outputs an $(2\epsilon + 4\delta)$-extending map of $\rho_{BCDE}'$, say $\Phi$. Using Lemma~\ref{lemma:extender}, we conclude that $\Phi$ is an $(2\epsilon + 6\delta)$-extending map of $\rho_{BCDE}$.
\end{proof}

The second situation is when we know about the extending map, specifically, its circuit architecture. In this case, one can simply brute-force search over all possible circuits consistent with this circuit architecture. (\textbf{Note}: Below we denote the dimension of a subsystem $X$ as $d_X$.)
\begin{lemma}
\label{lemma:extender_less_gates}
    Let $\Phi_{BE\to BC}$ be a channel that $\epsilon$-extends $\rho_{BCDE}$ from $BE$ to $BC$. Suppose there is a dilation of $\Phi_{BE\to BC}$ such that
    \begin{equation}
        \Phi_{BE\to BC}(\cdot) = \text{Tr}_{EF}( U_{BCEF}(\omega_{F}\otimes (\cdot)) U_{BCEF}^{\dagger})
    \end{equation}
    for a known state $\omega_F$ and a unitary $U_{BCEF}$ consisting of $M$ gates, with a known architecture. Given $\rho_{BCDE}'$ such that $\mathcal{B}(\rho_{BCDE}, \rho_{BCDE}')\leq \epsilon$, there is an algorithm running in time $\text{poly}(d_{BCDE})\cdot \left(\frac{M}{\epsilon} \right)^{\mathcal{O}(M)}$ that finds $U_{BCEF}'$ with the following properties:
    \begin{enumerate}
        \item $U_{BCEF}'$ and $U_{BCEF}$ have the same architecture.
        \item $\mathcal{B}(\rho_{ABC}, \mathcal{I}_A\otimes \Phi_{BE\to BC}'(\rho_{ABE}))\leq 3\epsilon$, where 
        \begin{equation}
        \Phi_{BE\to BC}'(\cdot) = \text{Tr}_{EF}( U_{BCEF}'(\omega_{F}\otimes (\cdot)) {U_{BCEF}'}^{\dagger}).
    \end{equation}
    \end{enumerate}
\end{lemma}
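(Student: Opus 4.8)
The plan is to run an exhaustive search over a fine discretization of all circuits with the prescribed architecture, scoring each candidate by the extension error it would achieve on the \emph{available} approximate data $\rho'_{BCDE}$, and returning the best one. Write the given dilation as $U_{BCEF}=g_M\cdots g_1$, with each gate $g_i$ supported on $\mathcal{O}(1)$ qubits (the general case only changes constants in the exponent). For a granularity $\eta$ to be fixed below, pick for each gate slot an $\eta$-net of its unitary group in operator norm, of size $(\mathcal{O}(1)/\eta)^{\mathcal{O}(1)}$; the product over the $M$ slots is a set $\mathcal{N}$ of circuits, all of the prescribed architecture, with $|\mathcal{N}|=(\mathcal{O}(1)/\eta)^{\mathcal{O}(M)}$, such that every circuit of this architecture lies within operator-norm distance $M\eta$ of some $V\in\mathcal{N}$, since $\|g_M\cdots g_1-g'_M\cdots g'_1\|_\infty\le\sum_i\|g_i-g'_i\|_\infty$ for unitaries. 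For each $V\in\mathcal{N}$ form $\Phi_V(\cdot)=\text{Tr}_{EF}(V(\omega_F\otimes(\cdot))V^\dagger)$, fix the minimal-rank purification $\rho'_{ABCDE}$ of $\rho'_{BCDE}$ (so $d_A\le d_{BCDE}$), and compute $\mathcal{B}(\rho'_{ABC},\mathcal{I}_A\otimes\Phi_V(\rho'_{ABE}))$; every Hilbert space involved has dimension $\text{poly}(d_{BCDE})\cdot 2^{\mathcal{O}(M)}$ (the $\mathcal{O}(M)$ fresh qubits of $F$), so each evaluation — a few matrix square roots, products, and traces — costs $\text{poly}(d_{BCDE})\cdot 2^{\mathcal{O}(M)}$. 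Output the minimizing $V=:U'_{BCEF}$; Property~1 holds by construction, and taking $\eta$ polynomially small in $\epsilon/M$ the total running time is $|\mathcal{N}|\cdot\text{poly}(d_{BCDE})\cdot 2^{\mathcal{O}(M)}=\text{poly}(d_{BCDE})\cdot(M/\epsilon)^{\mathcal{O}(M)}$.

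For Property~2 I would first record an elementary Lipschitz bound: for unitaries $V,W$ and a state $X$, $\|VXV^\dagger-WXW^\dagger\|_1\le 2\|V-W\|_\infty$, hence $\mathcal{B}(VXV^\dagger,WXW^\dagger)\le\sqrt{\|V-W\|_\infty}$ by Eq.~\eqref{eq:fuchs}; since adjoining the fixed ancilla $\omega_F$ and tracing out $EF$ are non-increasing for $\mathcal{B}$, the net point $\widetilde U\in\mathcal{N}$ closest to $U$ satisfies $\mathcal{B}(\mathcal{I}_A\otimes\Phi(\xi_{ABE}),\mathcal{I}_A\otimes\Phi_{\widetilde U}(\xi_{ABE}))\le\sqrt{M\eta}$ for every $\xi_{ABE}$, which is negligible with our choice of $\eta$. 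Next, by hypothesis $\Phi$ realizes an $\epsilon$-extension of $\rho_{BCDE}$, so by Lemma~\ref{lemma:extender} (with $\delta=\epsilon$) it realizes a $3\epsilon$-extension of $\rho'_{BCDE}$. Fixing one Uhlmann-optimal pair of purifications $\rho_{ABCDE},\rho'_{ABCDE}$ — so that $\mathcal{B}(\rho_{ABC},\rho'_{ABC})\le\epsilon$ and $\mathcal{B}(\rho_{ABE},\rho'_{ABE})\le\epsilon$ by Uhlmann's theorem~\cite{Uhlmann1976} and non-expansiveness under partial trace — a short chain of triangle inequalities bounds the score of $\widetilde U$, hence also the score of the minimizer $U'$, in terms of the $3\epsilon$-extendibility of $\rho'$ plus the negligible discretization term; transferring this bound from $\rho'$ back to $\rho$ through the same purifications and the non-expansiveness of $\mathcal{B}$ under the channel $\mathcal{I}_A\otimes\Phi_{U'}$ then yields the claimed estimate $\mathcal{B}(\rho_{ABC},\mathcal{I}_A\otimes\Phi_{U'}(\rho_{ABE}))\le 3\epsilon$ (the discretization error absorbed by taking $\eta$ small enough), which is Property~2.

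The conceptual content here is modest; the real work is bookkeeping, and I expect the main obstacle to be the interplay between the discretization granularity and the constant tracking. One must check that $\eta$ can be taken polynomially small in $\epsilon/M$ without pushing $|\mathcal{N}|$ past $(M/\epsilon)^{\mathcal{O}(M)}$, and — more delicately — one must be careful about which purification is used at each step, because the score is computed through a purification of $\rho'_{BCDE}$ while the guarantee is demanded for $\rho_{BCDE}$; shuttling between the two via Uhlmann's theorem and Lemma~\ref{lemma:extender} without losing more than the stated slack is the fiddly part. A secondary point is confirming that the ancilla dimension $d_F$ forced by the architecture does not spoil the runtime: a circuit of $M$ gates on $\mathcal{O}(1)$ qubits touches $\mathcal{O}(M)$ ancilla qubits, so $d_F\le 2^{\mathcal{O}(M)}$, which is dominated by the $(M/\epsilon)^{\mathcal{O}(M)}$ factor.
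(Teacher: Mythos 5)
Your proposal follows the same route as the paper's proof: a gate-wise $\eta$-net over circuits of the prescribed architecture, brute-force evaluation of the extension error on the available state $\rho'_{BCDE}$, and the continuity lemma (Lemma~\ref{lemma:extender}) to shuttle between $\rho$ and $\rho'$. The net construction, the runtime accounting (including the observation that $d_F\le 2^{\mathcal{O}(M)}$ is dominated by the net size), and the Lipschitz bound for conjugation by nearby unitaries are all sound; converting the trace-norm bound to a Bures bound via Eq.~\eqref{eq:fuchs} costs you a square root, which is lossier than the direct fidelity estimate on the purification (linear in $\|V-W\|_\infty$) but harmless since $\eta$ is a free polynomially small parameter.

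The one place your sketch does not deliver what you claim is the final constant. Carried out exactly as you describe, the chain is: $\Phi$ $3\epsilon$-extends $\rho'$ by Lemma~\ref{lemma:extender}, so the net point $\widetilde U$ nearest to $U$ scores at most $3\epsilon+\sqrt{M\eta}$ on $\rho'$, hence so does the minimizer $U'$; transferring that score back to $\rho$ through the Uhlmann-paired purifications then costs another $2\epsilon$ (one $\epsilon$ for $\mathcal{B}(\rho_{ABC},\rho'_{ABC})$ and one for $\mathcal{B}(\mathcal{I}_A\otimes\Phi'_{BE\to BC}(\rho'_{ABE}),\mathcal{I}_A\otimes\Phi'_{BE\to BC}(\rho_{ABE}))$). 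The total is $5\epsilon+\sqrt{M\eta}$, and the two round-trip $\epsilon$ terms do not shrink with $\eta$, so ``absorbing the discretization error'' cannot bring this down to $3\epsilon$: minimizing a proxy score that is only guaranteed to track the true score up to $\pm 2\epsilon$ cannot certify better than $\epsilon+2\epsilon+2\epsilon$. To be fair, the paper's own proof is no tighter -- its intermediate claim $\mathcal{B}(\rho'_{ABC},\mathcal{I}_A\otimes\Phi'_{BE\to BC}(\rho'_{ABE}))\le 2\delta$ silently drops the $3\epsilon$ contribution from the fact that $\Phi$ only approximately extends $\rho'$ -- and the honest constant in the lemma is $5$ rather than $3$. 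This is immaterial downstream, since Lemmas~\ref{lemma:extend_from_samples_fast} and~\ref{lemma:extend_from_samples_slow} absorb constants by rescaling, but you should either state the conclusion with the larger constant or supply an argument that genuinely achieves $3\epsilon$, which the minimize-the-proxy-score strategy does not.
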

\begin{proof}
Recall that the $\epsilon$-net of $SU(m)$ for any $m=\mathcal{O}(1)$ has a cardinality of at most $(c_0/\epsilon)^{c_1}$ for some constants $c_0, c_1>0$. Therefore, by choosing the gate set from the $\delta/M$-net, one can guarantee that there exists a $U_{BCEF}'$ created from that gate set that satisfies $\| U_{BCEF} - U_{BCEF}' \|\leq \delta$. For such $U_{BCEF}'$, we obtain:
\begin{equation}
    \mathcal{B}(\rho_{ABC}', \mathcal{I}_A\otimes \Phi_{BE\to BC}'(\rho_{ABE}')) \leq 2\delta, \label{eq:requirement}
\end{equation}
where $\rho_{ABCDE}'$ is a purification of $\rho_{BCDE}'$. 

Since the search space has a cardinality of $\left(\frac{M}{\delta} \right)^{\mathcal{O}(M)}$, with brute-force search one can in fact find such $U_{BCEF}'$ by checking Eq.~\eqref{eq:requirement}. Evaluating the left hand side of Eq.~\eqref{eq:requirement} takes $\text{poly}(d_{BCDE})$ time, and as such, the computation time is bounded by $\text{poly}(d_{BCDE})\cdot \left(\frac{M}{\delta} \right)^{\mathcal{O}(M)}$. Using the continuity [Lemma~\ref{lemma:extender}], we conclude that $\Phi_{BE\to BC}'$ is realizes an $(2\epsilon + 2\delta)$-extension of $\rho_{BCDE}$. By setting $\delta=\frac{1}{2}\epsilon$, the claim follows.
\end{proof}

There are pros and cons to Lemma~\ref{lemma:extender_sdp} and Lemma~\ref{lemma:extender_less_gates}. The overall running time of the algorithm in Lemma~\ref{lemma:extender_sdp} is lower than that of Lemma~\ref{lemma:extender_less_gates}. On the other hand, the gate complexity of the extending map obtained from Lemma~\ref{lemma:extender_less_gates} may be lower than that of Lemma~\ref{lemma:extender_sdp}. (This can happen when there is a promise that the gate complexity of the extending map is low, which is the setups considered in Section~\ref{sec:extendibility_quantum_phase}.) Therefore, in order to reduce the overall running time of the algorithm, it will be better to use Lemma~\ref{lemma:extender_sdp}. For minimizing the gate complexity of the extending map, Lemma~\ref{lemma:extender_less_gates} should be preferred.

For later purposes, we shall focus on learning an approximate extending map of an extendible state, from copies of the extendible state. Since solving this problem entails using a state tomography algorithm (in terms of Bures distance), let us briefly review the relevant literature. In the state tomography literature, the Bures distance is studied under the name of \emph{infidelity}. (More precisely, infidelity is square of the Bures distance.) Recently, Flammia and O'Donnell came up with a protocol that estimates infidelity up to a precision of $\epsilon$ with complexity $\widetilde{O}(d_{\mathcal{H}}^3/\epsilon)$, where $d_{\mathcal{H}}$ is the dimension of the underlying Hilbert space and $\widetilde{O}$ hides the polylogarithmic factor~\cite[Corollary 1.10 and (2)]{Flammia2024quantumchisquared}. We remark that there are other methods that improve the dependence on the dimension~\cite{haah2016sample,o2017efficient}, but they require measurement over multiple copies. On the other hand, the method in Ref.~\cite{Flammia2024quantumchisquared} only requires single-copy measurements. Lastly, one can use the standard median trick to ensure the correctness with failure probability at most $\delta$, with an overhead of at most $\mathcal{O}(\log 1/\delta)$. Thus the overall complexity would be $\widetilde{\mathcal{O}}(d_{\mathcal{H}}^3 \log (1/\delta)/\epsilon)$. 

\begin{lemma}
\label{lemma:extend_from_samples_fast}
\sloppy
    Let $\rho_{BCDE}$ be an extendible state from $BE$ to $BC$. There is an algorithm that uses $\widetilde{\mathcal{O}}(\text{poly}(d_{BCDE})\log (1/\delta)/\epsilon^2)$ single-copy measurements of $\rho_{BCDE}$ and outputs an $\epsilon$-extending map of $\rho_{BCDE}$ with probability of at least $1-\delta$, with the runtime of $\widetilde{\mathcal{O}}(\text{poly}(d_{BCDE}) \log (1/\delta)/\epsilon^2)$. 
\end{lemma}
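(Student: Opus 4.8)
The plan is to combine the state tomography result of Flammia and O'Donnell with the two algorithmic ingredients already established: the SDP-based extender construction (Lemma~\ref{lemma:extender_sdp}) and the continuity bound (Lemma~\ref{lemma:extender}). First I would choose the minimal-dimension purifying system $A$ for $\rho_{BCDE}$, so that $d_A \leq d_{BCDE}$ and the relevant Hilbert space dimensions are all $\text{poly}(d_{BCDE})$. Then I would invoke the tomography protocol of~\cite{Flammia2024quantumchisquared}: using $\widetilde{\mathcal{O}}(\text{poly}(d_{BCDE})/\delta')$ single-copy measurements, one obtains a classical description of a state $\rho_{BCDE}'$ with $\bures{\rho_{BCDE}}{\rho_{BCDE}'}^2 \leq \delta'$, i.e. $\bures{\rho_{BCDE}}{\rho_{BCDE}'}\leq\sqrt{\delta'}$, with high probability. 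Setting $\delta' = \epsilon^2/c$ for a suitable constant $c$ gives Bures distance at most $O(\epsilon)$ from $O(1/\epsilon^2)$ times $\text{poly}(d_{BCDE})$ samples; the median trick contributes the $\log(1/\delta)$ factor for the failure probability.

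Next I would feed the tomographic estimate $\rho_{BCDE}'$ into the SDP algorithm of Lemma~\ref{lemma:extender_sdp}. Since $\rho_{BCDE}$ is exactly extendible (the hypothesis takes $\epsilon=0$ in the notation of that lemma), and $\bures{\rho_{BCDE}}{\rho_{BCDE}'}\leq \delta$ with $\delta = O(\epsilon)$, Lemma~\ref{lemma:extender_sdp} outputs in time polynomial in $d_{BCDE}$, $\log(1/\delta)$ a channel $\Phi$ that realizes a $(2\cdot 0 + 6\delta)$-extension, i.e. a $6\delta$-extension, of $\rho_{BCDE}$ itself. Rescaling the target Bures precision of the tomography step by a constant factor so that $6\delta \leq \epsilon$ then yields an $\epsilon$-extending map of $\rho_{BCDE}$ directly, without a further application of the continuity lemma. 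The total sample complexity is $\widetilde{\mathcal{O}}(\text{poly}(d_{BCDE})\log(1/\delta)/\epsilon^2)$ and the runtime is the sum of the tomography post-processing time and the SDP solving time, both of which are $\widetilde{\mathcal{O}}(\text{poly}(d_{BCDE})\log(1/\delta)/\epsilon^2)$ (the $1/\epsilon^2$ dominating the $\log(1/\delta)$ precision term of the SDP).

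I do not expect a serious obstacle here — the statement is essentially an assembly of already-proven pieces. The one point requiring minor care is the conversion between the tomography error metric (infidelity, i.e. squared Bures distance) and the Bures distance used throughout Section~\ref{subsec:stability_approximation}: one must take the square root, which is why the sample complexity scales as $1/\epsilon^2$ rather than $1/\epsilon$. A second bookkeeping point is that the Flammia–O'Donnell bound is stated for the full Hilbert space of dimension $d_{\mathcal{H}}$, so one should note $d_{\mathcal{H}} = d_{BCDE} \cdot d_A = \text{poly}(d_{BCDE})$ after fixing $A$ to be minimal, keeping everything inside the claimed $\text{poly}(d_{BCDE})$; the exact polynomial degree is immaterial for the statement. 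Finally, one should confirm that the failure probability of the whole pipeline is controlled by the single $\delta$: the SDP step is deterministic, so only the tomography step can fail, and the median trick already absorbs that into the $\log(1/\delta)$ overhead.
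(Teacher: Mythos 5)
Your proposal is correct and matches the paper's intended argument: the paper proves this lemma implicitly by combining the Flammia--O'Donnell tomography guarantee (with the median trick supplying the $\log(1/\delta)$ factor and the infidelity-to-Bures square root supplying the $1/\epsilon^2$) with the SDP-based Lemma~\ref{lemma:extender_sdp} applied at exact extendibility, exactly as you do. The bookkeeping points you flag (minimal purification, metric conversion, failure probability confined to the tomography step) are the right ones and are handled correctly.
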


\begin{lemma}
    \label{lemma:extend_from_samples_slow}
     Let $\Phi_{BE\to BC}$ be a channel that extends $\rho_{BCDE}$ from $BE$ to $BC$. Suppose there is a dilation of $\Phi_{BE\to BC}$ such that
    \begin{equation}
        \Phi_{BE\to BC}(\cdot) = \text{Tr}_{EF}( U_{BCEF}(\omega_{F}\otimes (\cdot)) U_{BCEF}^{\dagger})
    \end{equation}
    for a known state $\omega_F$ and a unitary $U_{BCEF}$ consisting of $M$ gates, with a known architecture. There is an algorithm that uses $\widetilde{O}(\text{poly}(d_{BCDE}) \log(1/\delta)/\epsilon^2)$ single-copy measurements of $\rho_{BCDE}$ and runs in time $\text{poly}(d_{BCDE})\cdot \left(\frac{M}{\epsilon} \right)^{\mathcal{O}(M)}$ that finds $U_{BCEF}'$ with probability at least $1-\delta$, with the following properties:
    \begin{enumerate}
        \item $U_{BCEF}'$ and $U_{BCEF}$ have the same architecture.
        \item $\mathcal{B}(\rho_{ABC}, \mathcal{I}_A\otimes \Phi_{BE\to BC}'(\rho_{ABE}))\leq \epsilon$, where 
        \begin{equation}
        \Phi_{BE\to BC}'(\cdot) = \text{Tr}_{EF}( U_{BCEF}'(\omega_{F}\otimes (\cdot)) {U_{BCEF}'}^{\dagger}).
    \end{equation}
    \end{enumerate}
\end{lemma}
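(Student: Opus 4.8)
The plan is to compose a single-copy state-tomography step with the brute-force search of Lemma~\ref{lemma:extender_less_gates}, exactly paralleling how Lemma~\ref{lemma:extend_from_samples_fast} is obtained by composing tomography with Lemma~\ref{lemma:extender_sdp}.

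First, I would run the Flammia--O'Donnell single-copy tomography protocol on the unknown state $\rho_{BCDE}$, amplifying its success probability to $1-\delta$ by the standard median trick (an extra $\mathcal{O}(\log 1/\delta)$ factor). Since the Bures distance is the square root of the infidelity, producing an estimate $\rho_{BCDE}'$ with $\mathcal{B}(\rho_{BCDE},\rho_{BCDE}')\le \epsilon/3$ amounts to driving the infidelity below $(\epsilon/3)^2$; this costs $\widetilde{O}(\text{poly}(d_{BCDE})\log(1/\delta)/\epsilon^2)$ single-copy measurements and outputs a classical description of $\rho_{BCDE}'$. From here on condition on this step succeeding, an event of probability at least $1-\delta$; since the remaining step is deterministic, this alone controls the overall failure probability.

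Second, invoke Lemma~\ref{lemma:extender_less_gates} with its parameter ``$\epsilon$'' taken to be $\epsilon/3$. Its hypotheses hold: $\Phi_{BE\to BC}$ extends $\rho_{BCDE}$ exactly, hence in particular $(\epsilon/3)$-extends it; it admits the promised $M$-gate dilation $\Phi_{BE\to BC}(\cdot)=\text{Tr}_{EF}(U_{BCEF}(\omega_F\otimes(\cdot))U_{BCEF}^{\dagger})$ with a known architecture; and $\mathcal{B}(\rho_{BCDE},\rho_{BCDE}')\le \epsilon/3$ from the tomography step. The lemma then produces, in time $\text{poly}(d_{BCDE})\cdot(3M/\epsilon)^{\mathcal{O}(M)}=\text{poly}(d_{BCDE})\cdot(M/\epsilon)^{\mathcal{O}(M)}$ (the $3^{\mathcal{O}(M)}$ being absorbed into the exponent), a unitary $U_{BCEF}'$ of the same architecture with $\mathcal{B}(\rho_{ABC},\mathcal{I}_A\otimes\Phi_{BE\to BC}'(\rho_{ABE}))\le 3\cdot(\epsilon/3)=\epsilon$, where $\Phi_{BE\to BC}'$ is the dilated channel built from $U_{BCEF}'$ and $\omega_F$. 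This gives both claimed properties, with sample complexity $\widetilde{O}(\text{poly}(d_{BCDE})\log(1/\delta)/\epsilon^2)$ and running time dominated by the brute-force search, $\text{poly}(d_{BCDE})\cdot(M/\epsilon)^{\mathcal{O}(M)}$.

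There is no genuine obstacle here: the statement is a repackaging of facts already in hand. The only points needing minor care are (i) the passage between infidelity---the natural error measure for the Flammia--O'Donnell estimator---and Bures distance, which forces the $1/\epsilon^2$ rather than $1/\epsilon$ scaling of the sample complexity; (ii) bookkeeping the constant $3$ in the error guarantee of Lemma~\ref{lemma:extender_less_gates}, handled by running the tomography to accuracy $\epsilon/3$; and (iii) confining all randomness to the tomography step, so that a single application of the median trick (equivalently, one union bound) suffices for the overall $1-\delta$ success guarantee.
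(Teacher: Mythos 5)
Your proposal is correct and follows exactly the route the paper intends (the paper leaves this proof implicit, stating the lemma right after the tomography discussion and Lemma~\ref{lemma:extender_less_gates}): single-copy Flammia--O'Donnell tomography to Bures accuracy $\epsilon/3$ with the median trick, followed by the brute-force net search of Lemma~\ref{lemma:extender_less_gates}, with the factor of $3$ and the $3^{\mathcal{O}(M)}$ absorbed as you describe. No gaps.
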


\subsection{Stability under constant-depth circuits}\label{subsec:stability_circuit}

So far, we have discussed general properties of locally extendible states. In this Section, we restrict our attention to extendible states on a finite-dimensional lattice, but in turn show another important property of such extendible states: the stability of extendibility under constant-depth circuits. We will consider a state $\rho$ that is supported on a finite-dimensional lattice and satisfies Eq.~\eqref{eq:locally_extendible} for an appropriate partition of the system. For such a state, we will apply a geometrically local, constant-depth circuit $U$. Then, we will show that the post-circuit state $U \rho U^{\dagger}$ also satisfies Eq.~\eqref{eq:locally_extendible} for another, topologically equivalent partition.

Our result consists of two theorems, respectively corresponding to each distinct partition in Fig.~\ref{fig:extendible_state_examples}. We first prove Theorem~\ref{theorem:circuit_stability_1}, the theorem concerning the partition in Fig.~\ref{fig:extendible_state_examples}(a)-(b). We then provide a shorter proof for Theorem~\ref{theorem:circuit_stability_3}, pertaining to the partitions in Fig.~\ref{fig:extendible_state_examples}(c)-(d), which follows a reasoning similar to Theorem~\ref{theorem:circuit_stability_1}. For concreteness, we prove our theorems in two spatial dimensions. At the end of this Section, we briefly discuss a generalization of the theorems to higher dimensions, alongside the significance of our theorems towards circuit learning.

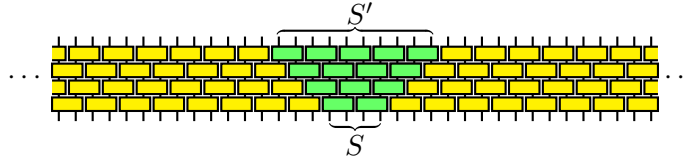
\begin{figure}[H]
    \centering
    \begin{tikzpicture}[every path/.style={thick},scale=0.45]
    \foreach \x in {-7,...,28}
    {
    \draw[] (\x * 0.5+0.25, 0.7) -- ++ (0,2.5);
    }
    \foreach \x in {-3,...,14}
    {
    \foreach \y in {1,...,2}
    {
    \draw[fill=yellow] (\x-0.45, \y) -- ++ (0.9, 0) -- ++ (0, 0.4) -- ++ (-0.9, 0) -- ++ (0, -0.4) -- cycle;
    }
    }
    \foreach \x in {-3,...,13}
    {
    \foreach \y in {1,...,2}
    {
    \draw[fill=yellow] (\x-0.45+0.5, \y+0.5) -- ++ (0.9, 0) -- ++ (0, 0.4) -- ++ (-0.9, 0) -- ++ (0, -0.4) -- cycle;
    }
    }
    \draw[fill=yellow] (-3-0.45, 1+0.5) -- ++ (0.4, 0) -- ++ (0, 0.4) -- ++ (-0.4, 0);
    \draw[fill=yellow] (-3-0.45, 2+0.5) -- ++ (0.4, 0) -- ++ (0, 0.4) -- ++ (-0.4, 0);
    \draw[fill=yellow] (14+0.45, 1+0.5) -- ++ (-0.4, 0) -- ++ (0, 0.4) -- ++ (0.4, 0);
    \draw[fill=yellow] (14+0.45, 2+0.5) -- ++ (-0.4, 0) -- ++ (0, 0.4) -- ++ (0.4, 0);
    \foreach \x in {5,6}
    {
    \draw[fill=green!60!white] (\x-0.45, 1) -- ++ (0.9, 0) -- ++ (0, 0.4) -- ++ (-0.9, 0) -- ++ (0, -0.4) -- cycle;
    }
    \foreach \x in {5,6,7}
    {
    \draw[fill=green!60!white] (\x-0.45-0.5, 1+0.5) -- ++ (0.9, 0) -- ++ (0, 0.4) -- ++ (-0.9, 0) -- ++ (0, -0.4) -- cycle;
    }
    \foreach \x in {4,5,6,7}
    {
    \draw[fill=green!60!white] (\x-0.45, 2) -- ++ (0.9, 0) -- ++ (0, 0.4) -- ++ (-0.9, 0) -- ++ (0, -0.4) -- cycle;
    }
    \foreach \x in {4,5,6,7,8}
    {
    \draw[fill=green!60!white] (\x-0.45-0.5, 2+0.5) -- ++ (0.9, 0) -- ++ (0, 0.4) -- ++ (-0.9, 0) -- ++ (0, -0.4) -- cycle;
    }
    \draw [decorate,
	decoration = {calligraphic brace}] (3.2, 3.275) --  (7.8, 3.275);
    \draw [decorate,
	decoration = {calligraphic brace,mirror}] (4.75, 0.625) --  (6.25, 0.625);
    \node[] at (-4.2, 2.0) {$\cdots$};
    \node[] at (15.2, 2.0) {$\cdots$};
    \node[] at (5.65, 3.9) {$S'$};
    \node[] at (5.5, 0) {$S$};
    \end{tikzpicture}
    \caption{The past light-cone $U_{S}$ (green) of subsystem $S$ under a geometrically local depth-$d$ circuit $U$. We define the past light-cone $U_{S}:S' \rightarrow S'$ of subsystem $S$ as the composition of gates of $U$ that are connected to $S$.}
    \label{fig:past_light_cone}
\end{figure}

Before we continue, let us introduce an important concept -- the \textit{past light-cone} $U_{S}$ of subsystem $S$ under a geometrically local, depth-$d$ circuit $U$. Consider applying circuit $U$ to some system. Given some subsystem $S$, we can find the collection of gates that are connected to $S$ within the circuit $U$. We can first collect gates of $U$ in its last layer which act on qubits in $S$, and form a layer of gates with them. Then, we can collect gates in the previous layer which share qubits with any of the gates collected in the last layer, and so on. We can continue collecting the gates in this manner up to the first layer of circuit $U$. Notice how such a collection of gates form another (geometrically local) depth-$d$ circuit, which we refer to as $U_{S}$. This $U_{S}$ is supported on a subsystem $S' \supset S$ with a radius that is just larger than the radius of $S$ by $d$. We call this $U_{S}$ the past light-cone of $S$ (under $U$), and depict a schematic example in Fig.~\ref{fig:past_light_cone}.

There are two properties worth noting about the past light-cone. First, consider applying $U_{S}^{\dagger}$ to the post-circuit state $U \rho U^{\dagger}$. The unitary $U_S$ cancels out all the gates in $U$ that may affect the reduced density matrix on $S$, making the reduced density matrix of $ U_S^{\dagger} \left( U \rho U^{\dagger} \right) U_S$ on $S$ identical to its pre-circuit version $\rho_S$. For the global state, we have the following relation:
\begin{equation}\label{eq:past_light_cone_prop1}
    U_S^{\dagger} \left( U \rho U^{\dagger} \right) U_S = V_{\Lambda \setminus S} \rho V_{\Lambda \setminus S}^{\dagger},
\end{equation}
where $V_{\Lambda\setminus S}$ is a unitary supported on $\Lambda\setminus S$, which is again a depth-$d$ circuit.

Second, given some system $\Lambda$ and a depth-$d$ circuit $U$, the reduced density matrix on some subsystem $S \subseteq \Lambda$ is affected by only the gates in $U_S$. The reduced density matrix on $S$, after circuit $U$ is applied, is indistinguishable from the reduced density matrix on $S$ when past light-cone $U_S$ was applied instead. That is, 
\begin{equation}\label{eq:past_light_cone_prop2}
    \Tr_{\Lambda \setminus S} \Bigl( U \rho U^{\dagger} \Bigr) = \Tr_{\Lambda \setminus S} \left( U_{S} \rho U_{S}^{\dagger} \right).
\end{equation}
We shall employ these properties of the past light-cone in our proof of Theorem~\ref{theorem:circuit_stability_1}.

Let us now state and prove Theorem~\ref{theorem:circuit_stability_1} -- extendible states on the partition in Fig.~\ref{fig:extendible_state_examples}(a)-(b) are also extendible under a depth-$d$ circuit $U$, on a topologically equivalent partition. Consider a pure state $\rho$ supported on $ABCDE$ as in Fig.~\ref{fig:circuit_stability_partitions_1}(a), equivalent to the one in Fig.~\ref{fig:extendible_state_examples}(a)-(b). Let $BC\widetilde{D}E$ be a subsystem of $ABCDE$, as defined in Fig.~\ref{fig:circuit_stability_partitions_1}(b), where $\widetilde{D}$ consists of all sites in $D$ that are away from the boundary between $A$ and $D$ by a distance of at least $d$. If $\rho_{BC\widetilde{D}E}$ is extendible from $BE$ to $BC$, then the post-circuit state $\rho' := U \rho U^{\dagger}$ has a reduced density matrix $\rho_{B'C'D'E'}'$ that is extendible from $B'E'$ to $B'C'$. Here $A'B'C'D'E'$ is a partition defined in Fig.~\ref{fig:circuit_stability_partitions_1}(c), and is topologically equivalent to $ABCDE$. 

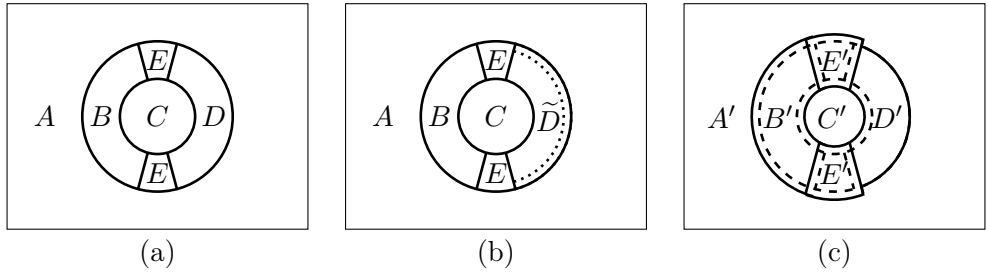
\begin{figure}[h]
    \centering
    \begin{tikzpicture}[scale=1.0]
    \begin{scope}[xshift=0.0cm]
    \draw[fill=white] (-2, -1.5) -- ++ (4, 0) -- ++ (0, 3) -- ++ (-4, 0) -- cycle;
    \draw[line width=1pt] (0,0) circle (0.5cm);
    \draw[line width=1pt] (0,0) circle (1cm);
    \draw[line width=1pt] (75:0.5cm) -- (75:1cm);
    \draw[line width=1pt] (105:0.5cm) -- (105:1cm);
    \draw[line width=1pt] (-75:0.5cm) -- (-75:1cm);
    \draw[line width=1pt] (-105:0.5cm) -- (-105:1cm);
        
    \node[] () at (90:0.75cm) {$E$};
    \node[] () at (-90:0.75cm) {$E$};
    \node[] () at (0,0) {$C$};
    \node[] () at (0:0.75cm) {$D$};
    \node[] () at (180:0.75cm) {$B$};
    \node[] () at (-1.5, 0) {$A$};
    \node[below] () at (0, -1.5) {(a)};
    \end{scope}

    \begin{scope}[xshift=4.5cm]
    \draw[fill=white] (-2, -1.5) -- ++ (4, 0) -- ++ (0, 3) -- ++ (-4, 0) -- cycle;
    \draw[line width=1pt] (0,0) circle (0.5cm);
    \draw[line width=1pt] (0,0) circle (1cm);
    \draw[line width=1pt] (75:0.5cm) -- (75:1cm);
    \draw[line width=1pt] (105:0.5cm) -- (105:1cm);
    \draw[line width=1pt] (-75:0.5cm) -- (-75:1cm);
    \draw[line width=1pt] (-105:0.5cm) -- (-105:1cm);
    \draw[line width=1pt, style=dotted] (75:0.9cm) arc(75:-75:0.9cm) -- (-75:1.0cm) arc(-75:75:1.0cm) -- (75:0.9cm);
    
    \node[] () at (90:0.75cm) {$E$};
    \node[] () at (-90:0.75cm) {$E$};
    \node[] () at (0,0) {$C$};
    \node[] () at (0:0.7cm) {$\widetilde{D}$};
    \node[] () at (180:0.75cm) {$B$};
    \node[] () at (-1.5, 0) {$A$};
    \node[below] () at (0, -1.5) {(b)};
    \end{scope}
    
    \begin{scope}[xshift=9.0cm]
    \draw[fill=white] (-2, -1.5) -- ++ (4, 0) -- ++ (0, 3) -- ++ (-4, 0) -- cycle;
    \draw[line width=1pt] (-105:0.4cm) arc(255:105:0.4cm);
    \draw[line width=1pt] (70:1cm) arc(70:-70:1.0cm);
    \draw[line width=1pt] (0.355291cm, 0.925967cm) -- (0.381173cm, 1.02256cm) arc(69.56:290.44:1.1cm) -- (0.355291cm, -0.925967cm);
    \draw[draw=none] (105:0.5cm) -- (105:1cm) arc(105:255:1cm) -- (255:0.5cm) arc(-105:105:0.5cm);
    \draw[line width=2.0pt, color=white] (-0.381173cm, 1.02256cm) arc(110.4:249.6:1.1cm);
    \draw[line width=1pt] (105:1.1cm) arc(105:255:1.1cm);
    \draw[line width=2.0pt, color=white] (75:0.5cm) arc(75:115:0.5cm);
    \draw[line width=2.0pt, color=white] (70.5:1.0cm) arc(70.5:112:1.0cm);
    \draw[line width=2.0pt, color=white] (-75:0.5cm) arc(-75:-115:0.5cm);
    \draw[line width=2.0pt, color=white] (-70.5:1.0cm) arc(-70.5:-112:1.0cm);
    \draw[line width=2.0pt, color=white] (75:0.5cm) arc(75:-75:0.5cm);
    \draw[line width=1pt] (60:0.4cm) -- (0.381173cm, 1.02256cm);
    \draw[line width=1pt] (-60:0.4cm) -- (0.381173cm, -1.02256cm);
    \draw[line width=1pt] (249.56:1.11cm) -- (240:0.4cm) arc(240:480:0.4cm) -- (-0.381173cm, 1.02256cm);
    \draw[line width=1pt, style=dashed] (0,0) circle (0.5cm);
    \draw[line width=1pt, style=dashed] (0,0) circle (1cm);
    \draw[line width=1pt, style=dashed] (75:0.5cm) -- (75:1cm);
    \draw[line width=1pt, style=dashed] (105:0.5cm) -- (105:1cm);
    \draw[line width=1pt, style=dashed] (-75:0.5cm) -- (-75:1cm);
    \draw[line width=1pt, style=dashed] (-105:0.5cm) -- (-105:1cm);
        
    \node[] () at (90:0.75cm) {$E'$};
    \node[] () at (-90:0.75cm) {$E'$};
    \node[] () at (-0.03,0) {$C'$};
    \node[] () at (0:0.7cm) {$D'$};
    \node[] () at (180:0.75cm) {$B'$};
    \node[] () at (-1.5, 0) {$A'$};
    \node[below] () at (0, -1.5) {(c)};
    \end{scope}
    \end{tikzpicture}
    \caption{(a) The pre-circuit partition $ABCDE$, equivalent to the one in Fig.~\ref{fig:extendible_state_examples}(b). $A$ is a purifying system of $BCDE$. (b) The same partition $ABCDE$, with subsystem $\widetilde{D} \subset D$ shown. $\widetilde{D}$ consists of all sites in $D$ that are away from the boundary between $A$ and $D$ by a distance of at least $d$.
    (c) The topologically equivalent post-circuit partition $A'B'C'D'E'$. The individual subsystems are: $B' := BE(d) \setminus E(d)$, $C' := C(-d)$, $D' := BCDE \setminus BE(d)C(-d)$, $E' := E(d)$, and $A'$ is a purifying system of $B'C'D'E'$. (Recall that $S(x)$ for $x \geq 0$ is the subsystem of sites of distance $x$ or less away from $S$, and for $x < 0$ it is the subsystem $S'$ such that $S'(-x) = S$ [Section~\ref{subsec:notation}].) For comparison, the partition $ABCDE$ is overlaid with dashed lines.}
    \label{fig:circuit_stability_partitions_1}
    \end{figure}

\begin{theorem}\label{theorem:circuit_stability_1}
    Let us partition a system $\Lambda$ as in Fig.~\ref{fig:circuit_stability_partitions_1}(a)-(b) for $\Lambda = ABCDE$, and as in Fig.~\ref{fig:circuit_stability_partitions_1}(c) for $\Lambda = A'B'C'D'E'$. Let $\rho$ be a state on $\Lambda$, $U$ be a geometrically local depth-$d$ circuit supported on $\Lambda$, and $\rho' := U \rho U^{\dagger}$ be the post-circuit state. If $\rho_{BC\widetilde{D}E}$ is locally extendible from $BE$ to $BC$, then $\rho_{B'C'D'E'}'$ is locally extendible from $B'E'$ to $B'C'$.
\end{theorem}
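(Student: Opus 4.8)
The plan is to transport the extendibility condition through the circuit $U$ by undoing $U$ locally using past light-cones, exploiting the key fact that the Bures distance (and hence the extendibility condition, which is a fidelity-of-recovery statement) is invariant under isometries and non-increasing under channels. Concretely, recall that extendibility of $\rho'_{B'C'D'E'}$ from $B'E'$ to $B'C'$ is a statement purely about the reduced density matrix on $B'C'D'E'$, together with its (arbitrary) purification. So I would pick a convenient purification of $\rho'_{B'C'D'E'}$ and a convenient channel $\Phi'_{B'E'\to B'C'}$, and show the recovery identity $\rho'_{A'B'C'} = \mathcal{I}_{A'}\otimes\Phi'(\rho'_{A'B'E'})$ holds.

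First I would set up the light-cone bookkeeping. The post-circuit subsystems are $B' = BE(d)\setminus E(d)$, $C' = C(-d)$, $E' = E(d)$, $D'$ the remainder, and $A'$ a purifier. The point of shrinking $C$ to $C' = C(-d)$ and growing $E$ to $E' = E(d)$ is exactly that the past light-cone $U_{B'C'E'}$ of the region $B'C'E'$ under $U$ is supported inside $BC\widetilde{D}E$ — it does not reach $A$ (because $\widetilde{D}$ was defined to absorb the depth-$d$ spreading away from the $A$–$D$ boundary) and it does not reach the outer part of $D$. Using property~\eqref{eq:past_light_cone_prop2}, the reduced state of $\rho'$ on $B'C'E'$ equals the reduced state of $U_{B'C'E'}\rho U_{B'C'E'}^\dagger$ on $B'C'E'$. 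Then, applying $U_{B'C'E'}^\dagger$ and using property~\eqref{eq:past_light_cone_prop1}, conjugating $\rho'$ by $U_{B'C'E'}^\dagger$ gives $V_{\Lambda\setminus(B'C'E')}\,\rho\,V^\dagger$ for some depth-$d$ circuit $V$ supported away from $B'C'E'$; in particular the reduced state on $A'B'C'E'$ of this conjugated state, after also absorbing $V$-type unitaries into the purifier, is unitarily related to a reduced state of $\rho$ on the corresponding pre-circuit region.

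The core step is then to build the extending channel $\Phi'_{B'E'\to B'C'}$ as a composition: (i) apply $U_{B'C'E'}^\dagger$ restricted appropriately (as an isometry from $\mathcal{H}_{B'E'}$ into a larger space, or using ancillas), landing us in a configuration where the relevant reduced state is $\rho_{BC\widetilde{D}E}$'s reduction on $BE$-like degrees of freedom; (ii) apply the given extending channel $\Phi_{BE\to BC}$ for $\rho_{BC\widetilde{D}E}$, which by hypothesis produces the correct reduced state on $BC$; (iii) re-apply $U_{B'C'E'}$ to return to the post-circuit frame. One must check that the "glue" works: that the reduced state on $B'E'$ of $\rho'$ is exactly what steps (i)–(ii) need as input, that the light-cone of $C'$ under $U$ stays inside $BC$ (so that step (ii)'s output on $BC$ suffices to reconstruct $C'$ after step (iii)), and crucially that everything touching $A'$ is implemented by a fixed unitary/isometry not involved in the channel, so that the recovery identity $\rho'_{A'B'C'} = \mathcal{I}_{A'}\otimes\Phi'(\rho'_{A'B'E'})$ genuinely holds with $\Phi'$ acting only on $B'E'$. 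Since each of the three pieces is an isometry conjugation or the hypothesized channel, composing them gives a legitimate channel $\Phi':\mathcal{D}_{B'E'}\to\mathcal{D}_{B'C'}$, and exactness is preserved throughout.

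The main obstacle I anticipate is the careful geometric verification that the various light-cones nest correctly — specifically, that $U_{B'C'E'}$ is contained in $BC\widetilde{D}E$ and acts trivially on $A$ (using the $d$-collar definition of $\widetilde{D}$), that the light-cone of $C'=C(-d)$ lands inside $C$ so reconstructing $C'$ from the recovered $BC$ is possible, and that $D'$ correctly absorbs the leftover region $BCDE\setminus BE(d)C(-d)$ so no degrees of freedom are double-counted or dropped. This is the step that forces the specific choices $B' = BE(d)\setminus E(d)$, $C' = C(-d)$, $E' = E(d)$, and it is where one must be most careful that the purifier $A'$ of $\rho'_{B'C'D'E'}$ can be taken to be $A$ together with the ancillas/degrees of freedom that $V_{\Lambda\setminus S}$ and the dilations act on, in a way consistent across the identity. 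A secondary, more routine point is packaging the isometry conjugations as channels into/out of fixed ancilla states so that Definition~\ref{definition:realizing_extension_exact} is literally met; this follows from the purification-independence lemma already proved, but the dilations must be spelled out.
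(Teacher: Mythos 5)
Your high-level strategy (locally undo $U$ on the input region, invoke the pre-circuit extendibility, then locally re-apply $U$) is exactly the paper's, but the specific light-cone you choose breaks the construction in two places. First, the past light-cone $U_{B'C'E'}$ is supported on the $d$-neighborhood of $B'C'E'$, and since $B'E' = BE(d)$ already extends distance $d$ into $A$, this neighborhood reaches distance $2d$ from $BE$ into $A$ --- i.e., into $A'$. Your claim that it "does not reach $A$" is false: the collar $\widetilde{D}$ only protects the $A$--$D$ boundary, not the $A$--$B$ or $A$--$E$ boundaries. A unitary touching $A'$ cannot be part of a channel of the form $\mathcal{I}_{A'}\otimes\Phi'$, so the recovery identity would not be of the required type. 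Second, the support of $U_{B'C'E'}$ also contains $C'(d)=C$ and hence sites of $C\setminus C'\subseteq D'$ that are not in the input $B'E'$; since those sites are traced out in $\rho'_{A'B'E'}$ and were correlated with the rest, adjoining them as fresh ancillas and conjugating does \emph{not} reproduce the reduction of $U_{B'C'E'}^{\dagger}\rho' U_{B'C'E'}$, so you do not land on a state whose restriction to $BC\widetilde{D}E$ is $\rho_{BC\widetilde{D}E}$.

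The repair is not a matter of more careful bookkeeping but of changing the construction: the unitary to invert is $U_{BE}$ (the light-cone of the \emph{pre-circuit} region $BE$), whose support is exactly $BE(d)=B'E'$ --- this is precisely why $B'$ and $E'$ are defined as $d$-fattenings. After $U_{BE}^{\dagger}$ one has $V_{ACD}\rho V_{ACD}^{\dagger}$ [Eq.~\eqref{eq:past_light_cone_prop1}]; one must then trace out $CD$ (combining $\Tr_{CD\setminus C'D'}$ with the pre-existing $\Tr_{C'D'}$) and decompose $V_{ACD}=V_{CD}V_{AD\setminus\widetilde{D}}$ so that the $V_{CD}$ factor cancels under the trace, leaving a state whose restriction to $BC\widetilde{D}E$ equals $\rho_{BC\widetilde{D}E}$; only then can $\Phi_{BE\to BC}$ be applied. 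The forward step is then a \emph{different} light-cone, $U_{B'C'\cap BC}$, whose $d$-neighborhood stays inside $B'C'D'E'$, followed by $\Tr_{D'E'\setminus DE}$. Your symmetric ``undo--extend--redo with the same light-cone'' template skips the intermediate trace/decomposition entirely and cannot be made to satisfy $\rho'_{A'B'C'}=\mathcal{I}_{A'}\otimes\Phi'(\rho'_{A'B'E'})$ as written.
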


\begin{proof}
Our goal is to prove the existence of an extending channel $\Gamma:\mathcal{D}_{B'E'} \rightarrow \mathcal{D}_{B'C'}$ that satisfies:
\begin{equation}\label{eq:circuit_stability_1}
    \rho_{A'B'C'}' = \mathcal{I}_{A'} \otimes \Gamma(\rho_{A'B'E'}').
\end{equation}
\noindent In this proof, we construct such a channel out of three channels -- $\Gamma_{1}, \Gamma_{2}, \Gamma_{3}$. We explicitly apply the three channels to $\rho_{A'B'E'}'$, and show that the resulting state $\mathcal{I}_{A'} \otimes \Gamma_{3}\circ \Gamma_{2} \circ \Gamma_{1}(\rho_{A'B'E'}')$ satisfies Eq.~\eqref{eq:circuit_stability_1}.

An overview of the construction is as follows. We start from the reduced density matrix $\rho_{A'B'E'}'$. To it, we first apply $U_{BE}^{\dagger}$, the inverse of the past light-cone $U_{BE}$. This cancels out all the gates of $U$ supported on $BE$, and leaves behind some unitary $V_{ACD}$ [Eq.~\eqref{eq:past_light_cone_prop1}]. Such $U_{BE}^{\dagger}$ is our first channel $\Gamma_1$. Then, we take the partial trace $\Tr_{CD \setminus C'D'}$. Together with the already applied $\Tr_{C'D'}$ in $\rho_{A'B'E'}'$, $\Tr_{CD \setminus C'D'}$ yields a state with an important property: its reduced density matrix over $BC\widetilde{D}E$ is indistinguishable from $\rho_{BC\widetilde{D}E}$.

Because $\rho_{BC\widetilde{D}E}$ has an extending channel $\Phi$, we can apply $\Phi$ and recover a state identical to the pre-circuit state $\rho_{ABC}$ conjugated by some new unitary $V_{AD\setminus\widetilde{D}}$ (which originates from the unitary $V_{ACD}$). The composition of $\Tr_{CD \setminus C'D'}$ and the extending channel $\Phi$, in that order, is our second channel $\Gamma_2$. Finally, we apply $U_{B'C' \cap BC}$, another past light-cone. The outcome is not the full post-circuit state $\rho'$, but it is a state equivalent to $\rho_{A'B'C'}'$ up to some unitary supported on $D'E'$. By applying partial trace $\Tr_{D'E' \setminus DE}$, we recover the desired post-circuit reduced density matrix $\rho_{A'B'C'}'$. $U_{B'C' \cap BC}$ and $\Tr_{D'E' \setminus DE}$ composed, in that order, is our third channel $\Gamma_3$. The action of the three channels composed is the post-circuit extending channel $\Gamma_{B'E' \rightarrow B'C'}$.

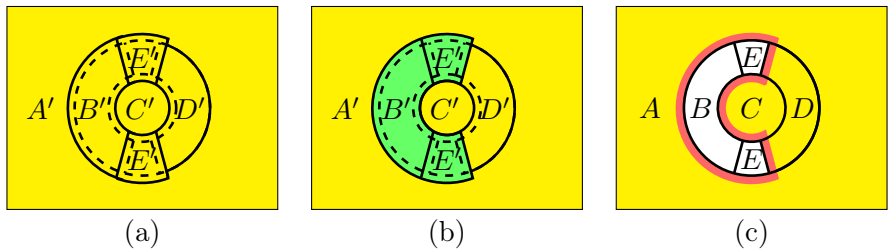
\begin{figure}[h]
    \centering
    \begin{tikzpicture}[scale=0.9]
    \begin{scope}[xshift=0.0cm]
    \draw[fill=yellow] (-2, -1.5) -- ++ (4, 0) -- ++ (0, 3) -- ++ (-4, 0) -- cycle;
    \draw[line width=1pt] (-105:0.4cm) arc(255:105:0.4cm);
    \draw[line width=1pt] (70:1cm) arc(70:-70:1.0cm);
    \draw[line width=1pt] (0.355291cm, 0.925967cm) -- (0.381173cm, 1.02256cm) arc(69.56:290.44:1.1cm) -- (0.355291cm, -0.925967cm);
    \draw[draw=none] (105:0.5cm) -- (105:1cm) arc(105:255:1cm) -- (255:0.5cm) arc(-105:105:0.5cm);
    \draw[line width=2.0pt, color=yellow] (-0.381173cm, 1.02256cm) arc(110.4:249.6:1.1cm);
    \draw[line width=1pt] (105:1.1cm) arc(105:255:1.1cm);
    \draw[line width=2.0pt, color=yellow] (75:0.5cm) arc(75:115:0.5cm);
    \draw[line width=2.0pt, color=yellow] (70.5:1.0cm) arc(70.5:112:1.0cm);
    \draw[line width=2.0pt, color=yellow] (-75:0.5cm) arc(-75:-115:0.5cm);
    \draw[line width=2.0pt, color=yellow] (-70.5:1.0cm) arc(-70.5:-112:1.0cm);
    \draw[line width=2.0pt, color=yellow] (75:0.5cm) arc(75:-75:0.5cm);
    \draw[line width=1pt] (60:0.4cm) -- (0.381173cm, 1.02256cm);
    \draw[line width=1pt] (-60:0.4cm) -- (0.381173cm, -1.02256cm);
    \draw[line width=1pt] (249.56:1.11cm) -- (240:0.4cm) arc(240:480:0.4cm) -- (-0.381173cm, 1.02256cm);
    
    \draw[line width=1pt, style=dashed] (0,0) circle (0.5cm);
    \draw[line width=1pt, style=dashed] (0,0) circle (1cm);
    \draw[line width=1pt, style=dashed] (75:0.5cm) -- (75:1cm);
    \draw[line width=1pt, style=dashed] (105:0.5cm) -- (105:1cm);
    \draw[line width=1pt, style=dashed] (-75:0.5cm) -- (-75:1cm);
    \draw[line width=1pt, style=dashed] (-105:0.5cm) -- (-105:1cm);
    
    \node[] () at (90:0.75cm) {$E'$};
    \node[] () at (-90:0.75cm) {$E'$};
    \node[] () at (-0.03,0) {$C'$};
    \node[] () at (0:0.7cm) {$D'$};
    \node[] () at (180:0.75cm) {$B'$};
    \node[] () at (-1.5, 0) {$A'$};
    \node[below] () at (0, -1.5) {(a)};
    \end{scope}

    \begin{scope}[xshift=4.5cm]
    \draw[fill=yellow] (-2, -1.5) -- ++ (4, 0) -- ++ (0, 3) -- ++ (-4, 0) -- cycle;
    \draw[line width=1pt] (70:1cm) arc(70:-70:1.0cm);
    \draw[line width=1pt, fill=green!60!white] (60:0.4cm) -- (0.381173cm, 1.02256cm) arc(69.56:290.44:1.1cm) -- (300:0.4cm) arc(300:60:0.4cm);
    \draw[line width=1pt, style=dashed] (0.12940952255cm, 0.48296291314cm) arc(75:-75:0.5cm);
    \draw[line width=1pt, style=dashed] (0.12940952255cm, 0.48296291314cm) arc(75:285:0.5cm);
    \draw[line width=1pt, style=dashed] (0,0) circle (1cm);
    \draw[line width=1pt] (0,0) circle (0.4cm);
    \draw[line width=1pt] (120:0.4cm) -- (-0.381173cm, 1.02256cm);
    \draw[line width=1pt] (-120:0.4cm) -- (-0.381173cm, -1.02256cm);
    
    \draw[line width=1pt, style=dashed] (75:0.5cm) -- (75:1cm);
    \draw[line width=1pt, style=dashed] (-75:0.5cm) -- (-75:1cm);
    \draw[line width=1pt, style=dashed] (105:0.5cm) -- (105:1cm);
    \draw[line width=1pt, style=dashed] (-105:0.5cm) -- (-105:1cm);

    \node[] () at (90:0.75cm) {$E'$};
    \node[] () at (-90:0.75cm) {$E'$};
    \node[] () at (-0.03,0) {$C'$};
    \node[] () at (0:0.7cm) {$D'$};
    \node[] () at (180:0.75cm) {$B'$};
    \node[] () at (-1.5, 0) {$A'$};
    \node[below] () at (0, -1.5) {(b)};
    \end{scope}

    \begin{scope}[xshift=9cm]
    \draw[fill=yellow] (-2, -1.5) -- ++ (4, 0) -- ++ (0, 3) -- ++ (-4, 0) -- cycle;
    \draw[line width=1pt] (75:0.5cm) -- (75:1.0cm) arc(75:-75:1.0cm) -- (-75:0.5cm) arc(285:75:0.5cm);
    \draw[line width=1pt, color=red!60!white, fill=red!60!white] (60:0.4cm) -- (0.381173cm, 1.02256cm) arc(69.56:290.44:1.1cm) -- (300:0.4cm) arc(300:60:0.4cm);
    \draw[draw=none, fill=white] (75:0.5cm) -- (75:1cm)  arc(75:285:1cm) -- (285:0.5cm) arc(285:75:0.5cm);
    \draw[line width=1pt] (0.12940952255cm, 0.48296291314cm) arc(75:-75:0.5cm);
    \draw[line width=1pt] (0.12940952255cm, 0.48296291314cm) arc(75:285:0.5cm);
    \draw[line width=1pt] (0,0) circle(1cm);
    \draw[line width=1pt] (75:0.5cm) -- (75:1cm);
    \draw[line width=1pt] (105:0.5cm) -- (105:1cm);
    \draw[line width=1pt] (-75:0.5cm) -- (-75:1cm);
    \draw[line width=1pt] (-105:0.5cm) -- (-105:1cm);
    
    \node[] () at (90:0.75cm) {$E$};
    \node[] () at (-90:0.75cm) {$E$};
    \node[] () at (0,0) {$C$};
    \node[] () at (0:0.75cm) {$D$};
    \node[] () at (180:0.75cm) {$B$};
    \node[] () at (-1.5, 0) {$A$};
    \node[below] () at (0, -1.5) {(c)};
    \end{scope}
    \end{tikzpicture}
    \caption{The first step $\Gamma_1$ illustrated. (a) $\rho' := U\rho U^{\dagger}$ over $A'B'C'D'E' = ABCDE$. The yellow shading depicts application of the gates in $U$. (b) The past light-cone $U_{BE}$, depicted in green. Note $B'E' \supset BE$ as its support. (c) The past light-cone inverse $U_{BE}^{\dagger}$ applied, resulting in state $V_{ACD} \rho V_{ACD}^{\dagger}$. The red shading indicates the gates supported on $B'E'$ not cancelled out by $U_{BE}^{\dagger}$.}
    \label{fig:circuit_stability_1_1}
\end{figure}

We now proceed with the detailed proof, beginning with the first step $\mathcal{I}_{A'} \otimes \Gamma_{1}$. (See Fig.~\ref{fig:circuit_stability_1_1} for its graphical illustration.) We choose $\Gamma_1$ to be $\Gamma_1(\cdot) = U_{BE}^{\dagger} (\cdot )U_{BE}$, where $U_{BE}$ is the past light-cone of $BE$. Note that $U_{BE}^{\dagger}$ cancels out all the gates in $U$ that are supported on $BE$, alongside some of the gates that act on $ACD$ [Eq.~\eqref{eq:past_light_cone_prop1}]. Therefore,
\begin{equation}
    U_{BE}^{\dagger}U = V_{ACD},
\end{equation}
where $V_{ACD}$ is some constant-depth circuit, supported on $ACD$ [Fig.~\ref{fig:circuit_stability_1_1}(c)]. Since $U_{BE}$ is supported in the complement of $C'D'$, we obtain:
\begin{equation}\label{eq:end_of_step1}
    \mathcal{I}_{A'} \otimes \Gamma_{1}\left(\rho_{A'B'E'}'\right) 
        = \Tr_{C'D'}\left( V_{ACD} \rho V_{ACD}^{\dagger} \right).
\end{equation}

Before we continue, let us draw a schematic circuit diagram which shall be useful for the remainder of the proof. As we apply $U_{BE}^{\dagger}$, we may visualize changes to the gates in $U$ by taking a cross-section view of the circuit diagram. As an example, let us view the cross-section through $ABDE$ according to the dotted line in Fig.~\ref{fig:circuit_stability_1_1_plc}(a). The action of $U_{BE}^{\dagger}$ is depicted in Fig.~\ref{fig:circuit_stability_1_1_plc}(b)-(c), providing a schematic illustration for how the gates on $ABDE$ may have changed. We will update this picture through the following steps of the proof.

\begin{figure}[h]
\centering
\begin{tikzpicture}
    \begin{scope}[xshift=1.0cm, yshift=-0.9cm, scale=0.9]
    \draw[fill=yellow] (-2, -1.5) -- ++ (4, 0) -- ++ (0, 3) -- ++ (-4, 0) -- cycle;
    \draw[line width=1pt] (75:0.5cm) -- (75:1.0cm) arc(75:-75:1.0cm) -- (-75:0.5cm) arc(285:75:0.5cm);
    \draw[line width=1pt, color=red!60!white, fill=red!60!white] (60:0.4cm) -- (0.381173cm, 1.02256cm) arc(69.56:290.44:1.1cm) -- (300:0.4cm) arc(300:60:0.4cm);
    \draw[draw=none, fill=white] (75:0.5cm) -- (75:1cm)  arc(75:285:1cm) -- (285:0.5cm) arc(285:75:0.5cm);
    \draw[line width=1pt] (0.12940952255cm, 0.48296291314cm) arc(75:-75:0.5cm);
    \draw[line width=1pt] (0.12940952255cm, 0.48296291314cm) arc(75:285:0.5cm);
    \draw[line width=1pt] (0,0) circle(1cm);
    \draw[line width=1pt] (75:0.5cm) -- (75:1cm);
    \draw[line width=1pt] (105:0.5cm) -- (105:1cm);
    \draw[line width=1pt] (-75:0.5cm) -- (-75:1cm);
    \draw[line width=1pt] (-105:0.5cm) -- (-105:1cm);

    \draw[line width=1.5pt, style=dotted] (-2, 0.825) -- (2, 0.825);
    
    \node[] () at (90:0.75cm) {$E$};
    \node[] () at (-90:0.75cm) {$E$};
    \node[] () at (0,0) {$C$};
    \node[] () at (0:0.75cm) {$D$};
    \node[] () at (180:0.75cm) {$B$};
    \node[] () at (-1.5, 0) {$A$};
    \node[below] () at (0, -1.5) {(a)};
    \end{scope}
    \begin{scope}[xshift=6.0cm, yshift=-0.5cm, scale=0.45, every path/.style={thick}]
    \foreach \x in {-7,...,28}
    {
    \draw[] (\x * 0.5+0.25, 0.7) -- ++ (0,2.5);
    }
    \foreach \x in {-3,...,14}
    {
    \foreach \y in {1,...,2}
    {
    \draw[fill=yellow] (\x-0.45, \y) -- ++ (0.9, 0) -- ++ (0, 0.4) -- ++ (-0.9, 0) -- ++ (0, -0.4) -- cycle;
    }
    }
    \foreach \x in {-3,...,13}
    {
    \foreach \y in {1,...,2}
    {
    \draw[fill=yellow] (\x-0.45+0.5, \y+0.5) -- ++ (0.9, 0) -- ++ (0, 0.4) -- ++ (-0.9, 0) -- ++ (0, -0.4) -- cycle;
    }
    }
    \draw[fill=yellow] (-3-0.45, 1+0.5) -- ++ (0.4, 0) -- ++ (0, 0.4) -- ++ (-0.4, 0);
    \draw[fill=yellow] (-3-0.45, 2+0.5) -- ++ (0.4, 0) -- ++ (0, 0.4) -- ++ (-0.4, 0);
    \draw[fill=yellow] (14+0.45, 1+0.5) -- ++ (-0.4, 0) -- ++ (0, 0.4) -- ++ (0.4, 0);
    \draw[fill=yellow] (14+0.45, 2+0.5) -- ++ (-0.4, 0) -- ++ (0, 0.4) -- ++ (0.4, 0);
    \foreach \x in {4,5,6,7}
    {
    \draw[fill=green!60!white] (\x-0.45, 1) -- ++ (0.9, 0) -- ++ (0, 0.4) -- ++ (-0.9, 0) -- ++ (0, -0.4) -- cycle;
    }
    \foreach \x in {4,5,6,7,8}
    {
    \draw[fill=green!60!white] (\x-0.45-0.5, 1+0.5) -- ++ (0.9, 0) -- ++ (0, 0.4) -- ++ (-0.9, 0) -- ++ (0, -0.4) -- cycle;
    }
    \foreach \x in {3,4,5,6,7,8}
    {
    \draw[fill=green!60!white] (\x-0.45, 2) -- ++ (0.9, 0) -- ++ (0, 0.4) -- ++ (-0.9, 0) -- ++ (0, -0.4) -- cycle;
    }
    \foreach \x in {3,4,5,6,7,8,9}
    {
    \draw[fill=green!60!white] (\x-0.45-0.5, 2+0.5) -- ++ (0.9, 0) -- ++ (0, 0.4) -- ++ (-0.9, 0) -- ++ (0, -0.4) -- cycle;
    }
    \draw[line width=1.1pt, style=dotted] (3.5, 3.5) -- (3.5, 0.25);
    \draw [decorate,
	decoration = {calligraphic brace}] (1.7, 3.275) --  (9.3, 3.275);
    \draw [decorate,
	decoration = {calligraphic brace,mirror}] (3.75, 0.625) --  (7.25, 0.625);
    \node[] at (0.5, 0) {$A$};
    \node[] at (11.5, 0) {$AD$};
    \node[] at (5.65, 3.9) {$B'E'$};
    \node[] at (5.5, 0) {$BE$};
    \node[] at (-4.2, 2.0) {$\cdots$};
    \node[] at (15.2, 2.0) {$\cdots$};
    \node[] at (5.5, -0.85) {(b)};
    \end{scope}
    \begin{scope}[xshift=6.0cm, yshift=-3.5cm, scale=0.45, every path/.style={thick}]
    \foreach \x in {-7,...,28}
    {
    \draw[] (\x * 0.5+0.25, 0.7) -- ++ (0,2.5);
    }
    \foreach \x in {-3,...,14}
    {
    \foreach \y in {1,...,2}
    {
    \draw[fill=yellow] (\x-0.45, \y) -- ++ (0.9, 0) -- ++ (0, 0.4) -- ++ (-0.9, 0) -- ++ (0, -0.4) -- cycle;
    }
    }
    \foreach \x in {-3,...,13}
    {
    \foreach \y in {1,...,2}
    {
    \draw[fill=yellow] (\x-0.45+0.5, \y+0.5) -- ++ (0.9, 0) -- ++ (0, 0.4) -- ++ (-0.9, 0) -- ++ (0, -0.4) -- cycle;
    }
    }
    \draw[fill=yellow] (-3-0.45, 1+0.5) -- ++ (0.4, 0) -- ++ (0, 0.4) -- ++ (-0.4, 0);
    \draw[fill=yellow] (-3-0.45, 2+0.5) -- ++ (0.4, 0) -- ++ (0, 0.4) -- ++ (-0.4, 0);
    \draw[fill=yellow] (14+0.45, 1+0.5) -- ++ (-0.4, 0) -- ++ (0, 0.4) -- ++ (0.4, 0);
    \draw[fill=yellow] (14+0.45, 2+0.5) -- ++ (-0.4, 0) -- ++ (0, 0.4) -- ++ (0.4, 0);
    \foreach \x in {4,5,6,7}
    {
    \draw[fill=white] (\x-0.45, 1) -- ++ (0.9, 0) -- ++ (0, 0.4) -- ++ (-0.9, 0) -- ++ (0, -0.4) -- cycle;
    }
    \foreach \x in {4,5,6,7,8}
    {
    \draw[fill=white] (\x-0.45-0.5, 1+0.5) -- ++ (0.9, 0) -- ++ (0, 0.4) -- ++ (-0.9, 0) -- ++ (0, -0.4) -- cycle;
    }
    \foreach \x in {3,4,5,6,7,8}
    {
    \draw[fill=white] (\x-0.45, 2) -- ++ (0.9, 0) -- ++ (0, 0.4) -- ++ (-0.9, 0) -- ++ (0, -0.4) -- cycle;
    }
    \foreach \x in {3,4,5,6,7,8,9}
    {
    \draw[fill=white] (\x-0.45-0.5, 2+0.5) -- ++ (0.9, 0) -- ++ (0, 0.4) -- ++ (-0.9, 0) -- ++ (0, -0.4) -- cycle;
    }
    \draw[line width=1.1pt, style=dotted] (3.5, 3.5) -- (3.5, 0.25);
    \draw [decorate,
	decoration = {calligraphic brace}] (1.7, 3.275) --  (9.3, 3.275);
    \draw [decorate,
	decoration = {calligraphic brace,mirror}] (3.75, 0.625) --  (7.25, 0.625);
    \node[] at (0.5, 0) {$A$};
    \node[] at (11.5, 0) {$AD$};
    \node[] at (5.65, 3.9) {$B'E'$};
    \node[] at (5.5, 0) {$BE$};
    \node[] at (-4.2, 2.0) {$\cdots$};
    \node[] at (15.2, 2.0) {$\cdots$};
    \node[] at (5.5, -0.85) {(c)};
    \end{scope}
\end{tikzpicture}
    \caption{Schematic illustration of the cancellation of gates in $U$ by $U_{BE}^{\dagger}$, in a particular cross-section. (a) As an example, we take a cross-section view of the circuit diagram according to the dotted line through $ABDE$. (b) The past light-cone $U_{BE}$, whose gates are shaded in green. By definition of our partitions, the radius of $B'E'$ is at least $d$ larger than the radius of $BE$. For future convenience, the vertical dotted line indicates the boundary between $A$ and $B$. (c) The gates after $U_{BE}^{\dagger}$ is applied, where the cancelled gates have no colored shading. $V_{ACD}$ is the composition of all the remaining (colored) gates.}
    \label{fig:circuit_stability_1_1_plc}
\end{figure}
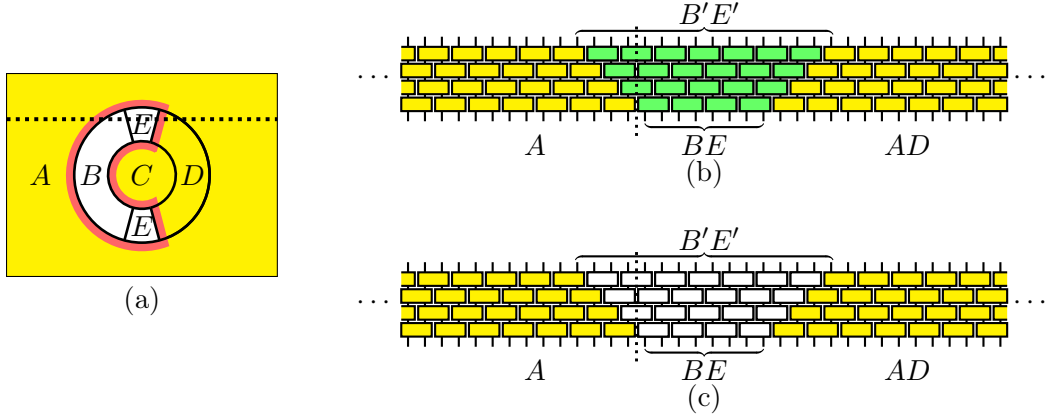

\begin{figure}[h]
    \centering
    \begin{tikzpicture}[scale=0.9]
    \begin{scope}[xshift=0cm]
    \draw[fill=yellow] (-2, -1.5) -- ++ (4, 0) -- ++ (0, 3) -- ++ (-4, 0) -- cycle;
    \draw[line width=1pt] (75:0.5cm) -- (75:1.0cm) arc(75:-75:1.0cm) -- (-75:0.5cm) arc(285:75:0.5cm);
    \draw[line width=1pt, color=red!60!white, fill=red!60!white] (60:0.4cm) -- (0.381173cm, 1.02256cm) arc(69.56:290.44:1.1cm) -- (300:0.4cm) arc(300:60:0.4cm);
    \draw[line width=1.5pt, color=red!20!white, fill=red!20!white] (-75:0.5cm) -- (-75:1.0cm) arc(285:290.44:1.0cm) -- (300:0.4cm) arc(300:60:0.4cm) -- (0.355291cm, 0.925967cm) arc(70:75:1.0cm) -- (75:0.5cm) arc(75:285:0.5cm);
    \draw[draw=none, fill=white] (75:0.5cm) -- (75:1cm)  arc(75:285:1cm) -- (285:0.5cm) arc(285:75:0.5cm);
    \draw[line width=1pt] (0.12940952255cm, 0.48296291314cm) arc(75:-75:0.5cm);
    \draw[line width=1pt] (0.12940952255cm, 0.48296291314cm) arc(75:285:0.5cm);
    \draw[line width=1pt] (0,0) circle(1cm);
    \draw[line width=1pt] (75:0.5cm) -- (75:1cm);
    \draw[line width=1pt] (105:0.5cm) -- (105:1cm);
    \draw[line width=1pt] (-75:0.5cm) -- (-75:1cm);
    \draw[line width=1pt] (-105:0.5cm) -- (-105:1cm);
    
    \node[] () at (90:0.75cm) {$E$};
    \node[] () at (-90:0.75cm) {$E$};
    \node[] () at (0,0) {$C$};
    \node[] () at (0:0.75cm) {$D$};
    \node[] () at (180:0.75cm) {$B$};
    \node[] () at (-1.5, 0) {$A$};
    \node[below] () at (0, -1.5) {(a)};
    \end{scope}

    \begin{scope}[xshift=4.5cm]
    \draw[fill=yellow] (-2, -1.5) -- ++ (4, 0) -- ++ (0, 3) -- ++ (-4, 0) -- cycle;
    \draw[line width=1pt, fill=yellow!50!white] (75:0.5cm) -- (75:1.0cm) arc(75:-75:1.0cm) -- (-75:0.5cm) arc(285:75:0.5cm);
    \draw[draw=none, fill=red!60!white] (0.355291cm, 0.925967cm) -- (0.381173cm, 1.02256cm) arc(69.56:290.44:1.1cm) -- (0.355291cm, -0.925967cm) arc(290.44:69.56:1.0cm);
    \draw[line width=1pt, color=red!60!white] (0.355291cm, 0.925967cm) -- (0.381173cm, 1.02256cm) arc(69.56:290.44:1.1cm) -- (0.355291cm, -0.925967cm);
    \draw[draw=none, fill=white] (75:0.5cm) -- (75:1cm) arc(75:285:1cm) -- (285:0.5cm) arc(285:75:0.5cm);
    \draw[line width=1.5pt, color=red!20!white, fill=red!20!white] (-75:0.5cm) -- (-75:1.0cm) arc(285:290.44:1.0cm) -- (300:0.4cm) arc(300:60:0.4cm) -- (0.355291cm, 0.925967cm) arc(70:75:1.0cm) -- (75:0.5cm) arc(75:285:0.5cm);
    \draw[line width=1pt] (0.12940952255cm, 0.48296291314cm) arc(75:-75:0.5cm);
    \draw[line width=1pt] (0.12940952255cm, 0.48296291314cm) arc(75:285:0.5cm);
    \draw[line width=1pt] (0,0) circle(1cm);
    \draw[line width=1pt] (75:0.5cm) -- (75:1cm);
    \draw[line width=1pt] (105:0.5cm) -- (105:1cm);
    \draw[line width=1pt] (-75:0.5cm) -- (-75:1cm);
    \draw[line width=1pt] (-105:0.5cm) -- (-105:1cm);
    
    \node[] () at (90:0.75cm) {$E$};
    \node[] () at (-90:0.75cm) {$E$};
    \node[text=gray] () at (0,0) {$C$};
    \node[text=gray] () at (0:0.75cm) {$D$};
    \node[] () at (180:0.75cm) {$B$};
    \node[] () at (-1.5, 0) {$A$};
    \node[below] () at (0, -1.5) {(b)};
    \end{scope}

    \begin{scope}[xshift=9cm]
    \draw[fill=yellow] (-2, -1.5) -- ++ (4, 0) -- ++ (0, 3) -- ++ (-4, 0) -- cycle;
    \draw[line width=1pt, fill=gray!10!white] (75:0.5cm) -- (75:1.0cm) arc(75:-75:1.0cm) -- (-75:0.5cm) arc(285:75:0.5cm);
    \draw[draw=none, fill=red!60!white] (0.355291cm, 0.925967cm) -- (0.381173cm, 1.02256cm) arc(69.56:290.44:1.1cm) -- (0.355291cm, -0.925967cm) arc(290.44:69.56:1.0cm);
    \draw[line width=1pt, color=red!60!white] (0.355291cm, 0.925967cm) -- (0.381173cm, 1.02256cm) arc(69.56:290.44:1.1cm) -- (0.355291cm, -0.925967cm);
    \draw[draw=none, fill=white] (75:0.5cm) -- (75:1cm) arc(75:285:1cm) -- (285:0.5cm) arc(285:75:0.5cm);
    \draw[line width=1pt, color=gray!10!white, fill=gray!10!white] (-75:0.5cm) -- (-75:1.0cm) arc(285:290.44:1.0cm) -- (300:0.4cm) arc(300:60:0.4cm) -- (0.355291cm, 0.925967cm) arc(70:75:1.0cm) -- (75:0.5cm) arc(75:285:0.5cm);
    \draw[line width=1pt, color=gray] (0.12940952255cm, 0.48296291314cm) arc(75:-75:0.5cm);
    \draw[line width=1pt] (0.12940952255cm, 0.48296291314cm) arc(75:285:0.5cm);
    \draw[draw=none, fill=brown!60!white] (75:0.9cm) arc(75:-75:0.9cm) -- (-75:1.0cm) arc(-75:75:1.0cm) -- (75:0.9cm);
    \draw[line width=1pt] (0,0) circle(1cm);
    \draw[line width=1pt] (75:0.5cm) -- (75:1cm);
    \draw[line width=1pt] (105:0.5cm) -- (105:1cm);
    \draw[line width=1pt] (-75:0.5cm) -- (-75:1cm);
    \draw[line width=1pt] (-105:0.5cm) -- (-105:1cm);
    \draw[line width=1pt, color=gray, style=dotted] (75:0.9cm) arc(75:-75:0.9cm);
    
    \node[] () at (90:0.75cm) {$E$};
    \node[] () at (-90:0.75cm) {$E$};
    \node[text=gray] () at (0,0) {$C$};
    \node[text=gray] () at (0.1cm:0.7cm) {$\widetilde{D}$};
    \node[] () at (180:0.75cm) {$B$};
    \node[] () at (-1.5, 0) {$A$};
    \node[below] () at (0, -1.5) {(c)};
    \end{scope}
    
    \begin{scope}[xshift=13.5cm]
    \draw[fill=yellow] (-2, -1.5) -- ++ (4, 0) -- ++ (0, 3) -- ++ (-4, 0) -- cycle;
    \draw[line width=1pt, fill=gray!10!white] (105:0.5cm) -- (105:1.0cm) arc(105:-105:1.0cm) -- (-105:0.5cm) arc(255:105:0.5cm);
    \draw[draw=none, fill=red!60!white] (0.355291cm, 0.925967cm) -- (0.381173cm, 1.02256cm) arc(69.56:290.44:1.1cm) -- (0.355291cm, -0.925967cm) arc(290.44:69.56:1.0cm);
    \draw[line width=1pt, color=red!60!white] (0.355291cm, 0.925967cm) -- (0.381173cm, 1.02256cm) arc(69.56:290.44:1.1cm) -- (0.355291cm, -0.925967cm);
    \draw[draw=none, fill=white] (105:0.5cm) -- (105:1cm) arc(105:255:1cm) -- (255:0.5cm) arc(-105:105:0.5cm);
    \draw[line width=1pt, color=gray] (75:0.5cm) -- (75:1cm);
    \draw[line width=1pt] (105:0.5cm) -- (105:1cm);
    \draw[line width=1pt, color=gray] (-75:0.5cm) -- (-75:1cm);
    \draw[line width=1pt] (-105:0.5cm) -- (-105:1cm);
    \draw[draw=none, fill=brown!60!white] (75:0.9cm) arc(75:-75:0.9cm) -- (-75:1.0cm) arc(-75:75:1.0cm) -- (75:0.9cm);
    \draw[line width=1pt] (0,0) circle(0.5cm);
    \draw[line width=1pt] (0,0) circle(1cm);
    \draw[line width=1pt, color=gray, style=dotted] (75:0.9cm) arc(75:-75:0.9cm);
    
    \node[text=gray] () at (90:0.75cm) {$E$};
    \node[text=gray] () at (-90:0.75cm) {$E$};
    \node[] () at (0,0) {$C$};
    \node[text=gray] () at (0.1cm:0.7cm) {$\widetilde{D}$};
    \node[] () at (180:0.75cm) {$B$};
    \node[] () at (-1.5, 0) {$A$};
    \node[below] () at (0, -1.5) {(d)};
    \end{scope}
    \end{tikzpicture}
    \caption{The second step $\Gamma_2$ illustrated. (a) The partial trace $\Tr_{CD \setminus C'D'}$ applied; partial traces are depicted with graying the traced-out subsystems. (b) The partial trace $\Tr_{C'D'}$ also applied. Note that $\Tr_{CD \setminus C'D'} \circ \Tr_{C'D'} = \Tr_{CD}$. (c) By the cyclic property of the partial trace, the reduced density matrix on $BC\widetilde{D}E$ is indistinguishable from $\rho_{BC\widetilde{D}E}$. The figure depicts this state with a grayed $C\widetilde{D}$ without any coloring. The brown shading corresponds to the remaining gates on $D \setminus \widetilde{D}$. (d) The pre-circuit $\rho_{BC\widetilde{D}E}$ is extendible from $BE$ to $BC$, so we apply its extending channel $\Phi$. Doing so, we obtain the reduced density matrix $\Tr_{DE} \left( V_{AD \setminus \widetilde{D}} \rho V_{AD \setminus \widetilde{D}}^{\dagger} \right)$.}
    \label{fig:circuit_stability_1_2}
\end{figure}
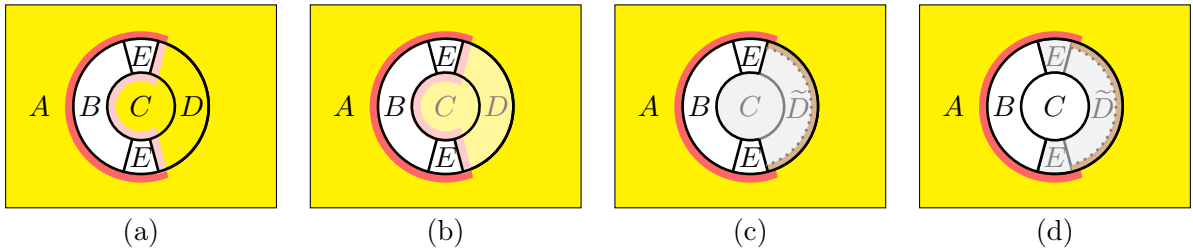

Now, we apply our second channel $\mathcal{I}_{A'} \otimes \Gamma_{2}$. (See Fig.~\ref{fig:circuit_stability_1_2} for its graphical illustration.) This channel is a composition of two operations, described below. First, we apply $\Tr_{CD\setminus C'D'}$. This partial trace, together with the partial trace in Eq.~\eqref{eq:end_of_step1}, can be combined together as $\Tr_{CD\setminus C'D'} \circ \Tr_{C'D'} = \Tr_{CD}$. Our state at this point is thus $\Tr_{CD}\left(V_{ACD} \rho V_{ACD}^{\dagger}\right)$. Note that the unitary $V_{ACD}$ is a constant-depth circuit, and as such, it can be decomposed into $V_{ACD} = V_{CD} V_{AD\setminus\widetilde{D}}$, where $V_{CD}$ and $V_{AD\setminus\widetilde{D}}$ are each constant-depth circuits. Consequently, the state we have at this point is:
\begin{equation}\label{eq:end_of_step2_part1}
    \Tr_{CD}\left(V_{ACD} \rho V_{ACD}^{\dagger}\right) = \Tr_{CD}\left(V_{AD\setminus \widetilde{D}} \rho V_{AD\setminus \widetilde{D}}^{\dagger}\right).
\end{equation}

The remaining channel in the construction of $\mathcal{I}_{A'}\otimes\Gamma_2$ is an extending map of $\rho_{BC\widetilde{D}E}$. Notice that, because $V_{AD\setminus \widetilde{D}}$ is supported on $\Lambda \setminus BC\widetilde{D}E$, the reduced density matrix of $V_{AD\setminus \widetilde{D}} \rho V_{AD\setminus \widetilde{D}}^{\dagger}$ on $BC\widetilde{D}E$ is indistinguishable from $\rho_{BC\widetilde{D}E}$. This $\rho_{BC\widetilde{D}E}$, by our assumption, is extendible from $BE$ to $BC$. Let us then denote its extending channel as $\Phi:\mathcal{D}_{BE} \rightarrow \mathcal{D}_{BC}$, and apply it as our second operation of $\Gamma_2$. Upon applying $\Phi$ to Eq.~\eqref{eq:end_of_step2_part1}, we obtain:
\begin{equation}\label{eq:end_of_step2_part2}
\begin{aligned} 
        \mathcal{I}_{AD\setminus\widetilde{D}} \otimes \Phi \left[\Tr_{CD} \left( V_{AD\setminus \widetilde{D}} \rho V_{AD\setminus \widetilde{D}}^{\dagger} \right) \right] 
        &= \Tr_{D \setminus \widetilde{D}} \left( V_{AD\setminus\widetilde{D}} \left( \mathcal{I}_{AD\setminus\widetilde{D}} \otimes \Phi \Bigl[ \rho_{AB(D \setminus \widetilde{D})E} \Bigr] \right)V_{AD\setminus\widetilde{D}}^{\dagger} \right) \\
        &= \Tr_{D \setminus \widetilde{D}} \left( V_{AD\setminus\widetilde{D}} \rho_{ABCD \setminus \widetilde{D}} V_{AD\setminus\widetilde{D}}^{\dagger} \right) \\
        &= \Tr_{DE} \left( V_{AD\setminus\widetilde{D}} \rho V_{AD\setminus\widetilde{D}}^{\dagger} \right).
\end{aligned}
\end{equation}

To set up for the final step of the proof, let us update our schematic circuit diagram across the $ABDE$ cross-section in Fig.~\ref{fig:circuit_stability_1_1_plc}. In our second step, we effectively cancelled out $V_{CD}$, the unitary composed of gates in $CD$. Fig.~\ref{fig:circuit_stability_1_2_plc} shows the set of remaining gates in $V_{AD\setminus\widetilde{D}} \rho V_{AD\setminus\widetilde{D}}^{\dagger}$ over the same cross-section. Note that only the gates supported in $CD$ were affected by our second step.

\begin{figure}[h]
\centering
\begin{tikzpicture}[every path/.style={thick},scale=0.45]
    \begin{scope}[yshift=0.0cm]
    \foreach \x in {-7,...,28}
    {
    \draw[] (\x * 0.5+0.25, 0.7) -- ++ (0,2.5);
    }
    \foreach \x in {-3,...,14}
    {
    \foreach \y in {1,...,2}
    {
    \draw[fill=yellow] (\x-0.45, \y) -- ++ (0.9, 0) -- ++ (0, 0.4) -- ++ (-0.9, 0) -- ++ (0, -0.4) -- cycle;
    }
    }
    \foreach \x in {-3,...,13}
    {
    \foreach \y in {1,...,2}
    {
    \draw[fill=yellow] (\x-0.45+0.5, \y+0.5) -- ++ (0.9, 0) -- ++ (0, 0.4) -- ++ (-0.9, 0) -- ++ (0, -0.4) -- cycle;
    }
    }
    \draw[fill=yellow] (-3-0.45, 1+0.5) -- ++ (0.4, 0) -- ++ (0, 0.4) -- ++ (-0.4, 0);
    \draw[fill=yellow] (-3-0.45, 2+0.5) -- ++ (0.4, 0) -- ++ (0, 0.4) -- ++ (-0.4, 0);
    \draw[fill=white] (14+0.45, 1+0.5) -- ++ (-0.4, 0) -- ++ (0, 0.4) -- ++ (0.4, 0);
    \draw[fill=white] (14+0.45, 2+0.5) -- ++ (-0.4, 0) -- ++ (0, 0.4) -- ++ (0.4, 0);
    \foreach \x in {4,...,14}
    {
    \draw[fill=white] (\x-0.45, 1) -- ++ (0.9, 0) -- ++ (0, 0.4) -- ++ (-0.9, 0) -- ++ (0, -0.4) -- cycle;
    }
    \foreach \x in {4,...,14}
    {
    \draw[fill=white] (\x-0.45-0.5, 1+0.5) -- ++ (0.9, 0) -- ++ (0, 0.4) -- ++ (-0.9, 0) -- ++ (0, -0.4) -- cycle;
    }
    \foreach \x in {3,...,14}
    {
    \draw[fill=white] (\x-0.45, 2) -- ++ (0.9, 0) -- ++ (0, 0.4) -- ++ (-0.9, 0) -- ++ (0, -0.4) -- cycle;
    }
    \foreach \x in {3,...,14}
    {
    \draw[fill=white] (\x-0.45-0.5, 2+0.5) -- ++ (0.9, 0) -- ++ (0, 0.4) -- ++ (-0.9, 0) -- ++ (0, -0.4) -- cycle;
    }
    \draw[line width=1.1pt, style=dotted] (3.5, 3.5) -- (3.5, 0.25);
    \draw [decorate,
	decoration = {calligraphic brace}] (1.7, 3.275) --  (9.3, 3.275);
    \draw [decorate,
	decoration = {calligraphic brace,mirror}] (3.75, 0.625) --  (7.25, 0.625);
    \node[] at (0.5, 0) {$A$};
    \node[] at (11.5, 0) {$AD$};
    \node[] at (5.65, 3.9) {$B'E'$};
    \node[] at (5.5, 0) {$BE$};
    \node[] at (-4.2, 2.0) {$\cdots$};
    \node[] at (15.2, 2.0) {$\cdots$};
    \end{scope}
\end{tikzpicture}
    \caption{Schematic illustration of $V_{AD\setminus\widetilde{D}} \rho V_{AD\setminus\widetilde{D}}^{\dagger}$, for the cross-section in Fig.~\ref{fig:circuit_stability_1_1_plc}. Note how, compared to Fig.~\ref{fig:circuit_stability_1_1_plc}(c), only the gates supported in the complement of $BE$ were affected.}
    \label{fig:circuit_stability_1_2_plc}
\end{figure}
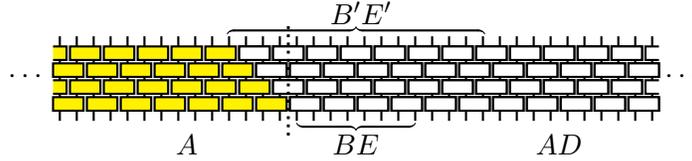

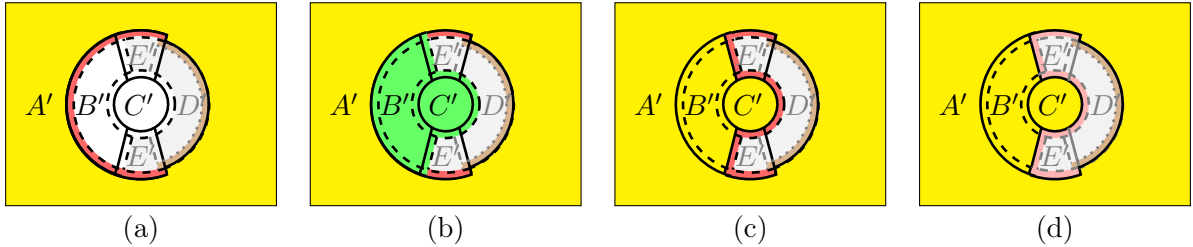
\begin{figure}[h]
    \centering
    \begin{tikzpicture}[scale=0.9]
    \begin{scope}[xshift=0cm]
    \draw[fill=yellow] (-2, -1.5) -- ++ (4, 0) -- ++ (0, 3) -- ++ (-4, 0) -- cycle;
    \draw[line width=1pt, color=gray!10!white, fill=gray!10!white] (105:0.5cm) -- (105:1.0cm) arc(105:-105:1.0cm) -- (-105:0.5cm) arc(255:105:0.5cm);
    \draw[line width=1pt] (0.355291cm, 0.925967cm) -- (0.381173cm, 1.02256cm) arc(69.56:290.44:1.1cm) -- (0.355291cm, -0.925967cm);
    \draw[draw=none, fill=white] (105:0.5cm) -- (105:1.1cm) arc(105:255:1.1cm) -- (255:0.5cm) arc(-105:105:0.5cm);
    \draw[draw=none, fill=red!60!white] (0.355291cm, 0.925967cm) -- (0.381173cm, 1.02256cm) arc(69.56:290.44:1.1cm) -- (0.355291cm, -0.925967cm) arc(290.44:69.56:1.0cm);
    \draw[line width=1pt] (0.381173cm, 1.02256cm) arc(69.56:290.44:1.1cm);
    \draw[line width=1pt] (249.56:1.0cm) -- (240:0.4cm) arc(240:480:0.4cm) -- (-0.355291cm, 0.925967cm);
    \draw[line width=1pt] (120:0.4cm) arc(120:240:0.4cm);
    \draw[line width=1pt] (105:1.1cm) arc(105:255:1.1cm);
    \draw[line width=1pt, color=gray, style=dashed] (75:0.5cm) -- (75:1cm);
    \draw[line width=1pt, style=dashed] (105:0.5cm) -- (105:1.0cm);
    \draw[line width=1pt, color=gray, style=dashed] (-75:0.5cm) -- (-75:1cm);
    \draw[line width=1pt, style=dashed] (-105:0.5cm) -- (-105:1.0cm);
    \draw[line width=1pt] (-0.355291cm, 0.925967cm) -- (-0.381173cm, 1.02256cm);
    \draw[line width=1pt] (-0.355291cm, -0.925967cm) -- (-0.381173cm, -1.02256cm);
    \draw[draw=none, fill=brown!60!white] (75:0.9cm) arc(75:-75:0.9cm) -- (-75:1.0cm) arc(-75:75:1.0cm) -- (75:0.9cm);
    \draw[line width=1pt] (60:0.4cm) -- (0.381173cm, 1.02256cm);
    \draw[line width=1pt] (-60:0.4cm) -- (0.381173cm, -1.02256cm);
    \draw[line width=1pt, style=dashed] (0,0) circle(0.5cm);
    \draw[line width=1pt, style=dashed] (0.355291cm, 0.925967cm) arc(69.56:-69.56:1.0cm);
    \draw[line width=1pt, color=gray, style=dotted] (75:0.9cm) arc(75:-75:0.9cm);
    \draw[line width=1pt, style=dashed] (0.355291cm, 0.925967cm) arc(69.56:290.44:1.0cm);
    \draw[line width=1pt] (70:1cm) arc(70:-70:1.0cm);
    
    \node[text=gray] () at (90:0.75cm) {$E'$};
    \node[text=gray] () at (-90:0.75cm) {$E'$};
    \node[] () at (-0.03,0) {$C'$};
    \node[text=gray] () at (0:0.75cm) {$D'$};
    \node[] () at (180:0.75cm) {$B'$};
    \node[] () at (-1.5, 0) {$A'$};
    \node[below] () at (0, -1.5) {(a)};
    \end{scope}

    \begin{scope}[xshift=4.5cm]
    \draw[fill=yellow] (-2, -1.5) -- ++ (4, 0) -- ++ (0, 3) -- ++ (-4, 0) -- cycle;
    \draw[line width=1pt, color=gray!10!white, fill=gray!10!white] (105:0.5cm) -- (105:1.0cm) arc(105:-105:1.0cm) -- (-105:0.5cm) arc(255:105:0.5cm);
    \draw[draw=none, fill=red!60!white] (0.355291cm, 0.925967cm) -- (0.381173cm, 1.02256cm) arc(69.56:290.44:1.1cm) -- (0.355291cm, -0.925967cm) arc(290.44:69.56:1.0cm);
    \draw[line width=1pt] (0.355291cm, 0.925967cm) -- (0.381173cm, 1.02256cm) arc(69.56:290.44:1.1cm) -- (0.355291cm, -0.925967cm);
    \draw[draw=none, color=green!60!white, fill=green!60!white] (105:0.5cm) -- (105:1.1cm) arc(105:255:1.1cm) -- (255:0.5cm) arc(-105:105:0.5cm);
    \draw[line width=1pt] (249.56:1.0cm) -- (240:0.4cm) arc(240:480:0.4cm) -- (-0.355291cm, 0.925967cm);
    \draw[line width=1pt] (120:0.4cm) arc(120:240:0.4cm);
    \draw[line width=1pt] (105:1.1cm) arc(105:255:1.1cm);
    \draw[line width=1pt, color=gray, style=dashed] (75:0.5cm) -- (75:1cm);
    \draw[line width=1pt, style=dashed] (105:0.5cm) -- (105:1.0cm);
    \draw[line width=1pt, color=gray, style=dashed] (-75:0.5cm) -- (-75:1cm);
    \draw[line width=1pt, style=dashed] (-105:0.5cm) -- (-105:1.0cm);
    \draw[line width=1pt] (-0.355291cm, 0.925967cm) -- (-0.381173cm, 1.02256cm);
    \draw[line width=1pt] (-0.355291cm, -0.925967cm) -- (-0.381173cm, -1.02256cm);
    \draw[draw=none, fill=brown!60!white] (75:0.9cm) arc(75:-75:0.9cm) -- (-75:1.0cm) arc(-75:75:1.0cm) -- (75:0.9cm);
    \draw[line width=1pt] (60:0.4cm) -- (0.381173cm, 1.02256cm);
    \draw[line width=1pt] (-60:0.4cm) -- (0.381173cm, -1.02256cm);
    \draw[line width=1pt, style=dashed] (0,0) circle(0.5cm);
    \draw[line width=1pt, style=dashed] (0.355291cm, 0.925967cm) arc(69.56:-69.56:1.0cm);
    \draw[line width=1pt, color=gray, style=dotted] (75:0.9cm) arc(75:-75:0.9cm);
    \draw[line width=1pt, style=dashed] (0.355291cm, 0.925967cm) arc(69.56:290.44:1.0cm);
    \draw[line width=1pt] (70:1cm) arc(70:-70:1.0cm);
    
    \node[text=gray] () at (90:0.75cm) {$E'$};
    \node[text=gray] () at (-90:0.75cm) {$E'$};
    \node[] () at (-0.03,0) {$C'$};
    \node[text=gray] () at (0:0.75cm) {$D'$};
    \node[] () at (180:0.75cm) {$B'$};
    \node[] () at (-1.5, 0) {$A'$};
    \node[below] () at (0, -1.5) {(b)};
    \end{scope}
    
    \begin{scope}[xshift=9cm]
    \draw[fill=yellow] (-2, -1.5) -- ++ (4, 0) -- ++ (0, 3) -- ++ (-4, 0) -- cycle;
    \draw[line width=1pt, color=gray!10!white, fill=gray!10!white] (105:0.5cm) -- (105:1.0cm) arc(105:-105:1.0cm) -- (-105:0.5cm) arc(255:105:0.5cm);
    \draw[draw=none, fill=red!60!white] (0.355291cm, 0.925967cm) -- (0.381173cm, 1.02256cm) arc(69.56:290.44:1.1cm) -- (0.355291cm, -0.925967cm) arc(290.44:69.56:1.0cm);
    \draw[line width=1pt] (0.355291cm, 0.925967cm) -- (0.381173cm, 1.02256cm) arc(69.56:290.44:1.1cm) -- (0.355291cm, -0.925967cm);
    \draw[draw=none, fill=red!60!white] (105:0.5cm) -- (105:1.1cm) arc(105:255:1.1cm) -- (255:0.5cm) arc(-105:105:0.5cm);
    \draw[draw=none, color=yellow, fill=yellow] (120:0.4cm) -- (-0.381173cm, 1.02256cm) arc(105:255:1.1cm) -- (-120:0.4cm) arc(-120:120:0.4cm);
    \draw[line width=1pt] (249.56:1.0cm) -- (240:0.4cm) arc(240:480:0.4cm) -- (-0.355291cm, 0.925967cm);
    \draw[line width=1pt] (120:0.4cm) arc(120:240:0.4cm);
    \draw[line width=1pt] (105:1.1cm) arc(105:255:1.1cm);
    \draw[line width=1pt, color=gray, style=dashed] (75:0.5cm) -- (75:1cm);
    \draw[line width=1pt, style=dashed] (105:0.5cm) -- (105:1.0cm);
    \draw[line width=1pt, color=gray, style=dashed] (-75:0.5cm) -- (-75:1cm);
    \draw[line width=1pt, style=dashed] (-105:0.5cm) -- (-105:1.0cm);
    \draw[line width=1pt] (-0.355291cm, 0.925967cm) -- (-0.381173cm, 1.02256cm);
    \draw[line width=1pt] (-0.355291cm, -0.925967cm) -- (-0.381173cm, -1.02256cm);
    \draw[draw=none, fill=brown!60!white] (75:0.9cm) arc(75:-75:0.9cm) -- (-75:1.0cm) arc(-75:75:1.0cm) -- (75:0.9cm);
    \draw[line width=1pt] (60:0.4cm) -- (0.381173cm, 1.02256cm);
    \draw[line width=1pt] (-60:0.4cm) -- (0.381173cm, -1.02256cm);
    \draw[line width=1pt, style=dashed] (0,0) circle(0.5cm);
    \draw[line width=1pt, style=dashed] (0.355291cm, 0.925967cm) arc(69.56:-69.56:1.0cm);
    \draw[line width=1pt, color=gray, style=dotted] (75:0.9cm) arc(75:-75:0.9cm);
    \draw[line width=1pt, style=dashed] (0.355291cm, 0.925967cm) arc(69.56:290.44:1.0cm);
    \draw[line width=1pt] (70:1cm) arc(70:-70:1.0cm);
    
    \node[text=gray] () at (90:0.75cm) {$E'$};
    \node[text=gray] () at (-90:0.75cm) {$E'$};
    \node[] () at (-0.03,0) {$C'$};
    \node[text=gray] () at (0:0.75cm) {$D'$};
    \node[] () at (180:0.75cm) {$B'$};
    \node[] () at (-1.5, 0) {$A'$};
    \node[below] () at (0, -1.5) {(c)};
    \end{scope}

    \begin{scope}[xshift=13.5cm]
    \draw[fill=yellow] (-2, -1.5) -- ++ (4, 0) -- ++ (0, 3) -- ++ (-4, 0) -- cycle;
    \draw[line width=1pt, color=gray!10!white, fill=gray!10!white] (105:0.5cm) -- (105:1.0cm) arc(105:-105:1.0cm) -- (-105:0.5cm) arc(255:105:0.5cm);
    \draw[draw=none, color=red!40!white, fill=red!30!white] (105:0.5cm) -- (105:1.1cm) arc(105:255:1.1cm) -- (255:0.5cm) arc(-105:105:0.5cm);
    \draw[draw=none, color=red!40!white, fill=red!30!white] (0.355291cm, 0.925967cm) -- (0.381173cm, 1.02256cm) arc(69.56:290.44:1.1cm) -- (0.355291cm, -0.925967cm) arc(290.44:69.56:1.0cm);
    \draw[line width=1pt] (0.355291cm, 0.925967cm) -- (0.381173cm, 1.02256cm) arc(69.56:290.44:1.1cm) -- (0.355291cm, -0.925967cm);
    \draw[draw=none, color=yellow, fill=yellow] (120:0.4cm) -- (-0.381173cm, 1.02256cm) arc(105:255:1.1cm) -- (-120:0.4cm) arc(-120:120:0.4cm);
    \draw[line width=1pt] (249.56:1.0cm) -- (240:0.4cm) arc(240:480:0.4cm) -- (-0.355291cm, 0.925967cm);
    \draw[line width=1pt] (120:0.4cm) arc(120:240:0.4cm);
    \draw[line width=1pt] (105:1.1cm) arc(105:255:1.1cm);
    \draw[line width=1pt, color=gray, style=dashed] (75:0.5cm) -- (75:1cm);
    \draw[line width=1pt, color=gray, style=dashed] (105:0.5cm) -- (105:1.0cm);
    \draw[line width=1pt, color=gray, style=dashed] (-75:0.5cm) -- (-75:1cm);
    \draw[line width=1pt, color=gray, style=dashed] (-105:0.5cm) -- (-105:1.0cm);
    \draw[line width=1pt] (-0.355291cm, 0.925967cm) -- (-0.381173cm, 1.02256cm);
    \draw[line width=1pt] (-0.355291cm, -0.925967cm) -- (-0.381173cm, -1.02256cm);
    \draw[draw=none, fill=brown!60!white] (75:0.9cm) arc(75:-75:0.9cm) -- (-75:1.0cm) arc(-75:75:1.0cm) -- (75:0.9cm);
    \draw[line width=1pt, color=gray] (60:0.4cm) -- (0.355291cm, 0.925967cm);
    \draw[line width=1pt, color=gray] (-60:0.4cm) -- (0.355291cm, -0.925967cm);
    \draw[line width=1pt, color=gray, style=dashed] (120:0.5cm) arc(120:-120:0.5cm);
    \draw[line width=1pt, style=dashed] (120:0.5cm) arc(120:180+60:0.5cm);
    \draw[line width=1pt, color=gray, style=dashed] (0.355291cm, 0.925967cm) arc(69.56:-69.56:1.0cm);
    \draw[line width=1pt, color=gray, style=dotted] (75:0.9cm) arc(75:-75:0.9cm);
    \draw[line width=1pt, style=dashed] (180-69.56:1.0cm) arc(180-69.56:180+69.56:1.0cm);
    \draw[line width=1pt, color=gray, style=dashed] (0.355291cm, 0.925967cm) arc(69.56:180-69.56:1.0cm);
    \draw[line width=1pt, color=gray, style=dashed] (0.355291cm, -0.925967cm) arc(-69.56:-180+69.56:1.0cm);
    \draw[line width=1pt] (70:1cm) arc(70:-70:1.0cm);
    
    \node[text=gray] () at (90:0.75cm) {$E'$};
    \node[text=gray] () at (-90:0.75cm) {$E'$};
    \node[] () at (-0.03,0) {$C'$};
    \node[text=gray] () at (0:0.75cm) {$D'$};
    \node[] () at (180:0.75cm) {$B'$};
    \node[] () at (-1.5, 0) {$A'$};
    \node[below] () at (0, -1.5) {(d)};
    \end{scope}
    \end{tikzpicture}
    \caption{The third step $\Gamma_3$ illustrated. (a) $\Tr_{DE}\left(V_{AD\setminus\widetilde{D}} \rho V_{AD\setminus\widetilde{D}}^{\dagger}\right)$ over partition $A'B'C'D'E'$. (b) The past light-cone $U_{B'C' \cap BC}$, depicted in green. (c) The past light-cone $U_{B'C' \cap BC}$ applied. The state is now $\widetilde{V}_{D'E'}U_{A'B'C'} \rho U_{A'B'C'}^{\dagger}\widetilde{V}_{D'E'}$, where $\widetilde{V}_{D'E'}$ is an additional unitary supported on $D'E'$. (d) Partial trace $\Tr_{D'E' \setminus DE}$ applied. By the cyclic property of $\Tr_{D'E'}$, the reduced density matrix on $A'B'C'$ is indistinguishable from our desired $\rho_{A'B'C'}'$.}
    \label{fig:circuit_stability_1_3}
\end{figure}

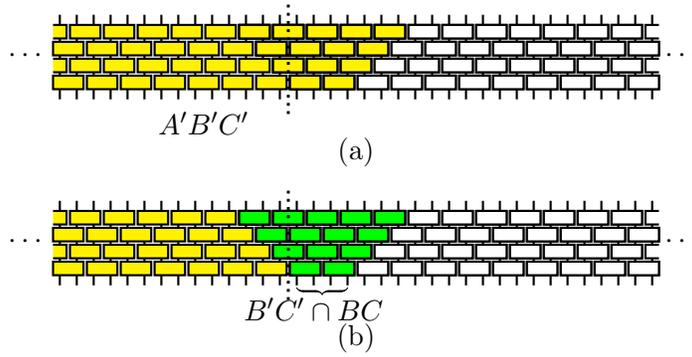
\begin{figure}[h]
\centering
\begin{tikzpicture}[every path/.style={thick},scale=0.45]
    \begin{scope}[yshift=0.0cm]
    \foreach \x in {-7,...,28}
    {
    \draw[] (\x * 0.5+0.25, 0.7) -- ++ (0,2.5);
    }
    \foreach \x in {-3,...,14}
    {
    \foreach \y in {1,...,2}
    {
    \draw[fill=yellow] (\x-0.45, \y) -- ++ (0.9, 0) -- ++ (0, 0.4) -- ++ (-0.9, 0) -- ++ (0, -0.4) -- cycle;
    }
    }
    \foreach \x in {-3,...,13}
    {
    \foreach \y in {1,...,2}
    {
    \draw[fill=yellow] (\x-0.45+0.5, \y+0.5) -- ++ (0.9, 0) -- ++ (0, 0.4) -- ++ (-0.9, 0) -- ++ (0, -0.4) -- cycle;
    }
    }
    \draw[fill=yellow] (-3-0.45, 1+0.5) -- ++ (0.4, 0) -- ++ (0, 0.4) -- ++ (-0.4, 0);
    \draw[fill=yellow] (-3-0.45, 2+0.5) -- ++ (0.4, 0) -- ++ (0, 0.4) -- ++ (-0.4, 0);
    \draw[fill=white] (14+0.45, 1+0.5) -- ++ (-0.4, 0) -- ++ (0, 0.4) -- ++ (0.4, 0);
    \draw[fill=white] (14+0.45, 2+0.5) -- ++ (-0.4, 0) -- ++ (0, 0.4) -- ++ (0.4, 0);
    \foreach \x in {4,...,14}
    {
    \draw[fill=white] (\x-0.45, 1) -- ++ (0.9, 0) -- ++ (0, 0.4) -- ++ (-0.9, 0) -- ++ (0, -0.4) -- cycle;
    }
    \foreach \x in {4,...,14}
    {
    \draw[fill=white] (\x-0.45-0.5, 1+0.5) -- ++ (0.9, 0) -- ++ (0, 0.4) -- ++ (-0.9, 0) -- ++ (0, -0.4) -- cycle;
    }
    \foreach \x in {3,...,14}
    {
    \draw[fill=white] (\x-0.45, 2) -- ++ (0.9, 0) -- ++ (0, 0.4) -- ++ (-0.9, 0) -- ++ (0, -0.4) -- cycle;
    }
    \foreach \x in {3,...,14}
    {
    \draw[fill=white] (\x-0.45-0.5, 2+0.5) -- ++ (0.9, 0) -- ++ (0, 0.4) -- ++ (-0.9, 0) -- ++ (0, -0.4) -- cycle;
    }
    \foreach \x in {4,5}
    {
    \draw[fill=yellow] (\x-0.45, 1) -- ++ (0.9, 0) -- ++ (0, 0.4) -- ++ (-0.9, 0) -- ++ (0, -0.4) -- cycle;
    }
    \foreach \x in {4,5,6}
    {
    \draw[fill=yellow] (\x-0.45-0.5, 1+0.5) -- ++ (0.9, 0) -- ++ (0, 0.4) -- ++ (-0.9, 0) -- ++ (0, -0.4) -- cycle;
    }
    \foreach \x in {3,4,5,6}
    {
    \draw[fill=yellow] (\x-0.45, 2) -- ++ (0.9, 0) -- ++ (0, 0.4) -- ++ (-0.9, 0) -- ++ (0, -0.4) -- cycle;
    }
    \foreach \x in {3,4,5,6,7}
    {
    \draw[fill=yellow] (\x-0.45-0.5, 2+0.5) -- ++ (0.9, 0) -- ++ (0, 0.4) -- ++ (-0.9, 0) -- ++ (0, -0.4) -- cycle;
    }
    \draw[line width=1.1pt, style=dotted] (3.5, 3.5) -- (3.5, 0.25);
    \node[] at (-4.2, 2.0) {$\cdots$};
    \node[] at (15.2, 2.0) {$\cdots$};
    \node[] at (1.0, 0) {$A'B'C'$};
    \node[] at (5.5, -0.85) {(a)};
    \end{scope}
    \begin{scope}[yshift=-5.5cm]
    \foreach \x in {-7,...,28}
    {
    \draw[] (\x * 0.5+0.25, 0.7) -- ++ (0,2.5);
    }
    \foreach \x in {-3,...,14}
    {
    \foreach \y in {1,...,2}
    {
    \draw[fill=yellow] (\x-0.45, \y) -- ++ (0.9, 0) -- ++ (0, 0.4) -- ++ (-0.9, 0) -- ++ (0, -0.4) -- cycle;
    }
    }
    \foreach \x in {-3,...,13}
    {
    \foreach \y in {1,...,2}
    {
    \draw[fill=yellow] (\x-0.45+0.5, \y+0.5) -- ++ (0.9, 0) -- ++ (0, 0.4) -- ++ (-0.9, 0) -- ++ (0, -0.4) -- cycle;
    }
    }
    \draw[fill=yellow] (-3-0.45, 1+0.5) -- ++ (0.4, 0) -- ++ (0, 0.4) -- ++ (-0.4, 0);
    \draw[fill=yellow] (-3-0.45, 2+0.5) -- ++ (0.4, 0) -- ++ (0, 0.4) -- ++ (-0.4, 0);
    \draw[fill=white] (14+0.45, 1+0.5) -- ++ (-0.4, 0) -- ++ (0, 0.4) -- ++ (0.4, 0);
    \draw[fill=white] (14+0.45, 2+0.5) -- ++ (-0.4, 0) -- ++ (0, 0.4) -- ++ (0.4, 0);
    \foreach \x in {4,...,14}
    {
    \draw[fill=white] (\x-0.45, 1) -- ++ (0.9, 0) -- ++ (0, 0.4) -- ++ (-0.9, 0) -- ++ (0, -0.4) -- cycle;
    }
    \foreach \x in {4,...,14}
    {
    \draw[fill=white] (\x-0.45-0.5, 1+0.5) -- ++ (0.9, 0) -- ++ (0, 0.4) -- ++ (-0.9, 0) -- ++ (0, -0.4) -- cycle;
    }
    \foreach \x in {3,...,14}
    {
    \draw[fill=white] (\x-0.45, 2) -- ++ (0.9, 0) -- ++ (0, 0.4) -- ++ (-0.9, 0) -- ++ (0, -0.4) -- cycle;
    }
    \foreach \x in {3,...,14}
    {
    \draw[fill=white] (\x-0.45-0.5, 2+0.5) -- ++ (0.9, 0) -- ++ (0, 0.4) -- ++ (-0.9, 0) -- ++ (0, -0.4) -- cycle;
    }
    \foreach \x in {4,5}
    {
    \draw[fill=green] (\x-0.45, 1) -- ++ (0.9, 0) -- ++ (0, 0.4) -- ++ (-0.9, 0) -- ++ (0, -0.4) -- cycle;
    }
    \foreach \x in {4,5,6}
    {
    \draw[fill=green] (\x-0.45-0.5, 1+0.5) -- ++ (0.9, 0) -- ++ (0, 0.4) -- ++ (-0.9, 0) -- ++ (0, -0.4) -- cycle;
    }
    \foreach \x in {3,4,5,6}
    {
    \draw[fill=green] (\x-0.45, 2) -- ++ (0.9, 0) -- ++ (0, 0.4) -- ++ (-0.9, 0) -- ++ (0, -0.4) -- cycle;
    }
    \foreach \x in {3,4,5,6,7}
    {
    \draw[fill=green] (\x-0.45-0.5, 2+0.5) -- ++ (0.9, 0) -- ++ (0, 0.4) -- ++ (-0.9, 0) -- ++ (0, -0.4) -- cycle;
    }
    \draw[line width=1.1pt, style=dotted] (3.5, 3.5) -- (3.5, 0.25);
    \node[] at (-4.2, 2.0) {$\cdots$};
    \node[] at (15.2, 2.0) {$\cdots$};
    \draw [decorate,
	decoration = {calligraphic brace,mirror}] (3.75, 0.625) --  (5.25, 0.625);
    \node[] at (4.25, 0) {$B'C' \cap BC$};
    \node[] at (5.5, -0.85) {(b)};
    \end{scope}
\end{tikzpicture}
    \caption{Schematic illustration of the re-application of gates in $U$ by $U_{B'C' \cap BC}$. (a) The past light-cone $U_{A'B'C'}$ for our cross-section, shaded in yellow. (b) The past light-cone $U_{B'C' \cap BC}$, whose gates are shaded in green. This past light-cone is also the difference between Fig.~\ref{fig:circuit_stability_1_2_plc} and Fig.~\ref{fig:circuit_stability_1_3_plc}(a). By applying $U_{B'C' \cap BC}$, we are re-applying $U_{A'B'C'}$ for our cross-section, as desired. There may be some additional unitary supported in the complement of $A'B'C'$; this is accounted for by $\Tr_{D'E'}$ in our proof.}
    \label{fig:circuit_stability_1_3_plc}
\end{figure}

\sloppy
Finally, we apply $\mathcal{I}_{A'} \otimes \Gamma_3$ to obtain the post-circuit reduced density matrix $\rho_{A'B'C'}' = \Tr_{D'E'}\left( U_{A'B'C'} \rho U_{A'B'C'}^{\dagger} \right)$ [Eq.~\eqref{eq:past_light_cone_prop2}]. (See Fig.~\ref{fig:circuit_stability_1_3} for its graphical illustration.) This channel is also a composition of two operations -- a past light-cone and a partial trace, which we specify below.

We first apply $U_{B'C' \cap BC}$, the past light-cone of $B'C' \cap BC$. To see why $U_{B'C' \cap BC}$ is applied, we at last make use of our schematic circuit diagrams (Figs.~\ref{fig:circuit_stability_1_1_plc} and~\ref{fig:circuit_stability_1_2_plc}). Consider $U_{A'B'C'}$ for our cross-section [Fig.~\ref{fig:circuit_stability_1_3_plc}(a)], and compare it to the current state in Fig.~\ref{fig:circuit_stability_1_2_plc}. The difference between Fig.~\ref{fig:circuit_stability_1_2_plc} and Fig.~\ref{fig:circuit_stability_1_3_plc}(a) is a sequence of gates that forms the past light-cone of some subsystem of $B'C'$ adjacent to the boundary between $A$ and $B$. Guided by Fig.\ref{fig:circuit_stability_1_3}(a), it is not difficult to identify $B'C' \cap BC$ as the subsystem that is both: (i) indeed adjacent to the boundary between $A$ and $B$, and (ii) whose light-cone provides the previously cancelled gates of $U_{A'B'C'}$ in the complement of $A$ [Fig.~\ref{fig:circuit_stability_1_3}(b) and Fig.~\ref{fig:circuit_stability_1_3_plc}(b)]. Therefore, we apply $U_{B'C' \cap BC}$ as our first operation of $\Gamma_3$, restoring $U_{A'B'C'}$ up to some unitary $\widetilde{V}_{D'E'}$ supported on the complement of $A'B'C'$. (Note that our observations are not limited to the example cross-section in Fig.~\ref{fig:circuit_stability_1_1_plc}(a); they apply without loss of generality to any cross-section that cuts through the boundary between $A$ and $B$.)

To finally obtain the desired reduced density matrix on $A'B'C'$, we subsequently apply the partial trace $\Tr_{D'E' \setminus DE}$ such that $\Tr_{D'E' \setminus DE} \circ \Tr_{DE} = \Tr_{D'E'}$. After we apply $U_{B'C' \cap BC}$ to Eq.~\eqref{eq:end_of_step2_part2}, we have $\widetilde{V}_{D'E'}U_{A'B'C'}$ applied to $\rho$. Under $\Tr_{D'E'}$, however, the reduced density matrix on $A'B'C'$ is indistinguishable from $\Tr_{D'E'}\left( U_{A'B'C'} \rho U_{A'B'C'}^{\dagger} \right)$, our desired result. At last, with $\Tr_{D'E' \setminus DE}$ as our second operation of $\Gamma_3$, we have obtained the post-circuit state $\rho_{A'B'C'}'$:
\begin{equation}\label{eq:end_of_step3}
\begin{aligned}
        \mathcal{I}_{A'} \otimes (\Gamma_3 \circ \Gamma_2 \circ \Gamma_1)(\rho_{A'B'E'}') &= \Tr_{D'E' \setminus DE} \Bigl( \Tr_{DE} \left( V_{AD\setminus\widetilde{D}}  U_{B'C' \cap BC} \rho  U_{B'C' \cap BC}^{\dagger} V_{AD\setminus\widetilde{D}}^{\dagger}  \right) \Bigr) \\
        &= \Tr_{D'E'} \left( \widetilde{V}_{D'E'}  U_{A'B'C'} \rho  U_{A'B'C'}^{\dagger} \widetilde{V}_{D'E'}^{\dagger}  \right) \\
        &= \Tr_{D'E'} \left( U_{A'B'C'} \rho  U_{A'B'C'}^{\dagger} \right) \\
        &= \rho_{A'B'C'}'.
\end{aligned}
\end{equation}
\end{proof}

We now extend our line of reasoning to prove Theorem~\ref{theorem:circuit_stability_3}, the analogue of Theorem~\ref{theorem:circuit_stability_1} for the partition in Fig.~\ref{fig:extendible_state_examples}(c)-(d). Suppose that $\rho$ a state on the partition in Fig.~\ref{fig:circuit_stability_partitions_3}(a)-(b), which is equivalent to the partition in Fig.~\ref{fig:extendible_state_examples}(c)-(d). If $\rho_{BC\widetilde{D}E}$ is extendible from $BE$ to $BC$, then $\rho' := U \rho U^{\dagger}$ is extendible from $B'E'$ to $B'C'$, where $A'B'C'D'E'$ is defined in Fig.~\ref{fig:circuit_stability_partitions_3}(c).

As we can see, Theorems~\ref{theorem:circuit_stability_1} and \ref{theorem:circuit_stability_3} are virtually identical -- their only distinction is which partition of Fig.~\ref{fig:extendible_state_examples} they pertain to. Thus, we proceed with an essentially identical proof for Theorem~\ref{theorem:circuit_stability_3}, constructing a post-circuit extending channel out of past light-cones, partial traces, and the pre-circuit extending channel. For brevity's sake, we illustrate the proof of Theorem~\ref{theorem:circuit_stability_3} through Fig.~\ref{fig:circuit_stability_3} in Appendix~\ref{appendix:circuit_stability_3}, with reasoning analogous to the proof of the previous Theorem~\ref{theorem:circuit_stability_1}.

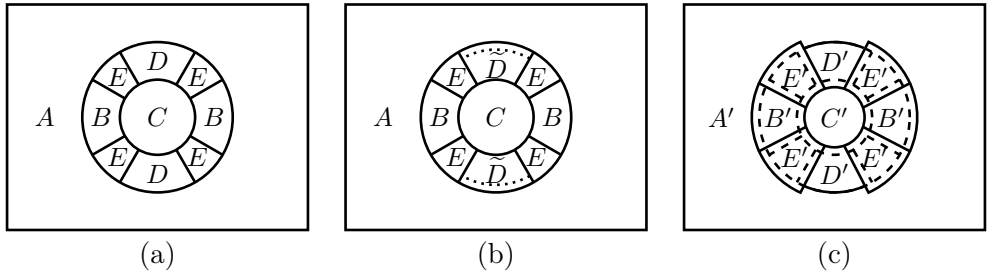
\begin{figure}[h]
\centering
\begin{tikzpicture}[line width=1pt]
    \begin{scope}[xshift=0.0cm]
    \draw[] (-2, -1.5) -- ++ (4, 0) -- ++ (0, 3) -- ++ (-4, 0) -- cycle;
    \draw[] (0,0) circle (0.5cm);
    \draw[] (0,0) circle (1cm);
    \draw[] (30:0.5cm) -- (30:1cm);
    \draw[] (150:0.5cm) -- (150:1cm);
    \draw[] (-30:0.5cm) -- (-30:1cm);
    \draw[] (-150:0.5cm) -- (-150:1cm);
    \draw[] (60:0.5cm) -- (60:1cm);
    \draw[] (120:0.5cm) -- (120:1cm);
    \draw[] (-60:0.5cm) -- (-60:1cm);
    \draw[] (-120:0.5cm) -- (-120:1cm);
    
    \node[] () at (90:0.75cm) {\small $D$};
    \node[] () at (0:0.75cm) {\small $B$};
    \node[] () at (0,0) {\small $C$};
    \node[] () at (180:0.75cm) {\small $B$};
    \node[] () at (270:0.75cm) {\small $D$};
    \node[] () at (45:0.75cm) {\small $E$};
    \node[] () at (-45:0.75cm) {\small $E$};
    \node[] () at (135:0.75cm) {\small $E$};
    \node[] () at (-135:0.75cm) {\small $E$};
    \node[] () at (-1.5, 0) {\small $A$};
    \node[below] () at (0, -1.5) {(a)};
    \end{scope}
    \begin{scope}[xshift=4.5cm]
    \draw[] (-2, -1.5) -- ++ (4, 0) -- ++ (0, 3) -- ++ (-4, 0) -- cycle;
    \draw[] (0,0) circle (0.5cm);
    \draw[] (0,0) circle (1cm);
    \draw[] (30:0.5cm) -- (30:1cm);
    \draw[] (150:0.5cm) -- (150:1cm);
    \draw[] (-30:0.5cm) -- (-30:1cm);
    \draw[] (-150:0.5cm) -- (-150:1cm);
    \draw[] (60:0.5cm) -- (60:1cm);
    \draw[] (120:0.5cm) -- (120:1cm);
    \draw[] (-60:0.5cm) -- (-60:1cm);
    \draw[] (-120:0.5cm) -- (-120:1cm);
    \draw[style=dotted] (60:0.9cm) arc(60:120:0.9cm);
    \draw[style=dotted] (-60:0.9cm) arc(-60:-120:0.9cm);
    
    \node[] () at (90:0.7cm) {\small $\widetilde{D}$};
    \node[] () at (0:0.75cm) {\small $B$};
    \node[] () at (0,0) {\small $C$};
    \node[] () at (180:0.75cm) {\small $B$};
    \node[] () at (270:0.7cm) {\small $\widetilde{D}$};
    \node[] () at (45:0.75cm) {\small $E$};
    \node[] () at (-45:0.75cm) {\small $E$};
    \node[] () at (135:0.75cm) {\small $E$};
    \node[] () at (-135:0.75cm) {\small $E$};
    \node[] () at (-1.5, 0) {\small $A$};
    \node[below] () at (0, -1.5) {(b)};
    \end{scope}
    \begin{scope}[xshift=9.0cm]
    \draw[] (-2, -1.5) -- ++ (4, 0) -- ++ (0, 3) -- ++ (-4, 0) -- cycle;
    \draw[style=dashed] (0,0) circle (0.5cm);
    \draw[] (0,0) circle (0.4cm);
    \draw[style=dashed] (0,0) circle (1cm);
    \draw[] (18.69:0.4cm) -- (24.29:1.1cm) arc(24.29:65.71:1.1cm) -- (71.31:0.4cm);
    \draw[] (-18.69:0.4cm) -- (-24.29:1.1cm) arc(-24.29:-65.71:1.1cm) -- (-71.31:0.4cm);
    \draw[] (180-18.69:0.4cm) -- (180-24.29:1.1cm) arc(180-24.29:180-65.71:1.1cm) -- (180-71.31:0.4cm);
    \draw[] (180+18.69:0.4cm) -- (180+24.29:1.1cm) arc(180+24.29:180+65.71:1.1cm) -- (180+71.31:0.4cm);
    \draw[] (24.29:1.1cm) arc(24.29:-24.29:1.1cm);
    \draw[] (180-24.29:1.1cm) arc(180-24.29:180+24.29:1.1cm);
    \draw[] (67:1.0cm) arc(67:180-67:1.0cm);
    \draw[] (-67:1.0cm) arc(-67:-180+67:1.0cm);
    \draw[style=dashed] (30:0.5cm) -- (30:1cm);
    \draw[style=dashed] (150:0.5cm) -- (150:1cm);
    \draw[style=dashed] (-30:0.5cm) -- (-30:1cm);
    \draw[style=dashed] (-150:0.5cm) -- (-150:1cm);
    \draw[style=dashed] (60:0.5cm) -- (60:1cm);
    \draw[style=dashed] (120:0.5cm) -- (120:1cm);
    \draw[style=dashed] (-60:0.5cm) -- (-60:1cm);
    \draw[style=dashed] (-120:0.5cm) -- (-120:1cm);
    
    \node[] () at (90:0.75cm) {\small $D'$};
    \node[] () at (0:0.75cm) {\small $B'$};
    \node[] () at (0,0) {\small $C'$};
    \node[] () at (180:0.75cm) {\small $B'$};
    \node[] () at (270:0.75cm) {\small $D'$};
    \node[] () at (45:0.75cm) {\small $E'$};
    \node[] () at (-45:0.75cm) {\small $E'$};
    \node[] () at (135:0.75cm) {\small $E'$};
    \node[] () at (-135:0.75cm) {\small $E'$};
    \node[] () at (-1.5, 0) {\small $A'$};
    \node[below] () at (0, -1.5) {(c)};
    \end{scope}
\end{tikzpicture}
\caption{(a) The pre-circuit partition $ABCDE$, equivalent to the one in Fig.~\ref{fig:extendible_state_examples}(d). $A$ is a purifying system of $BCDE$. (b) The same partition $ABCDE$, with subsystem $\widetilde{D} \subset D$ shown. $\widetilde{D}$ consists of all sites in $D$ that are away from the boundary between $A$ and $D$ by a distance of at least $d$. (c) The topologically equivalent post-circuit partition $A'B'C'D'E'$. The individual subsystems are: $B' := BE(d) \setminus E(d)$, $C' := C(-d)$, $D' := BCDE \setminus BE(d)C(-d)$, $E' := E(d)$, and $A'$ is a purifying system of $B'C'$. (Recall notation in Section~\ref{subsec:notation}.) For comparison, the partition $ABCDE$ is overlaid with dashed lines.}
\label{fig:circuit_stability_partitions_3}
\end{figure}

\begin{theorem}\label{theorem:circuit_stability_3}
    Let us partition a system $\Lambda$ as in Fig.~\ref{fig:circuit_stability_partitions_3}(a)-(b) for $\Lambda = ABCDE$, and as in Fig.~\ref{fig:circuit_stability_partitions_3}(c) for $\Lambda = A'B'C'D'E'$. Let $\rho$ be a state on $\Lambda$, $U$ be a geometrically local depth-$d$ circuit supported on $\Lambda$, and $\rho' := U \rho U^{\dagger}$ be the post-circuit state. If $\rho_{BC\widetilde{D}E}$ is locally extendible from $BE$ to $BC$, then $\rho_{B'C'D'E'}'$ is locally extendible from $B'E'$ to $B'C'$.
\end{theorem}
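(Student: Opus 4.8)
The plan is to mirror the proof of Theorem~\ref{theorem:circuit_stability_1} essentially verbatim, since the two partitions of Fig.~\ref{fig:extendible_state_examples} differ only in their global layout — in (c)--(d), $C$ sits at the center, $B$ and $D$ each split into two components and $E$ into four — while the \emph{local} structure near every piece of the $A$-boundary is unchanged. Taking $\rho' = U\rho U^{\dagger}$, which is pure on $\Lambda = A'B'C'D'E'$ so that $A'$ purifies $\rho'_{B'C'D'E'}$, I would again construct the desired extending channel $\Gamma:\mathcal{D}_{B'E'}\to\mathcal{D}_{B'C'}$ satisfying $\rho'_{A'B'C'} = \mathcal{I}_{A'}\otimes\Gamma(\rho'_{A'B'E'})$ as a composition $\Gamma = \Gamma_3\circ\Gamma_2\circ\Gamma_1$, with the same three ingredients as before: $\Gamma_1(\cdot) = U_{BE}^{\dagger}(\cdot)U_{BE}$ undoes the past light-cone of $BE$; $\Gamma_2$ is $\Tr_{CD\setminus C'D'}$ followed by the pre-circuit extending map $\Phi:\mathcal{D}_{BE}\to\mathcal{D}_{BC}$ of $\rho_{BC\widetilde{D}E}$ (which exists by hypothesis); and $\Gamma_3$ is $U_{B'C'\cap BC}(\cdot)U_{B'C'\cap BC}^{\dagger}$ followed by $\Tr_{D'E'\setminus DE}$.

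The chain of identities is then the same as in the proof of Theorem~\ref{theorem:circuit_stability_1}. Since $U_{BE}^{\dagger}U = V_{ACD}$ for a depth-$d$ circuit $V_{ACD}$ supported on $ACD$, and $U_{BE}$ is supported in the complement of $C'D'$, applying $\mathcal{I}_{A'}\otimes\Gamma_1$ to $\rho'_{A'B'E'}$ gives $\Tr_{C'D'}(V_{ACD}\rho V_{ACD}^{\dagger})$. Composing the two partial traces into $\Tr_{CD}$ and decomposing $V_{ACD} = V_{CD}\,V_{AD\setminus\widetilde{D}}$ into depth-$d$ pieces — possible because $\widetilde{D}$ lies at distance at least $d$ from the $A$-$D$ boundary, so $V_{AD\setminus\widetilde{D}}$ is supported on $\Lambda\setminus BC\widetilde{D}E$ — reduces the state to $\Tr_{CD}(V_{AD\setminus\widetilde{D}}\rho V_{AD\setminus\widetilde{D}}^{\dagger})$, whose restriction to $BC\widetilde{D}E$ is exactly $\rho_{BC\widetilde{D}E}$. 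Applying $\Phi$ and commuting it past $V_{AD\setminus\widetilde{D}}$ exactly as in Eq.~\eqref{eq:end_of_step2_part2} produces $\Tr_{DE}(V_{AD\setminus\widetilde{D}}\rho V_{AD\setminus\widetilde{D}}^{\dagger})$. Finally $U_{B'C'\cap BC}$ re-applies the gates cancelled in the first two steps, yielding $\widetilde{V}_{D'E'}\,U_{A'B'C'}\rho\,U_{A'B'C'}^{\dagger}\widetilde{V}_{D'E'}^{\dagger}$ for some unitary $\widetilde{V}_{D'E'}$ supported off $A'B'C'$, and then $\Tr_{D'E'\setminus DE}\circ\Tr_{DE} = \Tr_{D'E'}$ gives $\rho'_{A'B'C'} = \Tr_{D'E'}(U_{A'B'C'}\rho\,U_{A'B'C'}^{\dagger})$ by Eq.~\eqref{eq:past_light_cone_prop2}, which is precisely Eq.~\eqref{eq:circuit_stability_1} for the primed partition.

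The only thing that genuinely requires re-verification — and the main, though routine, obstacle — is the geometric bookkeeping for the partition of Fig.~\ref{fig:circuit_stability_partitions_3}: that with $B' := BE(d)\setminus E(d)$, $C' := C(-d)$, $D' := BCDE\setminus BE(d)C(-d)$, $E' := E(d)$ one still has (i) the support of $U_{BE}$ disjoint from $C'D'$; (ii) $B'E'\supseteq BE$ and $C'D'\subseteq CD$ so the partial traces compose as stated; (iii) $V_{ACD}$ admitting the decomposition $V_{CD}V_{AD\setminus\widetilde{D}}$ with the claimed supports; and (iv) $U_{B'C'\cap BC}$ being exactly the set of gates of $U_{A'B'C'}$ removed by $\Gamma_1$ and $\Gamma_2$, so that $\Gamma_3$ restores $U_{A'B'C'}$ up to a unitary on $D'E'$. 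All of (i)--(iv) depend only on the local light-cone containments near the (single, circular) $A$-boundary, which cuts only through $B$ and looks identical to the corresponding boundary in Fig.~\ref{fig:extendible_state_examples}(a); the internal arrangement of $C$, $D$, $E$ away from that boundary never enters the argument. I would therefore present the proof by invoking the proof of Theorem~\ref{theorem:circuit_stability_1}, referring the reader to Fig.~\ref{fig:circuit_stability_3} in Appendix~\ref{appendix:circuit_stability_3} for the step-by-step pictures, and spelling out only the verification of (i)--(iv) for the relabelled regions.
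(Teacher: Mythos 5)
Your proposal matches the paper's proof exactly: the paper likewise establishes Theorem~\ref{theorem:circuit_stability_3} by declaring the argument essentially identical to that of Theorem~\ref{theorem:circuit_stability_1}, using the same three-step channel $\Gamma_3\circ\Gamma_2\circ\Gamma_1$ built from the past light-cone inverse $U_{BE}^{\dagger}$, the partial trace followed by the pre-circuit extending map $\Phi$, and the re-applied light-cone $U_{B'C'\cap BC}$ followed by $\Tr_{D'E'\setminus DE}$, deferring the geometric bookkeeping for the new partition to the illustrations in Appendix~\ref{appendix:circuit_stability_3}. (One small slip in your commentary: the interface between $A$ and $BCDE$ meets $B$, $D$, and $E$, not only $B$; this does not affect the validity of the argument.)
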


\begin{proof}
The proof is essentially identical to the proof of Theorem~\ref{theorem:circuit_stability_1}; the only distinction is that the partition is defined in Fig.~\ref{fig:circuit_stability_partitions_3}. We therefore simply illustrate the proof through Fig.~\ref{fig:circuit_stability_3}, which is analogous to Figs.~\ref{fig:circuit_stability_1_1}, \ref{fig:circuit_stability_1_2}, and \ref{fig:circuit_stability_1_3}, and present an otherwise identical reasoning.
\end{proof}

Here, we make a few remarks on our theorems. First, notice that all of our relevant partitions $BCDE$ and $B'C'D'E'$ are all topological balls of some constant radius. If we denote $r$ as the radius of $BCDE$, the radius of $B'C'D'E'$ is another constant $r + \mathcal{O}(d)$. Consequently by our theorems, if the pre-circuit state on any local ball of constant radius is extendible, then the post-circuit state on the (enlarged) ball of constant radius is also extendible.

Second, the post-circuit extending channel $\Gamma$ has depth dependent solely on the depth of the pre-circuit extending channel $\Phi$. In our proofs, $\Gamma$ was constructed as the composition of three types of channels: two past light-cones under $U$, partial traces, and the extending channel $\Phi$ of the pre-circuit state $\rho$. Since $U$ is of constant depth $d$, the past light cones are also of constant depth $d$. Since the partial traces are over subsystems of constant size, the partial traces are of constant depth. Then, if the depth of $\Phi$ is also constant, the resulting depth of $\Gamma$ is a (larger) constant as well.

Combining these two remarks, we reach an important conclusion -- if there is a constant-depth extending channel $\Phi_{BE \rightarrow BC}$ for any local ball of constant radius before $U$ is applied, there is also a constant-depth extending channel $\Gamma_{B'E' \rightarrow B'C'}$ for a larger concentric ball of constant radius after $U$ is applied. In other words, given some local ball in a system, local extendibility is stable under any geometrically local depth-$d$ circuit $U$, with some enlarging of the ball radius. Of course, such extending channels can be obtained directly from the reduced density matrices on the local balls [Section~\ref{subsec:checking_extendibility}]. We have proven in this Section that any such extendible reduced density matrix remains extendible under a geometrically local constant-depth circuit.

The above remarks readily extend to spatial dimensions higher than two. For concreteness, we had specified Theorems~\ref{theorem:circuit_stability_1}-\ref{theorem:circuit_stability_3} in two spatial dimensions, so our partitions were 2-dimensional balls of some constant radius. It is not difficult to generalize the arguments of Theorems~\ref{theorem:circuit_stability_1}-\ref{theorem:circuit_stability_3} to $d$-dimensional lattices -- we simply set our partitions $BCDE$ and $B'C'D'E'$ as topological $d$-dimensional balls of some constant radius. All of our existing arguments follow through for such $d$-dimensional balls, without loss of generality.

Of course, our ultimate task is to learn circuits on systems of sizes much larger than these constant radii. It still remains to understand how to compose together extending channels of localized constant-radius balls, and prepare the global state. This task is the topic of Sections~\ref{sec:extendibility_quantum_phase} and~\ref{sec:learning}.

\section{Extendibility and quantum phases of matter}\label{sec:extendibility_quantum_phase}

In Section~\ref{sec:locally_extendible_states}, we discussed properties of locally extendible states. In this Section, we discuss the relevance of such states to quantum phases of matter. Mathematically, two ground states of local Hamiltonians are in the same phase if there is an adiabatic path between the two ground states with a uniform lower bound on the energy gap. In the literature, sometimes a more restrictive (but a more convenient) definition is used, based on the constant-depth circuit~\cite{Chen2010}. In this definition, two states are said to be in the same phase if there is a geometrically local constant-depth circuit that converts one state to the other, up to an usage of extra ancillary qubits. The latter is what we will use throughout the following discussion. 

Broadly speaking, quantum phases of matter are often divided into three categories. The first and the simplest category is short-range entangled states. These are states that can be obtained by applying a constant-depth quantum circuit to a product state~\cite{Chen2010}. That is, a state $|\psi\rangle$ is short-range entangled if $|\psi\rangle = U|0\ldots 0\rangle$ for some constant-depth quantum circuit $U$. The second category, which subsumes the first one, is called invertible states. Formally, a state $|\psi\rangle$ is invertible if $|\psi\rangle \otimes |\varphi\rangle= U|0\ldots 0\rangle$ for some constant-depth circuit $U$ and a state $|\varphi\rangle$. Although invertible states may seem similar to short-range entangled states, the former is expected to be a larger class than the latter.\footnote{An example of a state that is invertible but not short-range entangled is Kitaev's $E_8$ state~\cite{kitaev2006anyons}.} 

The last category is known as long-range entangled states. These are states that cannot be created from a product state by a constant-depth circuit, and as such, require an alternative formulation. We will consider a class of long-range entangled states that satisfy a certain homogeneity condition~\cite{DraftOther}. These are states that satisfy a certain set of local extendibility conditions, which we discuss in Section~\ref{subsec:lre}. This class includes the toric code~\cite{Kitaev2003}, string-net model~\cite{Levin2005}, and more generally, the states obtained by applying a constant-depth circuit to those states. 

For all the states in these categories, their state preparation circuits can be obtained by using the extendibility of their local reduced density matrices. We describe these details in Section~\ref{subsec:invertible}, ~\ref{subsec:lre}, and~\ref{subsec:lre_circuit}.

\subsection{Short-range entangled states}
\label{subsec:invertible}

In this Section, we explain how to construct a state-preparation circuit for short-range entangled states, and more generally, invertible states. Without loss of generality, we assume that these states are obtained by applying a constant-depth circuit to a product state. For concreteness, we will focus on the case of two dimensions. Generalization to higher dimensions is straightforward, and we briefly mention the key ideas towards the end of this Section.

The task at hand is the following. Given a set of reduced density matrices over $\mathcal{O}(1)$-sized balls, how can we infer the state preparation circuit? The main observation in solving this problem is that the reduced density matrices of these states are locally extendible, in the sense we make precise below. Without loss of generality, we consider a state $|\psi\rangle$ such that
\begin{equation}
    |\psi\rangle \otimes |\varphi\rangle = U|0\ldots 0\rangle,
\end{equation}
where $U$ is a geometrically local constant-depth circuit on a finite-dimensional lattice. The state $|\varphi\rangle$ is a state on an ancillary system, which is generally not available. As extendibility is stable under $U$ [Section~\ref{subsec:stability_circuit}], the states $|\psi\rangle \otimes |\varphi\rangle$ are locally extendible with respect to the subsystems shown in Fig.~\ref{fig:sre_extendibility}(a), provided that the thicknesses of the subsystems are large compared to the circuit depth. Moreover, because the fidelity of recovery is multiplicative~\cite{Berta2016}, if $\sigma_{BCDE} \otimes \tau_{BCDE}$ is locally extendible from $BE$ to $BC$, $\sigma_{BCDE}$ is locally extendible from $BE$ to $BC$ (and so is $\tau_{BCDE}$). Therefore, the reduced density matrices of the state $|\psi\rangle$ over the subsystems in Fig.~\ref{fig:sre_extendibility}(a) are locally extendible. 

The extendibility of Fig.~\ref{fig:sre_extendibility}(b) can be also shown straightforwardly. We sketch a short argument below, and defer additional details to Appendix~\ref{appendix:stability_simple}. Prior to applying the circuit, it trivially follows that $I(A:C)_{\rho}=0$. That is, the correlation between $A$ and $C$ is zero. Because a depth-$d$ circuit cannot create any correlation between any two regions that are separated by a distance of at least $2d+1$, if the thickness of $B$ is larger than $2d+1$, the correlation between $A$ and $C$ remains zero. Then the Petz map from $B$ to $BC$ realizes the extension. Theorem~\ref{theorem:circuit_stability_2} in Appendix~\ref{appendix:stability_simple} uses a different argument but arrives at the same conclusion.

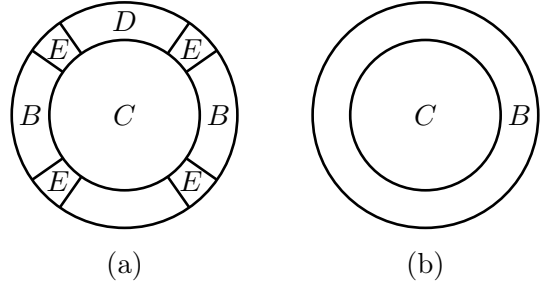
\begin{figure}[h]
    \centering
    \begin{tikzpicture}[line width=1pt]
        \draw[] (0,0) circle (1cm);
        \draw[] (0,0) circle (1.5cm);
        \draw[] (35:1cm) -- (35:1.5cm);
        \draw[] (55:1cm) -- (55:1.5cm);
        
        \draw[] (125:1cm) -- (125:1.5cm);
        \draw[] (145:1cm) -- (145:1.5cm);
        
        \draw[] (215:1cm) -- (215:1.5cm);
        \draw[] (235:1cm) -- (235:1.5cm);
        
        \draw[] (305:1cm) -- (305:1.5cm);
        \draw[] (325:1cm) -- (325:1.5cm);

        \node[] () at (0,0) {$C$};
        \node[] () at (1.25,0) {$B$};
        \node[] () at (-1.25,0) {$B$};
        \node[] () at (0,1.25) {$D$};
        \node[] () at (45:1.25cm) {$E$};
        \node[] () at (135:1.25cm) {$E$};
        \node[] () at (225:1.25cm) {$E$};
        \node[] () at (315:1.25cm) {$E$};

        \node[] () at (0, -2cm) {(a)};

        \begin{scope}[xshift=4cm]
        \draw[] (0,0) circle (1cm);
        \draw[] (0,0) circle (1.5cm);

        \node[] () at (0,0) {$C$};
        \node[] () at (1.25,0) {$B$};

        \node[] () at (0, -2cm) {(b)};
            
        \end{scope}
    \end{tikzpicture}
    \caption{For short-range entangled states, any density matrix over the shown regions (a) and (b) are locally extendible from $BE$ to $BC$. In (b) $E$ is an empty set.}
    \label{fig:sre_extendibility}
\end{figure}

Now we aim to gradually build up the global state by applying a sequence of quantum channels that can be inferred from local reduced density matrices. At a high level, this procedure is described in Fig.~\ref{fig:sre_build_up}. First, we apply a channel on disjoint regions so that on each region we obtain a reduced density matrix of $|\psi\rangle$ over that region [Fig.~\ref{fig:sre_build_up}(a)]. We refer to this channel as $\Phi_0$. By construction, this is a tensor product over the channels acting on each region. Next, we apply the extending map associated with Fig.~\ref{fig:sre_extendibility}(a) so that we connect the blue regions in Fig.~\ref{fig:sre_build_up}(a). This is again a tensor product of channels acting on disjoint bounded regions. We refer to this channel as $\Phi_1$. At this point, we obtain a reduced density matrix of $|\psi\rangle$ over the union of the red and the blue region in Fig.~\ref{fig:sre_build_up}(b). The yellow region corresponds to the removed region $E$ in Fig.~\ref{fig:sre_extendibility}(a) The complement of the blue and the red region is a union of disjoint (topological) balls. Finally, using the extending map associated with Fig.~\ref{fig:sre_extendibility}(b), we obtain the global state $|\psi\rangle$. This is again a tensor product of channels on bounded regions. This channel is referred to as $\Phi_3$. (See Fig.~\ref{fig:sre_stability} for a detailed view of each local connection, and how they correspond to each partition in Fig.~\ref{fig:sre_extendibility}). 

The composition of these extending map define a constant-depth channel that converts a given product state to the state $|\psi\rangle$. By the Stinespring dilation theorem, we can convert each local channel into a unitary and an ancillary system. Thus we obtain a constant-depth circuit that converts a given state to $|\psi\rangle$ and an ancillary state. 

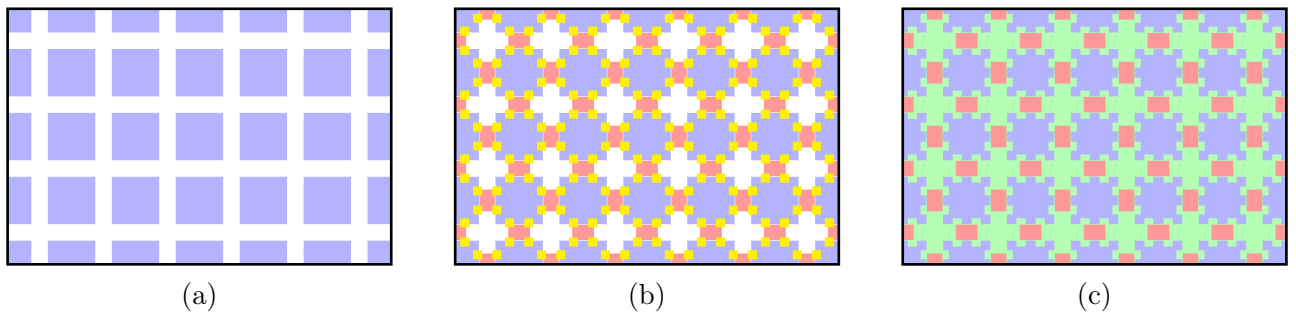
\begin{figure}[h]
    \centering
    \begin{tikzpicture}[line width=1pt, scale=0.85]
    \node[] () at (0, -2.5) {(a)};
    \begin{scope}
    \clip (-3, -2) -- ++ (6, 0) -- ++ (0, 4) -- ++ (-6, 0) -- cycle;
    \foreach \x in {-4, ..., 4}
    {
    \foreach \y in {-2, ..., 2}
    {
        \filldraw[blue!30!white] (\x-0.35, \y-0.35) -- ++ (0.7, 0) -- ++ (0, 0.7) -- ++ (-0.7, 0) -- cycle;
    }
    }
    \end{scope}
    \draw[] (-3, -2) -- ++ (6, 0) -- ++ (0, 4) -- ++ (-6, 0) -- cycle;

    \begin{scope}[xshift=7cm]
    \node[] () at (0, -2.5) {(b)};
    \begin{scope}
    \clip (-3, -2) -- ++ (6, 0) -- ++ (0, 4) -- ++ (-6, 0) -- cycle;
     \foreach \x in {-4, ..., 4}
    {
    \foreach \y in {-2, ..., 2}
    {
        \filldraw[blue!30!white] (\x-0.35, \y-0.35) -- ++ (0.7, 0) -- ++ (0, 0.7) -- ++ (-0.7, 0) -- cycle;
    }
    }
    \foreach \x in {-4, ..., 4}
    {
    \foreach \y in {-2, ..., 2}
    {
        \filldraw[red!40!white] (\x-0.15, \y+0.4) -- ++ (0.3, 0) -- ++ (0, 0.2) -- ++ (-0.3, 0) -- cycle;
    }
    }

    \foreach \x in {-4, ..., 4}
    {
    \foreach \y in {-2, ..., 2}
    {
        \filldraw[yellow] (\x-0.2, \y+0.4) -- ++ (0.1,0) -- ++ (0, -0.1) -- ++(-0.1, 0) -- cycle;
        \filldraw[yellow] (\x+0.1, \y+0.4) -- ++ (0.1,0) -- ++ (0, -0.1) -- ++(-0.1, 0) -- cycle;
        \filldraw[yellow] (\x-0.2, \y+0.7) -- ++ (0.1,0) -- ++ (0, -0.1) -- ++(-0.1, 0) -- cycle;
        \filldraw[yellow] (\x+0.1, \y+0.7) -- ++ (0.1,0) -- ++ (0, -0.1) -- ++(-0.1, 0) -- cycle;
    }
    }
    
    \foreach \x in {-4, ..., 4}
    {
    \foreach \y in {-2, ..., 2}
    {
        \filldraw[red!40!white] (\x-0.6, \y-0.15) -- ++ (0.2, 0) -- ++ (0, 0.3) -- ++ (-0.2, 0) -- cycle;
    }
    }

    \foreach \x in {-4, ..., 4}
    {
    \foreach \y in {-2, ..., 2}
    {
        \filldraw[yellow] (\x-0.6, \y-0.2) -- ++ (-0.1, 0) -- ++ (0, 0.1) -- ++ (0.1, 0) -- cycle;
        \filldraw[yellow] (\x-0.6, \y+0.1) -- ++ (-0.1, 0) -- ++ (0, 0.1) -- ++ (0.1, 0) -- cycle;
        \filldraw[yellow] (\x-0.3, \y-0.2) -- ++ (-0.1, 0) -- ++ (0, 0.1) -- ++ (0.1, 0) -- cycle;
        \filldraw[yellow] (\x-0.3, \y+0.1) -- ++ (-0.1, 0) -- ++ (0, 0.1) -- ++ (0.1, 0) -- cycle;
    }
    }
    \end{scope}
    \draw[] (-3, -2) -- ++ (6, 0) -- ++ (0, 4) -- ++ (-6, 0) -- cycle;
    \end{scope}
    
    \begin{scope}[xshift=14cm]
    
    \node[] () at (0, -2.5) {(c)};
    \filldraw[green!30!white] (-3, -2) -- ++ (6, 0) -- ++ (0, 4) -- ++ (-6, 0) -- cycle;
    \begin{scope}
    \clip (-3, -2) -- ++ (6, 0) -- ++ (0, 4) -- ++ (-6, 0) -- cycle;
     \foreach \x in {-4, ..., 4}
    {
    \foreach \y in {-2, ..., 2}
    {
        \filldraw[blue!30!white] (\x-0.35, \y-0.35) -- ++ (0.7, 0) -- ++ (0, 0.7) -- ++ (-0.7, 0) -- cycle;
    }
    }
    \foreach \x in {-4, ..., 4}
    {
    \foreach \y in {-2, ..., 2}
    {
        \filldraw[green!30!white] (\x-0.2, \y+0.4) -- ++ (0.1,0) -- ++ (0, -0.1) -- ++(-0.1, 0) -- cycle;
        \filldraw[green!30!white] (\x+0.1, \y+0.4) -- ++ (0.1,0) -- ++ (0, -0.1) -- ++(-0.1, 0) -- cycle;
        \filldraw[green!30!white] (\x-0.2, \y+0.7) -- ++ (0.1,0) -- ++ (0, -0.1) -- ++(-0.1, 0) -- cycle;
        \filldraw[green!30!white] (\x+0.1, \y+0.7) -- ++ (0.1,0) -- ++ (0, -0.1) -- ++(-0.1, 0) -- cycle;
    }
    }
    \foreach \x in {-4, ..., 4}
    {
    \foreach \y in {-2, ..., 2}
    {
        \filldraw[red!40!white] (\x-0.15, \y+0.4) -- ++ (0.3, 0) -- ++ (0, 0.2) -- ++ (-0.3, 0) -- cycle;
    }
    }
    \foreach \x in {-4, ..., 4}
    {
    \foreach \y in {-2, ..., 2}
    {
        \filldraw[green!30!white] (\x-0.6, \y-0.2) -- ++ (-0.1, 0) -- ++ (0, 0.1) -- ++ (0.1, 0) -- cycle;
        \filldraw[green!30!white] (\x-0.6, \y+0.1) -- ++ (-0.1, 0) -- ++ (0, 0.1) -- ++ (0.1, 0) -- cycle;
        \filldraw[green!30!white] (\x-0.3, \y-0.2) -- ++ (-0.1, 0) -- ++ (0, 0.1) -- ++ (0.1, 0) -- cycle;
        \filldraw[green!30!white] (\x-0.3, \y+0.1) -- ++ (-0.1, 0) -- ++ (0, 0.1) -- ++ (0.1, 0) -- cycle;
    }
    }
    \foreach \x in {-4, ..., 4}
    {
    \foreach \y in {-2, ..., 2}
    {
        \filldraw[red!40!white] (\x-0.6, \y-0.15) -- ++ (0.2, 0) -- ++ (0, 0.3) -- ++ (-0.2, 0) -- cycle;
    }
    }
    \end{scope}
    \draw[] (-3, -2) -- ++ (6, 0) -- ++ (0, 4) -- ++ (-6, 0) -- cycle;
    \end{scope}
    \end{tikzpicture}
    \caption{A schematic description on how to prepare a short-range entangled state $|\psi\rangle$ in 2D. }
    \label{fig:sre_build_up}
\end{figure}

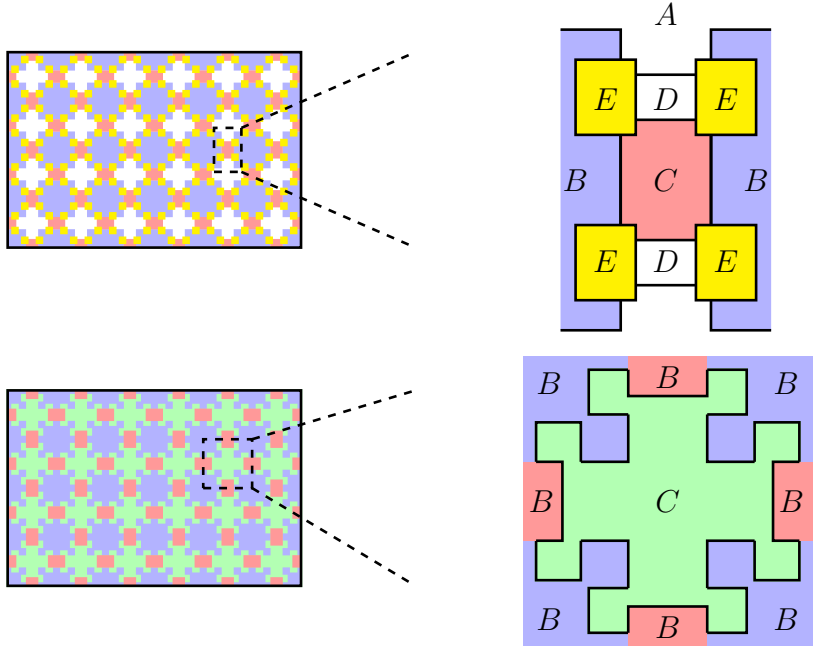
\begin{figure}[h]
\centering
\begin{tikzpicture}[line width=1pt]
    \begin{scope}[xshift=0.0cm, scale=0.65]
        \begin{scope}
        \clip (-3, -2) -- ++ (6, 0) -- ++ (0, 4) -- ++ (-6, 0) -- cycle;
         \foreach \x in {-4, ..., 4}
        {
        \foreach \y in {-2, ..., 2}
        {
            \filldraw[blue!30!white] (\x-0.35, \y-0.35) -- ++ (0.7, 0) -- ++ (0, 0.7) -- ++ (-0.7, 0) -- cycle;
        }
        }
        \foreach \x in {-4, ..., 4}
        {
        \foreach \y in {-2, ..., 2}
        {
            \filldraw[red!40!white] (\x-0.15, \y+0.4) -- ++ (0.3, 0) -- ++ (0, 0.2) -- ++ (-0.3, 0) -- cycle;
        }
        }
    
        \foreach \x in {-4, ..., 4}
        {
        \foreach \y in {-2, ..., 2}
        {
            \filldraw[yellow] (\x-0.2, \y+0.4) -- ++ (0.1,0) -- ++ (0, -0.1) -- ++(-0.1, 0) -- cycle;
            \filldraw[yellow] (\x+0.1, \y+0.4) -- ++ (0.1,0) -- ++ (0, -0.1) -- ++(-0.1, 0) -- cycle;
            \filldraw[yellow] (\x-0.2, \y+0.7) -- ++ (0.1,0) -- ++ (0, -0.1) -- ++(-0.1, 0) -- cycle;
            \filldraw[yellow] (\x+0.1, \y+0.7) -- ++ (0.1,0) -- ++ (0, -0.1) -- ++(-0.1, 0) -- cycle;
        }
        }
        
        \foreach \x in {-4, ..., 4}
        {
        \foreach \y in {-2, ..., 2}
        {
            \filldraw[red!40!white] (\x-0.6, \y-0.15) -- ++ (0.2, 0) -- ++ (0, 0.3) -- ++ (-0.2, 0) -- cycle;
        }
        }
    
        \foreach \x in {-4, ..., 4}
        {
        \foreach \y in {-2, ..., 2}
        {
            \filldraw[yellow] (\x-0.6, \y-0.2) -- ++ (-0.1, 0) -- ++ (0, 0.1) -- ++ (0.1, 0) -- cycle;
            \filldraw[yellow] (\x-0.6, \y+0.1) -- ++ (-0.1, 0) -- ++ (0, 0.1) -- ++ (0.1, 0) -- cycle;
            \filldraw[yellow] (\x-0.3, \y-0.2) -- ++ (-0.1, 0) -- ++ (0, 0.1) -- ++ (0.1, 0) -- cycle;
            \filldraw[yellow] (\x-0.3, \y+0.1) -- ++ (-0.1, 0) -- ++ (0, 0.1) -- ++ (0.1, 0) -- cycle;
        }
        }
        \end{scope}
    \draw[] (-3, -2) -- ++ (6, 0) -- ++ (0, 4) -- ++ (-6, 0) -- cycle;
    \draw[style=dashed] (1.5-0.28, 0-0.45) -- ++ (0.56, 0) -- ++ (0, 0.9) -- ++ (-0.56, 0) -- cycle;
    \draw[style=dashed] (1.5-0.28+0.56, 0-0.45+0.9) -- (5.34, 2.0);
    \draw[style=dashed] (1.5-0.28+0.56, 0-0.45) -- (5.34, -2.0);
    \end{scope}
    \begin{scope}[xshift=7.0cm, scale=0.8]
    \draw[fill=blue!30!white] (-2.0, -3.0) -- ++ (1.0, 0) -- ++ (0, 5.0) -- ++ (-1.0, 0); 
    \draw[fill=blue!30!white] (1.5, -3.0) -- ++ (-1.0, 0) -- ++ (0, 5.0) -- ++ (1.0, 0); 
    \draw[fill=red!40!white] (-1.0, -1.5) -- ++ (1.5, 0) -- ++ (0, 2.0) -- ++ (-1.5, 0) -- cycle;
    \draw[] (-1.0, -2.25) -- ++ (1.5, 0) -- ++ (0, 3.5) -- ++ (-1.5, 0) -- cycle;
    \draw[fill=yellow] (-1.75, -2.5) -- ++ (1.0, 0) -- ++ (0, 1.25) -- ++ (-1.0, 0) -- cycle;
    \draw[fill=yellow] (1.25, -2.5) -- ++ (-1.0, 0) -- ++ (0, 1.25) -- ++ (1.0, 0) -- cycle;
    \draw[fill=yellow] (1.25, 1.5) -- ++ (-1.0, 0) -- ++ (0, -1.25) -- ++ (1.0, 0) -- cycle;
    \draw[fill=yellow] (-1.75, 1.5) -- ++ (1.0, 0) -- ++ (0, -1.25) -- ++ (-1.0, 0) -- cycle;

    \node[] () at (-1.75, -0.5) {\large $B$};
    \node[] () at (1.25, -0.5) {\large $B$};
    \node[] () at (-0.25, -0.5) {\large $C$};
    \node[] () at (-0.25, -1.85) {\large $D$};
    \node[] () at (-0.25, 0.85) {\large $D$};
    \node[] () at (-1.25, -1.85) {\large $E$};
    \node[] () at (0.75, -1.85) {\large $E$};
    \node[] () at (0.75, 0.85) {\large $E$};
    \node[] () at (-1.25, 0.85) {\large $E$};
    \node[] () at (-0.25, 2.25) {\large $A$};
    \end{scope}
    \begin{scope}[xshift=0.0cm, yshift=-4.5cm, scale=0.65]
        \begin{scope}
        \filldraw[green!30!white] (-3, -2) -- ++ (6, 0) -- ++ (0, 4) -- ++ (-6, 0) -- cycle;
        \clip (-3, -2) -- ++ (6, 0) -- ++ (0, 4) -- ++ (-6, 0) -- cycle;
         \foreach \x in {-4, ..., 4}
        {
        \foreach \y in {-2, ..., 2}
        {
            \filldraw[blue!30!white] (\x-0.35, \y-0.35) -- ++ (0.7, 0) -- ++ (0, 0.7) -- ++ (-0.7, 0) -- cycle;
        }
        }
        \foreach \x in {-4, ..., 4}
        {
        \foreach \y in {-2, ..., 2}
        {
            \filldraw[green!30!white] (\x-0.2, \y+0.4) -- ++ (0.1,0) -- ++ (0, -0.1) -- ++(-0.1, 0) -- cycle;
            \filldraw[green!30!white] (\x+0.1, \y+0.4) -- ++ (0.1,0) -- ++ (0, -0.1) -- ++(-0.1, 0) -- cycle;
            \filldraw[green!30!white] (\x-0.2, \y+0.7) -- ++ (0.1,0) -- ++ (0, -0.1) -- ++(-0.1, 0) -- cycle;
            \filldraw[green!30!white] (\x+0.1, \y+0.7) -- ++ (0.1,0) -- ++ (0, -0.1) -- ++(-0.1, 0) -- cycle;
        }
        }
        \foreach \x in {-4, ..., 4}
        {
        \foreach \y in {-2, ..., 2}
        {
            \filldraw[red!40!white] (\x-0.15, \y+0.4) -- ++ (0.3, 0) -- ++ (0, 0.2) -- ++ (-0.3, 0) -- cycle;
        }
        }
        \foreach \x in {-4, ..., 4}
        {
        \foreach \y in {-2, ..., 2}
        {
            \filldraw[green!30!white] (\x-0.6, \y-0.2) -- ++ (-0.1, 0) -- ++ (0, 0.1) -- ++ (0.1, 0) -- cycle;
            \filldraw[green!30!white] (\x-0.6, \y+0.1) -- ++ (-0.1, 0) -- ++ (0, 0.1) -- ++ (0.1, 0) -- cycle;
            \filldraw[green!30!white] (\x-0.3, \y-0.2) -- ++ (-0.1, 0) -- ++ (0, 0.1) -- ++ (0.1, 0) -- cycle;
            \filldraw[green!30!white] (\x-0.3, \y+0.1) -- ++ (-0.1, 0) -- ++ (0, 0.1) -- ++ (0.1, 0) -- cycle;
        }
        }
        \foreach \x in {-4, ..., 4}
        {
        \foreach \y in {-2, ..., 2}
        {
            \filldraw[red!40!white] (\x-0.6, \y-0.15) -- ++ (0.2, 0) -- ++ (0, 0.3) -- ++ (-0.2, 0) -- cycle;
        }
        }
        \end{scope}
    \draw[] (-3, -2) -- ++ (6, 0) -- ++ (0, 4) -- ++ (-6, 0) -- cycle;
    \draw[style=dashed] (1.35-0.35, 0.35-0.35) -- ++ (1.0, 0) -- ++ (0, 1.0) -- ++ (-1.0, 0) -- cycle;
    \draw[style=dashed] (1.35-0.35+1.0, 0.35-0.35+1.0) -- (5.34, 2.0);
    \draw[style=dashed] (1.35-0.35+1.0, 0.35-0.35) -- (5.34, -2.0);
    \end{scope}
    \begin{scope}[xshift=7.0cm, yshift=-4.5cm,  scale=0.7]
    \draw[draw=none, fill=green!30!white] (-3.0, -3.0) -- (2.5, -3.0) -- (2.5, 2.5) -- (-3.0, 2.5);
    \draw[draw=none, fill=blue!30!white] (-1.0, -3.0) -- ++ (0, 2.0) -- ++ (-2.0, 0) -- ++ (0, -2.0);
    \draw[draw=none, fill=blue!30!white] (-1.0, 2.5) -- ++ (0, -2.0) -- ++ (-2.0, 0) -- ++ (0, 2.0);
    \draw[draw=none, fill=blue!30!white] (0.5, -3.0) -- ++ (0, 2.0) -- ++ (2.0, 0) -- ++ (0, -2.0);
    \draw[draw=none, fill=blue!30!white] (0.5, 2.5) -- ++ (0, -2.0) -- ++ (2.0, 0) -- ++ (0, 2.0);
    
    \draw[color=green!30!white, fill=green!30!white] (-1.75, -2.75) -- ++ (0.75, 0) -- ++ (0, 0.85) -- ++ (-0.75, 0) -- cycle;
    \draw[] (-1.75+0.75, -2.75) -- ++ (-0.75, 0) -- ++ (0, 0.85) -- ++ (0.75, 0);
    \draw[color=green!30!white, fill=green!30!white] (-2.75, -1.75) -- ++ (0.85, 0) -- ++ (0, 0.75) -- ++ (-0.85, 0) -- cycle;
    \draw[] (-2.75, -1.75+0.75) -- ++ (0, -0.75) -- ++ (0.85, 0) -- ++ (0, 0.75);
    \draw[color=green!30!white, fill=green!30!white] (-1.75, 1.4) -- ++ (0.75, 0) -- ++ (0, 0.85) -- ++ (-0.75, 0) -- cycle;
    \draw[] (-1.75+0.75, 1.4) -- ++ (-0.75, 0) -- ++ (0, 0.85) -- ++ (0.75, 0);
    \draw[color=green!30!white, fill=green!30!white] (-2.75, 0.5) -- ++ (0.85, 0) -- ++ (0, 0.75) -- ++ (-0.85, 0) -- cycle;
    \draw[] (-2.75, 0.5) -- ++ (0, 0.75) -- ++ (0.85, 0) -- ++ (0, -0.75);
    \draw[color=green!30!white, fill=green!30!white] (0.5, -2.75) -- ++ (0.75, 0) -- ++ (0, 0.85) -- ++ (-0.75, 0) -- cycle;
    \draw[] (0.5, -2.75) -- ++ (0.75, 0) -- ++ (0, 0.85) -- ++ (-0.75, 0);
    \draw[color=green!30!white, fill=green!30!white] (1.4, -1.75) -- ++ (0.85, 0) -- ++ (0, 0.75) -- ++ (-0.85, 0) -- cycle;
    \draw[] (1.4, -1.75+0.75) -- ++ (0, -0.75) -- ++ (0.85, 0)  -- ++ (0, 0.75);
    \draw[color=green!30!white, fill=green!30!white] (1.4, 0.5) -- ++ (0.85, 0) -- ++ (0, 0.75) -- ++ (-0.85, 0) -- cycle;
    \draw[] (1.4, 0.5) -- ++ (0, 0.75) -- ++ (0.85, 0) -- ++ (0, -0.75);
    \draw[color=green!30!white, fill=green!30!white] (0.5, 1.4) -- ++ (0.75, 0) -- ++ (0, 0.85) -- ++ (-0.75, 0) -- cycle;
    \draw[] (0.5, 1.4) -- ++ (0.75, 0) -- ++ (0, 0.85) -- ++ (-0.75, 0);

    \draw[draw=none, fill=red!40!white] (-3.0, -1.0) -- (-2.25, -1.0) -- ++ (0, 1.5) -- ++ (-0.75, 0);
    \draw[] (-2.75, -1.0) -- ++ (0.5, 0) -- ++ (0, 1.5) -- ++ (-0.5, 0);
    \draw[draw=none, fill=red!40!white] (-1.0, -3.0) -- ++ (0, 0.75) -- ++ (1.5, 0) -- ++ (0, -0.75);
    \draw[] (-1.0, -2.75) -- ++ (0, 0.5) -- ++ (1.5, 0) -- ++ (0, -0.5);
    \draw[draw=none, fill=red!40!white] (-1.0, 2.5) -- ++ (0, -0.75) -- ++ (1.5, 0) -- ++ (0, 0.75);
    \draw[] (-1.0, 2.25) -- ++ (0, -0.5) -- ++ (1.5, 0) -- ++ (0, 0.5);
    \draw[draw=none, fill=red!40!white] (2.5, -1.0) -- ++ (-0.75, 0.0) -- ++ (0, 1.5) -- ++ (0.75, 0);
    \draw[] (2.25, -1.0) -- ++ (-0.5, 0.0) -- ++ (0, 1.5) -- ++ (0.5, 0);

    \draw[] (-1.0, -2.75+0.85) -- ++ (0, 0.9) -- ++ (-0.9, 0);
    \draw[] (-1.0, 1.4) -- ++ (0, -0.9) -- ++ (-0.9, 0);
    \draw[] (0.5, -2.75+0.85) -- ++ (0, 0.9) -- ++ (0.9, 0);
    \draw[] (0.5, 1.4) -- ++ (0, -0.9) -- ++ (0.9, 0);

    \node[] () at (-2.5, -2.5) {\large $B$};
    \node[] () at (-2.5, 2.0) {\large $B$};
    \node[] () at (2.0, 2.0) {\large $B$};
    \node[] () at (2.0, -2.5) {\large $B$};
    \node[] () at (-2.65, -0.25) {\large $B$};
    \node[] () at (-0.25, 2.1) {\large $B$};
    \node[] () at (2.1, -0.25) {\large $B$};
    \node[] () at (-0.25, -2.65) {\large $B$};
    \node[] () at (-0.25, -0.25) {\large $C$};
    \end{scope}
\end{tikzpicture}
\caption{A close-up view of a local connection in Fig.~\ref{fig:sre_build_up}. Notice that the partitions are equivalent to those in Fig.~\ref{fig:sre_extendibility} (thus equivalent to Fig.~\ref{fig:circuit_stability_partitions_3}).}
\label{fig:sre_stability}
\end{figure}

A generalization of this approach to higher dimensions is straightforward. Note that Fig.~\ref{fig:sre_build_up} can be viewed as a sequence of channels that, roughly speaking, creates the $2$-, $1$-, and the $0$-cells. More specifically, each of these cells can be obtained from a cellular complex, followed by a fattening of the $0$- and $1$-cells, reducing the size of the $2$-cells. These fattened regions define the subsystems we use in our construction. In $n$ dimensions, one would first create a reduced density matrix over $n$-dimensional balls, connect them using an extending map for $(n-1)$-dimensional balls, and so on. The existence of each recovery map again follows from the argument in Section~\ref{subsec:stability_circuit}. Thus we conclude that any short-range entangled states (and more generally, invertible states) can be obtained by a finite-depth quantum channel, with each local channel being the extending map associated with the local reduced density matrix of the given state.

\begin{remark}
The depth of the obtained circuit is $\mathcal{O}(dD)$, because the depth of each extending map is at most $\mathcal{O}(D)$.
\end{remark}

\subsection{Long-range entangled states: setup and motivation}
\label{subsec:lre}

In this Section and Section~\ref{subsec:lre_circuit}, we explain how to construct a state-preparation circuit for a certain class of long-range entangled states. We first begin by describing the setup and the underlying motivation. How to construct the circuit is discussed in Section~\ref{subsec:lre_circuit}.

As before, we will focus on the case of two dimensions, with the goal of constructing the circuit from a set of local reduced density matrices on $\mathcal{O}(1)$-sized disks. Analogous setups for higher dimensions can be found in Ref.~\cite{DraftOther,shi2024remotedetectabilityentanglementbootstrap}. The class of states we consider are quantum many-body states on a two-dimensional disk. This state will be assumed to satisfy the local extendibility conditions for the set of subsystems shown in Fig.~\ref{fig:lre_conditions}(b)-(e). 

\begin{figure}[h]
    \centering
    \begin{tikzpicture}[line width=1pt, scale=0.8]
        \begin{scope}[yshift= 2.5cm]
        \draw[dashed] (0,0) -- ++ (12, 0) -- ++ (0, 3) -- ++ (-12, 0) -- cycle;
        \node[below] () at (6,-0.25) {(a)};
        \begin{scope}[xshift= 2cm, yshift=1cm, scale=0.3]
        \draw[] (0,0) circle (1cm);
        \draw[] (0,0) circle (1.5cm);

        \draw[] (80:1cm) -- (80:1.5cm);
        \draw[] (100:1cm) -- (100:1.5cm);
        \draw[] (260:1cm) -- (260:1.5cm);
        \draw[] (280:1cm) -- (280:1.5cm);
        \end{scope}

        \begin{scope}[xshift= 4cm, yshift=1.5cm, scale=0.3]
        \draw[] (0,0) circle (1cm);
        \draw[] (0,0) circle (1.5cm);
        \end{scope}

        \begin{scope}[xshift=7cm, scale=0.3]
        \clip (-2, 0) -- ++ (4,0) -- ++ (0, 2) -- ++ (-4, 0)--cycle;
        \draw[] (0,0) circle (1cm);
        \draw[] (0,0) circle (1.5cm);
        \draw[] (80:1cm) -- (80:1.5cm);
        \draw[] (100:1cm) -- (100:1.5cm);
        \end{scope}

        \begin{scope}[xshift=10cm, scale=0.3]
        \clip (-2, 0) -- ++ (4,0) -- ++ (0, 2) -- ++ (-4, 0)--cycle;
        \draw[] (0,0) circle (1cm);
        \draw[] (0,0) circle (1.5cm);
        \end{scope}

        \end{scope}
        \draw[] (0,0) circle (1cm);
        \draw[] (0,0) circle (1.5cm);

        \draw[] (80:1cm) -- (80:1.5cm);
        \draw[] (100:1cm) -- (100:1.5cm);
        \draw[] (260:1cm) -- (260:1.5cm);
        \draw[] (280:1cm) -- (280:1.5cm);

        \node[] () at (0,0) {$C$};
        \node[] () at (1.25,0) {$D$};
        \node[] () at (-1.25,0) {$B$};
        \node[] () at (90:1.25cm) {$E$};
        \node[] () at (270:1.25cm) {$E$};

        \node[] () at (0, -2cm) {(b)};

        \begin{scope}[xshift=4cm]
        \draw[] (0,0) circle (1cm);
        \draw[] (0,0) circle (1.5cm);

        \node[] () at (0,0) {$C$};
        \node[] () at (1.25,0) {$B$};

        \node[] () at (0, -2cm) {(c)};
            
        \end{scope}
        \begin{scope}[xshift=8cm,yshift=-0.75cm]
        \begin{scope}
        \clip (-2, 0) -- ++ (4,0) -- ++ (0, 2) -- ++ (-4, 0)--cycle;
        \draw[] (0,0) circle (1cm);
        \draw[] (0,0) circle (1.5cm);
        \draw[] (80:1cm) -- (80:1.5cm);
        \draw[] (100:1cm) -- (100:1.5cm);

        \node[] () at (0,0.25) {$C$};
        \node[] () at (135:1.25cm) {$B$};
        \node[] () at (45:1.25cm) {$D$};
        \node[] () at (90:1.25cm) {$E$};
        \end{scope}
        \draw[dashed] (-1.75,0) -- ++ (3.5, 0);

        \node[] () at (0, -1.25cm) {(d)};
            
        \end{scope}
        \begin{scope}[xshift=12cm,yshift=-0.75cm]
        \begin{scope}
        \clip (-2, 0) -- ++ (4,0) -- ++ (0, 2) -- ++ (-4, 0)--cycle;
        \draw[] (0,0) circle (1cm);
        \draw[] (0,0) circle (1.5cm);

        \node[] () at (0,0.25) {$C$};
        \node[] () at (0,1.25) {$B$};
        \end{scope}
        \draw[dashed] (-1.75,0) -- ++ (3.5, 0);
        \node[] () at (0, -1.25cm) {(e)};
            
        \end{scope}
    \end{tikzpicture}
    \caption{(a) We consider a many-body state on a disk that satisfies a set of local extendibility conditions. Its reduced density matrices are locally extendible from $BE$ to $BC$ for the shown subsystems in the bulk (b)-(c) and at the boundary (d)-(e). In (d)-(e), the dashed line represents the physical boundary.}
    \label{fig:lre_conditions}
\end{figure}
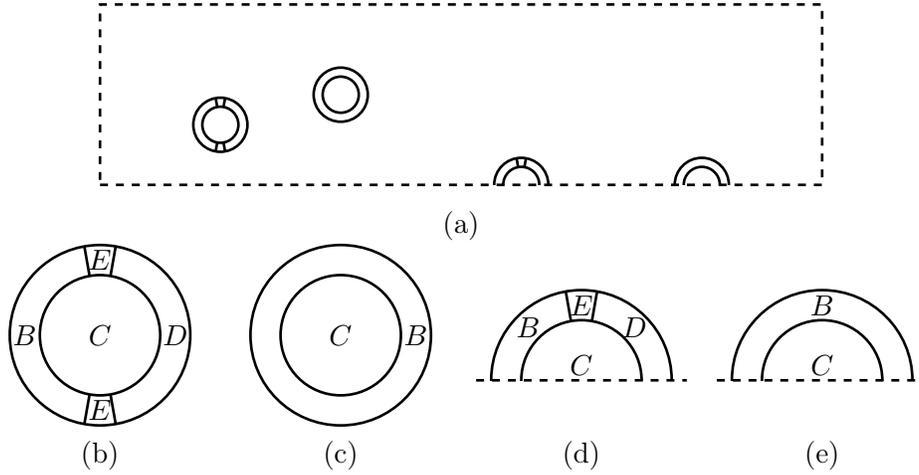

The reader may wonder about the motivation for imposing these conditions. In brief, this is to include a large class of long-range entangled states that are known in the literature. Physically, the states we consider should be viewed as ground states of some gapped local Hamiltonian on a disk. A set of conditions that include such states were recently proposed in Ref.~\cite{shi2020fusion,Shi2021}, summarized as follows:
\begin{equation}
    \begin{aligned}
        S(\rho_{BC}) - S(\rho_B) + S(\rho_{CD}) - S(\rho_D)&=0 & \text{for Fig.~\ref{fig:lre_conditions_old}(b) and (d)},\\
        S(\rho_{BC}) + S(\rho_C) - S(\rho_B)&=0 & \text{for Fig.~\ref{fig:lre_conditions_old}(c) and (e)},
    \end{aligned}
    \label{eq:eb_axioms}
\end{equation}
where $\rho$ is the underlying global state. One example that satisfies Eq.~\eqref{eq:eb_axioms} is the surface code~\cite{bravyi1998quantum} with a uniform boundary condition, e.g., a boundary condition that is smooth everywhere. More generally, Eq.~\eqref{eq:eb_axioms} are satisfied by the string-net model~\cite{Levin2005} with a uniform boundary condition~\cite{kitaev2012models}. Thus these conditions encapsulate a large class of models that are known in the literature.

\begin{figure}[h]
    \centering
    \begin{tikzpicture}[line width=1pt, scale=0.8]
        \begin{scope}[yshift= 2.5cm]
        \draw[dashed] (0,0) -- ++ (12, 0) -- ++ (0, 3) -- ++ (-12, 0) -- cycle;
        \node[below] () at (6,-0.25) {(a)};
        \begin{scope}[xshift= 2cm, yshift=1cm, scale=0.3]
        \draw[] (0,0) circle (1cm);
        \draw[] (0,0) circle (1.5cm);

        \draw[] (90:1cm) -- (90:1.5cm);
        \draw[] (270:1cm) -- (270:1.5cm);
        \end{scope}

        \begin{scope}[xshift= 4cm, yshift=1.5cm, scale=0.3]
        \draw[] (0,0) circle (1cm);
        \draw[] (0,0) circle (1.5cm);
        \end{scope}

        \begin{scope}[xshift=7cm, scale=0.3]
        \clip (-2, 0) -- ++ (4,0) -- ++ (0, 2) -- ++ (-4, 0)--cycle;
        \draw[] (0,0) circle (1cm);
        \draw[] (0,0) circle (1.5cm);
        \draw[] (90:1cm) -- (90:1.5cm);
        \end{scope}

        \begin{scope}[xshift=10cm, scale=0.3]
        \clip (-2, 0) -- ++ (4,0) -- ++ (0, 2) -- ++ (-4, 0)--cycle;
        \draw[] (0,0) circle (1cm);
        \draw[] (0,0) circle (1.5cm);
        \end{scope}

        \end{scope}
        \draw[] (0,0) circle (1cm);
        \draw[] (0,0) circle (1.5cm);

        \draw[] (90:1cm) -- (90:1.5cm);
        \draw[] (270:1cm) -- (270:1.5cm);

        \node[] () at (0,0) {$C$};
        \node[] () at (1.25,0) {$D$};
        \node[] () at (-1.25,0) {$B$};

        \node[] () at (0, -2cm) {(b)};

        \begin{scope}[xshift=4cm]
        \draw[] (0,0) circle (1cm);
        \draw[] (0,0) circle (1.5cm);

        \node[] () at (0,0) {$C$};
        \node[] () at (1.25,0) {$B$};

        \node[] () at (0, -2cm) {(c)};
            
        \end{scope}
        \begin{scope}[xshift=8cm,yshift=-0.75cm]
        \begin{scope}
        \clip (-2, 0) -- ++ (4,0) -- ++ (0, 2) -- ++ (-4, 0)--cycle;
        \draw[] (0,0) circle (1cm);
        \draw[] (0,0) circle (1.5cm);
        \draw[] (90:1cm) -- (90:1.5cm);

        \node[] () at (0,0.25) {$C$};
        \node[] () at (135:1.25cm) {$B$};
        \node[] () at (45:1.25cm) {$D$};
        \end{scope}
        \draw[dashed] (-1.75,0) -- ++ (3.5, 0);

        \node[] () at (0, -1.25cm) {(d)};
            
        \end{scope}
        \begin{scope}[xshift=12cm,yshift=-0.75cm]
        \begin{scope}
        \clip (-2, 0) -- ++ (4,0) -- ++ (0, 2) -- ++ (-4, 0)--cycle;
        \draw[] (0,0) circle (1cm);
        \draw[] (0,0) circle (1.5cm);

        \node[] () at (0,0.25) {$C$};
        \node[] () at (0,1.25) {$B$};
        \end{scope}
        \draw[dashed] (-1.75,0) -- ++ (3.5, 0);
        \node[] () at (0, -1.25cm) {(e)};
            
        \end{scope}
    \end{tikzpicture}
    \caption{(a) The conditions initially proposed in Ref.~\cite{shi2020fusion,Shi2021} for describing long-range entangled states in two dimensions. For (b) and (d), the condition is $S(\rho_{BC}) - S(\rho_B) + S(\rho_{CD}) - S(\rho_D)$=0, where $\rho$ is the ground state and $S(\rho)$ is the von Neumann entropy of $\rho$. For (c) and (e), $S(\rho_{BC}) + S(\rho_C) - S(\rho_B)=0$.}
    \label{fig:lre_conditions_old}
\end{figure}
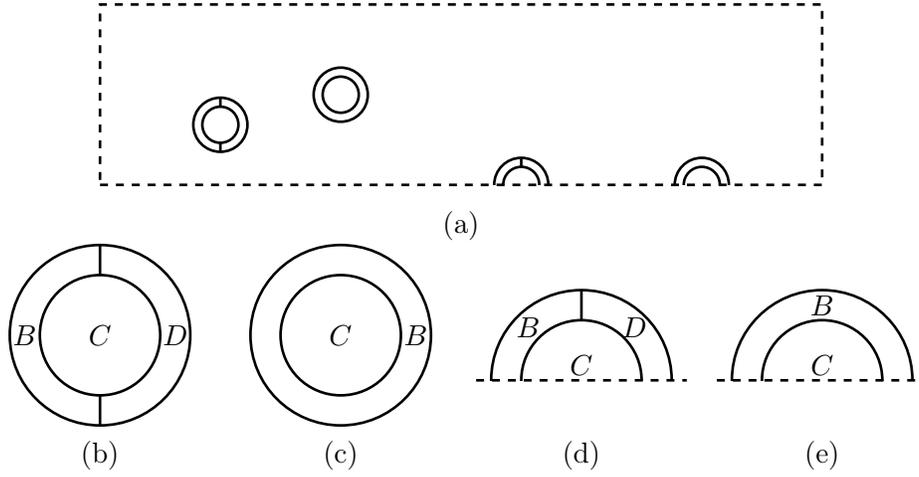

While Eq.~\eqref{eq:eb_axioms} may appear unrelated to the conditions in Fig.~\ref{fig:lre_conditions}, there is in fact an intimate relation. By introducing a purifying system (denoted as $A$), Eq.~\eqref{eq:eb_axioms} can be converted to the following equations:
\begin{equation}
    \begin{aligned}
        S(\rho_{AB}) - S(\rho_B) + S(\rho_{BC}) - S(\rho_{ABC})&=0 & \text{for Fig.~\ref{fig:lre_conditions_old}(a)-(d)}.
    \end{aligned}
    \label{eq:eb_axioms_purified}
\end{equation}
An important consequence of these assumptions is that there is an extending map $\Phi_{B\to BC}: \mathcal{D}_B \to \mathcal{D}_{BC}$ (often referred to as the Petz recovery map~\cite{Petz1988,Petz2003}) such that
\begin{equation}
    \rho_{ABC} = \mathcal{I}_A\otimes \Phi_{B\to BC}(\rho_{AB}).
\end{equation}
That is, Eq.~\eqref{eq:eb_axioms} implies that the density matrices $\rho_{BCD}$ for the subsystems in Fig.~\ref{fig:lre_conditions_old}(b) and (d) are extendible from $B$ to $BC$. Similarly, the density matrices $\rho_{BC}$ in Fig.~\ref{fig:lre_conditions_old}(c) and (e) are extendible from $B$ to $BC$.

However, an undesirable feature of Eq.~\eqref{eq:eb_axioms} is that they can break under application of a constant-depth circuit. This is known as the ``spurious'' contribution to the topological entanglement entropy~\cite{Kitaev2006,Levin2006,Zou2016,Williamson2019,Kato2020,kim2023universal}. On the other hand, the extendibility conditions in Fig.~\ref{fig:lre_conditions} remain invariant under constant-depth circuits, as long as the thicknesses of the subsystems are sufficiently large compared to the depth of the circuit. 

Thus the extendibility conditions in Fig.~\ref{fig:lre_conditions} are applicable to a broader class of long-range entangled states. It holds for a broad class of known exactly solvable models such as the toric code~\cite{Kitaev2003} and the string-net model~\cite{Levin2005} with a uniform boundary condition, and more generally, even the states obtained by applying a generic constant-depth circuit to such states. Its broad applicability and the robustness against constant-depth circuit is what makes those conditions useful. 

That said, not all gapped ground states satisfy our conditions. These conditions cannot be satisfied everywhere if the underlying state lies in a topologically ordered ground state subspace that includes another orthogonal state. For instance, if we consider the surface code consisting of both smooth and rough boundary condition, its ground state subspace is spanned by two locally indistinguishable states~\cite{bravyi1998quantum}. Any circuit construction approach based on local reduced density matrices (like ours) must necessarily fail for such states. Otherwise, one would end up with a conclusion that two orthogonal states can be created from identical circuits, which is a contradiction. Extending our method to such states is an important open problem, which we discuss further in Section~\ref{sec:discussion}.

\subsection{Long-range entangled states: circuit construction}
\label{subsec:lre_circuit}

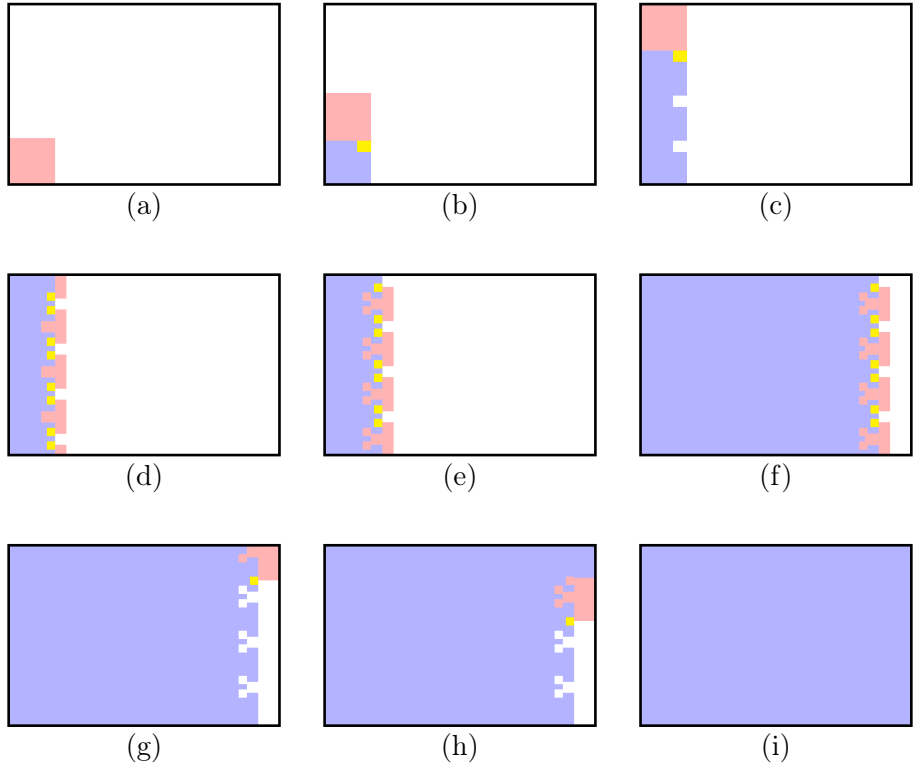
\begin{figure}[H]
    \centering
    \begin{tikzpicture}[line width=1pt, scale=0.6]
    \node[] () at (0, -2.5) {(a)};
    \begin{scope}
    \clip (-3, -2) -- ++ (6, 0) -- ++ (0, 4) -- ++ (-6, 0) -- cycle;
    \filldraw[red!30!white] (-3, -2) -- ++ (1, 0) -- ++ (0, 1) -- ++ (-1, 0) -- cycle;
    \end{scope}
    \draw[] (-3, -2) -- ++ (6, 0) -- ++ (0, 4) -- ++ (-6, 0) -- cycle;

    \begin{scope}[xshift=7cm]
    \node[] () at (0, -2.5) {(b)};
    \begin{scope}
    \clip (-3, -2) -- ++ (6, 0) -- ++ (0, 4) -- ++ (-6, 0) -- cycle;
    \filldraw[blue!30!white] (-3, -2) -- ++ (1, 0) -- ++ (0, 1) -- ++ (-1, 0) -- cycle;
    \filldraw[yellow] (-2.25, -1.25) -- ++ (0.25, 0) -- ++ (0, 0.25) -- ++ (-0.25, 0) -- cycle;

    \filldraw[red!30!white] (-3, -2+1) -- ++ (1, 0) -- ++ (0, 1) -- ++ (-1, 0) -- cycle;
    \end{scope}
    \draw[] (-3, -2) -- ++ (6, 0) -- ++ (0, 4) -- ++ (-6, 0) -- cycle;
        
    \end{scope}

    \begin{scope}[xshift=14cm]
    \node[] () at (0, -2.5) {(c)};
    \begin{scope}
    \clip (-3, -2) -- ++ (6, 0) -- ++ (0, 4) -- ++ (-6, 0) -- cycle;
    \filldraw[blue!30!white] (-3, -2) -- ++ (1, 0) -- ++ (0, 1) -- ++ (-1, 0) -- cycle;
    \filldraw[white] (-2.25, -1.25) -- ++ (0.25, 0) -- ++ (0, 0.25) -- ++ (-0.25, 0) -- cycle;
    \filldraw[blue!30!white] (-3, -2+1) -- ++ (1, 0) -- ++ (0, 1) -- ++ (-1, 0) -- cycle;
    \filldraw[white] (-2.25, -0.25) -- ++ (0.25, 0) -- ++ (0, 0.25) -- ++ (-0.25, 0) -- cycle;
    \filldraw[blue!30!white] (-3, -2+2) -- ++ (1, 0) -- ++ (0, 1) -- ++ (-1, 0) -- cycle;
    \filldraw[yellow] (-2.25, 0.75) -- ++ (0.25, 0) -- ++ (0, 0.25) -- ++ (-0.25, 0) -- cycle;
    \filldraw[red!30!white] (-3, -2+3) -- ++ (1, 0) -- ++ (0, 1) -- ++ (-1, 0) -- cycle;
    \end{scope}
    \draw[] (-3, -2) -- ++ (6, 0) -- ++ (0, 4) -- ++ (-6, 0) -- cycle;
        
    \end{scope}

    \begin{scope}[yshift=-6cm]
    \node[] () at (0, -2.5) {(d)};
    \begin{scope}
    \clip (-3, -2) -- ++ (6, 0) -- ++ (0, 4) -- ++ (-6, 0) -- cycle;
    \filldraw[red!30!white] (-2.5, -2.5) -- ++ (0.75,0) -- ++ (0,0.7) -- ++ (-1, 0) -- cycle;
    \filldraw[red!30!white] (-2.5, -1.5) -- ++ (0.75,0) -- ++ (0,0.7) -- ++ (-1, 0) -- cycle;
    \filldraw[red!30!white] (-2.5, -0.5) -- ++ (0.75,0) -- ++ (0,0.7) -- ++ (-1, 0) -- cycle;
    \filldraw[red!30!white] (-2.5, 0.5) -- ++ (0.75,0) -- ++ (0,0.7) -- ++ (-1, 0) -- cycle;
    \filldraw[red!30!white] (-2.5, 1.5) -- ++ (0.75,0) -- ++ (0,0.7) -- ++ (-1, 0) -- cycle;
    
    \filldraw[blue!30!white] (-3, -2) -- ++ (1, 0) -- ++ (0, 1) -- ++ (-1, 0) -- cycle;
    \filldraw[red!30!white] (-2.25, -1.25) -- ++ (0.25, 0) -- ++ (0, 0.25) -- ++ (-0.25, 0) -- cycle;
    \filldraw[blue!30!white] (-3, -2+1) -- ++ (1, 0) -- ++ (0, 1) -- ++ (-1, 0) -- cycle;
    \filldraw[red!30!white] (-2.25, -0.25) -- ++ (0.25, 0) -- ++ (0, 0.25) -- ++ (-0.25, 0) -- cycle;
    \filldraw[blue!30!white] (-3, -2+2) -- ++ (1, 0) -- ++ (0, 1) -- ++ (-1, 0) -- cycle;
    \filldraw[red!30!white] (-2.25, 0.75) -- ++ (0.25, 0) -- ++ (0, 0.25) -- ++ (-0.25, 0) -- cycle;
    \filldraw[blue!30!white] (-3, -2+3) -- ++ (1, 0) -- ++ (0, 1) -- ++ (-1, 0) -- cycle;

    \filldraw[yellow] (-2.125, -1.85) -- ++ (0.125, 0) -- ++ (0, 0.125) -- ++ (-0.125, 0) -- cycle;
    \filldraw[yellow] (-2.125, -1.55) -- ++ (0.125, 0) -- ++ (0, 0.125) -- ++ (-0.125, 0) -- cycle;

    \filldraw[yellow] (-2.125, -0.85) -- ++ (0.125, 0) -- ++ (0, 0.125) -- ++ (-0.125, 0) -- cycle;
    \filldraw[yellow] (-2.125, -0.55) -- ++ (0.125, 0) -- ++ (0, 0.125) -- ++ (-0.125, 0) -- cycle;

    \filldraw[yellow] (-2.125, 0.15) -- ++ (0.125, 0) -- ++ (0, 0.125) -- ++ (-0.125, 0) -- cycle;
    \filldraw[yellow] (-2.125, 0.45) -- ++ (0.125, 0) -- ++ (0, 0.125) -- ++ (-0.125, 0) -- cycle;

    \filldraw[yellow] (-2.125, 1.15) -- ++ (0.125, 0) -- ++ (0, 0.125) -- ++ (-0.125, 0) -- cycle;
    \filldraw[yellow] (-2.125, 1.45) -- ++ (0.125, 0) -- ++ (0, 0.125) -- ++ (-0.125, 0) -- cycle;
    \end{scope}
    \draw[] (-3, -2) -- ++ (6, 0) -- ++ (0, 4) -- ++ (-6, 0) -- cycle;
        
    \end{scope}

    \begin{scope}[yshift=-6cm, xshift=7cm]
    \node[] () at (0, -2.5) {(e)};
    \begin{scope}
    \clip (-3, -2) -- ++ (6, 0) -- ++ (0, 4) -- ++ (-6, 0) -- cycle;

    \filldraw[red!30!white] (-2, -2) -- ++ (0.5,0) -- ++ (0,0.7) -- ++ (-1, 0) -- cycle;
    \filldraw[red!30!white] (-2, -1) -- ++ (0.5,0) -- ++ (0,0.7) -- ++ (-1, 0) -- cycle;
    \filldraw[red!30!white] (-2, -0) -- ++ (0.5,0) -- ++ (0,0.7) -- ++ (-1, 0) -- cycle;
    \filldraw[red!30!white] (-2, 1) -- ++ (0.5,0) -- ++ (0,0.7) -- ++ (-1, 0) -- cycle;
    \filldraw[red!30!white] (-2, 2) -- ++ (0.5,0) -- ++ (0,0.7) -- ++ (-1, 0) -- cycle;
    
    \filldraw[blue!30!white] (-2.5, -2.5) -- ++ (0.75,0) -- ++ (0,0.7) -- ++ (-1, 0) -- cycle;
    \filldraw[blue!30!white] (-2.5, -1.5) -- ++ (0.75,0) -- ++ (0,0.7) -- ++ (-1, 0) -- cycle;
    \filldraw[blue!30!white] (-2.5, -0.5) -- ++ (0.75,0) -- ++ (0,0.7) -- ++ (-1, 0) -- cycle;
    \filldraw[blue!30!white] (-2.5, 0.5) -- ++ (0.75,0) -- ++ (0,0.7) -- ++ (-1, 0) -- cycle;
    \filldraw[blue!30!white] (-2.5, 1.5) -- ++ (0.75,0) -- ++ (0,0.7) -- ++ (-1, 0) -- cycle;
    
    \filldraw[blue!30!white] (-3, -2) -- ++ (1, 0) -- ++ (0, 1) -- ++ (-1, 0) -- cycle;
    \filldraw[blue!30!white] (-2.25, -1.25) -- ++ (0.25, 0) -- ++ (0, 0.25) -- ++ (-0.25, 0) -- cycle;
    \filldraw[blue!30!white] (-3, -2+1) -- ++ (1, 0) -- ++ (0, 1) -- ++ (-1, 0) -- cycle;
    \filldraw[blue!30!white] (-2.25, -0.25) -- ++ (0.25, 0) -- ++ (0, 0.25) -- ++ (-0.25, 0) -- cycle;
    \filldraw[blue!30!white] (-3, -2+2) -- ++ (1, 0) -- ++ (0, 1) -- ++ (-1, 0) -- cycle;
    \filldraw[blue!30!white] (-2.25, 0.75) -- ++ (0.25, 0) -- ++ (0, 0.25) -- ++ (-0.25, 0) -- cycle;
    \filldraw[blue!30!white] (-3, -2+3) -- ++ (1, 0) -- ++ (0, 1) -- ++ (-1, 0) -- cycle;

    \filldraw[red!30!white] (-2.125, -1.85) -- ++ (0.125, 0) -- ++ (0, 0.125) -- ++ (-0.125, 0) -- cycle;
    \filldraw[red!30!white] (-2.125, -1.55) -- ++ (0.125, 0) -- ++ (0, 0.125) -- ++ (-0.125, 0) -- cycle;

    \filldraw[red!30!white] (-2.125, -0.85) -- ++ (0.125, 0) -- ++ (0, 0.125) -- ++ (-0.125, 0) -- cycle;
    \filldraw[red!30!white] (-2.125, -0.55) -- ++ (0.125, 0) -- ++ (0, 0.125) -- ++ (-0.125, 0) -- cycle;

    \filldraw[red!30!white] (-2.125, 0.15) -- ++ (0.125, 0) -- ++ (0, 0.125) -- ++ (-0.125, 0) -- cycle;
    \filldraw[red!30!white] (-2.125, 0.45) -- ++ (0.125, 0) -- ++ (0, 0.125) -- ++ (-0.125, 0) -- cycle;

    \filldraw[red!30!white] (-2.125, 1.15) -- ++ (0.125, 0) -- ++ (0, 0.125) -- ++ (-0.125, 0) -- cycle;
    \filldraw[red!30!white] (-2.125, 1.45) -- ++ (0.125, 0) -- ++ (0, 0.125) -- ++ (-0.125, 0) -- cycle;

    \filldraw[yellow] (-1.875, -1.35) -- ++ (0.125, 0) -- ++ (0, 0.125) -- ++ (-0.125, 0) -- cycle;
    \filldraw[yellow] (-1.875, -1.05) -- ++ (0.125, 0) -- ++ (0, 0.125) -- ++ (-0.125, 0) -- cycle;

    \filldraw[yellow] (-1.875, -0.35) -- ++ (0.125, 0) -- ++ (0, 0.125) -- ++ (-0.125, 0) -- cycle;
    \filldraw[yellow] (-1.875, -0.05) -- ++ (0.125, 0) -- ++ (0, 0.125) -- ++ (-0.125, 0) -- cycle;

    \filldraw[yellow] (-1.875, 0.65) -- ++ (0.125, 0) -- ++ (0, 0.125) -- ++ (-0.125, 0) -- cycle;
    \filldraw[yellow] (-1.875, 0.95) -- ++ (0.125, 0) -- ++ (0, 0.125) -- ++ (-0.125, 0) -- cycle;

    \filldraw[yellow] (-1.875, 1.65) -- ++ (0.125, 0) -- ++ (0, 0.125) -- ++ (-0.125, 0) -- cycle;
    \end{scope}
    \draw[] (-3, -2) -- ++ (6, 0) -- ++ (0, 4) -- ++ (-6, 0) -- cycle;
        
    \end{scope}

    \begin{scope}[yshift=-6cm, xshift=14cm]
    \node[] () at (0, -2.5) {(f)};
    \begin{scope}
    \clip (-3, -2) -- ++ (6, 0) -- ++ (0, 4) -- ++ (-6, 0) -- cycle;

    \filldraw[red!30!white] (2, -2) -- ++ (0.5,0) -- ++ (0,0.7) -- ++ (-1, 0) -- cycle;
    \filldraw[red!30!white] (2, -1) -- ++ (0.5,0) -- ++ (0,0.7) -- ++ (-1, 0) -- cycle;
    \filldraw[red!30!white] (2, -0) -- ++ (0.5,0) -- ++ (0,0.7) -- ++ (-1, 0) -- cycle;
    \filldraw[red!30!white] (2, 1) -- ++ (0.5,0) -- ++ (0,0.7) -- ++ (-1, 0) -- cycle;
    \filldraw[red!30!white] (2, 2) -- ++ (0.5,0) -- ++ (0,0.7) -- ++ (-1, 0) -- cycle;

    \filldraw[blue!30!white] (-3, -2) --++ (5.25, 0) -- ++ (0, 4) -- ++ (-5.25, 0)--cycle;

    \filldraw[red!30!white] (2, -1.75) -- ++ (0.25, 0) -- ++ (0, 0.2) -- ++ (-0.25, 0) -- cycle;
    \filldraw[red!30!white] (2, -0.75) -- ++ (0.25, 0) -- ++ (0, 0.2) -- ++ (-0.25, 0) -- cycle;
    \filldraw[red!30!white] (2, 0.25) -- ++ (0.25, 0) -- ++ (0, 0.2) -- ++ (-0.25, 0) -- cycle;
    \filldraw[red!30!white] (2, 1.25) -- ++ (0.25, 0) -- ++ (0, 0.2) -- ++ (-0.25, 0) -- cycle;
    
    \filldraw[red!30!white] (1.875, -1.85) -- ++ (0.125, 0) -- ++ (0, 0.125) -- ++ (-0.125, 0) -- cycle;
    \filldraw[red!30!white] (1.875, -1.55) -- ++ (0.125, 0) -- ++ (0, 0.125) -- ++ (-0.125, 0) -- cycle;

    \filldraw[red!30!white] (1.875, -0.85) -- ++ (0.125, 0) -- ++ (0, 0.125) -- ++ (-0.125, 0) -- cycle;
    \filldraw[red!30!white] (1.875, -0.55) -- ++ (0.125, 0) -- ++ (0, 0.125) -- ++ (-0.125, 0) -- cycle;

    \filldraw[red!30!white] (1.875, 0.15) -- ++ (0.125, 0) -- ++ (0, 0.125) -- ++ (-0.125, 0) -- cycle;
    \filldraw[red!30!white] (1.875, 0.45) -- ++ (0.125, 0) -- ++ (0, 0.125) -- ++ (-0.125, 0) -- cycle;

    \filldraw[red!30!white] (1.875, 1.15) -- ++ (0.125, 0) -- ++ (0, 0.125) -- ++ (-0.125, 0) -- cycle;
    \filldraw[red!30!white] (1.875, 1.45) -- ++ (0.125, 0) -- ++ (0, 0.125) -- ++ (-0.125, 0) -- cycle;

    \filldraw[yellow] (2.125, -1.35) -- ++ (0.125, 0) -- ++ (0, 0.125) -- ++ (-0.125, 0) -- cycle;
    \filldraw[yellow] (2.125, -1.05) -- ++ (0.125, 0) -- ++ (0, 0.125) -- ++ (-0.125, 0) -- cycle;

    \filldraw[yellow] (2.125, -0.35) -- ++ (0.125, 0) -- ++ (0, 0.125) -- ++ (-0.125, 0) -- cycle;
    \filldraw[yellow] (2.125, -0.05) -- ++ (0.125, 0) -- ++ (0, 0.125) -- ++ (-0.125, 0) -- cycle;

    \filldraw[yellow] (2.125, 0.65) -- ++ (0.125, 0) -- ++ (0, 0.125) -- ++ (-0.125, 0) -- cycle;
    \filldraw[yellow] (2.125, 0.95) -- ++ (0.125, 0) -- ++ (0, 0.125) -- ++ (-0.125, 0) -- cycle;

    \filldraw[yellow] (2.125, 1.65) -- ++ (0.125, 0) -- ++ (0, 0.125) -- ++ (-0.125, 0) -- cycle;
    \end{scope}
    \draw[] (-3, -2) -- ++ (6, 0) -- ++ (0, 4) -- ++ (-6, 0) -- cycle;
        
    \end{scope}

    \begin{scope}[yshift=-12cm]
    \node[] () at (0, -2.5) {(g)};
    \begin{scope}
    \clip (-3, -2) -- ++ (6, 0) -- ++ (0, 4) -- ++ (-6, 0) -- cycle;
    \filldraw[red!30!white] (3, 2) -- ++ (-1, 0) -- ++ (0, -0.75) -- ++ (1,0) -- cycle;
    
    \filldraw[blue!30!white] (2, -2) -- ++ (0.5,0) -- ++ (0,0.7) -- ++ (-1, 0) -- cycle;
    \filldraw[blue!30!white] (2, -1) -- ++ (0.5,0) -- ++ (0,0.7) -- ++ (-1, 0) -- cycle;
    \filldraw[blue!30!white] (2, -0) -- ++ (0.5,0) -- ++ (0,0.7) -- ++ (-1, 0) -- cycle;
    \filldraw[blue!30!white] (2, 1) -- ++ (0.5,0) -- ++ (0,0.7) -- ++ (-1, 0) -- cycle;
    \filldraw[blue!30!white] (2, 2) -- ++ (0.5,0) -- ++ (0,0.7) -- ++ (-1, 0) -- cycle;

    \filldraw[blue!30!white] (-3, -2) --++ (5.25, 0) -- ++ (0, 4) -- ++ (-5.25, 0)--cycle;

    \filldraw[white] (2.125, -1.35) -- ++ (0.125, 0) -- ++ (0, 0.125) -- ++ (-0.125, 0) -- cycle;
    \filldraw[white] (2.125, -1.05) -- ++ (0.125, 0) -- ++ (0, 0.125) -- ++ (-0.125, 0) -- cycle;

    \filldraw[white] (2.125, -0.35) -- ++ (0.125, 0) -- ++ (0, 0.125) -- ++ (-0.125, 0) -- cycle;
    \filldraw[white] (2.125, -0.05) -- ++ (0.125, 0) -- ++ (0, 0.125) -- ++ (-0.125, 0) -- cycle;

    \filldraw[white] (2.125, 0.65) -- ++ (0.125, 0) -- ++ (0, 0.125) -- ++ (-0.125, 0) -- cycle;
    \filldraw[white] (2.125, 0.95) -- ++ (0.125, 0) -- ++ (0, 0.125) -- ++ (-0.125, 0) -- cycle;
    \filldraw[yellow] (2.375, 1.15) -- ++ (0.125, 0) -- ++ (0, 0.125) -- ++ (-0.125, 0) -- cycle;

    \filldraw[red!30!white] (2.125, 1.65) -- ++ (0.125, 0) -- ++ (0, 0.125) -- ++ (-0.125, 0) -- cycle;
    \end{scope}
    \draw[] (-3, -2) -- ++ (6, 0) -- ++ (0, 4) -- ++ (-6, 0) -- cycle;
        
    \end{scope}
    \begin{scope}[yshift=-12cm, xshift=7cm]
    \node[] () at (0, -2.5) {(h)};
    \begin{scope}
    \clip (-3, -2) -- ++ (6, 0) -- ++ (0, 4) -- ++ (-6, 0) -- cycle;
    \filldraw[blue!30!white] (3, 2) -- ++ (-1, 0) -- ++ (0, -0.75) -- ++ (1,0) -- cycle;
    \filldraw[red!30!white] (3, 1.25) -- ++ (-2, 0) -- ++ (0, -0.9) -- ++ (2,0) -- cycle;
    
    \filldraw[blue!30!white] (2, -2) -- ++ (0.5,0) -- ++ (0,0.7) -- ++ (-1, 0) -- cycle;
    \filldraw[blue!30!white] (2, -1) -- ++ (0.5,0) -- ++ (0,0.7) -- ++ (-1, 0) -- cycle;
    \filldraw[blue!30!white] (2, -0) -- ++ (0.5,0) -- ++ (0,0.7) -- ++ (-1, 0) -- cycle;
    \filldraw[blue!30!white] (2, 1) -- ++ (0.5,0) -- ++ (0,0.7) -- ++ (-1, 0) -- cycle;
    \filldraw[blue!30!white] (2, 2) -- ++ (0.5,0) -- ++ (0,0.7) -- ++ (-1, 0) -- cycle;

    \filldraw[blue!30!white] (-3, -2) --++ (5.25, 0) -- ++ (0, 4) -- ++ (-5.25, 0)--cycle;

    \filldraw[white] (2.125, -1.35) -- ++ (0.125, 0) -- ++ (0, 0.125) -- ++ (-0.125, 0) -- cycle;
    \filldraw[white] (2.125, -1.05) -- ++ (0.125, 0) -- ++ (0, 0.125) -- ++ (-0.125, 0) -- cycle;

    \filldraw[white] (2.125, -0.35) -- ++ (0.125, 0) -- ++ (0, 0.125) -- ++ (-0.125, 0) -- cycle;
    \filldraw[white] (2.125, -0.05) -- ++ (0.125, 0) -- ++ (0, 0.125) -- ++ (-0.125, 0) -- cycle;

    \filldraw[red!30!white] (2.125, 0.65) -- ++ (0.125, 0) -- ++ (0, 0.125) -- ++ (-0.125, 0) -- cycle;
    \filldraw[red!30!white] (2.125, 0.95) -- ++ (0.125, 0) -- ++ (0, 0.125) -- ++ (-0.125, 0) -- cycle;
    \filldraw[red!30!white] (2.375, 1.15) -- ++ (0.125, 0) -- ++ (0, 0.125) -- ++ (-0.125, 0) -- cycle;

    \filldraw[yellow] (2.375, 0.25) -- ++ (0.125, 0) -- ++ (0, 0.125) -- ++ (-0.125, 0) -- cycle;

    \end{scope}
    \draw[] (-3, -2) -- ++ (6, 0) -- ++ (0, 4) -- ++ (-6, 0) -- cycle;
        
    \end{scope}

    \begin{scope}[yshift=-12cm, xshift=14cm]
    \node[] () at (0, -2.5) {(i)};
    \draw[fill=blue!30!white] (-3, -2) -- ++ (6, 0) -- ++ (0, 4) -- ++ (-6, 0) -- cycle;
    \end{scope}
    \end{tikzpicture}
    \caption{Description of the channel that builds up a 2D long-range entangled global state. In (a), we obtain the reduced density matrix on the red region. In (b)-(i), each figure represents an extending map. The red and yellow regions correspond to $C$ and $E$ of the extending map, respectively. The blue region is (together with the yellow region) the support of the reduced density matrix in the previous step, which can be viewed as a subset of the purifying system.}
    \label{fig:lre_buildup}
\end{figure}

\begin{figure}[h]
    \centering
    \begin{tikzpicture}[line width=1pt]
        \begin{scope}[xshift=0.0cm, yshift=0.0cm, scale=0.65]
            \begin{scope}
            \clip (-3, -2) -- ++ (6, 0) -- ++ (0, 4) -- ++ (-6, 0) -- cycle;
            \filldraw[blue!30!white] (-3, -2) -- ++ (1, 0) -- ++ (0, 1) -- ++ (-1, 0) -- cycle;
            \filldraw[yellow] (-2.25, -1.25) -- ++ (0.25, 0) -- ++ (0, 0.25) -- ++ (-0.25, 0) -- cycle;
            \filldraw[red!30!white] (-3, -2+1) -- ++ (1, 0) -- ++ (0, 1) -- ++ (-1, 0) -- cycle;
            \end{scope}
        \draw[] (-3, -2) -- ++ (6, 0) -- ++ (0, 4) -- ++ (-6, 0) -- cycle;
        \draw[style=dashed] (-3, -2) -- ++ (1.2, 0) -- ++ (0, 2.4) -- ++ (-1.2, 0) -- cycle;
        \draw[style=dashed] (-1.8, -2) -- (5.34, -2.4);
        \draw[style=dashed] (-1.8, 0.4) -- (5.34, 1.8);
        \end{scope}
        \begin{scope}[xshift=7.0cm, yshift=0.0cm, scale=0.7]
            \draw[fill=blue!30!white] (-3, -2.5) -- ++ (2, 0) -- ++ (0, 2) -- ++ (-2, 0);
            \draw[fill=red!30!white] (-3, -0.5) -- ++ (2, 0) -- ++ (0, 2) -- ++ (-2, 0);
            \draw[fill=yellow] (-3+1.1, -1.25) -- ++ (0.9, 0) -- ++ (0, 0.75) -- ++ (-0.9, 0) -- cycle;
            \draw[] (-0.25-0.75, -0.8) -- ++ (0.75, 0) -- ++ (0, 2.8) -- (-0.25-2.75, 2.0);
            \draw[line width=1.5pt] (-0.25, -2.5) -- ++ (-2.75, 0) -- ++ (0, 5.0);

            \node[] () at (-3+1.6, -0.9) {\small $E$};
            \node[] () at (-2, -1.75) {$B$};
            \node[] () at (-2, 0.5) {$C$};
            \node[] () at (-0.6, 0.5) {$D$};
            \node[] () at (-0.45, -1.85) {$A$};
        \end{scope}
        \begin{scope}[xshift=0.0cm, yshift=-4.5cm, scale=0.65]
            \begin{scope}
                \clip (-3, -2) -- ++ (6, 0) -- ++ (0, 4) -- ++ (-6, 0) -- cycle;
            
                \filldraw[red!30!white] (-2, -2) -- ++ (0.5,0) -- ++ (0,0.7) -- ++ (-1, 0) -- cycle;
                \filldraw[red!30!white] (-2, -1) -- ++ (0.5,0) -- ++ (0,0.7) -- ++ (-1, 0) -- cycle;
                \filldraw[red!30!white] (-2, -0) -- ++ (0.5,0) -- ++ (0,0.7) -- ++ (-1, 0) -- cycle;
                \filldraw[red!30!white] (-2, 1) -- ++ (0.5,0) -- ++ (0,0.7) -- ++ (-1, 0) -- cycle;
                \filldraw[red!30!white] (-2, 2) -- ++ (0.5,0) -- ++ (0,0.7) -- ++ (-1, 0) -- cycle;
                
                \filldraw[blue!30!white] (-2.5, -2.5) -- ++ (0.75,0) -- ++ (0,0.7) -- ++ (-1, 0) -- cycle;
                \filldraw[blue!30!white] (-2.5, -1.5) -- ++ (0.75,0) -- ++ (0,0.7) -- ++ (-1, 0) -- cycle;
                \filldraw[blue!30!white] (-2.5, -0.5) -- ++ (0.75,0) -- ++ (0,0.7) -- ++ (-1, 0) -- cycle;
                \filldraw[blue!30!white] (-2.5, 0.5) -- ++ (0.75,0) -- ++ (0,0.7) -- ++ (-1, 0) -- cycle;
                \filldraw[blue!30!white] (-2.5, 1.5) -- ++ (0.75,0) -- ++ (0,0.7) -- ++ (-1, 0) -- cycle;
                
                \filldraw[blue!30!white] (-3, -2) -- ++ (1, 0) -- ++ (0, 1) -- ++ (-1, 0) -- cycle;
                \filldraw[blue!30!white] (-2.25, -1.25) -- ++ (0.25, 0) -- ++ (0, 0.25) -- ++ (-0.25, 0) -- cycle;
                \filldraw[blue!30!white] (-3, -2+1) -- ++ (1, 0) -- ++ (0, 1) -- ++ (-1, 0) -- cycle;
                \filldraw[blue!30!white] (-2.25, -0.25) -- ++ (0.25, 0) -- ++ (0, 0.25) -- ++ (-0.25, 0) -- cycle;
                \filldraw[blue!30!white] (-3, -2+2) -- ++ (1, 0) -- ++ (0, 1) -- ++ (-1, 0) -- cycle;
                \filldraw[blue!30!white] (-2.25, 0.75) -- ++ (0.25, 0) -- ++ (0, 0.25) -- ++ (-0.25, 0) -- cycle;
                \filldraw[blue!30!white] (-3, -2+3) -- ++ (1, 0) -- ++ (0, 1) -- ++ (-1, 0) -- cycle;
            
                \filldraw[red!30!white] (-2.125, -1.85) -- ++ (0.125, 0) -- ++ (0, 0.125) -- ++ (-0.125, 0) -- cycle;
                \filldraw[red!30!white] (-2.125, -1.55) -- ++ (0.125, 0) -- ++ (0, 0.125) -- ++ (-0.125, 0) -- cycle;
            
                \filldraw[red!30!white] (-2.125, -0.85) -- ++ (0.125, 0) -- ++ (0, 0.125) -- ++ (-0.125, 0) -- cycle;
                \filldraw[red!30!white] (-2.125, -0.55) -- ++ (0.125, 0) -- ++ (0, 0.125) -- ++ (-0.125, 0) -- cycle;
            
                \filldraw[red!30!white] (-2.125, 0.15) -- ++ (0.125, 0) -- ++ (0, 0.125) -- ++ (-0.125, 0) -- cycle;
                \filldraw[red!30!white] (-2.125, 0.45) -- ++ (0.125, 0) -- ++ (0, 0.125) -- ++ (-0.125, 0) -- cycle;
            
                \filldraw[red!30!white] (-2.125, 1.15) -- ++ (0.125, 0) -- ++ (0, 0.125) -- ++ (-0.125, 0) -- cycle;
                \filldraw[red!30!white] (-2.125, 1.45) -- ++ (0.125, 0) -- ++ (0, 0.125) -- ++ (-0.125, 0) -- cycle;

                \filldraw[yellow] (-1.875, -1.35) -- ++ (0.125, 0) -- ++ (0, 0.125) -- ++ (-0.125, 0) -- cycle;
                \filldraw[yellow] (-1.875, -1.05) -- ++ (0.125, 0) -- ++ (0, 0.125) -- ++ (-0.125, 0) -- cycle;
            
                \filldraw[yellow] (-1.875, -0.35) -- ++ (0.125, 0) -- ++ (0, 0.125) -- ++ (-0.125, 0) -- cycle;
                \filldraw[yellow] (-1.875, -0.05) -- ++ (0.125, 0) -- ++ (0, 0.125) -- ++ (-0.125, 0) -- cycle;
            
                \filldraw[yellow] (-1.875, 0.65) -- ++ (0.125, 0) -- ++ (0, 0.125) -- ++ (-0.125, 0) -- cycle;
                \filldraw[yellow] (-1.875, 0.95) -- ++ (0.125, 0) -- ++ (0, 0.125) -- ++ (-0.125, 0) -- cycle;
            
                \filldraw[yellow] (-1.875, 1.65) -- ++ (0.125, 0) -- ++ (0, 0.125) -- ++ (-0.125, 0) -- cycle;
            \end{scope}
        \draw[] (-3, -2) -- ++ (6, 0) -- ++ (0, 4) -- ++ (-6, 0) -- cycle;
        \draw[style=dashed] (-3+0.75, -2+1-0.15) -- ++ (1.0, 0) -- ++ (0, 1.0) -- ++ (-1.0, 0) -- cycle;
        \draw[style=dashed] (-2+0.75, -2+1-0.15) -- (5.34, -2.4);
        \draw[style=dashed] (-2+0.75, -2+1-0.15+1) -- (5.34, 1.8);
        \end{scope}
        \begin{scope}[xshift=7.0cm, yshift=-4.5cm, scale=0.7]
            \draw[draw=none, fill=blue!30!white] (-3+1.75, -1.25-1.0-0.5) -- (-3+1.75, 1.25-0.5+1.0+0.5) -- (-3+1.75-2.0, 1.25-0.5+1.0+0.5) -- (-3+1.75-2.0, -1.25-1.0-0.5) -- cycle;
            \draw[fill=red!30!white] (-3, -1.25) -- ++ (0.75, 0) -- ++ (0, 0.5) -- ++ (1, 0) -- ++ (0, -1) -- ++ (1, 0) -- ++ (0, 3) -- ++ (-1, 0) -- ++ (0, -1) -- ++ (-1, 0) -- ++ (0, 0.5) -- ++ (-0.75, 0) -- ++ (0, -0.75) -- ++ (0.75, 0) -- ++ (0, -0.5) -- ++ (-0.75, 0) -- cycle;
            \draw[fill=yellow] (-3+1.75, -1.25) -- ++ (0, -1.0) -- ++ (-0.75, 0) -- ++ (0, 1.0) -- cycle;
            \draw[] (-3+1.75, -1.25-1.0) -- ++ (0, -0.5);
            \draw[fill=yellow] (-3+1.75, 1.25-0.5) -- ++ (0, 1.0) -- ++ (-0.75, 0) -- ++ (0, -1.0) -- cycle;
            \draw[] (-3+1.75, 1.25-0.5+1.0) -- ++ (0, 0.5);
            \draw[] (-3+1.75, -1.25-0.75) -- ++ (1.7, 0) -- ++ (0, 3.5) -- ++ (-1.7, 0);

            \node[] () at (-2.8, -2.0) {$B$};
            \node[] () at (-2.8, 1.55) {$B$};
            \node[] () at (-1.1, -0.25) {$C$};
            \node[] () at (0.1, -0.25) {$D$};
            \node[] () at (-1.6, 1.25) {$E$};
            \node[] () at (-1.6, -1.75) {$E$};
            \node[] () at (-0.6, -2.5) {$A$};
        \end{scope}
    \end{tikzpicture}
    \caption{A close-up view of steps in Fig.~\ref{fig:lre_buildup}. Notice that the partitions are equivalent to those in Fig.~\ref{fig:lre_conditions} (thus equivalent to Fig.~\ref{fig:circuit_stability_partitions_1}).}
    \label{fig:lre_stability}
\end{figure}
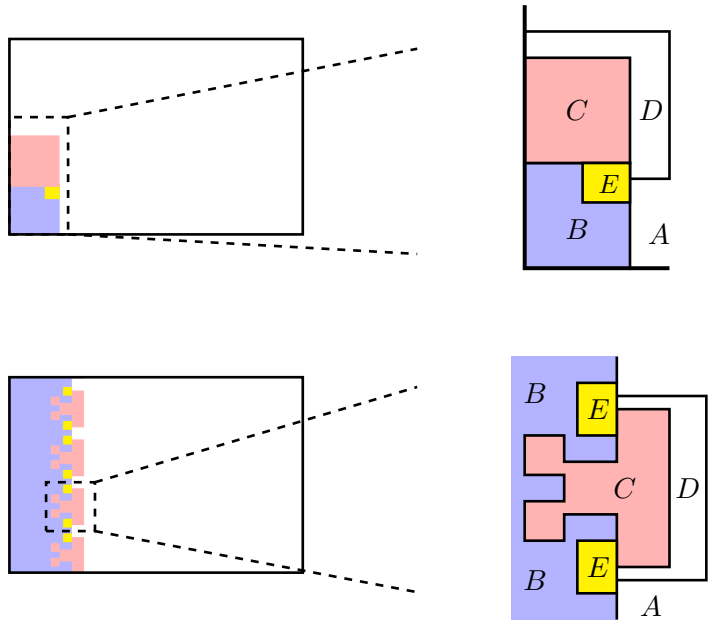

As in Section~\ref{subsec:invertible}, our construction is the composition of a series of extending maps. The whole process is described by Fig.~\ref{fig:lre_buildup} (and a close-up view is given by Fig.~\ref{fig:lre_stability}). In (a)-(c), we are applying a sequence of extending maps associated with Fig.~\ref{fig:lre_conditions}(d). In (d)-(f), we are applying a sequence of extending maps associated with Fig.~\ref{fig:lre_conditions}(a). The steps in (g)-(i) involves the subsystems in Fig.~\ref{fig:lre_conditions} (d)-(e). Each such extending map can be  purified, resulting in a unitary state-preparation circuit.

For a $\ell\times \ell$ system, the depth of the whole process is $O(\ell)$, which is in fact optimal~\cite{haah2016invariant}. Therefore, the state preparation circuit obtained this way has an asymptotically optimal depth. In fact, a more fine-grained statement can be made. Many known long-range entangled states can be prepared by a \emph{sequential circuit}~\cite{kim2017holographic,Chen2024}, which has a depth of at most $c\ell$ for some constant $c$ that can depend on the details of the circuit. Our method allows us to learn a circuit that is not just optimal in $\ell$, but also in $c$ (up to some universal constant). We expand upon this point further below.

Without loss of generality, suppose we have a long-range entangled states that satisfy the constraints in Fig.~\ref{fig:lre_conditions} over $\mathcal{O}(1)$-size disks. Following Fig.~\ref{fig:lre_buildup}, we see that there is a depth-$\mathcal{O}(\ell)$ circuit that prepares the global state. If we apply a depth-$d$ circuit to this state, by the discussion in Section~\ref{subsec:stability_circuit}, Fig.~\ref{fig:lre_conditions} holds only for sufficiently large disks -- the ones whose radius is $\Theta(d)$. Because each such extending map acts on a $e^{\mathcal{O}(d^2)}$-dimensional Hilbert space, its circuit depth can be trivially bounded by $e^{\mathcal{O}(d^2)}$. Following the construction of the state preparation circuit depicted in Fig.~\ref{fig:lre_buildup}, we would obtain a circuit whose depth is $\mathcal{O}(e^{\mathcal{O}(d^2)} \ell)$. 

This naive upper bound of $\mathcal{O}(e^{\mathcal{O}(d^2)} \ell)$ can be upgraded to $\mathcal{O}(d\ell)$. This is because the post-circuit extending map is a composition of some constant-depth circuits and an extending map of the state prior to applying the circuit, over a ball of radius $\Theta(d)$ [Section~\ref{subsec:stability_circuit}]. The latter can be further decomposed into $\mathcal{O}(d)$ layers of commuting channels, using a construction similar to Fig.~\ref{fig:lre_buildup} but only applied to a $\mathcal{O}(d)$-size ball. Altogether, the depth of the extending map becomes $\mathcal{O}(d)$. Therefore, the overall state preparation circuit has a depth of $\mathcal{O}(d\ell)$.

\section{Circuit learning algorithms}\label{sec:learning}

In this Section, we describe our circuit learning algorithms, leveraging the circuit constructions in Section~\ref{sec:extendibility_quantum_phase}. The high-level overview of our algorithms can be summarized as follows [Fig.~\ref{fig:algorithm_overview}]. First, we use local state tomography to learn reduced density matrices of balls whose radii we initially set to $r=1$.  (\textbf{Note}: For our purpose, the approximation error of the tomography should be estimated in terms of the Bures distance, not the more standard trace distance. For this reason, our result relies on Ref.~\cite{Flammia2024quantumchisquared}.) Second, from these reduced density matrices, we learn the extending maps [Section~\ref{subsec:learning_extending_map}]. In particular, as a byproduct of this analysis, we estimate the degree to which each reduced density matrix is extendible [Definition~\ref{definition:locally_extendible_approximate}]. If the combined error of the extending maps is smaller than the desired approximation error [Proposition~\ref{prop:error_bound}], the algorithm outputs a circuit composed of the extending maps. Otherwise, the radius $r$ is incremented and the process repeats.

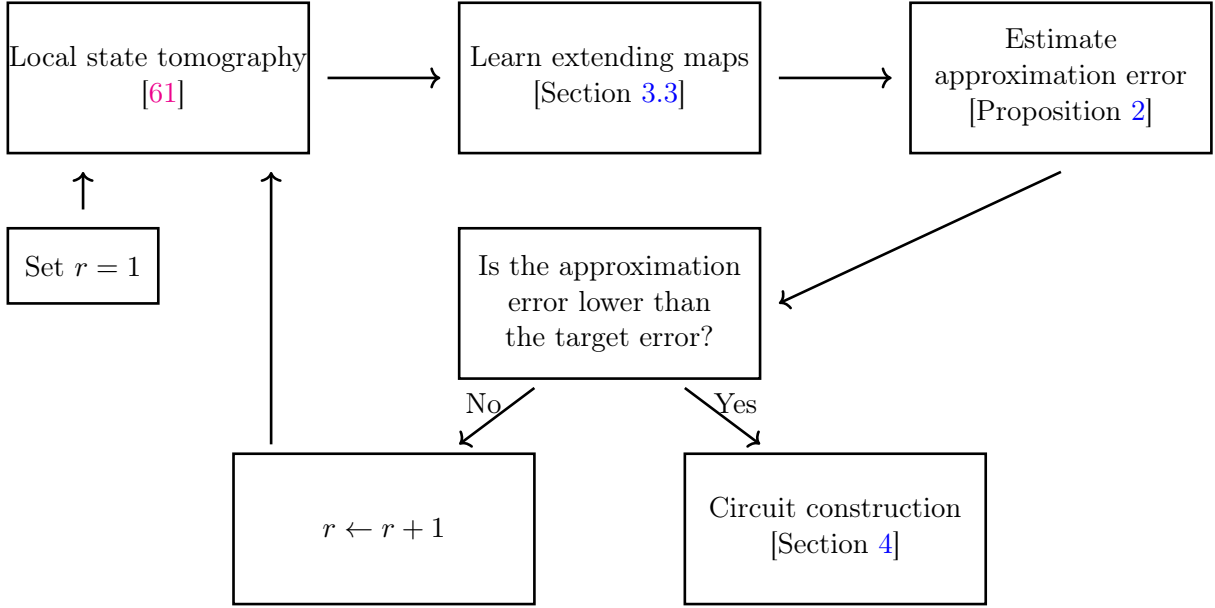
\begin{figure}[h]
    \centering
    \begin{tikzpicture}[line width=1pt]
        \draw[] (-2, -1) -- ++ (4, 0) -- ++ (0, 2) -- ++ (-4, 0) -- cycle;
        \node[align=center] () at (0,0) {Local state tomography\\~\cite{Flammia2024quantumchisquared}};
        \begin{scope}[xshift= 6cm]
            \draw[] (-2, -1) -- ++ (4, 0) -- ++ (0, 2) -- ++ (-4, 0) -- cycle;
        \node[align=center] () at (0,0) {Learn extending maps \\ {[Section~\ref{subsec:learning_extending_map}]}};
        \end{scope}
        \begin{scope}[xshift= 12cm]
            \draw[] (-2, -1) -- ++ (4, 0) -- ++ (0, 2) -- ++ (-4, 0) -- cycle;
        \node[align=center] () at (0,0) {Estimate \\ approximation error \\ {[Proposition~\ref{prop:error_bound}]}};
        \end{scope}

        \draw[->] (12, -1.25) -- (8.25, -3);

        \begin{scope}[xshift= 6cm, yshift=-3cm]
            \draw[] (-2, -1) -- ++ (4, 0) -- ++ (0, 2) -- ++ (-4, 0) -- cycle;
        \node[align=center] () at (0,0) {Is the approximation\\ error lower than\\ the target error?};
        \end{scope}

        \draw[->] (5, -4.125) -- (4, -4.875);
        \draw[->] (7, -4.125) -- (8, -4.875);

        \node[] () at (4.325, -4.325) {No};
        \node[] () at (7.675, -4.325) {Yes};

        \begin{scope}[xshift= 3cm, yshift=-6cm]
            \draw[] (-2, -1) -- ++ (4, 0) -- ++ (0, 2) -- ++ (-4, 0) -- cycle;
        \node[align=center] () at (0,0) {$r\leftarrow r+1$};
        \end{scope}
        \begin{scope}[xshift= 9cm, yshift=-6cm]
            \draw[] (-2, -1) -- ++ (4, 0) -- ++ (0, 2) -- ++ (-4, 0) -- cycle;
        \node[align=center] () at (0,0) {Circuit construction \\ {[Section~\ref{sec:extendibility_quantum_phase}]}};
        \end{scope}

        \begin{scope}[yshift=-3cm]
            \draw[] (-2, 0) -- ++ (2, 0) -- ++ (0, 1) -- ++ (-2, 0) -- cycle;
        \node[align=center] () at (-1,0.5) {Set $r=1$};
        \end{scope}

        \draw[->]  (-1, 1.25-3) -- (-1, -1.25);

        \draw[->] (2.25, 0) -- ++ (1.5,0);
        \draw[->] (8.25, 0) -- ++ (1.5,0);

        \draw[->] (1.5, -4.875) -- (1.5, -1.25);
    \end{tikzpicture}
    \caption{Overview of the circuit learning algorithm. }
    \label{fig:algorithm_overview}
\end{figure}

Because each individual extending map used in the circuits can be inferred from reduced density matrices, if we know the reduced density matrices exactly, we can obtain the the exact circuit. However, in practice the reduced density matrix can be obtained up to some finite precision. Thus the aim of this Section is to quantify those finite-precision effects. 

To facilitate this analysis, we shall use the following convention. Throughout this Section, we shall view the circuits described in Section~\ref{sec:extendibility_quantum_phase} as a composition of the individual extending maps. Recall that each extending map was associated with a region $BCDE$. Because these regions will differ in different iterations, we shall label them as $B_kC_kD_kE_k$, where $k\in \{0, \ldots, N \}$. A purification of these regions shall be labeled as $A_kF_k$, where $A_k$ is chosen so that $A_kB_kE_k$ is the support of the density matrix constructed so far and $F_k$ is an abstract purifying system. From our setup, it is evident that $A_kB_kC_k = A_{k+1}B_{k+1}E_{k+1}$. We shall set $A_0B_0E_0 = \varnothing$, and $A_{N}B_NC_N$ to be the entire system. 

Without loss of generality, suppose we have learned the reduced density matrix $\rho_{B_kC_kD_kE_k}$ up to a precision of $\delta_k$ (in Bures distance). Using the methods in Section~\ref{subsec:learning_extending_map}, it is possible to obtain $2\delta_k$-extending maps, with a complexity at most polynomial in $1/\delta_k$ and $e^{\mathcal{O}(d^2)}$. Thus, our main goal is to bound the trace distance between the global state $\rho$ and the state obtained from a composition of these extending maps. The following proposition establishes this bound.

\begin{proposition}
\label{prop:error_bound}
    Let $\rho$ be a density matrix and $\Phi_{k}: \mathcal{D}_{B_kE_k} \to \mathcal{D}_{B_kC_k}$ be an $\epsilon_k$-extending map for $\rho_{B_kC_kD_kE_k}$.
    \begin{equation}
        \|\rho - \Phi_N\circ \ldots \circ \Phi_0(1) \|_1\leq \sum_{k=0}^{N-1} \epsilon_k.
    \end{equation}
\end{proposition}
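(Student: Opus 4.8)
The plan is a hybrid (telescoping) argument along a chain of states that interpolate between the built-up state $\Phi_N\circ\cdots\circ\Phi_0(1)$ and $\rho$, one extending map at a time. I would carry the argument out in the Bures distance — which is a genuine metric, is non-increasing under channels, and by the lemma following Definition~\ref{definition:realizing_extension_approximate} makes the $\epsilon_k$-extending property insensitive to the choice of purification — and pass to the trace distance at the end via Eq.~\eqref{eq:fuchs}.

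First I would record the two inputs. (a) Choosing the purifying system of $\rho_{B_kC_kD_kE_k}$ to be $A_kF_k$ (with $F_k$ the abstract purifier), the $\epsilon_k$-extending hypothesis reads $\mathcal{B}(\rho_{A_kF_kB_kC_k},\ \mathcal{I}_{A_kF_k}\otimes\Phi_k(\rho_{A_kF_kB_kE_k}))\le\epsilon_k$; tracing out $F_k$ and using monotonicity of $\mathcal{B}$ under the partial trace gives
\[
\mathcal{B}\bigl(\rho_{A_kB_kC_k},\ \mathcal{I}_{A_k}\otimes\Phi_k(\rho_{A_kB_kE_k})\bigr)\le\epsilon_k .
\]
(b) The build-up convention $A_kB_kC_k=A_{k+1}B_{k+1}E_{k+1}$, together with $A_0B_0E_0=\varnothing$ and $A_NB_NC_N=\Lambda$, means the reference reduction $\rho_{A_kB_kC_k}$ is exactly the input expected by $\mathcal{I}_{A_{k+1}}\otimes\Phi_{k+1}$, and $\rho_{A_0B_0E_0}$ is the trivial one-dimensional state, so $\mathcal{I}_{A_0}\otimes\Phi_0(\rho_{A_0B_0E_0})=\Phi_0(1)$.

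Next I would define the hybrids $H_j:=\Phi_N\circ\cdots\circ\Phi_j(\rho_{A_jB_jE_j})$ for $j=0,\dots,N$ (each $\Phi_i$ acting as $\mathcal{I}_{A_i}\otimes\Phi_i$), together with $H_{N+1}:=\rho$; by (b) one has $H_0=\Phi_N\circ\cdots\circ\Phi_0(1)$. For each $j$, the states $H_j$ and $H_{j+1}$ are obtained by applying the common channel $\Phi_N\circ\cdots\circ\Phi_{j+1}$ to $\mathcal{I}_{A_j}\otimes\Phi_j(\rho_{A_jB_jE_j})$ and to $\rho_{A_{j+1}B_{j+1}E_{j+1}}=\rho_{A_jB_jC_j}$ respectively; monotonicity of $\mathcal{B}$ under that common channel reduces $\mathcal{B}(H_j,H_{j+1})$ to the left-hand side of the displayed inequality in (a), hence $\mathcal{B}(H_j,H_{j+1})\le\epsilon_j$ (for $j=N$ this is (a) with $k=N$ directly). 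The triangle inequality along $H_0,\dots,H_{N+1}$ then gives $\mathcal{B}\bigl(\Phi_N\circ\cdots\circ\Phi_0(1),\rho\bigr)\le\sum_k\epsilon_k$, and Eq.~\eqref{eq:fuchs} relates this to the trace-distance bound in the statement.

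I expect the only place requiring genuine care is the bookkeeping of the nested, overlapping supports: one must check that each $H_j$ is a well-defined density matrix on the correct subsystem, that the outermost block of channels can legitimately be stripped off by monotonicity (which relies precisely on the identification $A_kB_kC_k=A_{k+1}B_{k+1}E_{k+1}$), and that after stripping it off the residual comparison is exactly the $\epsilon_j$-extending inequality rather than something weaker. I would therefore state and verify the support identities and the boundary cases $A_0B_0E_0=\varnothing$, $A_NB_NC_N=\Lambda$ explicitly before running the telescoping. The remaining ingredients — the triangle inequality, channel monotonicity of $\mathcal{B}$, purification-independence of the extending property, and the Bures-to-trace conversion — are all already available in the excerpt, so no new tools are needed.
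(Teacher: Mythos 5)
Your proposal is correct and is essentially the paper's own argument: the paper runs the same telescoping as a forward recursion $\mu_{k+1}\le\epsilon_{k+1}+\mu_k$ on $\mu_k=\mathcal{B}(\rho_{A_kB_kC_k},\tilde\rho_{A_kB_kC_k})$, using exactly your three ingredients (triangle inequality, monotonicity of the Bures distance under channels and partial trace to discard the purifier $F_k$, and the extending property), whereas you unroll it into an explicit hybrid chain with monotonicity applied to the whole tail of channels at once. The only cosmetic discrepancies are the paper's own: the stated sum should run to $k=N$ rather than $N-1$ given that $\Phi_0$ already contributes $\epsilon_0$, and the final Bures-to-trace conversion via Eq.~\eqref{eq:fuchs} carries a constant factor that both you and the paper suppress.
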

\begin{proof}
Let $\tilde{\rho}_{A_kB_kC_k}$ be a state prepared from a composition of $\Phi_0, \ldots, \Phi_k$. Define $\mu_k:= \mathcal{B}(\rho_{A_kB_kC_k}, \tilde{\rho}_{A_kB_kC_k})$. Note the following inequality:
\begin{equation}
\begin{aligned}
\mu_{k+1} &= \bures{\rho_{A_{k+1}B_{k+1}C_{k+1}}}{ \mathcal{I}_{A_{k+1}}\otimes \Phi_{k+1}(\tilde{\rho}_{A_{k+1}B_{k+1}E_{k+1}})} \\
    &\leq 
    \bures{\rho_{A_{k+1}B_{k+1}C_{k+1}}}{ \mathcal{I}_{A_{k+1}}\otimes \Phi_{k+1}(\rho_{A_{k+1}B_{k+1}E_{k+1}})} + \mu_k\\
    &\leq \bures{\rho_{A_{k+1}B_{k+1}C_{k+1}F_{k+1}}}{ \mathcal{I}_{A_{k+1}}\otimes \Phi_{k+1}(\rho_{A_{k+1}B_{k+1}E_{k+1}F_{k+1}})} + \mu_k\\
    &\leq \epsilon_{k+1} + \mu_k.
\end{aligned}
\end{equation}
In the second line, we used the triangle inequality. In the third line, we used the fact that Bures distance is non-increasing under a partial trace. The last line follows from the definition of the extending map. Repeatedly applying this argument for all $k$, and then using the relation between Bures distance and trace distance [Eq.~\eqref{eq:fuchs}], the claim follows.
\end{proof}

In the setups discussed in Section~\ref{sec:extendibility_quantum_phase}, by assumption $\rho_{B_kC_kD_kE_k}$ is extendible from $B_kE_k$ to $B_kC_k$ for all $k$. Moreover, their extending maps can be learned using the method in Lemma~\ref{lemma:extend_from_samples_fast} or Lemma~\ref{lemma:extend_from_samples_slow}. Therefore, by learning these extending maps with small enough error, one can obtain a circuit that prepares the given state. This leads to our main results of this paper [Theorem~\ref{thm:main_invertible} and~\ref{thm:main_lre}].

\mainone*
\begin{proof}
From the discussion in Section~\ref{subsec:invertible}, it follows that the given state can be obtained from $k$ layers of extending maps, each of which can be obtained from a depth-$d$ circuit. To achieve the desired error, it suffices to obtain the extending maps with an error of $\mathcal{O}(\epsilon/n)$ with success probability at least $1-\mathcal{O}(1/n)$ [Proposition~\ref{prop:error_bound}]. 

The extending maps can be obtained from copies of reduced density matrices, using the methods of Lemma~\ref{lemma:extend_from_samples_fast} or~\ref{lemma:extend_from_samples_slow}. Either way, because the density matrices are supported on Hilbert spaces of dimensions of at most $e^{\mathcal{O}(d^k)}$, they can be obtained from $\widetilde{\mathcal{O}}(n^2 \log (n)/\epsilon^2  \cdot e^{\mathcal{O}(d^k)})$ single-copy measurements, using the state tomography method in Ref.~\cite{Flammia2024quantumchisquared}. Because the set of these density matrices can be partitioned into $\Theta(k)$ subsets, each of which consisting of density matrices of disjoint supports, it suffices to have $\widetilde{\mathcal{O}}(n^2 \log (n)/\epsilon^2 \cdot ke^{\mathcal{O}(d^k)})$ copies of the state. 

Depending on whether one uses  Lemma~\ref{lemma:extend_from_samples_fast} or~\ref{lemma:extend_from_samples_slow}, one obtains a different upper bound on the circuit depth of the learned circuit. If one were to use Lemma~\ref{lemma:extend_from_samples_fast}, the depth of each extending map is at most $e^{\mathcal{O}(d^k)}$. If one uses Lemma~\ref{lemma:extend_from_samples_slow}, the depth is at most $\mathcal{O}(d)$, using the standard method of compiling a general unitary into one- and two-qubit gates~\cite{nielsen2010quantum}. The classical running time  follows straightforwardly from  Lemma~\ref{lemma:extend_from_samples_fast} and~\ref{lemma:extend_from_samples_slow}.
\end{proof}

Nearly the same proof applies to long-range entangled states using the circuit described in Section~\ref{subsec:lre_circuit}. 
\begin{theorem}
\label{thm:main_lre}
Consider an unknown $n$-qubit long-range entangled state on a disk with a uniform boundary condition [Section~\ref{subsec:lre}], with the following promise: the state is obtained by applying a depth-$d$ circuit to a state in which the conditions in Fig.~\ref{fig:lre_conditions} are satisfied for $\mathcal{O}(1)$-size disks everywhere. 

There is an algorithm that learns a circuit that prepares the same state up to a trace distance of $\epsilon$ with high probability from $\widetilde{O}(n^2 \log (n)/\epsilon^2 \cdot e^{\mathcal{O}(d^2)})$ samples of the state, with the following complexity:
    \begin{itemize}
        \item One can obtain a circuit of depth $\mathcal{O}(e^{\mathcal{O}(d^2)}\sqrt{n})$ using $\widetilde{\mathcal{O}}(n^3/\epsilon^2)$ classical running time, or
        \item One can obtain a circuit of depth $\mathcal{O}(d\sqrt{n})$ using $\mathcal{O}((n/\epsilon)^{\mathcal{O}(d^2)})$ classical running time.
    \end{itemize}
\end{theorem}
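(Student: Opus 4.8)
The proof of Theorem~\ref{thm:main_lre} mirrors that of Theorem~\ref{thm:main_invertible}, with the circuit construction of Section~\ref{subsec:lre_circuit} replacing that of Section~\ref{subsec:invertible}. The plan is as follows. By the promise, the target state $U|\psi\rangle$ is obtained by applying a depth-$d$ circuit to a state $|\psi\rangle$ satisfying the extendibility conditions of Fig.~\ref{fig:lre_conditions} over $\mathcal{O}(1)$-size disks. By the stability results of Section~\ref{subsec:stability_circuit} (Theorems~\ref{theorem:circuit_stability_1} and~\ref{theorem:circuit_stability_3}, applied to the bulk and boundary partitions of Fig.~\ref{fig:lre_conditions}), the state $U|\psi\rangle$ satisfies the same extendibility conditions, now over disks of radius $\Theta(d)$. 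Hence the Hilbert space dimension of each region $B_kC_kD_kE_k$ is $e^{\mathcal{O}(d^2)}$. Following Fig.~\ref{fig:lre_buildup}, the global state is built up by composing $N = \mathcal{O}(\sqrt{n})$ extending maps (more precisely, the number of ``layers'' of commuting extending maps is $\mathcal{O}(\sqrt{n})$, and within each layer the maps act on disjoint bounded regions).

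First I would invoke Proposition~\ref{prop:error_bound}: if each extending map is learned to error $\epsilon_k$, the total trace-distance error is at most $\sum_k \epsilon_k$. Since there are $\mathcal{O}(n)$ individual extending maps in total, it suffices to learn each one to error $\mathcal{O}(\epsilon/n)$, with failure probability $\mathcal{O}(1/n)$ each so that a union bound gives overall success with high probability. Second, I would apply the sampling lemmas of Section~\ref{subsec:learning_extending_map}: Lemma~\ref{lemma:extend_from_samples_fast} (via the SDP, using the tomography of Ref.~\cite{Flammia2024quantumchisquared} in Bures distance) or Lemma~\ref{lemma:extend_from_samples_slow} (via brute-force search over circuits of the known architecture). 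Either way, since each $\rho_{B_kC_kD_kE_k}$ has dimension $e^{\mathcal{O}(d^2)}$, learning one extending map to error $\mathcal{O}(\epsilon/n)$ with failure probability $\mathcal{O}(1/n)$ costs $\widetilde{\mathcal{O}}(\text{poly}(e^{\mathcal{O}(d^2)})\cdot (n/\epsilon)^2) = \widetilde{\mathcal{O}}((n/\epsilon)^2 e^{\mathcal{O}(d^2)})$ single-copy measurements. Grouping the $\mathcal{O}(n)$ density matrices into $\mathcal{O}(\sqrt{n})$ (or more crudely, $\mathcal{O}(1)$ per layer, but $\mathcal{O}(\sqrt n)$ layers suffice for the stated bound) batches of disjoint support, each batch estimable from a single pool of copies, gives the claimed sample complexity $\widetilde{O}(n^2\log n/\epsilon^2 \cdot e^{\mathcal{O}(d^2)})$. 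Here I should double-check that the batching structure of Fig.~\ref{fig:lre_buildup} indeed allows disjoint-support grouping into $\widetilde{O}(1)$ batches per ``round,'' which is the only bookkeeping subtlety; this follows because the extending maps within each step of Fig.~\ref{fig:lre_buildup}(a)–(i) act on bounded regions that can be made disjoint by a constant coloring.

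For the circuit depth: using Lemma~\ref{lemma:extend_from_samples_fast}, each extending map is realized by a unitary on an $e^{\mathcal{O}(d^2)}$-dimensional space, compiled into depth $e^{\mathcal{O}(d^2)}$; composing $\mathcal{O}(\sqrt n)$ layers gives depth $\mathcal{O}(e^{\mathcal{O}(d^2)}\sqrt n)$, with classical running time $\widetilde{\mathcal{O}}(n^3/\epsilon^2)$ dominated by the SDP solves. Alternatively, using Lemma~\ref{lemma:extend_from_samples_slow} together with the observation at the end of Section~\ref{subsec:lre_circuit} — that the post-circuit extending map decomposes into $\mathcal{O}(d)$ layers of commuting channels (a constant-depth dressing composed with an extending map of the pre-circuit state over an $\mathcal{O}(d)$-radius ball, itself refinable into $\mathcal{O}(d)$ layers à la Fig.~\ref{fig:lre_buildup}) — each extending map has depth $\mathcal{O}(d)$, yielding total depth $\mathcal{O}(d\sqrt n)$ at the cost of the brute-force running time $\mathcal{O}((n/\epsilon)^{\mathcal{O}(d^2)})$ from Lemma~\ref{lemma:extend_from_samples_slow} (the exponent $\mathcal{O}(d^2)$ coming from $M = e^{\mathcal{O}(d^2)}$ gates in the architecture). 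The optimality of the $\sqrt n$ factor is noted separately via Ref.~\cite{haah2016invariant} and needs no proof here.

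The main obstacle — really the only non-routine point — is verifying that the stability theorems of Section~\ref{subsec:stability_circuit} genuinely cover \emph{all} the local extendibility conditions appearing in Fig.~\ref{fig:lre_conditions}, including the boundary partitions Fig.~\ref{fig:lre_conditions}(d)–(e), under the depth-$d$ circuit $U$ (which may act near the boundary). Theorems~\ref{theorem:circuit_stability_1} and~\ref{theorem:circuit_stability_3} are stated for bulk-type partitions; the boundary cases require the analogous past-light-cone argument with the physical boundary playing a passive role, which is morally identical but must be checked. Once that is in hand — and Section~\ref{subsec:lre} already asserts the extendibility conditions are robust to constant-depth circuits with sufficiently thick regions — the rest is an assembly of Proposition~\ref{prop:error_bound}, Lemmas~\ref{lemma:extend_from_samples_fast}–\ref{lemma:extend_from_samples_slow}, and the depth accounting of Section~\ref{subsec:lre_circuit}, exactly parallel to the proof of Theorem~\ref{thm:main_invertible}.
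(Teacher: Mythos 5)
Your proposal is correct and follows essentially the same route as the paper, which proves Theorem~\ref{thm:main_lre} by stating that ``nearly the same proof'' as Theorem~\ref{thm:main_invertible} applies with the circuit construction of Section~\ref{subsec:lre_circuit}: stability of extendibility under the depth-$d$ circuit yields $e^{\mathcal{O}(d^2)}$-dimensional local regions, Proposition~\ref{prop:error_bound} reduces the global error to per-map errors of $\mathcal{O}(\epsilon/n)$, Lemmas~\ref{lemma:extend_from_samples_fast}--\ref{lemma:extend_from_samples_slow} give the two sample/runtime/depth trade-offs, and the $\mathcal{O}(d)$-depth refinement of each extending map from the end of Section~\ref{subsec:lre_circuit} gives the $\mathcal{O}(d\sqrt{n})$ depth. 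Your remark that the boundary partitions of Fig.~\ref{fig:lre_conditions}(d)--(e) require a boundary analogue of Theorems~\ref{theorem:circuit_stability_1} and~\ref{theorem:circuit_stability_3} is a fair observation about a step the paper asserts in Section~\ref{subsec:lre} without an explicit proof, but it does not change the argument.
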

\noindent
Though we focused on the 2D long-range entangled state for concreteness, the same proof technique works in higher dimensions under an appropriately generalized notion of long-range entanglement~\cite{DraftOther}. 

\begin{remark}
Thanks to Proposition~\ref{prop:error_bound}, the circuits obtained from Theorem~\ref{thm:main_invertible} and~\ref{thm:main_lre} are outputs a circuit that prepares the given state correctly, even if the parameter $d$ is not known a priori. 
\end{remark}

\section{Discussion}
\label{sec:discussion}

We have proposed polynomial-time classical algorithms for learning the state preparation circuit from many copies of a given state. Our algorithms are applicable to both invertible states in any number of dimensions and long-range entangled states (specifically, the states obtained from the string-net model~\cite{Levin2005} by applying a constant-depth circuit) in two dimensions. 

Our algorithm has a simple structure, involving tomography of local reduced density matrices followed by a simple classical post-processing [Fig.~\ref{fig:algorithm_overview}]. We thus expect it to be useful for a broad range of applications. In the near term, we may be able to use these algorithms to learn the state preparation circuit directly from many-body quantum states prepared by quantum simulators~\cite{Zhang2017,Ebadi2021,Scholl2021}, using classical shadows~\cite{Huang2020,Hu2023,Tran2023}. Moreover, our method can be used as a flexible black-box method to convert efficient descriptions~\cite{carleo2017solving,Zaletel2020,Vidal2008,becca2017quantum} of quantum states to quantum circuits.

Looking ahead, there are several natural generalizations of our work that warrants a further study. First, while we have considered only constant-depth quantum circuits applied to product states or topologically ordered states~\cite{Levin2005}, in the studies of many-body quantum system, it is more natural to consider a quasi-local unitary~\cite{bachmann2012automorphic}. We anticipate such an extension to be possible by using the Lieb-Robinson bound~\cite{hastings2010localityquantumsystems}. Second, going to dimensions three or higher, we remark that there are gapped phases of matter that do not necessarily satisfy the conditions that were put forward in Ref.~\cite{DraftOther}, such as fractons~\cite{Haah2011}. Restricting attention to higher-dimensional ``liquid'' topological phases  that satisfy the higher-dimensional conditions of Ref.~\cite{DraftOther}, we expect our techniques extend straightforwardly.  However, beyond these liquid phases, the existence of a circuit-learning method remains an open problem. Third, we remark that the form of the circuit we used for long-range entangled states [Fig.~\ref{fig:lre_buildup}] is in fact a sequential circuit~\cite{Chen2024}. Thus a natural question is whether the state preparation circuit of  any state prepared by a sequential circuit can be also efficiently learned. Lastly, we remark that the overall structure of our state preparation circuit for invertible states [Fig.~\ref{fig:sre_build_up}] has many similarities with the circuit used for Gibbs state preparation~\cite{brandao2019finite}. Thus it is natural to ask whether our method can be extended to the problem of learning state preparation circuit for Gibbs states.

At a more fundamental level, it would be desirable to have a better information-theoretic understanding of our notion of extendibility, which was originally put forward in Ref.~\cite{DraftOther}. This is a relaxation of the more well-studied quantum Markov chain condition, which has played a fundamental role in quantum information theory~\cite{Petz1988,Fawzi2015}. However, unlike the quantum Markov chain condition, our notion of extendibility remains largely unexplored. For instance, to the best of our knowledge, there is no simple entropic characterization of extendible states, even though there is one known for quantum Markov chains~\cite{Petz1988}. Moreover, it is not even clear if there should be a generalization of Fawzi-Renner inequality~\cite{Fawzi2015}. These are the questions that we leave for future work.

\vspace{0.5cm}

\noindent 
\textbf{Related work}
Recently, Liu and Landau also came up with an efficient algorithm for learning state preparation circuits for states prepared by constant-depth circuits, using a different method~\cite{landau2024learningquantumstatesprepared}.

\section*{Acknowledgement}
We thank Yunchao Liu and Zeph Landau for helpful discussions, as well as coordination of our submissions. We thank Alexei Kitaev for the early discussions that  ultimately led to this work. We thank David P\'erez-Garc\'ia, Amanda Young, and Soonwon Choi for helpful discussions. IK and HK acknowledge support from NSF under award number PHY-2337931. DR acknowledges support by the Simons Foundation under grant 376205.

\bibliographystyle{myhamsplain2}
\bibliography{ref}

\appendix

\section{Stability of local extendibility: a simple case}
\label{appendix:stability_simple}

In this Section, we prove Theorem~\ref{theorem:circuit_stability_2}, a simple case of Theorem~\ref{theorem:circuit_stability_1} for the partition in Fig.~\ref{fig:sre_extendibility}(b). Theorem~\ref{theorem:circuit_stability_2} is simply a concrete version of the discussion in Section~\ref{subsec:invertible} concerning the partition $\Lambda = ABC$ of Fig.~\ref{fig:sre_extendibility}(b), equivalent to the partition in Fig.~\ref{fig:circuit_stability_partitions_2}. Our goal here is to prove that if the pre-circuit reduced density matrix on $BC$ is extendible from $B$ to $BC$, then the post-circuit reduced density matrix on $B'C'$ is extendible from $B'$ to $B'C'$. 

Notice that the partition in Fig.~\ref{fig:circuit_stability_partitions_2} is just a variant of Fig.~\ref{fig:circuit_stability_partitions_1} in which: (i) $E$ and $E'$ are empty sets, and (ii) $BD$ and $B'D'$ are merged into singular $B$ and $B'$ respectively. Consequently, the proof of Theorem~\ref{theorem:circuit_stability_2} is very similar to the proof of Theorem~\ref{theorem:circuit_stability_1}.

\begin{figure}[h]
    \centering
    \begin{tikzpicture}[line width=1pt, scale=1.0]
        \begin{scope}[xshift=0cm]
        \draw[] (-2, -1.5) -- ++ (4, 0) -- ++ (0, 3) -- ++ (-4, 0) -- cycle;
        \draw[] (0,0) circle (0.5cm);
        \draw[] (0,0) circle (1.0cm);

        \node[] () at (0,0) {$C$};
        \node[] () at (0.75cm,0) {$B$};
        \node[] () at (-1.5, 0) {$A$};
        \node[] () at (0, -2.0cm) {(a)};
        \end{scope}
        \begin{scope}[xshift=4.5cm]
        \draw[] (-2, -1.5) -- ++ (4, 0) -- ++ (0, 3) -- ++ (-4, 0) -- cycle;
        \draw[] (0,0) circle (0.4cm);
        \draw[style=dashed] (0,0) circle (0.5cm);
        \draw[] (0,0) circle (1.1cm);
        \draw[style=dashed] (0,0) circle (1.0cm);

        \node[] () at (0,0) {$C'$};
        \node[] () at (0.725cm,0) {$B'$};
        \node[] () at (-1.5, 0) {$A'$};
        \node[] () at (0, -2.0cm) {(b)};
        \end{scope}
    \end{tikzpicture}
    \caption{(a) The pre-circuit partition $ABC$. $A$ is a purifying system of $BC$. \\ (b) The topologically equivalent post-circuit partition $A'B'C'$. The individual subsystems are: $B' := B(d)$, $C' := C(-d)$, and $A'$ is a purifying system of $B'C'$. (Recall notation in Section~\ref{subsec:notation}.) For comparison, the partition $ABCDE$ is overlaid with dashed lines.}
    \label{fig:circuit_stability_partitions_2}
\end{figure}
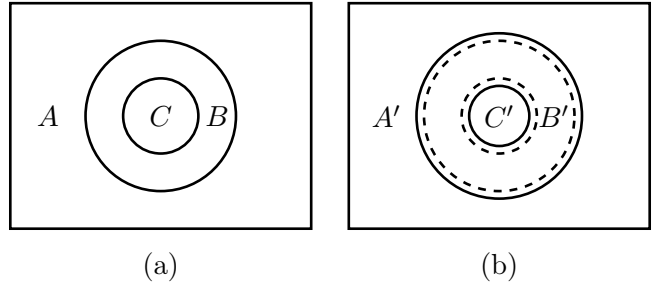

\begin{theorem}\label{theorem:circuit_stability_2}
    Let us partition a system $\Lambda$ as in Fig.~\ref{fig:circuit_stability_partitions_2}(a)-(b) for $\Lambda = ABC$, and as in Fig.~\ref{fig:circuit_stability_partitions_2}(b) for $\Lambda = A'B'C'$. Let $\rho$ be a state on $\Lambda$, $U$ be a geometrically local depth-$d$ circuit supported on $\Lambda$, and $\rho' := U \rho U^{\dagger}$ be the post-circuit state. If $\rho_{BC}$ is locally extendible from $B$ to $BC$, then $\rho_{B'C'}'$ is locally extendible from $B'$ to $B'C'$.
\end{theorem}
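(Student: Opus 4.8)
The plan is to mirror the proof of Theorem~\ref{theorem:circuit_stability_1} almost verbatim, specializing to the degenerate partition of Fig.~\ref{fig:circuit_stability_partitions_2}. As observed in the text, this partition is obtained from the one in Fig.~\ref{fig:circuit_stability_partitions_1} by setting $E = E' = \varnothing$ and merging $D$ into $B$ (so $B' := B(d)$, $C' := C(-d)$). Accordingly, the ``$\widetilde D$'' of Theorem~\ref{theorem:circuit_stability_1} becomes a sub-annulus of $B$, and the hypothesis ``$\rho_{BC\widetilde D E}$ extendible from $BE$ to $BC$'' collapses to the stated hypothesis ``$\rho_{BC}$ extendible from $B$ to $BC$'' (note that here $BC$ is already the full support being acted on, so no extra reduced density matrix appears). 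First I would introduce the partition precisely, identify the purifying system $A$ of $BC$, and define $B',C',A'$ as in the figure.

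Next I would build the extending channel $\Gamma : \mathcal{D}_{B'} \to \mathcal{D}_{B'C'}$ as a composition $\Gamma_3 \circ \Gamma_2 \circ \Gamma_1$, exactly as in Theorem~\ref{theorem:circuit_stability_1}. Concretely: (i) $\Gamma_1(\cdot) = U_B^\dagger(\cdot)U_B$, where $U_B$ is the past light-cone of $B$ under $U$; by Eq.~\eqref{eq:past_light_cone_prop1} this replaces $U$ with a unitary $V_{AC}$ supported on $AC$, and since $U_B$ is supported away from $C'$ we get $\mathcal{I}_{A'}\otimes\Gamma_1(\rho'_{A'B'}) = \Tr_{C'}(V_{AC}\rho V_{AC}^\dagger)$. (ii) $\Gamma_2$ first applies $\Tr_{C\setminus C'}$, combining with the existing $\Tr_{C'}$ to give $\Tr_C(V_{AC}\rho V_{AC}^\dagger)$; writing $V_{AC} = V_C V_A$ (valid since $V_{AC}$ is constant depth and $A$, $C$ are far apart up to a buffer inside $B$), this equals $\Tr_C(V_A \rho V_A^\dagger)$, whose reduced density matrix on $BC$ equals $\rho_{BC}$ since $V_A$ is supported in the complement of $BC$. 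Then $\Gamma_2$ applies the given extending map $\Phi:\mathcal{D}_B\to\mathcal{D}_{BC}$ of $\rho_{BC}$; by the same cyclic-trace manipulation as in Eq.~\eqref{eq:end_of_step2_part2} (with $D\setminus\widetilde D$ replaced by the relevant buffer region inside $B$) this yields a state equal to $\rho_{ABC}$ conjugated by a unitary supported outside $B'C'$, traced down appropriately. (iii) $\Gamma_3$ applies the past light-cone $U_{B'C'\cap BC}$ followed by $\Tr_{\text{(buffer)}}$, restoring $U_{A'B'C'}$ up to a unitary supported outside $A'B'C'$, which is killed by the final partial trace; by Eq.~\eqref{eq:past_light_cone_prop2} the result is $\rho'_{A'B'C'}$.

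Since every step is the $E=\varnothing$ specialization of a step already carried out in Theorem~\ref{theorem:circuit_stability_1}, I would not re-draw the full figures but simply note the dictionary $E,E'\to\varnothing$, $D\to$ (a buffer annulus inside $B$), and cite the corresponding displayed equations. The conclusion $\rho'_{A'B'} \mapsto \mathcal{I}_{A'}\otimes\Gamma(\rho'_{A'B'}) = \rho'_{A'B'C'}$ then establishes that $\rho'_{B'C'}$ is locally extendible from $B'$ to $B'C'$.

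The main obstacle—such as it is—is purely bookkeeping: one must check that the geometry still works when $B$ absorbs $D$, i.e.\ that after removing the depth-$d$ past light-cones and re-applying them there is enough ``thickness'' in $B$ (equivalently $B' = B(d) \supset B$) to absorb the light-cone spreading, so that the decomposition $V_{AC} = V_C V_A$ into pieces supported on $C$ and on the complement of $BC$ is legitimate. This is exactly the role played by $\widetilde D$ in Theorem~\ref{theorem:circuit_stability_1}, and it is guaranteed here by the hypothesis that $B$ is an annulus of thickness $> 2d$ (cf.\ the sketch in Section~\ref{subsec:invertible}); no new idea is needed. Indeed, the shortcut argument in Section~\ref{subsec:invertible}—$I(A:C)_\rho=0$ pre-circuit, hence $I(A':C')_{\rho'}=0$ post-circuit because a depth-$d$ circuit cannot correlate regions separated by distance $>2d$, whence the Petz map from $B'$ to $B'C'$ realizes the extension—gives an independent and shorter route to the same conclusion, and I would remark on it.
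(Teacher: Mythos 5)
Your proposal is correct and follows essentially the same route as the paper's proof in Appendix~\ref{appendix:stability_simple}: specialize the three-step construction of Theorem~\ref{theorem:circuit_stability_1} to $E=E'=\varnothing$ with $D$ absorbed into $B$, i.e.\ apply $U_B^{\dagger}$, trace down to $\Tr_C$ and apply $\Phi$ to recover $V_A\rho V_A^{\dagger}$, then re-apply the past light-cone of $BC$ (your $U_{B'C'\cap BC}$ coincides with the paper's $U_{BC}$ since $B'C'\supset BC$, and the trailing partial trace is vacuous because $A'B'C'=\Lambda$). Your closing remark about the shorter $I(A:C)=0$/Petz-map argument matches the alternative route the paper itself flags in Section~\ref{subsec:invertible}.
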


\begin{proof}
The proof is essentially identical to the proof of Theorem~\ref{theorem:circuit_stability_1} -- in three steps, we construct a channel $\Gamma:\mathcal{D}_{B'}\rightarrow \mathcal{D}_{B'C'}$ that satisfies:
\begin{equation}\label{eq:circuit_stability_2}
    \rho' = \mathcal{I}_{A'} \otimes \Gamma(\rho_{A'B'}').
\end{equation}
\noindent The only distinction of this theorem from Theorem~\ref{theorem:circuit_stability_1} is that $E$ and $E'$ are empty sets, and that $BD$ and $B'D'$ have been merged into singular $B$ and $B'$ respectively. Therefore, our proof is merely a simpler version of the proof of Theorem~\ref{theorem:circuit_stability_1}. Corresponding graphic illustrations are provided in Figs.~\ref{fig:circuit_stability_2_1}-\ref{fig:circuit_stability_2_3}.

\begin{figure}[h]
    \centering
    \begin{tikzpicture}[line width=1pt, scale=0.9]
        \begin{scope}[xshift=0cm]
        \draw[fill=yellow] (-2, -1.5) -- ++ (4, 0) -- ++ (0, 3) -- ++ (-4, 0) -- cycle;
        \draw[] (0,0) circle (0.4cm);
        \draw[style=dashed] (0,0) circle (0.5cm);
        \draw[] (0,0) circle (1.1cm);
        \draw[style=dashed] (0,0) circle (1.0cm);

        \node[] () at (0,0) {$C'$};
        \node[] () at (0.725cm,0) {$B'$};
        \node[] () at (-1.5, 0) {$A'$};
        \node[] () at (0, -2.0cm) {(a)};
        \end{scope}
        \begin{scope}[xshift=4.5cm]
        \draw[fill=yellow] (-2, -1.5) -- ++ (4, 0) -- ++ (0, 3) -- ++ (-4, 0) -- cycle;
        \draw[fill=green] (0,0) circle (1.1cm);
        \draw[style=dashed] (0,0) circle (1.0cm);
        \draw[style=dashed] (0,0) circle (0.5cm);
        \draw[fill=yellow] (0,0) circle (0.4cm);

        \node[] () at (0,0) {$C'$};
        \node[] () at (0.725cm,0) {$B'$};
        \node[] () at (-1.5, 0) {$A'$};
        \node[] () at (0, -2.0cm) {(b)};
        \end{scope}
        \begin{scope}[xshift=9.0cm]
        \draw[fill=yellow] (-2, -1.5) -- ++ (4, 0) -- ++ (0, 3) -- ++ (-4, 0) -- cycle;
        \draw[draw=none, fill=red!60!white] (0,0) circle (1.1cm);
        \draw[fill=white] (0,0) circle (1.0cm);
        \draw[fill=red!60!white] (0,0) circle (0.5cm);
        \draw[draw=none, fill=yellow] (0,0) circle (0.4cm);

        \node[] () at (0,0) {$C$};
        \node[] () at (0.75cm,0) {$B$};
        \node[] () at (-1.5, 0) {$A$};
        \node[] () at (0, -2.0cm) {(c)};
        \end{scope}
    \end{tikzpicture}
    \caption{The first step $\Gamma_1$ for Theorem~\ref{theorem:circuit_stability_2} illustrated. (a) $\rho := U \rho U^{\dagger}$ over $A'B'C' = ABC$. The yellow shading depicted application of the gates in $U$. (b) The past light-cone $U_{B}$, depicted in green. Note $B' \supset B$ as its support. (c) The past light-cone inverse $U_{B}^{\dagger}$ applied, resulting in state $V_{AC} \rho V_{AC}^{\dagger}$. The red shading indicates the gates supported on $B'$ not cancelled out by $U_{B}^{\dagger}$.}
    \label{fig:circuit_stability_2_1}
\end{figure}
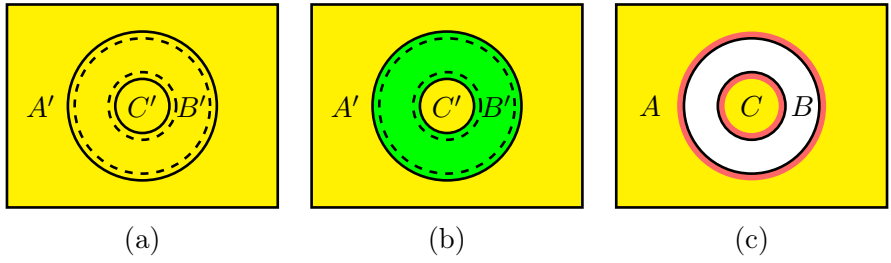

As our first step, we apply $U_{B}^{\dagger}$, the inverse of the past light-cone of $B$. It cancels out all the gates in $U$ supported on $B$, and leaves behind some unitary $V_{AC}$ supported on $AC$. Since $U_{B}$ is supported in the complement of $C'$, our first step amounts to:
\begin{equation}\label{eq:end_of_step_1-2}
\begin{aligned}
        \mathcal{I}_{A'} \otimes \Gamma_1 \left( \rho_{A'B'}' \right) &= U_{B}^{\dagger} \rho_{A'B'}' U_{B} \\
            &= \Tr_{C'} \left( V_{AC} \rho V_{AC}^{\dagger} \right).
\end{aligned}
\end{equation}

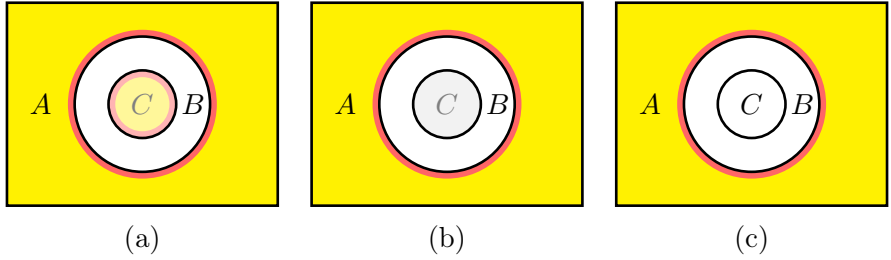
\begin{figure}[h]
    \centering
    \begin{tikzpicture}[line width=1pt, scale=0.9]
        \begin{scope}[xshift=0cm]
        \draw[fill=yellow] (-2, -1.5) -- ++ (4, 0) -- ++ (0, 3) -- ++ (-4, 0) -- cycle;
        \draw[draw=none, fill=red!60!white] (0,0) circle (1.1cm);
        \draw[fill=white] (0,0) circle (1.0cm);
        \draw[fill=red!30!white] (0,0) circle (0.5cm);
        \draw[draw=none, fill=yellow!50!white] (0,0) circle (0.4cm);

        \node[text=gray] () at (0,0) {$C$};
        \node[] () at (0.75cm,0) {$B$};
        \node[] () at (-1.5, 0) {$A$};
        \node[] () at (0, -2.0cm) {(a)};
        \end{scope}
        \begin{scope}[xshift=4.5cm]
        \draw[fill=yellow] (-2, -1.5) -- ++ (4, 0) -- ++ (0, 3) -- ++ (-4, 0) -- cycle;
        \draw[draw=none, fill=red!60!white] (0,0) circle (1.1cm);
        \draw[fill=white] (0,0) circle (1.0cm);
        \draw[fill=gray!10!white] (0,0) circle (0.5cm);
        \draw[draw=none, fill=gray!10!white] (0,0) circle (0.4cm);

        \node[text=gray] () at (0,0) {$C$};
        \node[] () at (0.75cm,0) {$B$};
        \node[] () at (-1.5, 0) {$A$};
        \node[] () at (0, -2.0cm) {(b)};
        \end{scope}
        \begin{scope}[xshift=9.0cm]
        \draw[fill=yellow] (-2, -1.5) -- ++ (4, 0) -- ++ (0, 3) -- ++ (-4, 0) -- cycle;
        \draw[draw=none, fill=red!60!white] (0,0) circle (1.1cm);
        \draw[fill=white] (0,0) circle (1.0cm);
        \draw[fill=white] (0,0) circle (0.5cm);

        \node[] () at (0,0) {$C$};
        \node[] () at (0.75cm,0) {$B$};
        \node[] () at (-1.5, 0) {$A$};
        \node[] () at (0, -2.0cm) {(c)};
        \end{scope}
    \end{tikzpicture}
    \caption{The second step $\Gamma_2$ for Theorem~\ref{theorem:circuit_stability_2} illustrated. (a) The partial trace $\Tr_{C \setminus C'} \circ \Tr_{C'} = \Tr_{C}$ applied; partial traces are depicted with graying the traced-out subsystems. (b) By the cyclic property of the partial trace, the reduced density matrix is indistinguishable from $\rho_{BC}$. (c) The pre-circuit $\rho_{BC}$ is extendible from $B$ to $BC$, so we apply its extending channel $\Phi$. Doing so, we obtain the state $V_{A} \rho V_{A}^{\dagger}$.}
    \label{fig:circuit_stability_2_2}
\end{figure}

Our second step consists of two operations. First, we apply partial trace $\Tr_{C \setminus C'}$, which combines with the partial trace in Eq.~\eqref{eq:end_of_step_1-2} into $\Tr_{C}$. Under $\Tr_{C}$, the reduced density matrix on $BC$ becomes identical to $\rho_{BC}$, the pre-circuit extendible reduced density matrix. Thus, as our second step, we apply the extending channel $\Phi:\mathcal{D}_{B} \rightarrow \mathcal{D}_{BC}$ of $\rho_{BC}$:
\begin{equation}\label{eq:end_of_step_2-2}
\begin{aligned}
        \mathcal{I}_{A'} \otimes \left( \Gamma_2 \circ \Gamma_1 \right) \left( \rho_{A'B'}' \right) 
            &= \mathcal{I}_{A} \otimes \Phi \Bigl[ \Tr_{C \setminus C'} \left( \Tr_{C'} \left( V_{AC} \rho V_{AC}^{\dagger} \right) \right) \Bigr] \\
            &= V_{A} \Phi \left(\rho_{AB}\right) V_{A}^{\dagger} \\
            &= V_{A} \rho V_{A}^{\dagger},
\end{aligned}
\end{equation}
\noindent where $V_{A}$ is a unitary supported on $A$.

\begin{figure}[h]
    \centering
    \begin{tikzpicture}[line width=1pt, scale=0.9]
        \begin{scope}[xshift=0cm]
        \draw[fill=yellow] (-2, -1.5) -- ++ (4, 0) -- ++ (0, 3) -- ++ (-4, 0) -- cycle;
        \draw[fill=red!60!white] (0,0) circle (1.1cm);
        \draw[style=dashed, fill=white] (0,0) circle (1.0cm);
        \draw[style=dashed, fill=white] (0,0) circle (0.5cm);
        \draw[] (0,0) circle (0.4cm);

        \node[] () at (0,0) {$C'$};
        \node[] () at (0.725cm,0) {$B'$};
        \node[] () at (-1.5, 0) {$A'$};
        \node[] () at (0, -2.0cm) {(a)};
        \end{scope}
        \begin{scope}[xshift=4.5cm]
        \draw[fill=yellow] (-2, -1.5) -- ++ (4, 0) -- ++ (0, 3) -- ++ (-4, 0) -- cycle;
        \draw[fill=green] (0,0) circle (1.1cm);
        \draw[style=dashed] (0,0) circle (1.0cm);
        \draw[style=dashed] (0,0) circle (0.5cm);
        \draw[] (0,0) circle (0.4cm);

        \node[] () at (0,0) {$C'$};
        \node[] () at (0.725cm,0) {$B'$};
        \node[] () at (-1.5, 0) {$A'$};
        \node[] () at (0, -2.0cm) {(b)};
        \end{scope}
        \begin{scope}[xshift=9.0cm]
        \draw[fill=yellow] (-2, -1.5) -- ++ (4, 0) -- ++ (0, 3) -- ++ (-4, 0) -- cycle;
        \draw[] (0,0) circle (0.4cm);
        \draw[style=dashed] (0,0) circle (0.5cm);
        \draw[] (0,0) circle (1.1cm);
        \draw[style=dashed] (0,0) circle (1.0cm);

        \node[] () at (0,0) {$C'$};
        \node[] () at (0.725cm,0) {$B'$};
        \node[] () at (-1.5, 0) {$A'$};
        \node[] () at (0, -2.0cm) {(c)};
        \end{scope}
    \end{tikzpicture}
    \caption{The third step $\Gamma_3$ for Theorem~\ref{theorem:circuit_stability_2} illustrated. (a) $V_{A} \rho V_{A}^{\dagger}$ over partition $A'B'C'$. (b) The past light-cone $U_{BC}$, depicted in green. (c) The past light-cone $U_{BC}$ applied. This is exactly the post-circuit state $\rho'$.}
    \label{fig:circuit_stability_2_3}
\end{figure}
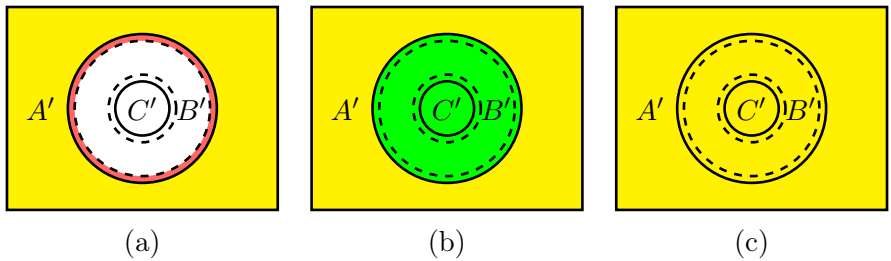

For our final step, notice that $V_{A} = U_{BC}^{\dagger}U$. Therefore, we simply apply the past light-cone $U_{BC}$ to our state and obtain our desired $\rho'$, completing the proof:
\begin{equation}
\begin{aligned}
        \mathcal{I}_{A'} \otimes \left( \Gamma_3 \circ \Gamma_2 \circ \Gamma_1 \right) \left( \rho_{A'B'}' \right) 
            &= U_{BC} V_{A} \rho V_{A}^{\dagger} U_{BC}^{\dagger} \\
            &= U_{BC}\left( U_{BC}^{\dagger}U \right) \rho \left(U_{BC}^{\dagger}U\right)^{\dagger} U_{BC}^{\dagger} \\
            &= \rho'.
\end{aligned}
\end{equation}

\end{proof}

\section{Proof details for Theorem~\ref{theorem:circuit_stability_3}}\label{appendix:circuit_stability_3}

In this Section, we provide more details about the proof of Theorem~\ref{theorem:circuit_stability_3}. The major steps of the proof are described schematically in Fig.~\ref{fig:circuit_stability_3}. The proof follows three major steps, similar to the proof of Theorem~\ref{theorem:circuit_stability_1}. (For convenience, Theorem~\ref{theorem:circuit_stability_3} is restated below, and its relevant partitions are redrawn in Fig.~\ref{fig:circuit_stability_partitions_3_redrawn}.)

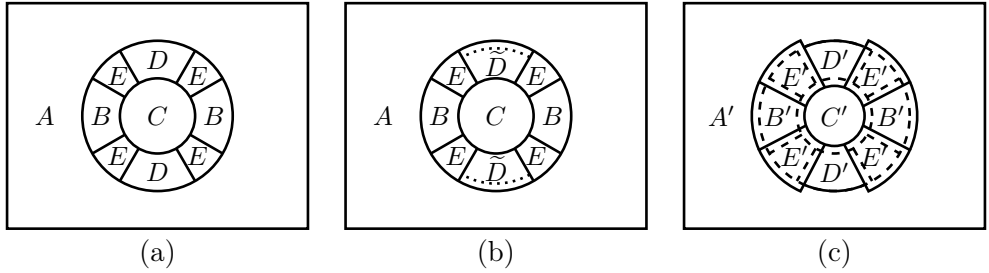
\begin{figure}[h]
\centering
\begin{tikzpicture}[line width=1pt]
    \begin{scope}[xshift=0.0cm]
    \draw[] (-2, -1.5) -- ++ (4, 0) -- ++ (0, 3) -- ++ (-4, 0) -- cycle;
    \draw[] (0,0) circle (0.5cm);
    \draw[] (0,0) circle (1cm);
    \draw[] (30:0.5cm) -- (30:1cm);
    \draw[] (150:0.5cm) -- (150:1cm);
    \draw[] (-30:0.5cm) -- (-30:1cm);
    \draw[] (-150:0.5cm) -- (-150:1cm);
    \draw[] (60:0.5cm) -- (60:1cm);
    \draw[] (120:0.5cm) -- (120:1cm);
    \draw[] (-60:0.5cm) -- (-60:1cm);
    \draw[] (-120:0.5cm) -- (-120:1cm);
    
    \node[] () at (90:0.75cm) {\small $D$};
    \node[] () at (0:0.75cm) {\small $B$};
    \node[] () at (0,0) {\small $C$};
    \node[] () at (180:0.75cm) {\small $B$};
    \node[] () at (270:0.75cm) {\small $D$};
    \node[] () at (45:0.75cm) {\small $E$};
    \node[] () at (-45:0.75cm) {\small $E$};
    \node[] () at (135:0.75cm) {\small $E$};
    \node[] () at (-135:0.75cm) {\small $E$};
    \node[] () at (-1.5, 0) {\small $A$};
    \node[below] () at (0, -1.5) {(a)};
    \end{scope}
    \begin{scope}[xshift=4.5cm]
    \draw[] (-2, -1.5) -- ++ (4, 0) -- ++ (0, 3) -- ++ (-4, 0) -- cycle;
    \draw[] (0,0) circle (0.5cm);
    \draw[] (0,0) circle (1cm);
    \draw[] (30:0.5cm) -- (30:1cm);
    \draw[] (150:0.5cm) -- (150:1cm);
    \draw[] (-30:0.5cm) -- (-30:1cm);
    \draw[] (-150:0.5cm) -- (-150:1cm);
    \draw[] (60:0.5cm) -- (60:1cm);
    \draw[] (120:0.5cm) -- (120:1cm);
    \draw[] (-60:0.5cm) -- (-60:1cm);
    \draw[] (-120:0.5cm) -- (-120:1cm);
    \draw[style=dotted] (60:0.9cm) arc(60:120:0.9cm);
    \draw[style=dotted] (-60:0.9cm) arc(-60:-120:0.9cm);
    
    \node[] () at (90:0.7cm) {\small $\widetilde{D}$};
    \node[] () at (0:0.75cm) {\small $B$};
    \node[] () at (0,0) {\small $C$};
    \node[] () at (180:0.75cm) {\small $B$};
    \node[] () at (270:0.7cm) {\small $\widetilde{D}$};
    \node[] () at (45:0.75cm) {\small $E$};
    \node[] () at (-45:0.75cm) {\small $E$};
    \node[] () at (135:0.75cm) {\small $E$};
    \node[] () at (-135:0.75cm) {\small $E$};
    \node[] () at (-1.5, 0) {\small $A$};
    \node[below] () at (0, -1.5) {(b)};
    \end{scope}
    \begin{scope}[xshift=9.0cm]
    \draw[] (-2, -1.5) -- ++ (4, 0) -- ++ (0, 3) -- ++ (-4, 0) -- cycle;
    \draw[style=dashed] (0,0) circle (0.5cm);
    \draw[] (0,0) circle (0.4cm);
    \draw[style=dashed] (0,0) circle (1cm);
    \draw[] (18.69:0.4cm) -- (24.29:1.1cm) arc(24.29:65.71:1.1cm) -- (71.31:0.4cm);
    \draw[] (-18.69:0.4cm) -- (-24.29:1.1cm) arc(-24.29:-65.71:1.1cm) -- (-71.31:0.4cm);
    \draw[] (180-18.69:0.4cm) -- (180-24.29:1.1cm) arc(180-24.29:180-65.71:1.1cm) -- (180-71.31:0.4cm);
    \draw[] (180+18.69:0.4cm) -- (180+24.29:1.1cm) arc(180+24.29:180+65.71:1.1cm) -- (180+71.31:0.4cm);
    \draw[] (24.29:1.1cm) arc(24.29:-24.29:1.1cm);
    \draw[] (180-24.29:1.1cm) arc(180-24.29:180+24.29:1.1cm);
    \draw[] (67:1.0cm) arc(67:180-67:1.0cm);
    \draw[] (-67:1.0cm) arc(-67:-180+67:1.0cm);
    \draw[style=dashed] (30:0.5cm) -- (30:1cm);
    \draw[style=dashed] (150:0.5cm) -- (150:1cm);
    \draw[style=dashed] (-30:0.5cm) -- (-30:1cm);
    \draw[style=dashed] (-150:0.5cm) -- (-150:1cm);
    \draw[style=dashed] (60:0.5cm) -- (60:1cm);
    \draw[style=dashed] (120:0.5cm) -- (120:1cm);
    \draw[style=dashed] (-60:0.5cm) -- (-60:1cm);
    \draw[style=dashed] (-120:0.5cm) -- (-120:1cm);
    
    \node[] () at (90:0.75cm) {\small $D'$};
    \node[] () at (0:0.75cm) {\small $B'$};
    \node[] () at (0,0) {\small $C'$};
    \node[] () at (180:0.75cm) {\small $B'$};
    \node[] () at (270:0.75cm) {\small $D'$};
    \node[] () at (45:0.75cm) {\small $E'$};
    \node[] () at (-45:0.75cm) {\small $E'$};
    \node[] () at (135:0.75cm) {\small $E'$};
    \node[] () at (-135:0.75cm) {\small $E'$};
    \node[] () at (-1.5, 0) {\small $A'$};
    \node[below] () at (0, -1.5) {(c)};
    \end{scope}
\end{tikzpicture}
\caption{Fig.~\ref{fig:circuit_stability_partitions_3} redrawn. (a) The pre-circuit partition $ABCDE$, equivalent to the one in Fig.~\ref{fig:extendible_state_examples}(d). $A$ is a purifying system of $BCDE$. (b) The same partition $ABCDE$, with subsystem $\widetilde{D} \subset D$ shown. $\widetilde{D}$ consists of all sites in $D$ that are away from the boundary between $A$ and $D$ by a distance of at least $d$. (c) The topologically equivalent post-circuit partition $A'B'C'D'E'$. The individual subsystems are: $B' := BE(d) \setminus E(d)$, $C' := C(-d)$, $D' := BCDE \setminus BE(d)C(-d)$, $E' := E(d)$, and $A'$ is a purifying system of $B'C'$. (Recall notation in Section~\ref{subsec:notation}.) For comparison, the partition $ABCDE$ is overlaid with dashed lines.}
\label{fig:circuit_stability_partitions_3_redrawn}
\end{figure}

\begin{theorem*}
    Theorem~\ref{theorem:circuit_stability_3} restated: Let us partition a system $\Lambda$ as in Fig.~\ref{fig:circuit_stability_partitions_3}(a)-(b) for $\Lambda = ABCDE$, and as in Fig.~\ref{fig:circuit_stability_partitions_3}(c) for $\Lambda = A'B'C'D'E'$. Let $\rho$ be a state on $\Lambda$, $U$ be a geometrically local depth-$d$ circuit supported on $\Lambda$, and $\rho' := U \rho U^{\dagger}$ be the post-circuit state. If $\rho_{BC\widetilde{D}E}$ is locally extendible from $BE$ to $BC$, then $\rho_{B'C'D'E'}'$ is locally extendible from $B'E'$ to $B'C'$.
\end{theorem*}

The first step is depicted by Fig.~\ref{fig:circuit_stability_3}(a)-(c). Fig.~\ref{fig:circuit_stability_3}(a) shows $\rho' := U \rho U^{\dagger}$ over $A'B'C'D'E'$. The yellow shading depicts application of gates in $U$. Fig.~\ref{fig:circuit_stability_3}(b) shows the past light-cone $U_{BE}$, depicted in green. Note $B'E' \supset BE$ as its support. Fig.~\ref{fig:circuit_stability_3}(c) is the state with the past light-cone inverse $U_{BE}^{\dagger}$ applied, resulting in state $V_{ACD} \rho V_{ACD}$, where $V_{ACD}$ is a unitary supported in $ACD$. The red shading indicates the gates supported on $B'E'$ not cancelled out by $U_{BE}^{\dagger}$. 

The second step is shown by Fig.~\ref{fig:circuit_stability_3}(d)-(f). Fig.~\ref{fig:circuit_stability_3}(d) is the partial trace $\Tr_{CD \setminus C'D'}$ applied, which combines with $\Tr_{C'D'}$ in $\rho_{A'B'E'}'$ into $\Tr_{CD \setminus C'D'} \circ \Tr_{C'D'} = \Tr_{CD}$; partial traces are depicted with graying the traced-out subsystems. Fig.~\ref{fig:circuit_stability_3}(e) shows that, under the partial trace, the reduced density matrix on $BC\widetilde{D}E$ is indistinguishable from $\rho_{BC\widetilde{D}E}$. The brown shading depicts the remaining gates on $D \setminus \widetilde{D}$. Fig.~\ref{fig:circuit_stability_3}(f) is the extending channel $\Phi$ of $\rho_{BC\widetilde{D}E}$ applied. We obtain the reduced density matrix $\Tr_{DE}\left(V_{AD\setminus\widetilde{D}}\rho V_{AD\setminus\widetilde{D}}^{\dagger}\right)$, where $V_{AD\setminus\widetilde{D}}$ is a unitary supported on $AD\setminus\widetilde{D}$. 

The final step is illustrated by Fig.~\ref{fig:circuit_stability_3}(g)-(i). Fig.~\ref{fig:circuit_stability_3}(g) shows the past light-cone $U_{B'C' \cap BC}$, depicted in green. Fig.~\ref{fig:circuit_stability_3}(h) is the state with $U_{B'C' \cap BC}$ applied. The state is now $\widetilde{V}_{D'E'}U_{A'B'C'} \rho U_{A'B'C'}^{\dagger}\widetilde{V}_{D'E'}$, where $\widetilde{V}_{D'E'}$ is an additional unitary supported on $D'E'$. Fig.~\ref{fig:circuit_stability_3}(i) is the partial trace $\Tr_{D'E' \setminus DE}$ applied. The reduced density matrix on $A'B'C'$ is indistinguishable from our desired $\rho_{A'B'C'}'$, so our proof is complete.

\begin{figure}[h]
\centering
\begin{tikzpicture}[line width=1pt, scale=0.9]
    \begin{scope}[xshift=0.0cm, yshift=0.0cm]
    \draw[fill=yellow] (-2, -1.5) -- ++ (4, 0) -- ++ (0, 3) -- ++ (-4, 0) -- cycle;
    \draw[style=dashed] (0,0) circle (0.5cm);
    \draw[] (0,0) circle (0.4cm);
    \draw[style=dashed] (0,0) circle (1cm);
    \draw[] (18.69:0.4cm) -- (24.29:1.1cm) arc(24.29:65.71:1.1cm) -- (71.31:0.4cm);
    \draw[] (-18.69:0.4cm) -- (-24.29:1.1cm) arc(-24.29:-65.71:1.1cm) -- (-71.31:0.4cm);
    \draw[] (180-18.69:0.4cm) -- (180-24.29:1.1cm) arc(180-24.29:180-65.71:1.1cm) -- (180-71.31:0.4cm);
    \draw[] (180+18.69:0.4cm) -- (180+24.29:1.1cm) arc(180+24.29:180+65.71:1.1cm) -- (180+71.31:0.4cm);
    \draw[] (24.29:1.1cm) arc(24.29:-24.29:1.1cm);
    \draw[] (180-24.29:1.1cm) arc(180-24.29:180+24.29:1.1cm);
    \draw[] (67:1.0cm) arc(67:180-67:1.0cm);
    \draw[] (-67:1.0cm) arc(-67:-180+67:1.0cm);
    \draw[style=dashed] (30:0.5cm) -- (30:1cm);
    \draw[style=dashed] (150:0.5cm) -- (150:1cm);
    \draw[style=dashed] (-30:0.5cm) -- (-30:1cm);
    \draw[style=dashed] (-150:0.5cm) -- (-150:1cm);
    \draw[style=dashed] (60:0.5cm) -- (60:1cm);
    \draw[style=dashed] (120:0.5cm) -- (120:1cm);
    \draw[style=dashed] (-60:0.5cm) -- (-60:1cm);
    \draw[style=dashed] (-120:0.5cm) -- (-120:1cm);
    
    \node[] () at (90:0.75cm) {\small $D'$};
    \node[] () at (0:0.75cm) {\small $B'$};
    \node[] () at (0,0) {\small $C'$};
    \node[] () at (180:0.75cm) {\small $B'$};
    \node[] () at (270:0.75cm) {\small $D'$};
    \node[] () at (45:0.75cm) {\small $E'$};
    \node[] () at (-45:0.75cm) {\small $E'$};
    \node[] () at (135:0.75cm) {\small $E'$};
    \node[] () at (-135:0.75cm) {\small $E'$};
    \node[] () at (-1.5, 0) {\small $A'$};
    \node[below] () at (0, -1.5) {(a)};
    \end{scope}
    \begin{scope}[xshift=4.5cm, yshift=0.0cm]
    \draw[fill=yellow] (-2, -1.5) -- ++ (4, 0) -- ++ (0, 3) -- ++ (-4, 0) -- cycle;
    \draw[draw=none, fill=green] (71.31:0.4cm) -- (65.71:1.1cm) arc(65.71:-65.71:1.1cm) -- (-71.31:0.4cm) arc(-71.31:71.31:0.4cm);
    \draw[draw=none, fill=green] (180-71.31:0.4cm) -- (180-65.71:1.1cm) arc(180-65.71:180+65.71:1.1cm) -- (180+71.31:0.4cm) arc(180+71.31:180-71.31:0.4cm);
    \draw[style=dashed] (0,0) circle (0.5cm);
    \draw[] (0,0) circle (0.4cm);
    \draw[style=dashed] (0,0) circle (1cm);
    \draw[] (18.69:0.4cm) -- (24.29:1.1cm) arc(24.29:65.71:1.1cm) -- (71.31:0.4cm);
    \draw[] (-18.69:0.4cm) -- (-24.29:1.1cm) arc(-24.29:-65.71:1.1cm) -- (-71.31:0.4cm);
    \draw[] (180-18.69:0.4cm) -- (180-24.29:1.1cm) arc(180-24.29:180-65.71:1.1cm) -- (180-71.31:0.4cm);
    \draw[] (180+18.69:0.4cm) -- (180+24.29:1.1cm) arc(180+24.29:180+65.71:1.1cm) -- (180+71.31:0.4cm);
    \draw[] (24.29:1.1cm) arc(24.29:-24.29:1.1cm);
    \draw[] (180-24.29:1.1cm) arc(180-24.29:180+24.29:1.1cm);
    \draw[] (67:1.0cm) arc(67:180-67:1.0cm);
    \draw[] (-67:1.0cm) arc(-67:-180+67:1.0cm);
    \draw[style=dashed] (30:0.5cm) -- (30:1cm);
    \draw[style=dashed] (150:0.5cm) -- (150:1cm);
    \draw[style=dashed] (-30:0.5cm) -- (-30:1cm);
    \draw[style=dashed] (-150:0.5cm) -- (-150:1cm);
    \draw[style=dashed] (60:0.5cm) -- (60:1cm);
    \draw[style=dashed] (120:0.5cm) -- (120:1cm);
    \draw[style=dashed] (-60:0.5cm) -- (-60:1cm);
    \draw[style=dashed] (-120:0.5cm) -- (-120:1cm);
    
    \node[] () at (90:0.75cm) {\small $D'$};
    \node[] () at (0:0.75cm) {\small $B'$};
    \node[] () at (0,0) {\small $C'$};
    \node[] () at (180:0.75cm) {\small $B'$};
    \node[] () at (270:0.75cm) {\small $D'$};
    \node[] () at (45:0.75cm) {\small $E'$};
    \node[] () at (-45:0.75cm) {\small $E'$};
    \node[] () at (135:0.75cm) {\small $E'$};
    \node[] () at (-135:0.75cm) {\small $E'$};
    \node[] () at (-1.5, 0) {\small $A'$};
    \node[below] () at (0, -1.5) {(b)};
    \end{scope}
    \begin{scope}[xshift=9.0cm, yshift=0.0cm]
    \draw[fill=yellow] (-2, -1.5) -- ++ (4, 0) -- ++ (0, 3) -- ++ (-4, 0) -- cycle;
    \draw[draw=none, fill=red!60!white] (71.31:0.4cm) -- (65.71:1.1cm) arc(65.71:-65.71:1.1cm) -- (-71.31:0.4cm) arc(-71.31:71.31:0.4cm);
    \draw[draw=none, fill=red!60!white] (180-71.31:0.4cm) -- (180-65.71:1.1cm) arc(180-65.71:180+65.71:1.1cm) -- (180+71.31:0.4cm) arc(180+71.31:180-71.31:0.4cm);
    \draw[draw=none, fill=white] (60:0.5cm) -- (60:1.0cm) arc(60:-60:1.0cm) -- (-60:0.5cm) arc(-60:60:0.5cm);
    \draw[draw=none, fill=white] (180-60:0.5cm) -- (180-60:1.0cm) arc(180-60:180+60:1.0cm) -- (180+60:0.5cm) arc(180+60:180-60:0.5cm);
    \draw[] (0,0) circle (0.5cm);
    \draw[] (0,0) circle (1cm);
    \draw[] (30:0.5cm) -- (30:1cm);
    \draw[] (150:0.5cm) -- (150:1cm);
    \draw[] (-30:0.5cm) -- (-30:1cm);
    \draw[] (-150:0.5cm) -- (-150:1cm);
    \draw[] (60:0.5cm) -- (60:1cm);
    \draw[] (120:0.5cm) -- (120:1cm);
    \draw[] (-60:0.5cm) -- (-60:1cm);
    \draw[] (-120:0.5cm) -- (-120:1cm);
    
    \node[] () at (90:0.75cm) {\small $D$};
    \node[] () at (0:0.75cm) {\small $B$};
    \node[] () at (0,0) {\small $C$};
    \node[] () at (180:0.75cm) {\small $B$};
    \node[] () at (270:0.75cm) {\small $D$};
    \node[] () at (45:0.75cm) {\small $E$};
    \node[] () at (-45:0.75cm) {\small $E$};
    \node[] () at (135:0.75cm) {\small $E$};
    \node[] () at (-135:0.75cm) {\small $E$};
    \node[] () at (-1.5, 0) {\small $A$};
    \node[below] () at (0, -1.5) {(c)};
    \end{scope}
    \begin{scope}[xshift=0.0cm, yshift=-4.0cm]
    \draw[fill=yellow] (-2, -1.5) -- ++ (4, 0) -- ++ (0, 3) -- ++ (-4, 0) -- cycle;
    \draw[draw=none, fill=red!60!white] (71.31:0.4cm) -- (65.71:1.1cm) arc(65.71:-65.71:1.1cm) -- (-71.31:0.4cm) arc(-71.31:71.31:0.4cm);
    \draw[draw=none, fill=red!60!white] (180-71.31:0.4cm) -- (180-65.71:1.1cm) arc(180-65.71:180+65.71:1.1cm) -- (180+71.31:0.4cm) arc(180+71.31:180-71.31:0.4cm);
    \draw[fill=red!30!white] (60:1.0cm) arc(60:180-60:1.0cm) -- (180-60:0.5cm) arc(180-60:180+60:0.5cm) -- (180+60:1.0cm) arc(180+60:360-60:1.0cm) -- (360-60:0.5cm) arc(360-60:360+60:0.5cm) -- (60:1.0cm);
    \draw[draw=none, fill=white] (60:0.5cm) -- (60:1.0cm) arc(60:-60:1.0cm) -- (-60:0.5cm) arc(-60:60:0.5cm);
    \draw[draw=none, fill=white] (180-60:0.5cm) -- (180-60:1.0cm) arc(180-60:180+60:1.0cm) -- (180+60:0.5cm) arc(180+60:180-60:0.5cm);
    \draw[line width=0.5pt, color=yellow!50!white, fill=yellow!50!white] (0,0) circle (0.4cm);
    \draw[draw=none, fill=yellow!50!white] (71.31:0.4cm) -- (66.28:1.0cm) arc(66.28:180-66.28:1.0cm) -- (180-71.31:0.4cm) arc(180-71.31:71.31:0.4cm);
    \draw[draw=none, fill=yellow!50!white] (-71.31:0.4cm) -- (-66.28:1.0cm) arc(-66.28:-180+66.28:1.0cm) -- (-180+71.31:0.4cm) arc(-180+71.31:-71.31:0.4cm);
    \draw[] (0,0) circle (1cm);
    \draw[] (60:0.5cm) arc(60:-60:0.5cm);
    \draw[] (180-60:0.5cm) arc(180-60:180+60:0.5cm);
    \draw[color=gray] (60:0.5cm) arc(60:180-60:0.5cm);
    \draw[color=gray] (-60:0.5cm) arc(-60:-180+60:0.5cm);
    \draw[] (30:0.5cm) -- (30:1cm);
    \draw[] (150:0.5cm) -- (150:1cm);
    \draw[] (-30:0.5cm) -- (-30:1cm);
    \draw[] (-150:0.5cm) -- (-150:1cm);
    \draw[] (60:0.5cm) -- (60:1cm);
    \draw[] (120:0.5cm) -- (120:1cm);
    \draw[] (-60:0.5cm) -- (-60:1cm);
    \draw[] (-120:0.5cm) -- (-120:1cm);
    
    \node[text=gray] () at (90:0.75cm) {\small $D$};
    \node[] () at (0:0.75cm) {\small $B$};
    \node[text=gray] () at (0,0) {\small $C$};
    \node[] () at (180:0.75cm) {\small $B$};
    \node[text=gray] () at (270:0.75cm) {\small $D$};
    \node[] () at (45:0.75cm) {\small $E$};
    \node[] () at (-45:0.75cm) {\small $E$};
    \node[] () at (135:0.75cm) {\small $E$};
    \node[] () at (-135:0.75cm) {\small $E$};
    \node[] () at (-1.5, 0) {\small $A$};
    \node[below] () at (0, -1.5) {(d)};
    \end{scope}
    \begin{scope}[xshift=4.5cm, yshift=-4.0cm]
    \draw[fill=yellow] (-2, -1.5) -- ++ (4, 0) -- ++ (0, 3) -- ++ (-4, 0) -- cycle;
    \draw[draw=none, fill=red!60!white] (71.31:0.4cm) -- (65.71:1.1cm) arc(65.71:-65.71:1.1cm) -- (-71.31:0.4cm) arc(-71.31:71.31:0.4cm);
    \draw[draw=none, fill=red!60!white] (180-71.31:0.4cm) -- (180-65.71:1.1cm) arc(180-65.71:180+65.71:1.1cm) -- (180+71.31:0.4cm) arc(180+71.31:180-71.31:0.4cm);
    \draw[fill=gray!10!white] (60:1.0cm) arc(60:180-60:1.0cm) -- (180-60:0.5cm) arc(180-60:180+60:0.5cm) -- (180+60:1.0cm) arc(180+60:360-60:1.0cm) -- (360-60:0.5cm) arc(360-60:360+60:0.5cm) -- (60:1.0cm);
    \draw[draw=none, fill=white] (60:0.5cm) -- (60:1.0cm) arc(60:-60:1.0cm) -- (-60:0.5cm) arc(-60:60:0.5cm);
    \draw[draw=none, fill=white] (180-60:0.5cm) -- (180-60:1.0cm) arc(180-60:180+60:1.0cm) -- (180+60:0.5cm) arc(180+60:180-60:0.5cm);
    \draw[line width=0.5pt, color=gray!10!white, fill=gray!10!white] (0,0) circle (0.4cm);
    \draw[] (60:0.5cm) arc(60:-60:0.5cm);
    \draw[] (180-60:0.5cm) arc(180-60:180+60:0.5cm);
    \draw[color=gray] (60:0.5cm) arc(60:180-60:0.5cm);
    \draw[color=gray] (-60:0.5cm) arc(-60:-180+60:0.5cm);
    \draw[draw=none, fill=brown!60!white] (60:0.9cm) -- (60:1.0cm) arc(60:180-60:1.0cm) -- (180-60:0.9cm) arc(180-60:60:0.9cm);
    \draw[draw=none, fill=brown!60!white] (-60:0.9cm) -- (-60:1.0cm) arc(-60:-180+60:1.0cm) -- (-180+60:0.9cm) arc(-180+60:-60:0.9cm);
    \draw[style=dotted, color=gray] (60:0.9cm) arc(60:120:0.9cm);
    \draw[style=dotted, color=gray] (-60:0.9cm) arc(-60:-120:0.9cm);
    \draw[] (0,0) circle (1cm);
    \draw[] (30:0.5cm) -- (30:1cm);
    \draw[] (150:0.5cm) -- (150:1cm);
    \draw[] (-30:0.5cm) -- (-30:1cm);
    \draw[] (-150:0.5cm) -- (-150:1cm);
    \draw[] (60:0.5cm) -- (60:1cm);
    \draw[] (120:0.5cm) -- (120:1cm);
    \draw[] (-60:0.5cm) -- (-60:1cm);
    \draw[] (-120:0.5cm) -- (-120:1cm);
    
    \node[text=gray] () at (90:0.7cm) {\small $\widetilde{D}$};
    \node[] () at (0:0.75cm) {\small $B$};
    \node[text=gray] () at (0,0) {\small $C$};
    \node[] () at (180:0.75cm) {\small $B$};
    \node[text=gray] () at (270:0.7cm) {\small $\widetilde{D}$};
    \node[] () at (45:0.75cm) {\small $E$};
    \node[] () at (-45:0.75cm) {\small $E$};
    \node[] () at (135:0.75cm) {\small $E$};
    \node[] () at (-135:0.75cm) {\small $E$};
    \node[] () at (-1.5, 0) {\small $A$};
    \node[below] () at (0, -1.5) {(e)};
    \end{scope}
    \begin{scope}[xshift=9.0cm, yshift=-4.0cm]
    \draw[fill=yellow] (-2, -1.5) -- ++ (4, 0) -- ++ (0, 3) -- ++ (-4, 0) -- cycle;
    \draw[draw=none, fill=red!60!white] (71.31:0.4cm) -- (65.71:1.1cm) arc(65.71:-65.71:1.1cm) -- (-71.31:0.4cm) arc(-71.31:71.31:0.4cm);
    \draw[draw=none, fill=red!60!white] (180-71.31:0.4cm) -- (180-65.71:1.1cm) arc(180-65.71:180+65.71:1.1cm) -- (180+71.31:0.4cm) arc(180+71.31:180-71.31:0.4cm);
    \draw[fill=gray!10!white] (60:1.0cm) arc(60:180-60:1.0cm) -- (180-60:0.5cm) arc(180-60:180+60:0.5cm) -- (180+60:1.0cm) arc(180+60:360-60:1.0cm) -- (360-60:0.5cm) arc(360-60:360+60:0.5cm) -- (60:1.0cm);
    \draw[draw=none, fill=white] (60:0.5cm) -- (60:1.0cm) arc(60:-60:1.0cm) -- (-60:0.5cm) arc(-60:60:0.5cm);
    \draw[draw=none, fill=white] (180-60:0.5cm) -- (180-60:1.0cm) arc(180-60:180+60:1.0cm) -- (180+60:0.5cm) arc(180+60:180-60:0.5cm);
    \draw[draw=none, fill=gray!10!white] (30:0.5cm) -- (30:1.0cm) arc(30:150:1.0cm) -- (150:0.5cm) arc(150:30:0.5cm);
    \draw[draw=none, fill=gray!10!white] (-30:0.5cm) -- (-30:1.0cm) arc(-30:-150:1.0cm) -- (-150:0.5cm) arc(-150:-30:0.5cm);
    \draw[fill=white] (0,0) circle (0.5cm);
    \draw[] (60:0.5cm) arc(60:-60:0.5cm);
    \draw[] (180-60:0.5cm) arc(180-60:180+60:0.5cm);
    \draw[draw=none, fill=brown!60!white] (60:0.9cm) -- (60:1.0cm) arc(60:180-60:1.0cm) -- (180-60:0.9cm) arc(180-60:60:0.9cm);
    \draw[draw=none, fill=brown!60!white] (-60:0.9cm) -- (-60:1.0cm) arc(-60:-180+60:1.0cm) -- (-180+60:0.9cm) arc(-180+60:-60:0.9cm);
    \draw[style=dotted, color=gray] (60:0.9cm) arc(60:120:0.9cm);
    \draw[style=dotted, color=gray] (-60:0.9cm) arc(-60:-120:0.9cm);
    \draw[] (0,0) circle (1cm);
    \draw[] (30:0.5cm) -- (30:1cm);
    \draw[] (150:0.5cm) -- (150:1cm);
    \draw[] (-30:0.5cm) -- (-30:1cm);
    \draw[] (-150:0.5cm) -- (-150:1cm);
    \draw[color=gray] (60:0.5cm) -- (60:1cm);
    \draw[color=gray] (120:0.5cm) -- (120:1cm);
    \draw[color=gray] (-60:0.5cm) -- (-60:1cm);
    \draw[color=gray] (-120:0.5cm) -- (-120:1cm);
    
    \node[text=gray] () at (90:0.7cm) {\small $\widetilde{D}$};
    \node[] () at (0:0.75cm) {\small $B$};
    \node[] () at (0,0) {\small $C$};
    \node[] () at (180:0.75cm) {\small $B$};
    \node[text=gray] () at (270:0.7cm) {\small $\widetilde{D}$};
    \node[text=gray] () at (45:0.75cm) {\small $E$};
    \node[text=gray] () at (-45:0.75cm) {\small $E$};
    \node[text=gray] () at (135:0.75cm) {\small $E$};
    \node[text=gray] () at (-135:0.75cm) {\small $E$};
    \node[] () at (-1.5, 0) {\small $A$};
    \node[below] () at (0, -1.5) {(f)};
    \end{scope}
    \begin{scope}[xshift=0.0cm, yshift=-8.0cm]
    \draw[fill=yellow] (-2, -1.5) -- ++ (4, 0) -- ++ (0, 3) -- ++ (-4, 0) -- cycle;
    \draw[draw=none, fill=red!60!white] (71.31:0.4cm) -- (65.71:1.1cm) arc(65.71:-65.71:1.1cm) -- (-71.31:0.4cm) arc(-71.31:71.31:0.4cm);
    \draw[draw=none, fill=red!60!white] (180-71.31:0.4cm) -- (180-65.71:1.1cm) arc(180-65.71:180+65.71:1.1cm) -- (180+71.31:0.4cm) arc(180+71.31:180-71.31:0.4cm);
    \draw[draw=none, fill=white] (60:0.5cm) -- (60:1.0cm) arc(60:-60:1.0cm) -- (-60:0.5cm) arc(-60:60:0.5cm);
    \draw[draw=none, fill=white] (180-60:0.5cm) -- (180-60:1.0cm) arc(180-60:180+60:1.0cm) -- (180+60:0.5cm) arc(180+60:180-60:0.5cm);
    \draw[draw=none, fill=gray!10!white] (30:0.5cm) -- (30:1.0cm) arc(30:150:1.0cm) -- (150:0.5cm) arc(150:30:0.5cm);
    \draw[draw=none, fill=gray!10!white] (-30:0.5cm) -- (-30:1.0cm) arc(-30:-150:1.0cm) -- (-150:0.5cm) arc(-150:-30:0.5cm);
    \draw[draw=none, fill=brown!60!white] (60:0.9cm) -- (60:1.0cm) arc(60:180-60:1.0cm) -- (180-60:0.9cm) arc(180-60:60:0.9cm);
    \draw[draw=none, fill=brown!60!white] (-60:0.9cm) -- (-60:1.0cm) arc(-60:-180+60:1.0cm) -- (-180+60:0.9cm) arc(-180+60:-60:0.9cm);
    \draw[draw=none, fill=green] (30:0.5cm) -- (30:1.1cm) arc(30:-30:1.1cm) -- (-30:0.5cm) arc(-30:-180+30:0.5cm) -- (180+30:1.1cm) arc(180+30:180-30:1.1cm) -- (180-30:0.5cm) arc(180-30:30:0.5cm);
    \draw[style=dashed] (0,0) circle (0.5cm);
    \draw[style=dotted, color=gray] (60:0.9cm) arc(60:120:0.9cm);
    \draw[style=dotted, color=gray] (-60:0.9cm) arc(-60:-120:0.9cm);
    \draw[style=dashed] (0,0) circle (1cm);
    \draw[] (0,0) circle (0.4cm);
    \draw[] (18.69:0.4cm) -- (24.29:1.1cm) arc(24.29:65.71:1.1cm) -- (71.31:0.4cm);
    \draw[] (-18.69:0.4cm) -- (-24.29:1.1cm) arc(-24.29:-65.71:1.1cm) -- (-71.31:0.4cm);
    \draw[] (180-18.69:0.4cm) -- (180-24.29:1.1cm) arc(180-24.29:180-65.71:1.1cm) -- (180-71.31:0.4cm);
    \draw[] (180+18.69:0.4cm) -- (180+24.29:1.1cm) arc(180+24.29:180+65.71:1.1cm) -- (180+71.31:0.4cm);
    \draw[] (24.29:1.1cm) arc(24.29:-24.29:1.1cm);
    \draw[] (180-24.29:1.1cm) arc(180-24.29:180+24.29:1.1cm);
    \draw[] (67:1.0cm) arc(67:180-67:1.0cm);
    \draw[] (-67:1.0cm) arc(-67:-180+67:1.0cm);
    \draw[style=dashed] (30:0.5cm) -- (30:1cm);
    \draw[style=dashed] (150:0.5cm) -- (150:1cm);
    \draw[style=dashed] (-30:0.5cm) -- (-30:1cm);
    \draw[style=dashed] (-150:0.5cm) -- (-150:1cm);
    \draw[style=dashed, color=gray] (60:0.5cm) -- (60:1cm);
    \draw[style=dashed, color=gray] (120:0.5cm) -- (120:1cm);
    \draw[style=dashed, color=gray] (-60:0.5cm) -- (-60:1cm);
    \draw[style=dashed, color=gray] (-120:0.5cm) -- (-120:1cm);
    
    \node[text=gray] () at (90:0.7cm) {\small $D'$};
    \node[] () at (0:0.75cm) {\small $B'$};
    \node[] () at (0,0) {\small $C'$};
    \node[] () at (180:0.75cm) {\small $B'$};
    \node[text=gray] () at (270:0.7cm) {\small $D'$};
    \node[text=gray] () at (45:0.75cm) {\small $E'$};
    \node[text=gray] () at (-45:0.75cm) {\small $E'$};
    \node[text=gray] () at (135:0.75cm) {\small $E'$};
    \node[text=gray] () at (-135:0.75cm) {\small $E'$};
    \node[] () at (-1.5, 0) {\small $A'$};
    \node[below] () at (0, -1.5) {(g)};
    \end{scope}
    \begin{scope}[xshift=4.5cm, yshift=-8.0cm]
    \draw[fill=yellow] (-2, -1.5) -- ++ (4, 0) -- ++ (0, 3) -- ++ (-4, 0) -- cycle;
    \draw[draw=none, fill=red!60!white] (71.31:0.4cm) -- (65.71:1.1cm) arc(65.71:-65.71:1.1cm) -- (-71.31:0.4cm) arc(-71.31:71.31:0.4cm);
    \draw[draw=none, fill=red!60!white] (180-71.31:0.4cm) -- (180-65.71:1.1cm) arc(180-65.71:180+65.71:1.1cm) -- (180+71.31:0.4cm) arc(180+71.31:180-71.31:0.4cm);
    \draw[draw=none, fill=white] (60:0.5cm) -- (60:1.0cm) arc(60:-60:1.0cm) -- (-60:0.5cm) arc(-60:60:0.5cm);
    \draw[draw=none, fill=white] (180-60:0.5cm) -- (180-60:1.0cm) arc(180-60:180+60:1.0cm) -- (180+60:0.5cm) arc(180+60:180-60:0.5cm);
    \draw[draw=none, fill=gray!10!white] (30:0.5cm) -- (30:1.0cm) arc(30:150:1.0cm) -- (150:0.5cm) arc(150:30:0.5cm);
    \draw[draw=none, fill=gray!10!white] (-30:0.5cm) -- (-30:1.0cm) arc(-30:-150:1.0cm) -- (-150:0.5cm) arc(-150:-30:0.5cm);
    \draw[draw=none, fill=brown!60!white] (60:0.9cm) -- (60:1.0cm) arc(60:180-60:1.0cm) -- (180-60:0.9cm) arc(180-60:60:0.9cm);
    \draw[draw=none, fill=brown!60!white] (-60:0.9cm) -- (-60:1.0cm) arc(-60:-180+60:1.0cm) -- (-180+60:0.9cm) arc(-180+60:-60:0.9cm);
    \draw[draw=none, fill=red!60!white] (30:0.5cm) -- (30:1.1cm) arc(30:-30:1.1cm) -- (-30:0.5cm) arc(-30:-180+30:0.5cm) -- (180+30:1.1cm) arc(180+30:180-30:1.1cm) -- (180-30:0.5cm) arc(180-30:30:0.5cm);
    \draw[draw=none, fill=yellow] (18.69:0.4cm) -- (24.29:1.1cm) arc(24.29:-24.29:1.1cm) -- (-18.69:0.4cm) arc(-18.69:180+18.69:0.4cm) -- (180+24.29:1.1cm) arc(180+24.29:180-24.29:1.1cm) -- (180-18.69:0.4cm) arc(180-18.69:18.69:0.4cm);
    \draw[style=dashed] (0,0) circle (0.5cm);
    \draw[style=dotted, color=gray] (60:0.9cm) arc(60:120:0.9cm);
    \draw[style=dotted, color=gray] (-60:0.9cm) arc(-60:-120:0.9cm);
    \draw[style=dashed] (0,0) circle (1cm);
    \draw[fill=yellow] (0,0) circle (0.4cm);
    \draw[] (18.69:0.4cm) -- (24.29:1.1cm) arc(24.29:65.71:1.1cm) -- (71.31:0.4cm);
    \draw[] (-18.69:0.4cm) -- (-24.29:1.1cm) arc(-24.29:-65.71:1.1cm) -- (-71.31:0.4cm);
    \draw[] (180-18.69:0.4cm) -- (180-24.29:1.1cm) arc(180-24.29:180-65.71:1.1cm) -- (180-71.31:0.4cm);
    \draw[] (180+18.69:0.4cm) -- (180+24.29:1.1cm) arc(180+24.29:180+65.71:1.1cm) -- (180+71.31:0.4cm);
    \draw[] (24.29:1.1cm) arc(24.29:-24.29:1.1cm);
    \draw[] (180-24.29:1.1cm) arc(180-24.29:180+24.29:1.1cm);
    \draw[] (67:1.0cm) arc(67:180-67:1.0cm);
    \draw[] (-67:1.0cm) arc(-67:-180+67:1.0cm);
    \draw[style=dashed] (30:0.5cm) -- (30:1cm);
    \draw[style=dashed] (150:0.5cm) -- (150:1cm);
    \draw[style=dashed] (-30:0.5cm) -- (-30:1cm);
    \draw[style=dashed] (-150:0.5cm) -- (-150:1cm);
    \draw[style=dashed, color=gray] (60:0.5cm) -- (60:1cm);
    \draw[style=dashed, color=gray] (120:0.5cm) -- (120:1cm);
    \draw[style=dashed, color=gray] (-60:0.5cm) -- (-60:1cm);
    \draw[style=dashed, color=gray] (-120:0.5cm) -- (-120:1cm);
    
    \node[text=gray] () at (90:0.7cm) {\small $D'$};
    \node[] () at (0:0.75cm) {\small $B'$};
    \node[] () at (0,0) {\small $C'$};
    \node[] () at (180:0.75cm) {\small $B'$};
    \node[text=gray] () at (270:0.7cm) {\small $D'$};
    \node[text=gray] () at (45:0.75cm) {\small $E'$};
    \node[text=gray] () at (-45:0.75cm) {\small $E'$};
    \node[text=gray] () at (135:0.75cm) {\small $E'$};
    \node[text=gray] () at (-135:0.75cm) {\small $E'$};
    \node[] () at (-1.5, 0) {\small $A'$};
    \node[below] () at (0, -1.5) {(h)};
    \end{scope}
    \begin{scope}[xshift=9.0cm, yshift=-8.0cm]
    \draw[fill=yellow] (-2, -1.5) -- ++ (4, 0) -- ++ (0, 3) -- ++ (-4, 0) -- cycle;
    \draw[draw=none, fill=red!30!white] (71.31:0.4cm) -- (65.71:1.1cm) arc(65.71:-65.71:1.1cm) -- (-71.31:0.4cm) arc(-71.31:71.31:0.4cm);
    \draw[draw=none, fill=red!30!white] (180-71.31:0.4cm) -- (180-65.71:1.1cm) arc(180-65.71:180+65.71:1.1cm) -- (180+71.31:0.4cm) arc(180+71.31:180-71.31:0.4cm);
    \draw[draw=none, fill=white] (60:0.5cm) -- (60:1.0cm) arc(60:-60:1.0cm) -- (-60:0.5cm) arc(-60:60:0.5cm);
    \draw[draw=none, fill=white] (180-60:0.5cm) -- (180-60:1.0cm) arc(180-60:180+60:1.0cm) -- (180+60:0.5cm) arc(180+60:180-60:0.5cm);
    \draw[draw=none, fill=gray!10!white] (30:0.5cm) -- (30:1.0cm) arc(30:150:1.0cm) -- (150:0.5cm) arc(150:30:0.5cm);
    \draw[draw=none, fill=gray!10!white] (-30:0.5cm) -- (-30:1.0cm) arc(-30:-150:1.0cm) -- (-150:0.5cm) arc(-150:-30:0.5cm);
    \draw[draw=none, fill=brown!60!white] (60:0.9cm) -- (60:1.0cm) arc(60:180-60:1.0cm) -- (180-60:0.9cm) arc(180-60:60:0.9cm);
    \draw[draw=none, fill=brown!60!white] (-60:0.9cm) -- (-60:1.0cm) arc(-60:-180+60:1.0cm) -- (-180+60:0.9cm) arc(-180+60:-60:0.9cm);
    \draw[draw=none, fill=red!30!white] (30:0.5cm) -- (30:1.1cm) arc(30:-30:1.1cm) -- (-30:0.5cm) arc(-30:-180+30:0.5cm) -- (180+30:1.1cm) arc(180+30:180-30:1.1cm) -- (180-30:0.5cm) arc(180-30:30:0.5cm);
    \draw[draw=none, fill=yellow] (18.69:0.4cm) -- (24.29:1.1cm) arc(24.29:-24.29:1.1cm) -- (-18.69:0.4cm) arc(-18.69:180+18.69:0.4cm) -- (180+24.29:1.1cm) arc(180+24.29:180-24.29:1.1cm) -- (180-18.69:0.4cm) arc(180-18.69:18.69:0.4cm);
    \draw[style=dashed, color=gray] (30:0.5cm) arc(30:180-30:0.5cm);
    \draw[style=dashed, color=gray] (180+30:0.5cm) arc(180+30:360-30:0.5cm);
    \draw[style=dotted, color=gray] (60:0.9cm) arc(60:120:0.9cm);
    \draw[style=dotted, color=gray] (-60:0.9cm) arc(-60:-120:0.9cm);
    \draw[style=dashed, color=gray] (30:1.0cm) arc(30:180-30:1.0cm);
    \draw[style=dashed, color=gray] (180+30:1.0cm) arc(180+30:360-30:1.0cm);
    \draw[] (18.69:0.4cm) -- (24.29:1.1cm) arc(24.29:65.71:1.1cm) -- (67:1.0cm);
    \draw[color=gray] (67:1.0cm) -- (71.31:0.4cm);
    \draw[] (-18.69:0.4cm) -- (-24.29:1.1cm) arc(-24.29:-65.71:1.1cm) -- (-67:1.0cm);
    \draw[color=gray] (-67:1.0cm) -- (-71.31:0.4cm);
    \draw[] (180-18.69:0.4cm) -- (180-24.29:1.1cm) arc(180-24.29:180-65.71:1.1cm) -- (180-67:1.0cm);
    \draw[color=gray] (180-67:1.0cm) -- (180-71.31:0.4cm);
    \draw[] (180+18.69:0.4cm) -- (180+24.29:1.1cm) arc(180+24.29:180+65.71:1.1cm) -- (180+67:1.0cm);
    \draw[color=gray] (-180+67:1.0cm) -- (-180+71.31:0.4cm);
    \draw[fill=yellow] (0,0) circle (0.4cm);
    \draw[] (24.29:1.1cm) arc(24.29:-24.29:1.1cm);
    \draw[] (180-24.29:1.1cm) arc(180-24.29:180+24.29:1.1cm);
    \draw[] (67:1.0cm) arc(67:180-67:1.0cm);
    \draw[] (-67:1.0cm) arc(-67:-180+67:1.0cm);
    \draw[style=dashed, color=gray] (30:0.5cm) -- (30:1cm);
    \draw[style=dashed, color=gray] (150:0.5cm) -- (150:1cm);
    \draw[style=dashed, color=gray] (-30:0.5cm) -- (-30:1cm);
    \draw[style=dashed, color=gray] (-150:0.5cm) -- (-150:1cm);
    \draw[style=dashed, color=gray] (60:0.5cm) -- (60:1cm);
    \draw[style=dashed, color=gray] (120:0.5cm) -- (120:1cm);
    \draw[style=dashed, color=gray] (-60:0.5cm) -- (-60:1cm);
    \draw[style=dashed, color=gray] (-120:0.5cm) -- (-120:1cm);
    
    \node[text=gray] () at (90:0.7cm) {\small $D'$};
    \node[] () at (0:0.75cm) {\small $B'$};
    \node[] () at (0,0) {\small $C'$};
    \node[] () at (180:0.75cm) {\small $B'$};
    \node[text=gray] () at (270:0.7cm) {\small $D'$};
    \node[text=gray] () at (45:0.75cm) {\small $E'$};
    \node[text=gray] () at (-45:0.75cm) {\small $E'$};
    \node[text=gray] () at (135:0.75cm) {\small $E'$};
    \node[text=gray] () at (-135:0.75cm) {\small $E'$};
    \node[] () at (-1.5, 0) {\small $A'$};
    \node[below] () at (0, -1.5) {(i)};
    \end{scope}
\end{tikzpicture}
\caption{The construction of extending channel $\Gamma$ for Theorem~\ref{theorem:circuit_stability_3}, illustrated.}
\label{fig:circuit_stability_3}
\end{figure}
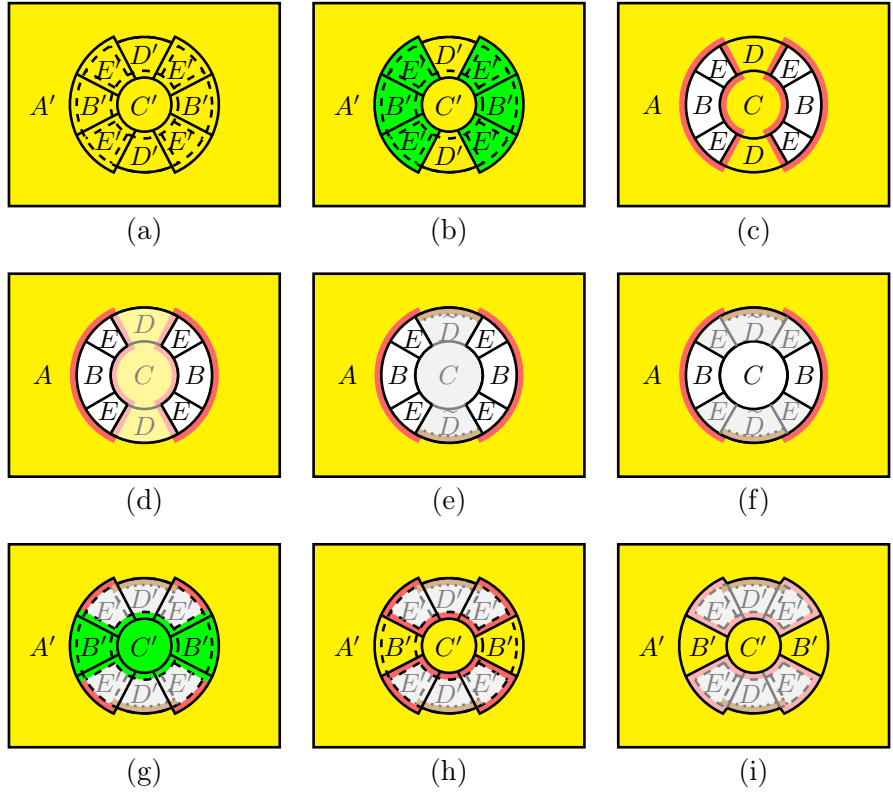

\end{document}